\theoremstyle{plain}
\newtheorem{theorem}{Theorem}[section]
\newtheorem{lemma}[theorem]{Lemma}
\newtheorem{proposition}[theorem]{Proposition}
\newtheorem{corollary}[theorem]{Corollary}
\newtheorem{definition}[theorem]{Definition}
\newtheorem{remark}[theorem]{Remark}
\newtheorem{assumption}[theorem]{Assumption}
\DeclareMathOperator{\Var}{Var}
\DeclareMathOperator{\Cov}{Cov}
\DeclareMathOperator{\poly}{poly}
\DeclareMathOperator{\Exp}{Exp}
\DeclareMathOperator{\Geo}{Geo}
\renewcommand{\P}[1]{\mathds{P}\left[{#1}\right]}
\newcommand{\E}[1]{\mathds{E}\left[{#1}\right]}
\newcommand{\1}[1]{\mathds{1}[{#1}]}
\newcommand{\R}{\mathbb{R}}
\newcommand{\M}{\mathcal{M}}
\newcommand{\W}{\mathcal{W}}
\newcommand{\D}{\mathcal{D}}
\newcommand{\pim}{\pi_{\min}}
\newcommand{\pima}{\pi_{\max}}
\newcommand{\epsilonDotAlpha}{\bm{\epsilon}\cdot\bm{\alpha}}
\newcommand{\deltaDotBeta}{\bm{\delta}\cdot\bm{\beta}}
\newcommand{\rinit}{r^{\mathrm{init}}}
\newcommand{\rgeo}{r^{\mathrm{geo}}}
\newcommand{\rcontd}{r^{\mathrm{contd}}}
\newcommand{\rrest}{r^{\mathrm{rest}}}
\newcommand{\Wlow}{W_{\mathrm{low}}}
\newcommand{\Whigh}{W_{\mathrm{high}}}
\newcommand{\mathcalWlow}{\W_{\mathrm{low}}}
\newcommand{\mathcalWhigh}{\W_{\mathrm{high}}}
\newcommand{\BarRM}{{\overline R}^{M}}
\newcommand{\BarRW}{{\overline R}^{W}}
\newcommand\Ps@textstyle[2]{\mathbb{P}_{#1}\left[{#2}\right]}
\newcommand\Es@textstyle[2]{\mathbb{E}_{#1}\left[{#2}\right]}
\newcommand\Ps[2]{%
  \mathchoice % special styling in display mode, regular elsewhere.
  {\underset{{#1}}{\mathbb{P}}\left[{#2}\right]}
  {\Ps@textstyle{#1}{#2}}
  {\Ps@textstyle{#1}{#2}}
  {\Ps@textstyle{#1}{#2}}
}
\newcommand\Es[2]{%
  \mathchoice % special styling in display mode, regular elsewhere.
  {\underset{{#1}}{\mathbb{E}}\left[{#2}\right]}
  {\Es@textstyle{#1}{#2}}{\Es@textstyle{#1}{#2}}{\Es@textstyle{#1}{#2}}
}
\title{Tiered Random Matching Markets:\\
Rank is Proportional to Popularity}
\author{
Itai Ashlagi \\
iashlagi@stanford.edu
\and 
Mark Braverman\thanks{Research supported in part by the NSF Alan T. Waterman Award, Grant
No. 1933331, a Packard Fellowship in Science and Engineering, and the
Simons Collaboration on Algorithms and Geometry. Any opinions,
findings, and conclusions or recommendations expressed in this
publication are those of the author and do not necessarily reflect the
views of the National Science Foundation.}  \\
mbraverm@gmail.com
\and
Amin Saberi\\
saberi@stanford.edu
\and 
Clayton Thomas \\
claytont@cs.princeton.edu
\and
Geng Zhao\\
gengz@stanford.edu
}
\begin{document}

\maketitle

\begin{abstract}
We study the stable marriage problem in two-sided markets with randomly generated preferences. Agents on each side of the market are divided into a constant number of ``soft'' tiers,  which capture agents' qualities. Specifically, every agent within a tier has the same public score, and agents on each side have  preferences independently  generated  proportionally to the public scores of the other side. 

We compute the expected average rank which agents in each tier have for their partners in the man-optimal stable matching, and prove concentration results for the average rank in asymptotically large markets.
Furthermore, despite having a significant
effect on ranks, public scores do not strongly
influence the probability of an agent matching to a given tier of the other side.
This generalizes the results in~\cite{pittel1989average}, which analyzed markets with uniform preferences.
%which correspond to uniform preferences. 
The results quantitatively demonstrate the effect of competition due to the heterogeneous attractiveness of agents in the market. %and we give the first explicit calculations of rank beyond uniform markets. 
\end{abstract}

%%%%%%%%%%
%% Introduction
%%%%%%%%%%
\section{Introduction}
\label{sectionIntroduction}

The theory of stable matching, initiated by~\cite{gale1962college}, has led to a deep understanding of two-sided matching markets and inspired successful real-world market designs. Examples of such markets  include marriage markets, online dating, assigning students to schools, labor markets, and college admissions. In a market matching ``men'' to ``women'' (a commonly used analogy), a matching is stable if no man-woman pair prefer each other over their assigned partners.

A fundamental issue is characterizing stable  outcomes of matching markets, i.e. the outcome agents should {\it expect} based on market characteristics.    %Stability, as an equilibrium concept, explains ex post why agents were not matched  more favorably. This, however,  does not answer  who should agents  {\it expect} to match with. 
Such characterizations are not only useful for describing outcomes but also likely to be fruitful in market designs. Numerous papers so far have studied stable matchings in random markets, in which agents' preferences are generated uniformly at random \cite{pittel1989average, knuth1990stable, ashlagi2017unbalanced, pittel2019likely}. This paper contributes to the literature by expanding these results to a situation where preferences are drawn according to different tiers of ``public scores'', generalizing the uniform case. We ask how  public scores, which correspond to the attractiveness of agents, impact the outcome in the market. 
%These public scores intuitively correspond to the attractiveness of  agents, and we ask how an agent's attractiveness benefits their outcome in the market. 
% considering two-sided markets, in which  agents vary in quality, and  asking how the  attractiveness  of agents affects stable matchings.

Formally, we study the following class of {\it tiered random markets}. There are  $n$ men and $n$ women. Each side of the market is divided into a constant number of ``soft tiers''. There is a fraction of $\epsilon_i$ women in tier $i$, each of which has a public score $\alpha_i$. And there is a fraction of $\delta_j$ men in tier $j$, each of which has a  public score $\beta_j$.  For each agent we draw a complete preference list  by sampling without replacement proportionally to the public scores of agents on the other side of the market.\footnote{These are also termed popularity-based preferences   \cite{gimbert2019popularity,immorlica2015incentives} and also equivalent to  generating preferences according to a  Multinomial-Logit (MNL) induced by the public scores.}
%model induced by the public scores of agents on the other side of the market.\footnote{These are also termed popularity-based preferences   \cite{gimbert2019popularity,immorlica2015incentives}.} 
So a man's preference list is generated by sampling women one at a time without replacement according to a distribution that is proportional to their public scores. Using $\bm{\alpha}, \bm{\epsilon}$ to denote the vector of scores and proportions of tiers on the women's side, we see that the marginal probability of drawing a woman in tier $i$ is $\alpha_i / (n\epsilonDotAlpha)$. An analogous statement holds for the tier configuration $\bm{\beta}, \bm{\delta}$ of the men.
These preferences are a natural next-step beyond the uniform distribution over preference lists, and provide a~priori heterogeneous quality of agents while still being tractable to theoretical analysis.
% , and $\alpha_{\min}$ denotes the lowest public score of the women. 

% \paragraph{\textbf{Results}}
Our primary goal is to study  the \emph{average rank} of agents in each
tier under the man-optimal stable matching, with a focus on the asymptotic
behavior in large markets. The rank of an agent is defined to be the index
of their partner on their full preference list, where lower is better. 
Additionally, we prove results on the \emph{match type distribution},
i.e. the fraction of tier $i$ women matched to tier $j$ men
(for each $i, j$).

We show that, for large enough markets, the following hold
to within an arbitrarily small approximation factor:
% , in the man-optimal stable matching,
% with high probability the following hold up to lower order terms:\clayfootnote{
%  (How) should we specify that these are true
%     up to ($1\pm\epsilon)$. [oof still not super happy]  ITAI: i think we can say the average rank approaches without epsilon.   But the way epsilon might be confusing at least here given the other epsilon.}
\begin{enumerate}
    \item (Theorem~\ref{thrmMenRanksCentralConcentration}.)
    With high probability, the average rank of men in tier $j$ is
    \[ \frac{\epsilonDotAlpha}{\alpha_{\min}}\cdot
        \frac{1}{\bm{\delta}\cdot\bm{\beta}^{-1}}\cdot\frac{\ln n}{\beta_j}. \]
    \item (Theorem~\ref{thrmWomenRanks}.)
    With high probability, the average rank of women  in tier $i$ is 
      \[ (\deltaDotBeta)(\deltaDotBeta^{-1}) \frac{\alpha_{\min}}{\alpha_i}
      \frac{n}{\ln n}. \]
    \item (Theorem~\ref{thrmMatchTypes}.)
      The probability that a woman in tier $i$ matches to a man
      in tier $j$ is $\delta_j$.
\end{enumerate}

In the above, $\bm{\beta}^{-1} = \{1/\beta_j\}$ denotes
the vector of the reciprocals of men's public scores,
$\alpha_{\min}$ denotes the smallest public score
on the women's side, and $\bm{x}\cdot\bm{y}$ denotes
the dot product of the vectors $\bm{x}$ and $\bm{y}$.
% \gengnote{Result (i) is formally stated in Theorem~\ref{thrmMenRanksCentralConcentration} in section~\ref{sectionRankOfMen}; results (ii) and (iii) are presented in Theorem~\ref{thrmWomenRanks} and~\ref{thrmMatchTypes} respectively in section~\ref{sectionWomenRanks}.}

% We make the following observations:
\subparagraph*{Intuition and Observations}
As in the case of uniform 
preferences~\cite{pittel1989average}, in the man-optimal stable outcome, 
men get a much lower rank than women. Indeed,
both men and women get the same order of rank
as in the uniform case ($\ln n$ and $n/\ln n$, respectively). 
This in itself
is an interesting consequence of this work -- a constant
tier structure affects the market only up to constants.
This fact also highlights that determining these constants
is an interesting area for investigation,
as the constants capture how the outcome of the market changes
with respect to the public scores.
The first observation we make is that
agents on each side get a rank inversely
proportional to their public score.

% Two  immediate observations are the following. First, on each side of the
% market tiers with lower public scores have a lower average rank. Second,
% every tier on the men side has a much smaller (better)  average rank
% than every tier on the women side.

Perhaps more interesting is the following observation:
The rank of both sides depends on the tier structure of the other side, but
\emph{each tier is affected the same amount} by the tier parameters of the
other side.
This is closely related to the fact 
that the probability of a woman matching to a man in tier $j$
is proportional to only the number of men in tier $j$
(regardless of the tier the woman lies in).
Moreover, both $\epsilonDotAlpha / \alpha_{\min}$ and 
$(\deltaDotBeta)(\deltaDotBeta^{-1})$ are always greater
than or equal to one\footnote{ 
  To prove $(\deltaDotBeta)(\deltaDotBeta^{-1}) \ge 1$,
  % take the log of $(\deltaDotBeta)(\deltaDotBeta^{-1})$, and 
  use Jensen's inequality to conclude that
  $\sum_j \delta_j \beta_j \ge \left(
  \sum_j \delta_j \beta_j^{-1}\right)^{-1}$.
}.
Thus, in these markets, any heterogeneity in the public scores 
of one side harms the average ranks of the other side
(but does not significantly affect the likelihood
that an agent matches to a certain tier on the other side).

Another interesting feature is the following: While the average
ranks for men's tiers  depend on public score distributions on both sides
of the market, the average rank of women in tier $i$ depends only on the
ratio between $\alpha_i$ and the  public score $\alpha_{\min}$ of the
bottom tier of women (and the distribution of public scores on the men's
side). Intuitively, the rank of the men depends on the distribution of
scores of the women because \emph{men are competing to avoid being
matched to the lowest tier of women}.

To elaborate on that last point, let us first consider the total number of
proposals made during the man-proposing deferred acceptance process (DA).
% which we will study in depth in later sections.\footnote{The total number of proposals turns out to be largely
% independent of the tier structure on the men.}
The algorithm will terminate
when the last woman receives a proposal. Naturally one would expect that
this woman will belong to the bottom tier. Therefore, using standard coupon
collector arguments,  the total number of proposals made to women {\it in
the bottom tier} until they all receive a proposal is expected to be
$(\epsilon_{\min} n)\ln(\epsilon_{\min} n)$,  where $\epsilon_{\min}$ is
the fraction of women in the bottom tier.
These proposals are a 
$\epsilon_{\min} \alpha_{\min} /\epsilonDotAlpha $ fraction of the total
proposals, so one expects the number of total proposals to be
\[ \frac{ (\epsilon_{\min} n)\ln(\epsilon_{\min} n) }
    {\epsilon_{\min} \alpha_{\min} /\epsilonDotAlpha}
  = \frac{\epsilonDotAlpha} {\alpha_{\min}}
    \cdot n \ln n - O(n).
\]
This introduces the factor of $\epsilonDotAlpha/\alpha_{\min}$ in result (i) on the men's
ranks (i.e. the number of proposals per man).

On the other hand, the probability that one of these proposals goes to a
woman in tier $i$ is $\alpha_i / (n\epsilonDotAlpha)$, implying that such a
woman should receive roughly $(\alpha_i / \alpha_{\min}) \ln n$ proposals.
Thus, for a given woman, the increase in the total number of proposals
caused by the tier proportions $\bm{\epsilon}$ is exactly canceled
out by the likelihood that a proposal goes to that woman,
and the only thing that matters is the woman's score (relative
to the bottom tier).
If men are uniform, women should then expect rank roughly $(\alpha_{\min} /
\alpha_{i})(n/\ln n)$, which helps explain the corresponding factors 
in result (ii). 

% Consider now a situation where men are divided into tiers with different public scores. 
% It turns out that our result on the number of total proposals is independent of the tier structure on the men, so we still expected $\epsilonDotAlpha/\alpha_{\min}$ proposals overall.
Consider now the public scores of the men, and for simplicity assume that the bottom tier of men has score $1$.
Suppose for the sake of demonstration that every time a man with public score $\beta_j$  proposes to a woman who is already matched, this man is $\beta_j$ times more likely to be accepted than a man with than a man with public score $1$.\footnote{
    As we discuss below,
    this approximation is only valid if the woman is already matched with a man she ranks highly.
    A major technical step in our proof is showing that,
    in certain situations, ``enough'' women are ``matched
    well enough'' for this approximation to be used.
}
We would expect that such a man makes a $1/\beta_j$ fraction fewer proposals before his next acceptance, and indeed $1/\beta_j$ fewer proposals overall.
Let $S$ be the total number of proposals, let $r_j$ denote the rank of a man in tier $j$, and $r_{\min}$ the rank of the bottom tier of men. If every tier of size $\delta_j n$ each accounts for a share of proposals proportional to $1/\beta_j$, then we should have
\[ S = \sum_j (n\delta_j) \beta_j^{-1} r_{\min}
  \qquad \Longrightarrow \qquad
  r_{\min} = \frac{S}{n\deltaDotBeta^{-1}},\qquad
  r_j = \frac{S}{(n\deltaDotBeta^{-1})\beta_j},
\]
which introduces the factor of $1/((\deltaDotBeta^{-1})\beta_j )$
in result (i) on the men's rank.

The final remaining factor in our results
is $(\deltaDotBeta)(\deltaDotBeta^{-1})$ in result (ii).
Deriving this term requires reasoning about the number
of proposals from each tier of men received by a fixed woman $w$.
Building from the previous paragraph,
we reason that each of the $\delta_j n$ men in tier
$j$ makes a number of proposals proportional to $1/\beta_j$.
Each such proposal has the same probability of going to $w$,
regardless of the tier $j$.
So the number of proposals $w$ receives from tier $j$ men
is proportional to $\delta_j/\beta_j$.
The factor $(\deltaDotBeta)(\deltaDotBeta^{-1})$ then
arises for somewhat technical reasons 
(described in section~\ref{sectionWomenRanks}) which have to do
with the way women generate their preference lists.

We now describe how result (iii), which may seem somewhat
more mysterious than the other results, emerges as a corollary
of computing the ranks women receive.
We argued above that a woman $w$ in tier $i$ receives approximately
$(\delta_j / \beta_j) U_i$ proposals from men in tier $j$,
for some value of $U_i$ independent of $j$.
Recall that $w$ applies weight $\beta_j$
to each proposal she sees from a man in tier $j$.
Moreover, the identity of $w$'s favorite proposal
is independent of the order in which $w$ saw proposals.
Thus, the probability that $w$'s favorite proposal
(i.e. the proposal of the man she matches to) came from tier $j$
is approximately $(\delta_j U_i) / U_i = \delta_j $, which is
\emph{independent of $\beta_j$}, as well as independent of the
tier $w$ is in.
Thus, up to lower order terms, the distribution of match
types is the same as it would be in a uniformly random matching 
market,
% \footnote{
% \claynote{
%     For intuition, here we make this argument from the perspective of the 
%     women.
%     As we prove in section~\ref{sectionWomenRanks}, this holds from the
%     perspective of the men as well. 
% }}
and the match is not assortative.
% This gives our final result that the probability $w$ matches to a
% man from tier $j$ is (up to lower order terms) the fraction
% of men from tier $j$ (as would be the case in a uniform market).

Intuitively, result (iii) arises when men make enough proposals
to offset any disadvantage (in the type of their match)
they have due to public score.
Due to the highly connected and relatively competitive nature
of our markets,
men in the lowest tier make more proposals, 
but they are not more likely to end up matched with lower tier agents. 
Put another way, men in lower tiers are less likely
to attain matches they idiosyncratically like,
but often settle for a high-quality agent which
is low on their personal preference list.
% This arises because the markets we consider are highly
% connected and fairly competitive: 
%This is an interesting observation
%for market design applications, 
This indicates that public scores that  differ by  constant weight do not provide any significant a~priori predictive power
over the matches agents receive. 
In particular, agents with lower public scores
%This also indicates that, in settings like ours
%with constant-factor disparities between agents,
%lower quality agents 
can still hope to achieve high-tier
matches if they consider enough options.

\subparagraph*{Techniques.}
Our proofs require developing some technical tools that may be of independent interest, especially when we reason about the ranks achieved by the men. We build on the analysis of DA from \cite{wilson1972analysis,pittel1989average,immorlica2015incentives,ashlagi2017unbalanced} to handle public scores rather than just uniform random preferences. As in these previous works, a key step in our proof is letting all men but one (call him $m$) first propose and match though DA, and then tracking the proposals of $m$ (this works because DA is independent of the order of proposals). For demonstration purposes, let's call the proposals before man $m$ the ``setup''. A key fact in previous works is that the distribution of proposals made by $m$ is identical for every man, and moreover that the distribution of setups is identical as well. This fails to hold in tiered random markets, and thus we must develop new techniques. % We develop significant new techniques to handle this.

We prove that, for ``most'' setups, the rank a man can achieve is approximately given by a certain geometric distribution, whose parameter $p$ is essentially the probability that a proposal by that man will be accepted. We then prove that, up to lower order terms, this success parameter scales up with the public score of the men. This gives the fact that the rank of men is inversely proportional to public score. 

Characterizing the setups where our proof goes through requires a technical analysis, and we term the setups which work ``smooth matching states''. The most crucial thing we need for these setups is that \emph{many women are matched to partners they rank highly}, which helps us prove that 1) men are likely to remain matched to their first acceptance (so our approximation with a geometric distribution is valid), and 2) a man with fitness $\beta$ is approximately $\beta$ times more likely to be accepted every time. For details, see section~\ref{sectionRankOfMen}.

Finally, to prove that the average rank of men within a tier \emph{concentrates}, we need to show the correlation between the ranks of different men is not too large. Thus, we track the proposals of the last \emph{two} men to propose, and find that the joint distribution of the ranks of these men can be approximated by a pair of independent geometric distributions. Intuitively, this is because men do not propose to very many women overall, and thus the last two men are unlikely to interfere with each other as they make proposals.

The crucial aspects of our model are that preferences
of each agent are independent and identically distributed,
that preference weights are constant, and that the market is roughly
in balance. While our techniques are useful to reason about
markets which do not have these properties,
the results are not nearly as clean; indeed the tier structure simplifies our analysis, but most of it 
%the  tier structure of our paper is mostly useful
%for making our results cleaner, and 
goes through if each agent has an
individual, constant, bounded public score.

\subsection{Related literature}
\label{sec:relatedliterature}

%Consider a fixed tier structure $\bm{\alpha}, \bm{\epsilon}$

Several  papers have studied matching markets with complete preference lists that are  generated uniformly at random. Coupon collector techniques are used in \cite{wilson1972analysis}  to upper bound the men's average rank  by $\ln n$.  \cite{pittel1989average,knuth1990stable,pittel1992likely} analyze further  balanced markets with $n$ men and $n$ women. They find that in the man-optimal stable matching in balanced markets, men and women match on average to their $\ln n$ and $\frac{n}{\ln n}$ ranks, respectively. Our results generalize these findings to markets with preferences induced by public scores, thus incorporating much more heterogeneity in the market. %This allows us to derive comparative statics with respect to the popularity of agents. 
%To the best of our knowledge, our work is the first to explicitly calculate expected ranks beyond a uniform setting.

\cite{ashlagi2017unbalanced,pittel2019likely,cai2019short} study markets with uniformly drawn preferences but with an imbalance between men and women. These papers find that in any stable matching the average ranks of men and women are similar to the average ranks under the short-side-proposing DA. Additionally,  \cite{kanoria2020random} investigates the relation between the imbalance and the length of preference lists (though the model is still uniform for each agent). This paper does not consider imbalanced markets but we believe that similar techniques to those we develop will be useful to reason about unbalanced tiered random markets.

Several papers look at random matching markets in which preferences are generated based on public scores \cite{immorlica2015incentives,kojima2009incentives,ashlagi2014stability}. These papers restrict attention to the size of the core (a measure of the difference between the man-optimal and woman-optimal outcome) and strategic manipulation of agents under a stable matching mechanism.  Key assumptions in these papers generate outcomes which leave many agents unmatched. In particular, their models either assume that  preference lists of men are of constant length, or, alternatively, one side has many more agents than the other.\footnote{Some papers additionally consider manipulations in more restricted randomized settings~\cite{coles2014optimal} or in deterministic (worst case) settings~\cite{Gonczarowski14}.}

Closely related to this paper is~\cite{gimbert2019popularity},
which primarily studies a special case of highly correlated
popularity preferences which is termed ``geometric preferences''.
While our work focuses on the rank agents achieve in the
man-optimal outcome (a canonical stable matching),
\cite{gimbert2019popularity} focuses on the size
of the core (more specifically, they study the number of stable partners
that agents have in typical stable matchings) using techniques
specialized to geometric preferences.
% (especially in ``geometric markets'' which
% have highly correlated preferences).
% \gengnote{TODO: Should we add more discussion in spirit of this comment: ``This model is a special case of GMM '19, a paper that primarily focuses on computing the number of stable pairs, but implicitly discusses many factors related to expected rank within its proofs.  That paper also uses a geometric distribution, probability of an accepted offer, etc.  It seems to me like insufficient credit is given to that paper.'' }

% Many variations of matching markets are studied both in theory and in practice.
% It is beyond the scope of this paper to detail all of these cases, but we
% believe our methods are useful for analyzing a variety of situations
% where deferred acceptance is used. For example, our results hold mutatis mutandis
% for many-to-one matching markets in which the side with capacity 1 proposes
% (because the dynamic process of deferred acceptance under truth-telling is
% identical to the one-to-one case).

Other papers have addressed tiered matching markets,
especially in market design settings.
However, these papers mostly study ``hard tiers'',
i.e. such that agents in higher tiers are deterministically
ranked above lower tiers by every agent on the other side.
Examples include~\cite{BeyhaghiST17, AshlagiBKS17}.
\cite{Lee16} also  considers a certain restricted
tiered model of cardinal utilities (which is incomparable
with our model), focusing on
which tier of agents match to which tier.

Our contribution to the literature is a detailed study
of ``soft tiers'', a natural special case of the
popularity preferences of~\cite{immorlica2015incentives, kojima2009incentives, gimbert2019popularity}.
In cases where each agent's utility for each match on the other side
is independent and identically distributed,
popularity preferences are the natural next step beyond
uniform markets, as they model situations where agents on each
side have significant but non-definitive variation in a~priori quality.
Our techniques build on the large body of work analyzing the 
``proposal dynamics'' of deferred acceptance for random preferences,
such as \cite{wilson1972analysis,immorlica2015incentives,
ashlagi2017unbalanced,gimbert2019popularity}.
Our results give insight into how constant-factor preference
biases affect stable matching markets,
including the first explicit calculations of expected rank
beyond uniform markets.

% \subparagraph*{Organization of the paper.}
The rest of the paper is organized as follows: Section~\ref{sectionDefsAndPreliminaries} offers basic definitions and preliminaries for our discussion. Section~\ref{sectionCouponCollector} studies the tiered coupon collector process, which serves as an important coupling process for the deferred acceptance algorithm. Section~\ref{sectionRankOfMen} and~\ref{sectionWomenRanks} present the core results of this paper, namely the average rank among tiers of men and women. Section~\ref{sectionSimulations} showcases some numerical experiments.

%%%%%%%%%%
%% Setup and definition
%%%%%%%%%%
\section{Definitions and Preliminaries}\label{sectionDefsAndPreliminaries}

A matching  market consists of a finite set of men $M$ and a finite set of women $W$. Each man (woman) has a complete and strict preference list over women (men). A matching is a mapping $\mu:M\cup W\to M\cup W$ such that: for every $m\in M$, $\mu(m)\in W$ (or $\mu(m)$ is undefined), for every woman $w\in W$, $\mu(w)\in M$ (or $\mu(w)$ is undefined), and for every $m\in M$ and $w\in W$, $\mu(m)=w$ if and only if $\mu(m)=w$. A matching $\mu$ is {\it stable} if no man-woman pair who are not matched in $\mu$ prefer each other to their matched partners.

It is well-known that there is a unique man-optimal stable matching, which can be found using the man-proposing deferred acceptance  algorithm (DA). % which receives as input a matching market and outputs the generate the man-optimal  stable matching.
%For the sake of demonstration, we term the proposing side the ``men'' $M$
%and the receiving side the ``women'' $W$.
While this algorithm does not fully specify an execution order, it is a
classically known result that the order does not affect the final outcome.

\begin{algorithm}
 \SetAlgoLined
 \caption{(Man-Proposing) Deferred Acceptance Algorithm (DA)}
 \label{algDA}
 Initialize matching $\mu$ to be empty (i.e. every agent's partner is undefined)\;
 Initialize $\mathcal{U}=M$ to be the set of all unmatched men\;
 \While{$|\mathcal{U}|>0$}{
  Choose any $m\in\mathcal{U}$\;
  Let $m$ propose to his most preferred woman
    $w$ to whom he has not made a proposal yet\;
  \If{$w$ prefers $m$ to $\mu(w)$ (or if $\mu(w)$ is undefined)}{
   \lIf{$\mu(w)$ is defined}{Add $\mu(w)$ to $\mathcal{U}$}
   Remove $m$ from $\mathcal{U}$\;
   Assign $\mu(w)=m$\;
   }
 }
\end{algorithm}

\begin{lemma}[\cite{gale1962college, mcvitie1970stable}]
  \label{thrmDAExecutionOrderDoesntMatter}
  The same proposals are made in every run of DA, regardless of which
  man is chosen to propose at each step.
\end{lemma}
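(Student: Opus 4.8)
The plan is to reduce the claim about proposals to a claim about the final matching. First I would observe that in any execution each man proposes to women in strictly decreasing order of preference and never proposes to the same woman twice; hence the set of women a fixed man $m$ proposes to is an initial segment of his preference list, whose last element is exactly his final partner $\mu(m)$ (he makes no further proposals after the acceptance that matches him permanently). Consequently the whole set of proposals is determined by the map $m \mapsto \mu(m)$: man $m$ proposes precisely to $\mu(m)$ and to every woman he ranks strictly above $\mu(m)$. Since the preferences are complete and there are equally many men and women, a short counting argument shows every man ends matched: a woman never releases a partner once she holds one, so if some man were unmatched then all $n$ women would be matched, forcing all $n$ men to be matched, a contradiction. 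Thus it suffices to prove that the output matching $\mu$ is the same in every execution.

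The key step, and the main obstacle, is the order-independent fact that DA never rejects a stable pair: if at any point in any execution a woman $w$ rejects a man $m$, then $(m,w)$ is not matched in any stable matching. I would prove this by contradiction, considering the first moment (across the whole execution) at which some stable pair is broken, say $w$ rejects $m$ although $\mu^\ast(m) = w$ for some stable matching $\mu^\ast$. This rejection is caused by a rival $m'$ whom $w$ prefers to $m$, and $m'$ reaches $w$ only after being rejected by every woman he prefers to $w$. I would then split according to where $\mu^\ast(m')$ lies on $m'$'s list. If $m'$ prefers $\mu^\ast(m')$ to $w$, then $m'$ was already rejected by his stable partner $\mu^\ast(m')$ strictly earlier, contradicting minimality of the chosen moment. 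Otherwise $m'$ prefers $w$ to $\mu^\ast(m')$ while $w$ prefers $m'$ to $m = \mu^\ast(w)$, so $(m',w)$ blocks $\mu^\ast$, contradicting its stability. Either branch yields a contradiction, establishing the claim. I expect making the ``first broken pair'' minimality interact correctly with these two cases to be the only genuinely delicate part.

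With this in hand the remaining steps are standard. First, $\mu$ is stable: a candidate blocking pair $(m,w)$ would require $m$ to have proposed to $w$ and been rejected, but a woman's held partner only improves over time, so $w$ ends with someone she weakly prefers to $m$, ruling out the block. In particular every man has a stable partner (namely $\mu(m)$), so the best stable partner $w_m$ is well defined. Second, $\mu$ gives each man his best stable partner: since $m$ is never rejected by $w_m$ and proposes in preference order, he can neither stop strictly above $w_m$ (that would mean $\mu(m)$ is a better stable partner than $w_m$) nor strictly below it (that would require $w_m$ to have rejected him), forcing $\mu(m) = w_m$. As $w_m$ depends only on the market and not on the execution order, $\mu$ is order-independent, and hence so is the set of proposals by the reduction in the first paragraph.
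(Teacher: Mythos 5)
Your proof is correct. Note that the paper itself gives no argument for this lemma: it is imported as a classical result with a citation to \cite{gale1962college, mcvitie1970stable}, so the comparison here is between your blind proof and the classical literature rather than an in-paper proof. Your route is essentially the Gale--Shapley optimality argument, transplanted to arbitrary sequential execution orders: you show (via the ``first broken stable pair'' minimality argument, whose two cases you handle correctly) that no execution ever rejects a pair belonging to \emph{any} stable matching, conclude that every execution outputs the man-optimal stable matching $m \mapsto w_m$, and then recover order-independence of the \emph{proposals} from your (correct) observation that a man's proposal set is exactly the prefix of his list ending at his final partner, together with the counting argument that in a balanced market with complete lists every man ends up matched. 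This is a genuinely different route from the one in McVitie--Wilson, who prove order-independence directly by a commutativity (confluence-style) argument on the proposal/rejection operations, without reference to stability or optimality; their proof is more elementary in that it needs no stable-matching theory, while yours buys more: as byproducts you get that the common outcome is stable and is the man-optimal stable matching, facts the paper also relies on implicitly when it speaks of ``the man-optimal stable matching'' computed by DA. The only stylistic caveat is that your argument leans on the market being balanced with complete preference lists (for ``every man is matched''); the lemma as stated holds in general markets too, where the McVitie--Wilson argument applies unchanged, but for the setting of this paper your assumption is exactly what holds, so nothing is lost.
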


We study the man-optimal stable matching in a class of tiered random markets, which will be defined below. We will assume that $|M| = |W|$ 
and  that no agent finds any other agent on the other side unacceptable.  We will also assume that each side draws their preferences from an identical and independent underlying distribution, and moreover these preferences are generated by repeatedly sampling
without replacement from a fixed distribution on the \emph{agents} of each side.
In \cite{immorlica2015incentives, gimbert2019popularity},
this  assumption is termed ``popularity-based preferences'', with the weight
of an agent in the distribution intuitively indicating their popularity for
agents on the other side. 

Our main goal is to study randomized matching markets with \emph{a constant number of constant weight tiers} of agents on each side.
For this entire paper, we consider the tier structure to be defined by
fixed proportions $\bm{\epsilon}, \bm{\delta}$ of agents in each tier
and constant weights $\bm{\alpha}, \bm{\beta}$ for each tier,
and we investigate the outcome of the man-proposing DA as $n\to\infty$.

\begin{definition}
  Consider constant vectors $\bm{\alpha}, \bm{\epsilon} \in \R^{k_1}_{> 0}$
  and $\bm{\beta}, \bm{\delta} \in \R^{k_2}_{>0}$,
  where $\|\bm{\epsilon}\|_1, \|\bm{\delta}\|_1=1$. % \gengfootnote{Feels weird to say a vector is a distribution. What about "where $i\mapsto\epsilon_i$ and $j\mapsto\delta_j$ each gives a probability mass function of a distribution on $[k_1]$ and $[k_2]$, respectively". Or simply say the sum of coordinates $\|\bm{\epsilon}\|_1 = 1$ and $\|\bm{\delta}\|_1 = 1$ (probably shouldn't mention both in one equation since they are different dimensions).}
  A \emph{tiered matching market} of size $n$ with respect to
  $\bm{\alpha}, \bm{\epsilon}, \bm{\beta}, \bm{\delta}$ is defined
  by generating agents' preference lists as follows:
  % Agents on each side are divided into \emph{tiers}, which
  % determines the likelihood that an agent on the other side has to rank
  % them highly.
  \begin{itemize}
    \item The set of $n$ women $W$ is divided into tiers $T_1, \ldots, T_{k_1}$,
      of size $|T_i| = \epsilon_i n$ each\footnote{
        Note that, for most vectors $\bm{\epsilon}, \bm{\delta}$,
        many values of $n$ will produce tier sizes which are not integers.
        However, as all our results are \emph{continuous} in
        $\bm{\epsilon}, \bm{\delta}$ this is not a problem -- for any particular fixed
        $n$, each tier size can be rounded in a way that effectively just changes
        $\bm{\epsilon}, \bm{\delta}$ by a tiny amount, and our results will still
        hold as written as $n\to\infty$.
      }.
      Define a distribution $\W$ on women such that a woman in tier $i$ is
      selected with probability proportional to $\alpha_i$.
      That is, the weight of $w\in T_i$ in $W$ is 
      $\alpha_i / (n\epsilonDotAlpha)$ (which we often denote by $\pi_i$).
    \item The set of $n$ men $M$ is divided into tiers
      $T_1, \ldots, T_{k_2}$, of size $|T_j| = \delta_j n$ each.
      Define a distribution $\M$ on men such that a man in tier $j$ is
      selected with probability proportional to $\beta_j$.
      That is, the weight of $m\in T_j$ in $M$ is 
      $\beta_j / (n\deltaDotBeta)$.  
  \end{itemize}
%   Preferences for men are determined as follows:
  For each man $m$ independently,
  women are repeatedly sampled from $\W$ without replacement,
  and the order in which women are selected is $m$'s preference list.
  Preferences for the women are analogously drawn over the distribution $\M$.
  The rank that a man has for a woman $w$ is the index of $w$ on his preference list (where lower is better).
\end{definition}

We refer to each $\alpha_i$ as the \emph{weight} or \emph{public score}
of the women in tier $i$, and similarly for the men.
For simplicity of certain arguments, we assume that each $\alpha_i\ge 1$
and each $\beta_j\ge 1$ (although for clarity of our results, we do not
assume that the smallest weight is exactly $1$). We write $\alpha_{\min}$ for the weight of the bottom tier of women, and $\epsilon_{\min}$ for the corresponding tier proportion.

Using a simple generalization of the ``principle of deferred decisions''
used in \cite{Knuth76}, we can arrive at a
characterization of the random process of running DA with a tiered matching
market.
% the distribution over women's acceptances and rejections in DA.

\begin{lemma}\label{thrmWomenDeferedDecisions}
  The distribution of runs of DA
  for a tiered matching market can be generated as follows:
  For the men,
  every time a man is chosen to propose, he samples a woman at random from
  $\W$, and repeats this until he samples a woman who he has not yet
  proposed to.

  For the women,
  suppose $w$ has seen proposals from a set of men $p(w)$,
  and let $\Gamma_w = \sum_{m\in p(w)}\beta(m)$, where $\beta(m)$
   denotes the public score of a man $m\in p(m)$.
  Then if a proposal from a man $m_*$ with public score $\beta_*$
  arrives, $w$ accepts the proposal from $m_*$
  with probability
  \[
    \frac{\beta_*}{\beta_* + \Gamma_w}.
  \]
\end{lemma}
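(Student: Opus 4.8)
The plan is to prove the lemma via the \emph{principle of deferred decisions}: instead of sampling all preference lists up front and then running DA, I reveal each agent's randomness only when the algorithm actually consults it, and I verify that each revealed quantity has the conditional distribution claimed in the statement. Fixing any execution order for DA (harmless by Lemma~\ref{thrmDAExecutionOrderDoesntMatter}), the only randomness the algorithm ever queries is, at each step, (a) the next woman on the proposing man's list, and (b) the accept/reject decision of the woman who receives the proposal. By the chain rule for conditional distributions, it suffices to show that, conditioned on everything revealed so far, quantity (a) is a fresh draw from $\W$ restricted to the women the man has not yet proposed to, and quantity (b) is an acceptance with probability $\beta_*/(\beta_* + \Gamma_w)$. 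A convenient tool for the second part is the exponential-clock representation of weighted sampling without replacement: assign each man $m$ an independent clock $X_m \sim \Exp(\beta(m))$ and let $w$'s preference list be the men sorted by increasing $X_m$; this reproduces exactly the Plackett--Luce order induced by sampling from $\M$ without replacement.

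For the men's side, note that DA only ever reveals a \emph{prefix} of each man's list, in order: at any moment a man $m$ has proposed to some set $Q(m)$ of women, which is the prefix of his list drawn so far. The next woman he proposes to is the next entry of his list, whose conditional law (given the prefix) is $\W$ restricted to $W \setminus Q(m)$ and renormalized. Since sampling without replacement is equivalent to sampling with replacement while discarding repeats, this conditional law is reproduced exactly by drawing from $\W$ and resampling until a woman outside $Q(m)$ appears, which is precisely the procedure in the statement.

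The substantive part is the women's side, where $p(w)$ denotes the men who proposed strictly before the new proposer $m_*$ (so $m_* \notin p(w)$ and $\Gamma_w = \sum_{m \in p(w)} \beta(m)$). In DA, $w$ accepts $m_*$ exactly when she prefers $m_*$ to her current match, i.e.\ when $m_*$ is her favorite among $p(w) \cup \{m_*\}$; in clock terms, when $X_{m_*} < T$ where $T = \min_{m \in p(w)} X_m$. The clean computation rests on two facts about independent exponentials: first, $T \sim \Exp(\Gamma_w)$; second---and this is the key point---$T$ is independent of the entire \emph{relative order} of the clocks $\{X_m\}_{m \in p(w)}$, because in the spacings construction of the order statistics the first spacing always has rate $\Gamma_w$ regardless of which clock fires first. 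Granting these, since $X_{m_*} \sim \Exp(\beta_*)$ is independent of $T$, the competing-exponentials formula gives $\P{X_{m_*} < T} = \beta_*/(\beta_* + \Gamma_w)$.

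The main obstacle is to verify that this value survives the conditioning on the algorithm's history $\mathcal{H}$. I must argue that $X_{m_*}$ is independent of $\mathcal{H}$ and of the event that $m_*$ proposes to $w$ at this step: this holds because a man proposes according to his \emph{own} list and the rejections he has received, neither of which involves $w$'s opinion $X_{m_*}$ of him, so $\mathcal{H}$, the identity of $p(w)$, and the scheduling of $m_*$'s proposal are all measurable with respect to randomness independent of $X_{m_*}$. Simultaneously, $\mathcal{H}$ reveals about $w$'s clocks only the \emph{record pattern} of $\{X_m\}_{m \in p(w)}$ (which proposer was accepted at each past step), a function of their relative order; by the independence of $T$ from that order, $T \mid \mathcal{H} \sim \Exp(\Gamma_w)$ still. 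Combining the two observations yields $\P{X_{m_*} < T \mid \mathcal{H}} = \beta_*/(\beta_* + \Gamma_w)$, completing the deferred-decision step for the women and hence the lemma.
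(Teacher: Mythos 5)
Your proof is correct, and it is considerably more rigorous than the paper's. The paper's proof is two sentences: it asserts directly that $\beta_*/(\beta_*+\Gamma_w)$ is the probability that $m_*$ is $w$'s favorite in $p(w)\cup\{m_*\}$ (an instance of Luce's choice axiom for the sampling-without-replacement model, taken as known), and then observes that this probability is independent of the identity of the old favorite, leaving the full deferred-decisions bookkeeping implicit. You share the same skeleton --- acceptance happens iff $m_*$ is the new favorite, and the key point is insensitivity to what the history reveals --- but you implement it via the exponential-race representation: clocks $X_m\sim\Exp(\beta(m))$ realize the Plackett--Luce order, $T=\min_{m\in p(w)}X_m\sim\Exp(\Gamma_w)$, and the competing-exponentials formula gives the acceptance probability. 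What your route buys is a clean and genuinely stronger handling of the conditioning: the history reveals about $w$'s clocks exactly the record pattern of her past proposals, which is a function of the relative order of $\{X_m\}_{m\in p(w)}$, and your observation that $T$ is independent of that entire relative order (not merely of the argmin) shows $T\mid\mathcal{H}\sim\Exp(\Gamma_w)$ survives conditioning on all past accept/reject decisions, subsuming the paper's ``independent of the old favorite'' remark. The one place you compress is the interleaving between $w$'s past decisions and the evolution of $p(w)$ and the proposal schedule; your opening appeal to the chain rule (equivalently, an induction over the algorithm's steps) is the right device, and a fully formal write-up would spell out that induction, but no idea is missing.
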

\begin{proof}
  The above formula  gives the probability that $m_*$ is chosen
  as $w$'s favorite out of the set of men $p(w)\cup\{m_*\}$.
  The only additional observation we need to make is that
  the probability that $m_*$ is the new
  favorite is independent of the identity of the old favorite.
\end{proof}
We often call $\Gamma_w$ the total ``weight of proposals'' woman $w$ has seen at some point during DA.

\subsection{Deferred acceptance with re-proposals}

With respect to any popularity-based model of preferences,
we can define a procedure analogous to DA.
In our case, we will show that the difference between DA and this procedure
is indeed small.

\begin{definition}
  Consider any random matching market with men's preferences determined by
  sampling from a distribution $\W$ over women.
  The \emph{deferred acceptance with re-proposals} algorithm
  is defined as being identical to Algorithm~\ref{algDA}, except
  \begin{itemize}
    \item Every time a man is chosen to propose to a woman, he draws a
      woman from $\W$ with replacement, and may propose more than once to a
      single woman.
    \item Women's preferences are consistent throughout proposals from the
      same man (so if a woman rejected a man before, she will reject him
      again).
  \end{itemize}
\end{definition}

Since re-proposals are ignored, this process will always yield
the same outcome as algorithm~\ref{algDA}.

% One of the first observations used in the study of random stable matching
% markets is that (at least in some situations) the number of proposals made
% until the algorithm terminates is closely related to a \emph{coupon
% collector} random variable.
% 
% 
% TODOTODO: it's probably best to introduce the fact that the coupon
% collector is the sum of geometric wait times here, as that fact is used
% repeatedly.

\subparagraph*{Notation.}
We write $x = (1\pm \epsilon) y$ to mean 
$(1-\epsilon) y \le x \le (1+\epsilon) y$.
We let $\epsilon$ denote an arbitrarily small constant
greater than $0$, while $\bm{\epsilon}$ and $\epsilon_i$
denote the tier parameters of the women.
We let $\alpha_{\min}$ denote the smallest public score
for the women's side, and $\epsilon_{\min}$ denotes
the corresponding tier proportion.
We let $\bm{v}\cdot \bm{w}$ denote the inner product
of vectors $\bm{v}, \bm{w}$.
We denote the exponential and geometric distributions by $\Exp(\lambda)$
and $\Geo(p)$, respectively.
We denote the fact that a random variable $X$ is a draw from a distribution
$D$ by $X\sim D$.
We use $X\preceq Y$ to denote the fact that $X$ is statistically dominated
by $Y$ (i.e. for all $t\in \R$, we have $\P{X \ge t}\le \P{Y\ge t}$).
We let $\Cov(X,Y)$ denote the covariance of $X$ and $Y$.
We write $f(n) = \widetilde O(g(n))$ if there exists a constant $k$
such that $f(n) = O( g(n) \log^k (g(n)))$.

%%%%%%%%%%
%% Coupon Collecting
%%    * Expectation of coupon collector.
%%%%%%%%%%
\section{The Coupon Collector and the Total Number of Proposals}\label{sectionCouponCollector}

Fix a tier structure $\bm{\alpha}, \bm{\epsilon}$ corresponding to men's
preferences over the women.
Consider running deferred acceptance with re-proposals.
Recall that each man samples a woman in tier
$i$ with probability $\pi_i = \alpha_i/(n\bm{\epsilon}\cdot\bm{\alpha})$
each draw.
Define $\pim = \alpha_{\min}/(n\bm{\epsilon}\cdot\bm{\alpha})$
as the probability of drawing a woman in the lowest tier
(and keep in mind that $\pim$ scales like $O(1/n)$).

The core tool we use to reason about the total number of proposals in DA is
the classically studied coupon collector process.
In particular, we study this process when %probabilities of drawing the different 
coupons from different tiers are drawn with a constant-factor
difference in probability.

\begin{definition}
  \label{defCouponCollector}
  Given a probability distribution $(p_i)_{i\in [n]}$,
  we define the \emph{coupon collector with unequal probabilities}
  as follows: once every time step, an integer from $[n]$
  is drawn independently and with replacement according to distribution
  $(p_i)_{i\in [n]}$.
  The coupon collector random variable with respect to $(p_i)_{i\in [n]}$
  is defined as the number of total draws required
  before every integer in $[n]$ has appeared at least once.

  The coupon collector $T$ which we are interested in is defined
  by taking the distribution $\W$ of men's preferences.
\end{definition}

We will show in section~\ref{sectionTotalProposals} that, in our case,
this random process
is also very close to that of DA (without re-proposals).
For now, we simply bound the expectation of the coupon collector
(with the proof deferred to appendix~\ref{appendixCouponExpectation}).
Note that similar probabilistic problems have been considered before (see e.g.
\cite{brayton1963asymptotic, doumas2012couponsRevisited}) but we include
our own full proofs in appendix~\ref{appendixCouponExpectation}
and~\ref{appendixCouponConcentration} for completeness.

\begin{restatable}{theorem}{RestateThrmCouponExpectation} 
  \label{thrmCouponCentralConcentration}
  Let $T$ denote the number of draws in a coupon collector process with
  weights proportional to $\W$.
  We have
  \[ \E{T} = \big(1 \pm O(1/\ln n)\big)
    \frac{\epsilonDotAlpha}{\alpha_{\min}} n \ln n.
  \]
  % Moreover, for any $\epsilon > 0$,
  % \[ T = (1 \pm \epsilon) \frac{\epsilonDotAlpha}{\alpha_{\min}} n \ln n
  % \]
  % with probability approaching $1$.

  % TODOTODO: while we can actually prove concentration here,
  % we should decide if we even want to for the narrative.
\end{restatable}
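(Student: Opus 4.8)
The plan is to analyze the Poissonized version of the process, replacing the discrete number of draws $T$ by a continuous completion time $\tau$. Treat each of the $n$ women as a distinct coupon; a woman in tier $i$ is drawn with probability $\pi_i = \alpha_i/(n\epsilonDotAlpha)$, and there are $N_i := \epsilon_i n$ of them, so the total probability is $\sum_i N_i\pi_i = \epsilonDotAlpha/\epsilonDotAlpha = 1$. Running a unit-rate Poisson process whose arrivals are independent draws from $\W$, the continuous completion time is $\tau = \sum_{k=1}^{T} X_k$ with $X_k\sim\Exp(1)$ i.i.d. and independent of the coupon identities, so Wald's identity gives $\E{\tau}=\E{T}$ (here it is crucial that the probabilities sum to $1$). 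Since the $n$ coupon sub-processes are independent Poisson processes, $\P{\tau > t} = 1 - \prod_i(1-e^{-\pi_i t})^{N_i}$, and hence
\[
  \E{T} = \int_0^\infty \Big( 1 - \prod_{i}(1 - e^{-\pi_i t})^{N_i} \Big)\, dt.
\]
Throughout I would use the elementary identity $\int_0^\infty \big(1 - (1-e^{-pt})^N\big)\,dt = H_N/p$ together with $H_N = \ln N + O(1)$.

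For the lower bound I would keep only the bottom tier: since every factor is at most $1$, we have $\prod_i(1-e^{-\pi_i t})^{N_i} \le (1-e^{-\pim t})^{N_{\min}}$ with $N_{\min}=\epsilon_{\min} n$, so integrating gives $\E{T}\ge H_{N_{\min}}/\pim = (\ln(\epsilon_{\min} n)+O(1))/\pim$. Using $1/\pim = (\epsilonDotAlpha/\alpha_{\min})n$ and $\ln(\epsilon_{\min} n)=\ln n + O(1)$, this is $(1-O(1/\ln n))(\epsilonDotAlpha/\alpha_{\min})\,n\ln n$, as required.

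The upper bound is the crux, and the main obstacle is that the naive union bound $1-\prod_i(\cdots)\le \sum_i\big(1-(1-e^{-\pi_i t})^{N_i}\big)$ integrates to $\sum_i H_{N_i}/\pi_i$, which overshoots the target by a constant factor: each higher tier $i$ is itself collected only after time $\Theta(\ln n/\pi_i)=\Theta\big((\alpha_{\min}/\alpha_i)\ln n/\pim\big)$, a constant fraction of the bottom-tier time, so these contributions cannot simply be summed. The fix is to exploit that by the time the bottom tier is collected, every higher tier is already complete with overwhelming probability. Concretely, I would split the integral at $t_0 = \ln n/\pim$, bounding the integrand by $1$ on $[0,t_0]$ (contributing at most $t_0 = (\epsilonDotAlpha/\alpha_{\min})\,n\ln n$, exactly the leading term) and by $\sum_i N_i e^{-\pi_i t}$ on $[t_0,\infty)$. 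The tail then contributes at most $\sum_i (N_i/\pi_i)e^{-\pi_i t_0}$; since $\pi_i t_0 = (\alpha_i/\alpha_{\min})\ln n$ and $N_i/\pi_i = \Theta(n^2)$, the bottom tier contributes $\Theta(n)$ while every higher tier (with $\alpha_i/\alpha_{\min}>1$) contributes $o(n)$. Thus the tail is $O(n) = O\big((1/\ln n)\cdot n\ln n\big)$, giving an upper bound matching the target to within a $(1+O(1/\ln n))$ factor.

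Combining the two bounds yields $\E{T} = (1\pm O(1/\ln n))(\epsilonDotAlpha/\alpha_{\min})\,n\ln n$. The one delicate point in executing this plan is the placement of the cutoff $t_0$: it must sit exactly at the bottom-tier collection time $\ln n/\pim$, so that the trivial bound on $[0,t_0]$ already produces the correct leading constant while the exponential tail—whose decay for higher tiers is driven by the strict gap $\alpha_i>\alpha_{\min}$—stays small enough to keep the relative error at the claimed $O(1/\ln n)$ level.
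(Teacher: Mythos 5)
Your proof is correct, and it splits cleanly into a part that coincides with the paper's argument and a part that is genuinely different. The Poissonization step (Wald's identity giving $\E{T}=\E{\tau}$) is exactly the paper's Lemma~\ref{thrmCouponExpectation}, and your lower bound --- drop all factors except the bottom tier's and integrate --- is the same as the paper's Proposition~\ref{thrmCouponExpectationLowerBound}, which computes the expected maximum of $\epsilon_{\min}n$ independent $\Exp(\pim)$ draws as $H_{\epsilon_{\min}n}/\pim$; your identity $\int_0^\infty\bigl(1-(1-e^{-pt})^N\bigr)\,dt=H_N/p$ is that same computation in integral form. Your upper bound, however, takes a different route: the paper never Poissonizes there, but instead decomposes the discrete process as $T=\sum_{i=1}^n\tau^{(i)}$ with $\tau^{(i)}\sim\Geo(p_i)$ and $p_i\ge(n-i+1)\pim$ (Proposition~\ref{thrmDAIsSumGeo}), so that $T$ is stochastically dominated by a sum of independent geometrics and $\E{T}\le\pim^{-1}H_n$ (Proposition~\ref{thrmSimpleCouponUpperBound}). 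You instead split the Poissonized integral at $t_0=\ln n/\pim$, bound the integrand by $1$ below $t_0$ and by the union bound $\sum_i N_ie^{-\pi_i t}$ above it, getting a tail of $\sum_i(N_i/\pi_i)e^{-\pi_i t_0}=O(n)$; this is valid, and your diagnosis of why the naive tier-by-tier union bound overshoots by a constant factor is accurate. Two remarks on the trade-off. First, your insistence on a strict gap $\alpha_i>\alpha_{\min}$ for higher tiers is unnecessary: a tier tied at $\alpha_{\min}$ also contributes only $\Theta(n)$ to the tail, and since the number of tiers is constant the tail stays $O(n)$ regardless. Second, what the paper's heavier-looking upper bound buys is reuse: the stochastic domination of $T$ by a sum of independent geometrics is the engine for the Chernoff-style tail bounds (Propositions~\ref{thrmCouponUpperTail} and~\ref{thrmCouponLowerTail}) and for the later dominance arguments about DA itself, none of which follow from your expectation-only integral estimate --- so your route is cleaner and more unified for this theorem in isolation, but would not replace the paper's machinery downstream.
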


\begin{remark}\label{remarkLogEspMinErrorBody}
  While we are mostly interested in the asymptotic performance of these
  matching markets, we make one comment here that the above big-$O$ notation
  hides a constant factor of order $\ln(1/\epsilon_{\min})$. For small
  values of $\epsilon_{\min}$, this can be much larger than $\ln n$ for most
  realistic market sizes.
%   We will later find in our simulations
%   in section~\ref{sectionSimulations} that indeed many. 
  Note that this error term already
  showed up in the intuition given in section~\ref{sectionIntroduction},
  where our estimate for the total number of proposals had an additive
  term of $O(\ln(\epsilon_{\min}) n)$. For more information,
  see proposition~\ref{thrmCouponExpectationLowerBound}.
\end{remark}

%%%%%%%%%%
%% Total number of proposals in DA
%%    * With high probability, at most ln^2 n proposals.
%%    * So DA is very tightly coupled to the coupon collector.
%%%%%%%%%%
\subsection{The Total Number of Proposals in Deferred Acceptance}
\label{sectionTotalProposals}

% We start by relating $T$ to the DA, and giving a simple
% representation of $T$ which provides some crude bounds.
% 
% \begin{proposition}\label{thrmSimpleDominanceRep}
%   Let $S = S_n$ denote the number of total proposals made in DA, and let
%   $T=T_n$ denote the coupon collector random variable.
%   Then $S \preceq T$, i.e. $S$ is statistically dominated by $T$.
% \end{proposition}
% \begin{proof}
% \end{proof}

% Consider a fixed tier structure $\bm{\alpha}, \bm{\epsilon}$
% for the women and  $\bm{\beta}, \bm{\delta}$ for the men.
Let $S = S_n$ denote the total number of proposals made a run of DA
with random preferences given by our tiered market.
As before, let $T = T_n$ denote the distribution of a coupon collector with
distribution $\W$.
As in many prior studies of randomized deferred acceptance,
our starting point is the fact that $S$ is statistically dominated by $T$:

The connection to stable matchings is the following very simple
observation, which has been used in many previous works~\cite{knuth1990stable, pittel1989average}:
\begin{proposition}
  \label{thrmCouponIsDAReprops}\label{thrmStatisticalDominance}
  The coupon collector random variable $T$ is distributed identically to
  the total number of proposals made in deferred acceptance with
  re-proposals (regardless of the preferences that women have for men).

  Moreover, if $S$ is the number of proposals in DA,
  then $S\preceq T$ (i.e. $S$ is statistically dominated by $T$).
\end{proposition}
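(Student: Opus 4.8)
The plan is to prove the two claims separately, both resting on one structural fact about balanced markets: since $|M| = |W|$, the number of matched men always equals the number of distinct women who have received at least one proposal, because a woman becomes matched the instant she receives her first proposal and stays matched (possibly trading up) thereafter. Consequently the algorithm halts \emph{exactly} when every woman has been proposed to at least once, and this stopping condition is insensitive to women's preferences.

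For the distributional identity, I would run DA with re-proposals while revealing the men's draws one proposal at a time. By the definition of the re-proposal process together with Lemma~\ref{thrmWomenDeferedDecisions}, each proposal is an independent draw from $\W$. A proposal is made precisely when the pool of unmatched men is nonempty; by the structural fact above, the pool is nonempty if and only if some woman has not yet been drawn. Hence the number of proposals is exactly the first time the sequence of i.i.d.\ draws from $\W$ covers all $n$ women, which is by definition the coupon collector variable $T$. The key point to emphasize is that this argument never inspects women's preferences -- they govern only which man each woman currently holds, not whether the coverage condition is met -- so the identity holds regardless of the preferences women have for men.

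For the statistical dominance $S \preceq T$, I would exhibit an explicit coupling. Assign each man an infinite i.i.d.\ sequence of draws from $\W$ and use the same sequences to drive both processes: in DA with re-proposals, man $m$ proposes in the order of his sequence with repeats included, whereas in ordinary DA, man $m$'s preference list is taken to be the subsequence of \emph{first appearances} in his sequence, which is distributed exactly as a without-replacement sample from $\W$. Because women's preferences are consistent, every re-proposal to a previously-seen woman is rejected again and leaves the matching state unchanged; deleting these wasted proposals from the re-proposal run therefore yields precisely the ordinary DA run on the deduplicated lists. Thus both runs reach the same (man-optimal) matching, and the proposals of ordinary DA form a subsequence of the proposals of the re-proposal run, giving $S \le T$ pointwise in the coupling and hence $S \preceq T$.

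The most delicate steps, and the ones I would write out carefully, are the following. In the first claim, the subtle part is the preference-independence of the proposal \emph{count}: this hinges entirely on the monotonicity ``once matched, always matched'' combined with the balance $|M| = |W|$, and I would state this explicitly rather than leave it implicit. In the coupling for the second claim, the delicate part is verifying that deduplicating a with-replacement sequence produces a genuine without-replacement sample from $\W$ and that the deleted re-proposals truly have no effect on the execution; for the latter I would invoke the consistency of women's rejections and the order-independence of DA (Lemma~\ref{thrmDAExecutionOrderDoesntMatter}) to ensure the two coupled runs produce identical matchings.
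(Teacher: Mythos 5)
Your proof is correct and follows essentially the same route as the paper's: the balance $|M|=|W|$ plus ``once matched, always matched'' shows DA with re-proposals halts exactly when all $n$ women have been drawn (hence the count is $T$, independent of women's preferences), and deleting repeated proposals couples the re-proposal run to an ordinary DA run, giving $S \le T$ pointwise and hence $S \preceq T$. Your explicit coupling via shared i.i.d.\ sequences is just a more carefully spelled-out version of the paper's ``ignore each repeated proposal in $T$'' argument, justified exactly as you say by Lemma~\ref{thrmWomenDeferedDecisions}.
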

\begin{proof}
  First, recall that DA terminates as soon as every man is matched.
  Observe that women never return
  to being unmatched once they receive a single proposal.
  Because the market is balanced (i.e. $|W|=|M|$), this means DA will
  terminate as soon as every woman has been proposed to.
  Moreover, because re-proposals are allowed, every proposal is
  distributed exactly according to $\W$.
  Thus, ignoring the identity of the man doing the proposing,
  $T$ is distributed exactly according to the coupon collector random
  process.

  Furthermore, we can recover the exact distribution $S$ of proposal in
  DA simply by ignoring each repeated proposal in $T$.
  Thus, $S \le T$ for each run of deferred acceptance with re-proposals, so
  $S\preceq T$.
  % (formally, this describes a coupling of $S, T$ for which $S\le T$).
\end{proof}
% \begin{proposition}
%   We have $S\preceq T$.
% \end{proposition}
% \begin{proof}
%   As shown in~\ref{thrmCouponIsDAReprops}, the coupon collector is
%   identically distributed to the number of proposals in deferred acceptance
%   with re-proposals, $T$.
% \end{proof}

We  proceed to show that the upper bound provided by $T$ is essentially
tight, i.e. there is not a big difference between $T$ and $S$.
The key step will be to upper bound maximum number of distinct women any
man proposes to in $S$, and thus upper bound the
probability that any proposal
in $T$ is a repeat for the man making the proposal.
Crucially, this lemma will have to account for the preferences of the women
(which up until this point have been ignored, but which play a significant role
in the distribution of proposals in DA).
Recall that we denote the sizes of the tiers of the men by the vector
$\bm{\delta}$, and the public scores of the men in each tier by $\bm{\beta}$.

\begin{lemma} \label{thrmMaxPropsLastMan}
Consider running DA with all men except $m_*$,
and suppose that at most $O(n\ln n)$ proposals are made during this process.
Afterwards, consider $m_*$ joining and run DA until the end.
Then for any $C\ge 0$, with probability
$1 - 1/n^C$, the number of proposals made by $m_*$ is at most $O(C\ln^2 n)$.
\end{lemma}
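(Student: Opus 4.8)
The plan is to use order-independence (Lemma~\ref{thrmDAExecutionOrderDoesntMatter}) to fix the execution in which all of $M\setminus\{m_*\}$ propose to completion first (the ``setup,'' which by hypothesis makes $O(n\ln n)$ proposals), and only then $m_*$ enters and runs to the end. Since in DA every man proposes to women in strictly decreasing order of preference and never re-proposes, the number of proposals $m_*$ makes equals the rank of his final partner, i.e.\ the number of distinct women he proposes to; this is exactly the quantity I will bound. The key structural observation is that while $m_*$ is the unique unmatched man he is the only one proposing, so \emph{before his first acceptance every woman is still in exactly her post-setup state}. Thus the acceptance probabilities governing $m_*$'s proposals up to his first acceptance are determined entirely by the (fixed) setup.

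For the main estimate I would invoke Lemma~\ref{thrmWomenDeferedDecisions}: when $m_*$ (with $\beta_*\ge 1$) proposes to a woman $w$ carrying accumulated weight $\Gamma_w$, she accepts with probability $\beta_*/(\beta_*+\Gamma_w)\ge 1/(1+\Gamma_w)$. The total weight $\sum_w \Gamma_w$ equals the $\bm{\beta}$-weighted number of setup proposals, which is $O(n\ln n)$ because the weights are constant; hence by Markov's inequality at most $n/2$ women have $\Gamma_w > K\ln n$ for a large enough constant $K$, and the remaining ``good'' women carry sampling mass $\sum_{w\text{ good}}\pi_w\ge \tfrac{n}{2}\pim=\Omega(1)$. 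Consequently each proposal of $m_*$ hits a good woman and is accepted with probability at least $p=\Omega(1/\ln n)$, uniformly even after excluding (via sampling without replacement) the at most $O(\ln^2 n)$ women he has already tried. The number of proposals to first acceptance is therefore stochastically dominated by $\Geo(p)$, and since $\P{\Geo(p)>t}\le (1-p)^t\le e^{-pt}$, choosing $t=\Theta(C\ln^2 n)$ drives this below $n^{-(C+1)}$.

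The delicate part---and what I expect to be the main obstacle---is that once $m_*$ is first accepted by some $w_k$, his acceptance triggers a rejection chain, and it is \emph{not} a priori impossible for that chain to route a man whom $w_k$ prefers to $m_*$ back to $w_k$, displacing $m_*$ and forcing further proposals. I would resolve this by passing to a characterization in terms of the final man-optimal matching $\mu$: the number of proposals is $N$, and $N>t$ precisely when every one of $m_*$'s top-$t$ women is matched in $\mu$ to a man she strictly prefers to $m_*$ (women's partners only improve, so any woman who would rank $m_*$ as her favorite among all her proposers keeps him permanently and caps $m_*$'s rank). By Lemma~\ref{thrmWomenDeferedDecisions} the event that $w$ permanently prefers her proposer set to $m_*$ is a coin of failure probability $\beta_*/(\beta_*+\Gamma_w)\ge\Omega(1/\ln n)$ whenever $\Gamma_w=O(\ln n)$. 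To make this uniform over the whole run I would first upgrade the weight estimate to a union bound---using a binomial/coupon-collector tail (the total number of proposals being $O(n\ln n)$)---to conclude that with probability $1-n^{-(C+1)}$ \emph{every} woman has $\Gamma_w=O(C\ln n)$ throughout.

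Conditioning on this high-probability weight event, I would then reveal $m_*$'s preference list lazily (principle of deferred decisions), so that his top-$t$ women are fresh draws from $\W$ and each carries an independent ``is $m_*$ your permanent favorite'' coin with success probability at least $p=\Omega(1/\ln n)$. The event $N>t$ forces all $t$ coins to fail, which has probability at most $(1-p)^t\le e^{-pt}\le n^{-C}$ for $t=\Theta(C\ln^2 n)$; a union bound with the weight event gives the claimed $1-1/n^{C}$ (after absorbing constants). The technical crux is precisely decoupling these coins from $m_*$'s own draws and from one another despite the chain dynamics, which is exactly what conditioning on the uniform bound $\Gamma_w=O(C\ln n)$ together with deferred decisions for the women's preferences is designed to accomplish.
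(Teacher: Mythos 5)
Your first phase (acceptance probability $\Omega(1/\ln n)$ via a Markov/averaging bound on the weights $\Gamma_w$) matches the paper's step~1, and your reformulation of the event $\{N>t\}$ as ``each of $m_*$'s top-$t$ women ends up matched to a man she prefers to $m_*$'' is correct. The gap is in how you handle the crux you yourself flag: displacement after the first acceptance. Your plan is to condition on a uniform bound $\Gamma_w = O(C\ln n)$ and then treat, for each woman $m_*$ proposes to, the event ``she permanently prefers her other proposers to $m_*$'' as an independent coin with failure probability at least $\beta_*/(\beta_*+\Gamma_w)$. This is circular: each such woman's \emph{final} proposer set (hence her final $\Gamma_w$) is determined by the outcomes of exactly these coins --- if $w$ keeps $m_*$, her displaced partner's rejection chain adds proposals elsewhere; if she rejects him, $m_*$'s own further proposals and their chains do --- so one cannot condition on the final weights (or on the weight event) and then draw the women's preferences as fresh independent coins. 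Conditioning on a globally defined, future-measurable event can bias the joint law of the coins, and no sequential/martingale argument is supplied to replace this. There is also a quantitative loss even granting the decoupling: for the weight event to hold with probability $1-n^{-(C+1)}$ the uniform bound must scale like $C\ln n$, so each coin succeeds only with probability $\Omega(1/(C\ln n))$, and $(1-p)^t \le n^{-C}$ then forces $t = \Theta(C^2\ln^2 n)$, weaker than the claimed $O(C\ln^2 n)$.

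The paper closes this gap with a local argument that avoids reasoning about ``forever'' altogether, exploiting that after the setup exactly one woman, $w_{\mathrm{last}}$, is unmatched, and DA terminates the instant she receives any proposal. Once $m_*$ is accepted by $w$, every proposal in the ensuing rejection chain is at least $\alpha_{\min}:\alpha_{\max}$ as likely to go to $w_{\mathrm{last}}$ (which ends the algorithm with $m_*$ still matched) as to return to $w$ (the only way $m_*$ can be displaced); hence, conditioned on acceptance, $m_*$ is never displaced with constant probability $p_2 \ge \alpha_{\min}/(\alpha_{\min}+\alpha_{\max})$. Each proposal of $m_*$ is therefore ``final'' with conditional probability $p_1p_2 = \Omega(1/\ln n)$ uniformly over the history and independently of $C$, the number of proposals is dominated by $\Geo(p_1p_2)$, and the $n^{-C}$ tail at $O(C\ln^2 n)$ follows. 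If you wish to keep your final-matching formulation, you still need this chain-termination observation (or some substitute) to certify, proposal by proposal, a conditional success probability that does not depend on the unverifiable future of the process.
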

\begin{proof}
This proof follows a similar logic as the proof of Lemma B.4 (ii) in
\cite{ashlagi2017unbalanced}. 
Suppose $m_*$ has public score $\beta_*$,
and that he proposes at the end (and $O(n\ln n)$ prior proposals have been made).
We proceed as follows:
\begin{enumerate}
  \item When $m_*$ makes a proposal, he will choose a woman who he has
    not yet proposed to. For some fixed proposal index $i$ of $m_*$,
    let's denote the set of all women $m_*$ has not proposed to by $W_*$, and denote by $\W_*$ the distribution
    of $m_*$'s next proposal, i.e. a sample over $W_*$ weighted by the
    public scores $\alpha_i$.
    For a women $w$ denote her sample weight by $\alpha(w)$ and the
    set of proposals she has received by $p(w)$.
    Further denote by $\Gamma_w = \sum_{m\in p(w)} \beta(m)$ the sum of the
    public scores of men who have proposed to $w$.

    Suppose that $|W_*|\ge n/2$, i.e. that $m_*$ has not yet proposed to
    over half the women.
    Using the assumption that the total number of proposals made is at most
    $O(n\ln n)$,
    we can bound the expected total weight of proposals women have seen by
    \[ \Es{w\sim \W^*}{\Gamma_w}
      = \frac{\sum_{w\in W_*} \alpha(w)\Gamma_w}
        {\sum_{w\in W_*} \alpha(w)}
      \le \frac{ \alpha_{\max}\sum_{w\in W}\Gamma_w}
          {|W_*|\alpha_{\min}}
      \le \frac{ \alpha_{\max}\beta_{\max}\cdot O(n\ln n)}
          {|W_*|\alpha_{\min}}
      \le O(\ln n).
    \]
    Thus, by lemma~\ref{thrmWomenDeferedDecisions}, the probability that
    the proposal by $m_*$ will be accepted is
    \[
      p_1
      := \Es{w \sim \W_*}{\frac{\beta_*}{\beta_* + \Gamma_w}}
      \ge \frac{\beta_*}{\beta_* + \Es{w\sim \W_*}{\Gamma_w}}
      \ge \Omega(1/\ln n).
    \]
    where the first inequality is due to Jensen's inequality.

  \item If $m_*$ proposes to $w$ and is accepted, then the subsequent
    rejection chain can either end at the last woman without proposals,
    $w_{\mathrm{last}}$,
    or cycles back to $w$ who this time rejects $m_*$. Notice that
    for each subsequent proposal, the ratio between the probability that it
    goes to $w_{\mathrm{last}}$ 
    (in which case the process will be terminated) and the
    probability that it returns to $w$ is at most
    $\alpha_{\max}:\alpha_{\min}$ (and possibly less if the proposing man
    has already proposed to $w$). Hence, the probability that the chain
    ends at the last women $w_{\mathrm{last}}$ is bounded below by
    \[p_2 :=
      \frac{\alpha_{\min}}{\alpha_{\max}+\alpha_{\min}}
      \ge \Omega(1).
    \]
    Note that this is ignoring the chance that a new proposal by $w$ is
    rejected, but it still suffices for a lower bound. 

  \item \label{itemMaxPropsLastProbWrapup}
    The probability that $m_*$ makes more than $K\ln^2 n$
    proposals is thus bounded above by
    \[ (1-p_1p_2)^{K\ln^2 n} \leq \exp(-p_1p_2 K\ln^2 n)
      = \exp(-\Omega(K\ln n))
      \le n^{-C}
    \]
    as long as we choose $K = \Omega(C)$ large enough.
\end{enumerate}
\end{proof}

\begin{corollary}\label{thrmMaxIndividualProposals}
  For any constant $C\ge 1$,
  with probability $1 - 1/n^C$, the maximum number of proposals made by any
  man in DA is $O(C\ln^2 n)$.
\end{corollary}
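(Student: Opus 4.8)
The plan is to deduce the corollary from Lemma~\ref{thrmMaxPropsLastMan} by a union bound over the $n$ men, with the order-independence of DA (Lemma~\ref{thrmDAExecutionOrderDoesntMatter}) doing the essential work. Fix a man $m$. By order-independence we may run DA so that $m$ proposes only after everyone else has finished: first run DA on $M\setminus\{m\}$ to quiescence (the ``first phase''), then let $m$ join and continue to the end. The proposals made in the first phase form a subset of the full proposal multiset counted by $S$ (they are exactly the proposals of the execution in which $m$ goes last, minus $m$'s own proposals and the rejection chains he triggers), so the number of first-phase proposals is at most $S$. Consequently, once we know $S = O(n\ln n)$ with high probability, the hypothesis of Lemma~\ref{thrmMaxPropsLastMan} is met for the first phase of every man simultaneously, and the lemma controls each man's own proposals.

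First I would establish the high-probability bound on $S$. By Proposition~\ref{thrmStatisticalDominance} we have $S\preceq T$, so it suffices to bound the coupon collector $T$. For any number of draws $t$, the event $\{T > t\}$ is exactly the event that some woman has not yet been sampled; since every woman is drawn with probability at least $\pim = \alpha_{\min}/(n\epsilonDotAlpha) = \Theta(1/n)$ on each draw, a union bound over the $n$ women gives
\[
  \P{S > t} \;\le\; \P{T > t} \;\le\; n\,(1-\pim)^{t} \;\le\; n\,e^{-\pim t}.
\]
Taking $t = K n\ln n$ for a constant $K = K(C)$ large enough (depending on the ratio $\epsilonDotAlpha/\alpha_{\min}$) makes the right-hand side at most $n^{-(C+1)}$. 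Thus the event $A = \{S \le K n\ln n\}$ has probability at least $1 - n^{-(C+1)}$, and on $A$ the first phase of every man has at most $K n\ln n = O(n\ln n)$ proposals.

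Next I would assemble the union bound. For each man $m$, let $B_m$ be the event that $m$ makes more than $O(C\ln^2 n)$ proposals, with the hidden constant as in Lemma~\ref{thrmMaxPropsLastMan} applied with exponent $C+1$. Let $A_m$ be the event (measurable with respect to the first phase) that running DA on $M\setminus\{m\}$ makes at most $K n\ln n$ proposals; since first-phase proposals are dominated by $S$ we have $A\subseteq A_m$, hence $\P{A_m^c}\le\P{A^c}\le n^{-(C+1)}$. The lemma is really a conditional statement: for any first-phase outcome with $O(n\ln n)$ proposals, the conditional probability (over $m$'s fresh randomness) that $m$ makes more than $O(C\ln^2 n)$ proposals is at most $n^{-(C+1)}$. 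Therefore
\[
  \P{B_m} \;\le\; \P{B_m \cap A_m} + \P{A_m^c}
  \;\le\; n^{-(C+1)} + n^{-(C+1)} \;=\; 2\,n^{-(C+1)},
\]
and a final union bound over all $n$ men gives $\P{\bigcup_m B_m}\le 2n^{-C}$; running the argument with exponent $C+2$ throughout absorbs the factor of $2$ and yields the stated probability $1-1/n^C$.

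The steps are all routine, and I expect the only delicate point to be the conditioning in the last display: namely that Lemma~\ref{thrmMaxPropsLastMan} may be applied to the \emph{random} first phase rather than to a fixed one. This is legitimate because the lemma's conclusion holds uniformly over all first-phase configurations satisfying the $O(n\ln n)$ proposal bound — its proof invokes the total first-phase count only to bound the expected weight $\Es{w\sim \W_*}{\Gamma_w}$ of proposals seen by $m$'s target — so it combines cleanly with the high-probability event $A_m$. Beyond this, the work is simply the dominance $S\preceq T$, the elementary coupon-collector tail, and bookkeeping of the exponents.
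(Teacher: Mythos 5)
Your proof is correct and follows essentially the same route as the paper's: bound the total number of proposals by $O(n\ln n)$ with polynomially small failure probability via $S\preceq T$, use order-independence (Lemma~\ref{thrmDAExecutionOrderDoesntMatter}) to place each man last so Lemma~\ref{thrmMaxPropsLastMan} applies, then union bound over the $n$ men. The only deviation is that you re-derive the coupon-collector tail by an elementary union bound over women rather than citing the paper's Chernoff bound (Proposition~\ref{thrmCouponUpperTail}), which is a perfectly adequate (and self-contained) substitute here, and your explicit handling of the conditioning on the random first phase is a point the paper's proof leaves implicit.
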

\begin{proof}
  By \ref{thrmStatisticalDominance} and \ref{thrmCouponUpperTail},
  the total number of proposals made in DA is $O(C n\ln n)$ with probability
  $1 - 1/n^C$. In particular, if we consider any $m_*$ and let all other
  agents propose, this will be true.
  Recall that by lemma~\ref{thrmDAExecutionOrderDoesntMatter},
  DA is independent of the order in which men are chosen to propose.
  Thus, for each man $m_*$ we can apply lemma~\ref{thrmMaxPropsLastMan} to get
  that, with probability $1 - 1/n^{C+1}$, $m_*$ makes fewer than $O((C+1)\ln^2 n)=O(C\ln^2 n)$
  proposals. Taking a union bound over the $n$ men gets the desired result.
\end{proof}

\begin{remark}
  Both of the above results hold for deferred acceptance with re-proposals
  as well as deferred acceptance.
  Indeed, even with re-proposals, deferred acceptance will be independent
  of the order of proposals (as re-proposals are ignored by the women).
  Moreover, the logic required to prove points 1. and 2. of the proof of
  lemma~\ref{thrmMaxPropsLastMan} is only easier to prove when men sample
  over all of $W$ as opposed to just the set $W_*$.
\end{remark}

The above result is enough to show that
proposition~\ref{thrmCouponCentralConcentration} holds for DA as well for
the coupon collector,
because repeated proposals are at most a $O(\ln^2 n / n) = o(1)$
fraction of total proposals in deferred acceptance
with re-proposals. We defer the proof to 
appendix~\ref{appendixTotalProposalsExpectation}.

\begin{restatable}{theorem}{RestateThrmDACentralConcentration}
\label{thrmDACentralConcentration}
  Let $S$ be the total number of proposals made in DA
  with tiers of women $\bm{\epsilon}, \bm{\alpha}$,
  and arbitrary constant tiers on the men.
  We have
  \[ \E{S} = \big(1 - O(\ln^2 n/n)\big) \E{T} 
    = \big(1 \pm O(1/\ln n)\big)
      \frac{\epsilonDotAlpha}{\alpha_{\min}} n \ln n.
  \]
%   Moreover, for any $\epsilon > 0$,
%   \[ S = (1 \pm \epsilon) \frac{\epsilonDotAlpha}{\alpha_{\min}} n \ln n
%   \]
%   with probability approaching $1$.

%   TODOTODO: decide if we can actually prove concentration here,
%   or if we even want/need to for the narrative.
\end{restatable}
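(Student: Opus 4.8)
The quantity to control is the number of \emph{wasted} draws $T - S$. By Proposition~\ref{thrmStatisticalDominance}, running DA with re-proposals makes exactly $T$ draws, and deleting every draw that repeats a woman the proposing man has already drawn recovers the genuine DA run with its $S$ proposals; hence $T - S$ is precisely the number of repeated draws. Since $\E{S} = \E{T} - \E{T-S}$ and Theorem~\ref{thrmCouponCentralConcentration} already pins down $\E{T}$, the plan is to show $\E{T - S} = O(\ln^2 n/n)\,\E{T}$. This immediately gives $\E{S} = (1 - O(\ln^2 n/n))\E{T}$, and multiplying by Theorem~\ref{thrmCouponCentralConcentration} yields the stated formula, because $\ln^2 n/n = o(1/\ln n)$ so the two error factors combine into $1 \pm O(1/\ln n)$.

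The heart of the argument is a per-draw estimate. Consider any single draw in the re-proposal process, made by a man $m$ who has so far drawn a set of $j$ distinct women. The draw is fresh from $\W$, so it is a repeat exactly when it lands on one of those $j$ women, an event of probability at most $j\cdot\pi_{\max}$, where $\pi_{\max} = \alpha_{\max}/(n\,\epsilonDotAlpha) = O(1/n)$ is the largest single-woman weight. By Corollary~\ref{thrmMaxIndividualProposals} (which bounds the number of distinct proposals and, by the remark following it, applies verbatim to the re-proposal process), with probability $1 - n^{-C}$ every man draws at most $O(\ln^2 n)$ distinct women over the entire run. On this event $\mathcal{E}$ the distinct count $j$ at every step is $O(\ln^2 n)$, so each draw is a repeat with conditional probability $O(\ln^2 n/n)$.

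To convert this into a bound on $\E{T-S}$ without conditioning on the global future event $\mathcal{E}$, I would introduce the stopping time $\tau$ equal to the first step at which some man's distinct count first exceeds the threshold $K\ln^2 n$. For $t \le \min(T,\tau)$ the conditional repeat probability given the history is at most $K\ln^2 n\cdot\pi_{\max} = O(\ln^2 n/n)$, and the indicator of $\{t \le \min(T,\tau)\}$ is measurable with respect to the history before step $t$; summing these martingale increments shows the expected number of repeats up to $\min(T,\tau)$ is at most $O(\ln^2 n/n)\,\E{T}$. The only repeats not yet counted occur on $\mathcal{E}^c = \{\tau < T\}$, where I bound them crudely by $T$, giving $\E{T-S} \le O(\ln^2 n/n)\,\E{T} + \E{T\,\mathds{1}[\mathcal{E}^c]}$.

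The last term is negligible: by Cauchy--Schwarz, $\E{T\,\mathds{1}[\mathcal{E}^c]} \le \sqrt{\E{T^2}\,\P{\mathcal{E}^c}}$, and since $T$ has a polynomially bounded second moment (via the upper-tail estimate Theorem~\ref{thrmCouponUpperTail} on the coupon collector) while $\P{\mathcal{E}^c} \le n^{-C}$ for an arbitrarily large constant $C$, this term is far smaller than $(\ln^2 n/n)\E{T}$. Combining the pieces yields $\E{T-S} = O(\ln^2 n/n)\,\E{T}$, completing the proof. The one genuinely delicate step --- and the one I would treat most carefully --- is exactly the passage from the \emph{global} high-probability distinct-count bound to a \emph{per-step} conditional estimate, which the stopping time $\tau$ is designed to resolve; everything else is routine coupon-collector bookkeeping already established in Section~\ref{sectionCouponCollector}.
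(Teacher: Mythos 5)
Your proposal is correct and follows the same skeleton as the paper's proof: write $T-S$ as the number of repeated draws in deferred acceptance with re-proposals, bound the per-draw repeat probability by $O(\ln^2 n/n)$ using Corollary~\ref{thrmMaxIndividualProposals} (which, as you note, applies to the re-proposal process), and then argue that the exceptional event contributes negligibly, using the upper tail of $T$ from Proposition~\ref{thrmCouponUpperTail}. The two places where your implementation differs are worth noting. First, the paper conditions directly on the global event $E$ that no man makes more than $O(\ln^2 n)$ proposals and writes $\E{R_i \mid E} \le O(\ln^2 n/n)$, which is slightly informal since conditioning on a future event can in principle bias individual step probabilities; your stopping time $\tau$ with the predictability of $\1{t \le \min(T,\tau)}$ handles exactly this point rigorously, so your version is a genuine (if minor) tightening of the paper's argument. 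Second, for the exceptional contribution the paper decomposes the tail of $T$ into bands $\{iKn\ln n \le T \le (i+1)Kn\ln n\}$ and sums $\E{T\,\1{B_i}}$ term by term, whereas you use Cauchy--Schwarz together with a second-moment bound $\E{T^2} = O(n^2\ln^2 n)$ derived from the same tail estimate; both give the needed $o\big((\ln^2 n/n)\E{T}\big)$ bound, and the choice between them is a matter of taste. The final assembly --- $\E{S} = \E{T} - \E{T-S} = (1-O(\ln^2 n/n))\E{T}$, then invoking Theorem~\ref{thrmCouponCentralConcentration} and absorbing $\ln^2 n/n$ into the $O(1/\ln n)$ error --- matches the paper exactly.
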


% \begin{corollary}
%   TODOTODO:
%   Suppose men are uniform, i.e. there are not tiers on the men's side.
%   Then men's expected rank is (XYZ) and women's is (UVW)
% \end{corollary}

%%%%%%%%%%
%% Number of proposals made by different men in DA.
%%%%%%%%%%
\section{Rank Achieved by the Men}
\label{sectionRankOfMen}

Up until this point, our arguments have only crudely considered the 
preferences women have for men.
Due to the asymmetry across the different tiers, this
means we cannot yet calculate the expected rank men get.

Consider a man $m$ in tier $j$. Our main goal is to prove that the rank of $m$ is inversely proportional to $\beta_j$.
As in~\ref{thrmMaxPropsLastMan},
the core tool of our proof will be the fact that deferred acceptance is
independent of execution order (by~\ref{thrmDAExecutionOrderDoesntMatter}),
and thus we can wait until all other men
have finished proposing and found a match before letting $m$ propose.
Once this is done, the major ideas are
\begin{enumerate}
  \item Suppose $m$ has public score $1$, and define
    \[ p = \Es{w\sim\W}{\P{\text{$w$ accepts a proposal from $m$}}}. \]
    % We will relate $p$ to the expected rank that $m$ gets (i.e. the number
    % of proposals he makes in DA).
    % While we are not be able to calculate $p$ directly, we will show that
    % it scales properly with $m$'s public score $\beta$.
    Note that, if $m$ were able to propose to a woman independently
    multiple times, the number of proposals until $m$ gets his first
    acceptance would be distributed exactly according to $\Geo(p)$,
    and the expected value would be $1/p$.
    We show that (because men make much less than $n$ proposals) the
    difference due to re-proposals is not large.

  \item Because $m$ is the last man to propose, most women have already
    seen many proposals and arrived at a decent match.
    When $m$ gets his first acceptance, he should thus be likely to stay
    where he is. We show that, while the probability of $m$ proposing to 
    more women is non-negligible,
    it still contributes only $O(1)$ in expectation.
    So $m$'s expected rank is $1/p$ up to lower-order terms.

  \item Another consequence of a woman $w$ receiving a large number 
    of proposals is the following:
    \begin{align*}
      \P{\text{$w$ accepts a proposal from $m'$ with weight $\beta$}} 
      \qquad\qquad\qquad\\
       \approx \beta\cdot
      \P{\text{$w$ accepts a proposal from $m$ with weight $1$}}.
    \end{align*}
    simply by~\ref{thrmWomenDeferedDecisions} and the fact
    that $\beta / (\beta + \Gamma_w) \approx \beta \cdot 1/(1+\Gamma_w)$
    for $\Gamma_w$ (the sum of public scores of men who proposed to $w$) large.
    Thus, if $m$ had public score $\beta$, the effective value of $p$ would be
    approximately $\beta p$, and the expected rank of $m$
    would become approximately $1/(\beta p)$.
    In other words, while we are not able to calculate $p$ directly, we
    show that % (once all men other than $m$ have proposed)
    $p$ scales properly with $m$'s score.

  \item \label{itemIssueTierSelectionPrior}
    Finally, we prove that the above holds for most sequences of
    proposals of men before $m$,
    and thus holds in expectation over the entire execution of DA.
    Note that the distribution of proposals before $m$ changes slightly
    depending on which tier $m$ is chosen from, but in a large market, we
    do not expect this to make a big difference.
\end{enumerate}

The biggest difference between the above proof sketch
and its implementation is that we focus on \emph{two} men proposing
at the end of DA. This serves to address
point~\ref{itemIssueTierSelectionPrior} above -- we are able to show that,
for the vast majority of sequences of proposals before the last two men,
their expected ranks are proportional to the ratio of their scores.
Thus, this ratio holds in expectation over all of DA.
Focusing on two men also allows us to bound the \emph{correlation} between
the two men's ranks, which is crucial for our concentration results.

In our proof, we also formalize what it means for all men other than two to
propose, with the notion of a ``partial matching state''.
Moreover, we give the term \emph{smooth} to those states in which the proof
sketch above goes through. Most crucially, in smooth matching states,
``most women have received a lot of proposals'',
so that the reasoning in points~2 and~3 are valid.
Additionally, to address certain technicalities (such as being able to
bound the magnitude of the expected number of proposals) we define smooth
matching states to not have too many proposals in total.

\subsection{Smooth matching states}

\begin{definition}\label{defPartialMatchingState}
  Given a set of men $L$,
  we define the \emph{partial matching state excluding $L$},
  denoted $\mu_{-L}$, as follows:
  Run DA with men in $M\setminus L$ proposing to $W$,
  and keep track of which proposals were made.
  More specifically, if $\mu$ is the (partial) matching
  resulting from running DA with a set of men $M\setminus L$
  and set of women $W$, and 
  $P = \{ (m_{i_{\ell}}, w_{j_\ell}) \}_{\ell}$
  is the set of all tuples $(m_i, w_j)$
  where $m_i$ proposed to $w_j$ during this process,
  then $\mu_{-L} = (\mu, P)$.

  In a random matching market, we consider this state as a random
  variable. In a tiered random matching market, to specify this random
  variable, it suffices to give a multiset of tiers which the men in $L$
  belong to.
  For a fixed $\mu_{-L}$,
  denote by $\Gamma_w$ the total sum of
  weights which woman $w$ received in $P$.
\end{definition}

Note that the state $\mu_{-L}$ keeps track of which proposals
have been made (in addition to which current matches are formed)
before the men in $L$ propose.

\begin{definition}\label{defSmoothMatchingState}
  We call a partial matching state $\mu_{-L}$
  \emph{smooth} if the following hold for some constants 
  $C_1, C_2, C_3 > 0$:
  \begin{enumerate}
    \item At most $C_1 n\ln n$ proposals were made to women overall.
    \item At most $n^{1-C_2}$ women have received fewer than $C_3 \ln n$ 
      proposals.
  \end{enumerate}
\end{definition}

The constants $C_1, C_2, C_3$ in the above depend on the tier structure,
and can simply be chosen such that the following proposition holds.
Our arguments will go through if smoothness holds with respect to
any $C_1, C_2, C_3$ which are held constant as $n\to \infty$. 
The proof is given in appendix~\ref{appendixReachingSmooth}.

\begin{restatable}{proposition}{RestateThrmSmoothWHP}
  \label{thrmSmoothWHP}
  Let $L = \{m_1, m_2\}$ be any pair of men.
  After running deferred acceptance,
  $\mu_{-L}$ is smooth with probability $1 - n^{-\Omega(1)}$.
\end{restatable}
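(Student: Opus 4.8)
The proposition says that after running DA (with all $n$ men, or equivalently with the $n-2$ men in $M \setminus L$, since the state $\mu_{-L}$ only tracks proposals before $m_1, m_2$ join), the partial matching state $\mu_{-L}$ is smooth with high probability. Smoothness has two conditions: (1) at most $C_1 n \ln n$ total proposals, and (2) at most $n^{1-C_2}$ women receiving fewer than $C_3 \ln n$ proposals. Condition (1) should be nearly immediate from the machinery already developed.

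**Plan of attack.**

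First I would dispatch condition (1). By Proposition~\ref{thrmStatisticalDominance}, the number $S$ of proposals in DA is statistically dominated by the coupon collector $T$, and by the concentration results already invoked (the upper-tail bound \ref{thrmCouponUpperTail} used in Corollary~\ref{thrmMaxIndividualProposals}), $T = O(Cn\ln n)$ with probability $1 - n^{-C}$. Since running DA with $M \setminus L$ makes no more proposals than running with all of $M$, choosing $C_1$ to match the constant from that tail bound gives condition (1) with probability $1 - n^{-\Omega(1)}$. The only care needed is that dropping two men can only decrease the proposal count, so the dominance still applies.

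The main work — and the main obstacle — is condition (2): showing that few women are ``under-proposed-to.'' The natural approach is to fix a single woman $w$ and bound the probability that she receives fewer than $C_3 \ln n$ proposals, then control the expected number of such women via Markov's inequality. The cleanest route is to exploit the coupling with deferred acceptance with re-proposals: under re-proposals, each proposal is an independent draw from $\W$, so the number of proposals $w$ receives is a binomial with $S' \approx \frac{\epsilonDotAlpha}{\alpha_{\min}} n \ln n$ trials and per-trial success probability $\pi(w) = \alpha(w)/(n\,\epsilonDotAlpha) \ge \pim = \Theta(1/n)$. Hence the expected number of proposals to any $w$ is $\Omega(\ln n)$, and a lower-tail Chernoff bound gives $\Pr[\text{$w$ gets} < C_3\ln n \text{ proposals}] \le n^{-c}$ for a suitable small constant $c$ once $C_3$ is chosen below the mean. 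Then $\ESym[\#\{w : w \text{ under-proposed}\}] \le n \cdot n^{-c} = n^{1-c}$, and Markov's inequality yields at most $n^{1-C_2}$ such women with high probability for any $C_2 < c$.

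**Where the difficulty really lies.** Two subtleties must be handled carefully. First, the re-proposal coupling over-counts: a proposal $w$ receives under re-proposals may be a repeat, and only distinct proposers count toward $\Gamma_w$ and toward the ``number of proposals'' in Definition~\ref{defSmoothMatchingState}. By Corollary~\ref{thrmMaxIndividualProposals} any single man proposes $O(\ln^2 n)$ times, so repeats to a fixed $w$ are negligible; alternatively, one argues that distinct proposals to $w$ under true DA stochastically dominate what is needed. Second, and more delicate, is that $S$ is itself a random variable and is not independent of which women get proposed to, so I cannot naively treat the number of proposals to $w$ as a clean binomial. The clean fix is to condition on the high-probability event that $S' = \Theta(n\ln n)$ (from condition (1)'s analysis), and on this event lower-bound the proposals to $w$; since we only need a lower bound on proposals and an upper bound on the count of bad women, it suffices to work with $S' \ge c' n \ln n$. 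I expect verifying this conditioning rigorously — ensuring the Chernoff bound survives the correlation between $S'$ and individual counts — to be the crux of the appendix argument. Finally I would union/Markov the two conditions together: condition (1) holds with probability $1 - n^{-\Omega(1)}$ and condition (2) with probability $1 - n^{-\Omega(1)}$, so both hold simultaneously with probability $1 - n^{-\Omega(1)}$, as claimed.
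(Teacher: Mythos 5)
Your overall route matches the paper's: condition (1) via statistical dominance by the coupon collector plus its upper tail, and condition (2) via the re-proposal coupling, a per-woman multiplicative Chernoff bound, Markov's inequality over the women, and a final correction for repeated proposals. Your treatment of repeats (using the $O(\ln^2 n)$ cap on proposals per man from Corollary~\ref{thrmMaxIndividualProposals}) is a legitimate variant of the paper's Proposition~\ref{thrmNoTrippleProps} (with high probability no man proposes to the same woman three times, so each woman keeps at least a third of her proposals), and your fix for the correlation between the total proposal count and the individual counts --- restrict attention to the first $c' n \ln n$ i.i.d.\ draws on the event that at least that many proposals are made --- is exactly how the paper proceeds, phrased there as a union bound rather than conditioning.

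The genuine gap is the source of the lower bound $S' \ge c' n\ln n$. You attribute it to ``condition (1)'s analysis,'' but condition (1) only provides the \emph{upper} tail: the dominance $S \preceq T$ together with Proposition~\ref{thrmCouponUpperTail} bounds the number of proposals from above, and dominance in that direction can never yield the lower bound that condition (2) actually needs (women receive many proposals only if many proposals are made in total). A high-probability lower bound with \emph{polynomial} error --- required here, since the statement claims $1 - n^{-\Omega(1)}$, and the Chebyshev-based concentration around the mean in Appendix~\ref{appendixCouponConcentration} only gives error $O(1/\ln n)$ --- comes from the separate lower-tail analysis, Proposition~\ref{thrmCouponLowerTail}. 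Moreover, there is a wrinkle you do not address: with only the $n-2$ men of $M\setminus L$ proposing, the re-proposal process terminates as soon as $n-2$ distinct women have been proposed to, so the relevant random variable is a truncated coupon collector, not the full one. The paper needs the dedicated Corollary~\ref{thrmChernoffCouponLastTwoMenLowerTail} (the lower-tail bound with the largest geometric waiting times removed from the sum) precisely for this step. Without these ingredients, your conditioning event is not known to hold with probability $1 - n^{-\Omega(1)}$, and the argument for condition (2) cannot get off the ground.
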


Once we know that $\mu_{-L}$ is smooth, our two main tasks are to show that
men's ranks scale inverse-proportionally to their score,
and that the ranks of different men do not correlate too highly.
These   are the main technical novelties of the paper. The exact details are given in Appendix~\ref{sectionSmooth}.

\begin{restatable}{proposition}{RestateThrmSmoothRanksScale}
  \label{thrmSmoothedRanksProportional}
  % Suppose assumption~\ref{assumeSmooth} holds.
  Suppose $\mu_{-L}$ is smooth, and
  let $r_1$ and $r_2$ be the ranks of $m_1$ and $m_2$ after running DA with
  $m_1$ and $m_2$ starting from $\mu_{-L}$.
  We have
  \begin{equation*}
    \mathbb{E}_{L}[r_1]
     = \big(1 \pm O(1/\ln n)\big)\frac{\beta_2}{\beta_1} \mathbb{E}_{L}[r_2].
  \end{equation*}
  where we use $\Es{L}{}$ to denote taking an expectation over the random
  process of $m_1, m_2$ proposing in DA after starting from state $\mu_{-L}$.
\end{restatable}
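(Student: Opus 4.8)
We want to prove that for a smooth partial matching state $\mu_{-L}$ with $L = \{m_1, m_2\}$, the expected ranks $r_1, r_2$ (over the random DA process with just these two men proposing at the end) satisfy $\mathbb{E}_L[r_1] = (1 \pm O(1/\ln n)) \frac{\beta_2}{\beta_1}\mathbb{E}_L[r_2]$.

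**Intuition from the proof sketch.** The key idea in the excerpt's "major ideas" list: a man with weight 1 gets his first acceptance after roughly $\text{Geo}(p)$ proposals where $p = \mathbb{E}_{w\sim \mathcal{W}}[\mathbb{P}[w \text{ accepts } m]]$, and for a man with weight $\beta$, the acceptance probability scales to $\approx \beta p$ because $\beta/(\beta + \Gamma_w) \approx \beta \cdot 1/(1+\Gamma_w)$ when $\Gamma_w$ is large. Smoothness guarantees most women have $\Gamma_w = \Omega(\ln n)$, making this approximation valid.

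**Key technical steps.** I need to show the expected rank is essentially $1/(\beta p)$ up to lower-order terms. The approximation $\frac{\beta}{\beta + \Gamma_w} = \frac{\beta}{1 + \Gamma_w}\cdot\frac{1+\Gamma_w}{\beta + \Gamma_w}$ has multiplicative error $\frac{1+\Gamma_w}{\beta+\Gamma_w} = 1 \pm O(1/\Gamma_w) = 1 \pm O(1/\ln n)$ precisely on the women with many proposals. The challenge is controlling the contribution of the "bad" women (at most $n^{1-C_2}$ of them, those with few proposals). Since these are a tiny fraction, the probability $m_i$ proposes to one before terminating, summed over the $O(\log^2 n)$ proposals $m_i$ makes (Corollary~\ref{thrmMaxIndividualProposals}), is negligible. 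I must also account for re-proposals (which contribute only a $O(\ln^2 n/n)$ fraction) and for the fact that after his first acceptance, $m_i$ may be displaced and propose again — smoothness (item 2) ensures the subsequent rejection chains return to $m_i$ with small probability, so this adds only $O(1)$ to the expected rank, which is lower order since $\mathbb{E}_L[r_i] = \Theta(\ln n)$.

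**The plan.** First, I would define $p_1^{(i)}$ as the acceptance probability of $m_i$'s first proposal, conditioned on $\mu_{-L}$, and establish that $\mathbb{E}_L[r_i] = (1\pm O(1/\ln n)) \cdot 1/p^{(i)}$ by (a) coupling to $\text{Geo}(p^{(i)})$ to handle the first acceptance, (b) bounding the re-proposal correction, and (c) bounding the "displacement and re-propose" correction using smoothness item 2. Second — the crux — I would show $p^{(1)}/p^{(2)} = (1 \pm O(1/\ln n))\beta_1/\beta_2$ by writing each $p^{(i)} = \mathbb{E}_{w\sim\mathcal{W}}[\beta_i/(\beta_i + \Gamma_w)]$, restricting to the $\geq n - n^{1-C_2}$ women with $\Gamma_w \geq C_3\ln n$, and using the multiplicative approximation $\beta_i/(\beta_i + \Gamma_w) = \frac{\beta_i}{1+\Gamma_w}(1 \pm O(1/\ln n))$, so the ratio telescopes to $\beta_1/\beta_2$ up to the stated error. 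The main subtlety here is that $m_1$ and $m_2$ propose over the same $\mu_{-L}$, so $\Gamma_w$ is shared across the two computations and the ratio is genuinely governed by the common weight profile; I would emphasize this is where fixing the state $\mu_{-L}$ (rather than averaging) makes the argument clean.

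**The main obstacle.** The hard part is making rigorous the claim that $m_i$'s rank is dominated by the \emph{first} acceptance, i.e. controlling the tail contribution of women with few proposals and of post-acceptance displacement chains. I expect this to require a careful union bound over $m_i$'s $O(\log^2 n)$ proposals against the $n^{1-C_2}$ fraction of "unsaturated" women, together with an argument (analogous to step 2 of Lemma~\ref{thrmMaxPropsLastMan}) that once $m_i$ is accepted by a well-matched woman, the rejection chain terminates at $w_{\text{last}}$ with constant probability. A secondary technical point is ensuring these error terms remain lower-order after dividing by $p^{(i)} = \Theta(1/\ln n)$ — so the absolute corrections must be $O(1)$, not merely $o(1/p^{(i)})$ naively, which is exactly what smoothness is engineered to deliver.
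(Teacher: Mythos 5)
Your proposal is correct and follows essentially the same route as the paper's proof: your coupling of the first acceptance to $\Geo(p^{(i)})$ is the paper's comparison of $\rinit_i$ with $\rgeo_i \sim \Geo(p(\beta_i))$ (Propositions~\ref{thrmInitialGeometricExpectation} and~\ref{thrmInitCloseToGeo}), your $O(1)$ displacement-and-repropose correction is Proposition~\ref{thrmRestConstant}, and your ratio argument $p^{(1)}/p^{(2)} = (1 \pm O(1/\ln n))\,\beta_1/\beta_2$ via restriction to women with $\Gamma_w \ge C_3 \ln n$ is exactly Propositions~\ref{thrmPBetaMagnitude} and~\ref{thrmBetaScalesWithinLogError}. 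One caution for fleshing out the displacement step: the constant chain-termination probability from step 2 of Lemma~\ref{thrmMaxPropsLastMan}, by itself, would only bound the correction by $O(\ln n)$ (constant displacement probability times $O(\ln n)$ expected continuation), so you must argue as the paper does that $m_i$'s match lies in $\Whigh$ (hence accepts each subsequent proposal with probability $O(1/\ln n)$) and, by the constant-ratio argument, receives only $O(1)$ further proposals in expectation, giving displacement probability $O(1/\ln n)$ and thus the $O(1/\ln n)\cdot O(\ln n) = O(1)$ correction you correctly identify as the target.
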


\begin{restatable}{proposition}{RestateThrmSmoothRanksCovariance}
  \label{thrmSmoothCovar}
  Suppose $\mu_{-L}$ is smooth, and
  let $r_1$ and $r_2$ be the ranks of $m_1$ and $m_2$ after running DA with
  $m_1$ and $m_2$ starting from $\mu_{-L}$.
  Then we have $\Cov(r_i, r_j) = O(\ln^{3/2}n)$.
\end{restatable}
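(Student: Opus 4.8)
The plan is to exploit order-independence of DA (Lemma~\ref{thrmDAExecutionOrderDoesntMatter}) to let $m_1$ finish proposing first, reaching an intermediate state $\mu^{(1)}$, and only then let $m_2$ propose. We condition throughout on the fixed smooth $\mu_{-L}$, so all randomness lies in the two men's proposal sequences and the women's deferred accept/reject coins. (Note that for $i=j$ the claimed bound is false, since $\Var(r_i)=\Theta(\ln^2 n)$; the content is the cross term, so I treat $\Cov(r_1,r_2)$ with $m_1\neq m_2$.) The natural decomposition is the law of total covariance:
\[
  \Cov(r_1,r_2)
  = \Cov\big(\E{r_1\mid \mu^{(1)}},\ \E{r_2\mid \mu^{(1)}}\big)
    + \E{\Cov(r_1,r_2\mid \mu^{(1)})}.
\]
The analysis behind Proposition~\ref{thrmSmoothedRanksProportional} supplies the crucial conditional approximation: starting from the (smooth) state $\mu^{(1)}$, the rank $m_2$ achieves is approximately geometric, so $\E{r_2\mid \mu^{(1)}} = (1\pm O(1/\ln n))\,/\,p_2' + O(1)$, where $p_2' = \Es{w\sim\W}{\beta_2/(\beta_2+\Gamma^{(1)}_w)}$ is the mean acceptance probability $m_2$ faces in $\mu^{(1)}$, and $\Gamma^{(1)}_w$ are the proposal weights after $m_1$ has finished. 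Likewise $\E{r_1\mid \mu^{(1)}}$ is, up to the correction from $m_1$ possibly being bumped back by $m_2$, just $m_1$'s rank $r_1^{\mathrm{base}}$ at the end of his own chain. Thus the first term reduces to $\Cov(r_1^{\mathrm{base}},\,1/p_2')$ plus lower-order pieces.

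For this cross term I would apply Cauchy--Schwarz, $|\Cov(r_1^{\mathrm{base}},1/p_2')|\le \sqrt{\Var(r_1^{\mathrm{base}})\,\Var(1/p_2')}$. The first factor is routine: the number of proposals $m_1$ makes before his first acceptance is stochastically dominated by a geometric variable with success probability $p_1 = \Omega(1/\ln n)$ (the same estimate as in Lemma~\ref{thrmMaxPropsLastMan}), so $\Var(r_1^{\mathrm{base}}) = O(1/p_1^2) = O(\ln^2 n)$, and the sub-exponential tail of Lemma~\ref{thrmMaxPropsLastMan} controls all moment estimates. The entire improvement over the trivial bound $O(\ln^2 n)$ therefore comes from the second factor: I aim to show $\Var(1/p_2') = O(\ln n)$, which combined with the above yields $\Cov(r_1^{\mathrm{base}},1/p_2') = O(\sqrt{\ln^2 n\cdot \ln n}) = O(\ln^{3/2} n)$.

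Bounding $\Var(1/p_2')$ is where smoothness does the work. Writing $\Gamma^{(1)}_w = \Gamma^{(0)}_w + \Delta_w$ with $\Delta_w\ge 0$ the weight deposited by $m_1$'s rejection chain, smoothness guarantees that all but $n^{1-C_2}$ women already satisfy $\Gamma^{(0)}_w = \Omega(\ln n)$, so each term $\beta_2/(\beta_2+\Gamma_w)$ is perturbed by only $O(\Delta_w/\ln^2 n)$, and the aggregate shift $1/p_2' - 1/p_2$ is (to leading order) proportional to $\tfrac1n\sum_w \Delta_w$, i.e. to the total weight deposited by $m_1$'s chain. The target variance bound then amounts to a second-moment control of this deposited weight. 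The remaining term $\E{\Cov(r_1,r_2\mid\mu^{(1)})}$ captures the \emph{direct} interaction channel --- principally $m_2$'s chain bumping $m_1$ back off his match, which forces $m_1$ to re-propose; here I would argue via the smooth-state estimates that each such re-proposal episode contributes only $O(1)$ extra rank in expectation and that this interaction, bounded through the geometric tails, contributes at most $O(\ln^{3/2} n)$.

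The main obstacle is the second-moment control of the perturbation $1/p_2'$, i.e. showing that the random, correlated weight that $m_1$'s rejection chain pours into the women cannot swing $m_2$'s effective success probability by more than the $O(\ln n)$-variance budget allows. Concretely this requires (i) bounding the length and, crucially, the \emph{spread} of the chain triggered when $m_1$ is inserted into a smooth state --- arguing that its weight lands across many already-well-matched women rather than concentrating on a few --- and (ii) checking that the residual fluctuation is only weakly coupled to $r_1^{\mathrm{base}}$, which is plausible because $r_1^{\mathrm{base}}$ is governed by $m_1$'s first handful of proposals while the chain's footprint is driven by the downstream bumped men. I expect this step to reuse and strengthen (from first moments to second moments) the smooth-state estimates developed for Proposition~\ref{thrmSmoothedRanksProportional}.
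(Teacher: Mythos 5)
Your proposal takes a genuinely different route from the paper, and it contains a real gap at exactly the point you flag. The paper never lets $m_1$ run to completion before introducing $m_2$. Instead it splits each rank as $r_i = \rinit_i + \rrest_i$, where in phase (1) each man proposes only until his \emph{first} acceptance and phase (2) is the remainder of DA. The pair $(\rinit_1,\rinit_2)$ is then coupled \emph{jointly} to a pair of independent geometrics $(\rgeo_1,\rgeo_2)$, with coupling failure probability $O(\ln^4 n/n)$ (whp each man makes only $O(\ln^2 n)$ proposals, so the two almost never touch a common woman); independence gives $\E{\rinit_1\rinit_2} = \E{r_1}\E{r_2}\pm O(\ln n)$ with no variance control of any state-dependent quantity. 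The cross terms are handled by a tail bound tuned to the exponent in the statement ($r_i \le O(\ln^{3/2}n)$ with probability $1-\exp(-\sqrt{\ln n})$, proposition~\ref{thrmQuasipolyPropBound}) combined with $\E{\rrest_j \mid \cdot} = O(1)$, and $\E{\rrest_1\rrest_2} = O(\ln n)$ follows because the event that \emph{both} men are bumped from their first acceptances has probability $O(1/\ln^2 n)$. That cross-term accounting is where the $\ln^{3/2}$ actually comes from.

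The gap in your route: by letting $m_1$ ``finish,'' the state $\mu^{(1)}$ differs from $\mu_{-L}$ not by $m_1$'s own $O(\ln^2 n)$ proposals but by the entire rejection chain he triggers, and since $\mu_{-L}$ has two unmatched women, that chain terminates only when one of them receives a proposal; each proposal reaches them with probability $\Theta(1/n)$ (cf.\ the $\Geo(2\pim)$ domination in the proof of proposition~\ref{thrmRestConstant}), so the chain has expected length $\Theta(n)$ with geometric fluctuations at scale $n$. Hence $\sum_w \Delta_w$ is $\Theta(n)$ in expectation, and your claim $\Var(1/p_2') = O(\ln n)$ rests on a delicate cancellation: the $\Theta(\ln^2 n)$ amplification from differentiating $1/p_2$ must be beaten by the $\Theta(1/\ln^2 n)$ per-unit-weight damping at women in $\Whigh$. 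A heuristic computation suggests the truth is even $O(1)$, but nothing in your outline establishes second moments of the deposit process, controls the split of deposited weight across $\Whigh$ versus $\Wlow$, or handles the event that $\mu^{(1)}$ fails to be smooth \emph{inside a covariance} (failure events cannot simply be discarded there; the paper confronts the analogous issue only at the level of theorem~\ref{thrmMenRanksCentralConcentration}, via a law of total covariance over a state indicator). Your residual term $\E{\Cov(r_1,r_2\mid\mu^{(1)})}$ --- the bump-back channel --- is likewise only gestured at, yet it is the analogue of the paper's $\E{\rinit_i\rrest_j}$ and $\E{\rrest_1\rrest_2}$ estimates, which require the tuned tail bound above. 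In short: the architecture is plausible and would, if completed, be more quantitatively informative than the paper's (it would measure how $m_1$'s chain shifts $m_2$'s acceptance probability), but the pivotal variance lemma is conjectured rather than proved, so as written this is an outline, not a proof.
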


\subsection{Expected rank of the men}
\label{sectionRankOfMenConclusion}

In this subsection, we show that overall, expected rank scales
proportionally to fitness (in addition to under smooth matching states). This allows us to compute the expected rank of the men.
The proofs (deferred to appendix~\ref{appendixMenRanksProof})
follow by carefully keeping track of the (limited)
effect of non-smooth matching states on the expectation.

\begin{restatable}{proposition}{RestateThrmMenRanksProportional}
\label{thrmMenRanksProportional}
  Let $r_i$ and $r_j$ denote the rank of a man in tiers $i$ and $j$.
  Then we have
  \[ \E{ r_i } = \big(1 \pm O(1/\ln n)\big)
    \frac{\beta_j}{\beta_i}\E{r_j}.
  \]
\end{restatable}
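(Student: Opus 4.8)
The plan is to lift the conditional statement of Proposition~\ref{thrmSmoothedRanksProportional} — which holds relative to a fixed smooth partial matching state $\mu_{-L}$ — to an unconditional statement about the expected ranks over the entire run of DA. The key identity is the law of total expectation: writing $r_i$ for the rank of a man in tier $i$, we decompose
\begin{equation*}
  \E{r_i} = \Es{\mu_{-L}}{\Es{L}{r_i \mid \mu_{-L}}},
\end{equation*}
and we split this expectation into the contribution of smooth states and non-smooth states. By Proposition~\ref{thrmSmoothWHP}, the probability that $\mu_{-L}$ is non-smooth is $n^{-\Omega(1)}$, so the whole game is to show that the non-smooth states contribute only a lower-order term, after which Proposition~\ref{thrmSmoothedRanksProportional} handles the smooth states essentially verbatim.

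First I would set up the two-man device: take $L = \{m_1, m_2\}$ with $m_1$ in tier $i$ and $m_2$ in tier $j$, run DA on $M\setminus L$ to obtain $\mu_{-L}$, and then let $m_1, m_2$ propose. On smooth states, Proposition~\ref{thrmSmoothedRanksProportional} gives $\Es{L}{r_i\mid\mu_{-L}} = (1\pm O(1/\ln n))\,\tfrac{\beta_j}{\beta_i}\,\Es{L}{r_j\mid\mu_{-L}}$ pointwise, so summing this (weighted by the probability of each smooth state) preserves the multiplicative relation up to the same $(1\pm O(1/\ln n))$ factor. The subtlety flagged in point~\ref{itemIssueTierSelectionPrior} of the proof sketch is that the distribution of $\mu_{-L}$ depends slightly on which tiers $m_1, m_2$ are drawn from (since removing one tier-$i$ man versus one tier-$j$ man perturbs the setup); I would argue this perturbation changes the distribution over $\mu_{-L}$ by only $o(1)$ in total variation, because a single man accounts for a vanishing fraction of the $\Theta(n\ln n)$ total proposals — this is exactly the kind of insensitivity that Lemma~\ref{thrmMaxPropsLastMan} and Corollary~\ref{thrmMaxIndividualProposals} make quantitative (any single man proposes to only $O(\ln^2 n)$ women).

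The main obstacle, and the step I would spend the most care on, is bounding the non-smooth contribution. The danger is that even though non-smooth states are rare, the rank of a man conditioned on a bad state could in principle be large, so a naive bound $\P{\text{non-smooth}}\cdot \E{r_i\mid\text{non-smooth}}$ is not automatically negligible. I would control this by using Corollary~\ref{thrmMaxIndividualProposals}: with probability $1 - n^{-C}$ no man makes more than $O(C\ln^2 n)$ proposals, so $r_i$ is at most $O(C\ln^2 n)$ except on an event of probability $n^{-C}$, and on that last event $r_i \le n$ trivially. Combining the smoothness failure probability $n^{-\Omega(1)}$ with the rank tail bound, and choosing $C$ large enough, the non-smooth states contribute at most $n^{-\Omega(1)}\cdot O(\mathrm{poly}\ln n) + n^{-C}\cdot n = n^{-\Omega(1)}$, which is negligible compared to the smooth-state contribution of order $\ln n$ (the expected rank of a man, by Theorem~\ref{thrmMenRanksCentralConcentration}). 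Finally, I would assemble the pieces: $\E{r_i}$ equals the smooth-state contribution up to an additive $n^{-\Omega(1)}$ error, the smooth-state contributions for tiers $i$ and $j$ satisfy the $\tfrac{\beta_j}{\beta_i}$ proportionality up to $(1\pm O(1/\ln n))$, and the negligible error folds into the same multiplicative factor, yielding $\E{r_i} = (1\pm O(1/\ln n))\,\tfrac{\beta_j}{\beta_i}\,\E{r_j}$ as claimed.
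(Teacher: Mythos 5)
Your proposal is correct and follows essentially the same route as the paper's proof: exclude one man from each tier simultaneously, split the expectation over $\mu_{-L}$ into smooth and non-smooth states via Proposition~\ref{thrmSmoothWHP}, apply Proposition~\ref{thrmSmoothedRanksProportional} pointwise on smooth states, and kill the non-smooth contribution using the $O(\ln^2 n)$ proposal bound of Corollary~\ref{thrmMaxIndividualProposals} together with the trivial bound $r\le n$ on the residual polynomially-small event.

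Two blemishes, both easily repaired. First, your total-variation digression is unnecessary: the entire purpose of tracking \emph{two} men, one from each tier, is that both ranks are conditioned on the \emph{same} state $\mu_{-L}$, so Proposition~\ref{thrmSmoothedRanksProportional} is applied inside a single conditional expectation and no comparison between the setup distributions arising from different tier choices is ever needed --- this is precisely how the paper resolves point~\ref{itemIssueTierSelectionPrior}. (Moreover, Lemma~\ref{thrmMaxPropsLastMan} and Corollary~\ref{thrmMaxIndividualProposals} bound proposal counts, not total-variation distances between state distributions, so the claim as you phrase it is not actually supported by those results.) Second, your appeal to Theorem~\ref{thrmMenRanksCentralConcentration} to certify that the main term has order $\ln n$ is circular: that theorem is proved downstream of the present proposition, via Theorem~\ref{thrmMenRanksExpectation}. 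Fortunately you do not need it --- since every rank is at least $1$, the additive error $n^{-\Omega(1)}$ is automatically $O\big(\E{r_j}/\ln n\big)$ and can be absorbed into the $\big(1\pm O(1/\ln n)\big)$ factor; alternatively, the $\Theta(\ln n)$ magnitude of the smooth-state contribution follows non-circularly from Proposition~\ref{thrmPBetaMagnitude} through the proof of Proposition~\ref{thrmSmoothedRanksProportional}.
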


\begin{restatable}{theorem}{RestateThrmMenRanksExpectation}
\label{thrmMenRanksExpectation}
  Let $\bm{\beta}^{-1}$ denote the vector $(1/\beta_i)_{i}$.
  For each tier $j$, the rank $r_j$ of men in tier $j$ has expectation
  \[ \E{r_j} 
    = \big(1 \pm O(1/\ln n)\big)
    \frac{\E{S}}{(n\deltaDotBeta) \beta_j }
    = \big(1 \pm O(1/\ln n)\big)
    \frac{\epsilonDotAlpha}{\alpha_{\min}}\cdot 
    \frac{1}{(\bm{\delta}\cdot\bm{\beta}^{-1})}\cdot
    \frac{\ln n}{\beta_j}.
  \]
\end{restatable}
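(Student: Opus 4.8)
The plan is to combine the two facts already in hand---the cross-tier proportionality of expected ranks (Proposition~\ref{thrmMenRanksProportional}) and the total number of proposals (Theorem~\ref{thrmDACentralConcentration})---through a simple conservation identity for proposals.

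First I would record the accounting identity linking individual ranks to the total proposal count. In DA (without re-proposals) a man proposes to women strictly in order of his preference list, never re-proposing, and ends matched to the last woman he proposes to; hence the number of proposals he makes is exactly the index of his partner on his list, i.e.\ his rank. Summing over all men gives $S = \sum_{m\in M} r(m)$. Because preference lists are drawn i.i.d.\ within each tier and the women treat men symmetrically according to tier, every man in tier $j$ shares the same expected rank $\E{r_j}$, so grouping the $n\delta_j$ men of each tier yields
\[ \E{S} = \sum_j (n\delta_j)\,\E{r_j}. \]

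Next I would invoke Proposition~\ref{thrmMenRanksProportional}, which for the fixed target tier $j$ and every other tier $i$ gives $\E{r_i} = \big(1\pm O(1/\ln n)\big)\frac{\beta_j}{\beta_i}\E{r_j}$. Substituting into the identity above,
\[ \E{S} = \big(1\pm O(1/\ln n)\big)\, n\,\beta_j\,\E{r_j}\sum_i \frac{\delta_i}{\beta_i}
   = \big(1\pm O(1/\ln n)\big)\, n\,(\bm{\delta}\cdot\bm{\beta}^{-1})\,\beta_j\,\E{r_j}. \]
The only point needing a moment of care is that a constant number $k_2$ of individual $\big(1\pm O(1/\ln n)\big)$ factors are being aggregated; since each is uniformly $1\pm O(1/\ln n)$ and the summands $\delta_i/\beta_i$ are nonnegative, the combined factor remains $1\pm O(1/\ln n)$. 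Solving for $\E{r_j}$ produces the first claimed equality, $\E{r_j} = \big(1\pm O(1/\ln n)\big)\,\E{S}/\big(n(\bm{\delta}\cdot\bm{\beta}^{-1})\beta_j\big)$.

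Finally I would substitute the value $\E{S} = \big(1\pm O(1/\ln n)\big)\frac{\epsilonDotAlpha}{\alpha_{\min}} n\ln n$ from Theorem~\ref{thrmDACentralConcentration} and absorb the product of the two error factors into a single $\big(1\pm O(1/\ln n)\big)$, giving the closed form $\big(1\pm O(1/\ln n)\big)\frac{\epsilonDotAlpha}{\alpha_{\min}}\cdot\frac{1}{\bm{\delta}\cdot\bm{\beta}^{-1}}\cdot\frac{\ln n}{\beta_j}$. The substance of this theorem is essentially bookkeeping: all the genuine difficulty has already been packaged into Proposition~\ref{thrmMenRanksProportional} (inverse-proportionality of ranks to public scores) and Theorem~\ref{thrmDACentralConcentration} (the total proposal count), so the main thing to watch is the propagation of the multiplicative error terms, which is harmless precisely because the number of tiers is a fixed constant as $n\to\infty$.
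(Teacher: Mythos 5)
Your proof is correct and takes essentially the same route as the paper's: both combine the conservation identity $\E{S} = \sum_i n\delta_i \E{r_i}$ (via within-tier symmetry) with Proposition~\ref{thrmMenRanksProportional} and the value of $\E{S}$ from Theorem~\ref{thrmDACentralConcentration}, absorbing the constantly many $\big(1\pm O(1/\ln n)\big)$ factors. Your added justifications---that a man's rank equals his number of proposals in DA, and that aggregating a fixed number $k_2$ of multiplicative error factors is harmless---merely make explicit what the paper leaves implicit.
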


Finally, we also use our results on the covariance of men's ranks to prove concentration. We defer the proof to appendix~\ref{appendixMenRanksProof}. At a high level, the proof follows simply because the weak correlation implied by~\ref{thrmSmoothCovar} means that the variance of the average of the ranks is lower-order (compared to its expectation), so Chebyshev's inequality can be used.

\begin{restatable}{theorem}{RestateThrmMenRanks}
  \label{thrmMenRanksCentralConcentration}
  For any tier $j$, let 
  $\overline R^{M}_j = (\delta_j n)^{-1} \sum_m r_m$ denote the average rank of
  men in tier $j$.
  Then, for any $\epsilon > 0$,
  \[ \overline R^{M}_j = (1 \pm \epsilon)
    \frac{\epsilonDotAlpha}{\alpha_{\min}}\cdot 
    \frac{1}{(\bm{\delta}\cdot\bm{\beta}^{-1})}\cdot
    \frac{\ln n}{\beta_j}
  \]
  with probability approaching $1$ as $n\to\infty$.
\end{restatable}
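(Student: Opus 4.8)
The plan is to prove concentration by applying Chebyshev's inequality to the average $\BarRM_j = (\delta_j n)^{-1}\sum_{m} r_m$, using that its expectation is already pinned down by Theorem~\ref{thrmMenRanksExpectation} and that its variance is of lower order than the square of its expectation. First I would note that, by symmetry within a tier, all men in tier $j$ have identically distributed ranks, so $\E{\BarRM_j} = \E{r_j}$, which by Theorem~\ref{thrmMenRanksExpectation} equals the claimed target value up to a factor $(1\pm O(1/\ln n))$; in particular $\E{\BarRM_j} = \Theta(\ln n)$. It therefore suffices to establish $\Var(\BarRM_j) = o\big((\E{\BarRM_j})^2\big) = o(\ln^2 n)$, since then for every fixed $\epsilon > 0$ Chebyshev's inequality gives
\[
  \P{\,\big|\BarRM_j - \E{\BarRM_j}\big| \ge \epsilon\,\E{\BarRM_j}\,}
  \le \frac{\Var(\BarRM_j)}{\epsilon^2\,(\E{\BarRM_j})^2} \longrightarrow 0,
\]
which (together with the expectation formula) yields the theorem.

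Next I would expand the variance using symmetry, writing $N = \delta_j n$ and fixing any two distinct men $m, m'$ in tier $j$:
\[
  \Var(\BarRM_j)
  = \frac{\Var(r_m)}{N} + \frac{N-1}{N}\,\Cov(r_m, r_{m'}).
\]
For the single-man variance, Corollary~\ref{thrmMaxIndividualProposals} gives that every man's rank is $O(\ln^2 n)$ with probability $1 - n^{-C}$, while $r_m \le n$ always; taking $C$ large enough this yields $\Var(r_m) \le \E{r_m^2} = O(\ln^4 n) + n^2\cdot n^{-C} = O(\ln^4 n)$, so the first term is $O(\ln^4 n / n) = o(1)$. (A sharper $O(\ln^2 n)$ bound follows from the geometric approximation of Section~\ref{sectionRankOfMen}, but is not needed here.) Thus the covariance term is the dominant one, and everything reduces to bounding $\Cov(r_m, r_{m'})$.

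For the pairwise covariance I would invoke order-independence of DA (Lemma~\ref{thrmDAExecutionOrderDoesntMatter}) to treat $m, m'$ as the last two men to propose, so the preceding proposals form the partial matching state $\mu_{-L}$ with $L = \{m, m'\}$. Conditioned on $\mu_{-L}$ being smooth, Proposition~\ref{thrmSmoothCovar} gives a conditional covariance of $O(\ln^{3/2} n)$; by Proposition~\ref{thrmSmoothWHP}, $\mu_{-L}$ is smooth with probability $1 - n^{-\Omega(1)}$. On the complementary (non-smooth) event I would discard the contribution using the rank tail bound: the product $r_m r_{m'}$ is $O(\ln^4 n)$ outside an event of probability $n^{-C}$, on which it is at most $n^2$, so the non-smooth states contribute only $o(1)$. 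Assembling these and plugging back in gives $\Var(\BarRM_j) = o(1) + O(\ln^{3/2} n) = o(\ln^2 n)$, which completes the Chebyshev argument.

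The main obstacle is the passage from the conditional covariance of Proposition~\ref{thrmSmoothCovar}, which is stated for a fixed smooth state (covariance over the randomness of $m, m'$ proposing from that state), to the unconditional covariance of the two fixed men. Besides discarding the rare non-smooth states, the law of total covariance produces a cross term $\Cov\big(\E{r_m\mid\mu_{-L}},\,\E{r_{m'}\mid\mu_{-L}}\big)$ capturing the correlation induced by both men facing the same random prior state. Controlling this term is the crux: the point is that the conditional mean rank $\E{r_m \mid \mu_{-L}}$ is essentially governed by the acceptance parameter of the state and varies little across smooth states (this is the content underlying Proposition~\ref{thrmSmoothedRanksProportional}, where the rank is shown to behave like a geometric random variable whose parameter is determined by the score $\beta_j$ up to lower-order terms), so almost all of $\Var(r_m)$ is within-state variance and the cross term is again $o(\ln^2 n)$. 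Making this quantitative — showing the between-state variance of the conditional means is dominated by the $O(\ln^{3/2} n)$ within-state covariance — is where the care lies.
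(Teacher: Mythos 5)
Your architecture is the paper's, nearly step for step: Chebyshev around the expectation pinned by Theorem~\ref{thrmMenRanksExpectation}; the per-man bound $\Var(r_m)=O(\ln^4 n)$ from Corollary~\ref{thrmMaxIndividualProposals} plus the trivial $r_m\le n$ on the polynomially small tail; and the pairwise covariance via the last-two-men device, Propositions~\ref{thrmSmoothWHP} and~\ref{thrmSmoothCovar}, with non-smooth states discarded by tail estimates. Even your one-case treatment of non-smooth states (bounding $\E{r_m r_{m'}\1{\text{non-smooth}}}$ directly by $O(\ln^4 n)\,n^{-\Omega(1)} + n^2\cdot n^{-C}$) is a legitimate, slightly simpler version of the paper's three-way split $U\in\{\texttt{s},\texttt{t},\texttt{c}\}$, where the intermediate case $U=\texttt{t}$ is handled with the conditional Lemma~\ref{thrmMaxPropsLastMan}.

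The one genuine divergence is the cross term, and it is the step you leave open. The paper never conditions the law of total covariance on the full state $\mu_{-L}$: it conditions on the coarse three-valued indicator $U$, so its between-group term is $\Cov\left(\E{r_1\mid U},\E{r_2\mid U}\right)$, which is $o(1)$ essentially for free because $\E{r_i\mid U}$ takes the single value $\E{r_i\mid U=\texttt{s}}$ with probability $1-n^{-\Omega(1)}$, giving $\Var\left(\E{r_i\mid U}\right)\le O(\ln^4 n)\,n^{-\Omega(1)} + n^2\cdot n^{-3}=o(1)$. Your fine conditioning instead produces $\Cov\big(\Es{L}{r_1},\Es{L}{r_2}\big)$ over the \emph{mixture} of smooth states, and your proposed resolution --- that $\Es{L}{r_m}$ ``varies little across smooth states'' --- does not follow from Proposition~\ref{thrmSmoothedRanksProportional}: that proposition controls only the \emph{ratio} $\Es{L}{r_1}/\Es{L}{r_2}$ up to $1\pm O(1/\ln n)$, while each conditional mean individually is $\big(1\pm O(1/\ln n)\big)/(\beta_i\,p(1))$ with $p(1)$ depending on the state. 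Smoothness (at most $C_1 n\ln n$ total proposals, few under-proposed women) leaves $1/p(1)$ free to vary by constant factors from state to state, so a priori the between-state covariance could be $\Theta(\ln^2 n)$ --- exactly the order you cannot afford. To finish along the paper's lines, coarsen the conditioning to $U$ rather than attempt a state-to-state concentration of the acceptance parameter, which would be a new result not established anywhere in the paper. (To your credit, you have surfaced a subtlety the paper elides: its assertion that $\Cov(r_1,r_2\mid U=\texttt{s})$ ``is exactly'' Proposition~\ref{thrmSmoothCovar} quietly identifies the covariance over the mixture of smooth states with the fixed-state covariance, thereby absorbing the very between-state term you isolate; but as written, the paper's route does not require the unproven concentration claim your sketch leans on.)
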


%%%%%%%%%%
%% Rank women achieve on average.
%%%%%%%%%%
\section{Expected rank of the women and the distribution of match types}
\label{sectionWomenRanks}

\subsection{Expected rank of women}

We saw in section~\ref{sectionRankOfMenConclusion}
that men achieve ranks proportional to the inverse 
of their public scores. In this section, we turn to the women.

To study the rank the women achieve, we
need to reason about the number of proposals women
receive on average.
By theorem~\ref{thrmMenRanksCentralConcentration}, we expect that
for each tier $j$ of men, the $\delta_j n$ men make a total number of
proposals approximately
\[ \frac{\delta_j \beta_j^{-1}}{\bm{\delta}\cdot\bm{\beta^{-1}}}
  \cdot\frac{\bm{\alpha}\cdot\bm{\epsilon}}{\alpha_{\min}}n \ln n.
\]
Each of these proposals goes to a woman in tier $i$ with
probability $\pi_i = \alpha_i / (n\epsilonDotAlpha)$,
so we expect such a woman to receive approximately
$ (\delta_j \beta_j^{-1})/(\bm{\delta}\cdot\bm{\beta^{-1}})
  \cdot(\alpha_i/{\alpha_{\min}})\ln n
$
proposals from men in tier $j$.
Each of these men has public score $\beta_j$, so we expect $\Gamma_w$,
the total sum of public scores of men proposing to $w$,
to be roughly
\[ \Gamma_w \approx
  \sum_j \beta_j\frac{\delta_j\beta_j^{-1}}{\bm{\delta}\cdot\bm{\beta^{-1}}}
  \cdot\frac{\alpha_i}{\alpha_{\min}}\ln n
  = \frac{\alpha_i \ln n}
    {\alpha_{\min}(\bm{\delta}\cdot\bm{\beta^{-1}})}.
\]

It is not immediately clear how the above
value of $\Gamma_w$ should translate to the \emph{rank}
that $w$ gets. Unlike in the case where men are uniform,
we cannot simply divide $n$ by the number of proposals
which $w$ receives.
% The rank is the (sum across all tiers $j$ of the)
% \emph{expected number of draws}
% from $\Exp(\beta_j)$ for all $n \delta_j$ men which are less than
% the minimum draw that $w$ received.
% I do not think this should scale exactly as $n/\beta$,
% where $\beta$
% is the total weight of the men who propose to $w$.
% Probably $O(n/\ln n)$ is still the correct thing, though.

Indeed, suppose a woman $w$ receives exactly the total
sum of weight $\Gamma_w$ predicted above. What should her rank be?
This is essentially the following:
across all tiers of $\delta_j n$ men each,
how many do we expect to beat her best proposal so far?
The probability that $w$ ranks a man $m$ higher than
her match, when
viewed according to~\ref{thrmWomenDeferedDecisions},
is a function only of the weight $\beta(m)$ of
$m$ and the weight of proposals $\Gamma_w$ which $w$ received.
Specifically, this probability is 
$\beta(m) / (\beta(m) + \Gamma_w) \approx \beta_j / \Gamma_w$.
Summing this
%\footnote{ \label{footnotePropRankTechnicality}}
across all the men, we get
\[
\E{r_w}
\approx \sum_m \frac{\beta(m)}{\beta(m) + \Gamma_w} 
\approx \frac{n \bm{\delta}\cdot\bm{\beta}}{\Gamma_w}
\approx (\bm{\delta}\cdot\bm{\beta})
  (\bm{\delta}\cdot\bm{\beta}^{-1}) 
  \frac{\alpha_{\min}}{\alpha_i}
  \cdot \frac{n}{\ln n}.
\]
Note that this ignores the fact that a woman will never rank $m$ higher than her match if that $m$ already proposed to
  her during DA. But since $w$ only likely receives $\ln n \ll n/\ln n$ proposals, the difference is not noticeable.

It turns out that, with a detailed probabilistic
analysis, the above proof sketch goes through.
The details are given in appendix~\ref{appendixWomenRanks}.
\begin{restatable}{theorem}{RestateThrmWomenRanks}
  \label{thrmWomenRanks}
  Let $\BarRW_i = (\epsilon_i n)^{-1} \sum_{w\in T_i} r_w$
  denote the average rank of women in tier $i$.
  For all $\epsilon>0$, we have
  \[ \BarRW_i = (1 \pm \epsilon) 
    (\deltaDotBeta)(\deltaDotBeta^{-1}) \frac{\alpha_{\min}}{\alpha_i}
    \frac{n}{\ln n}
  \]
  with probability approaching $1$ as $n\to\infty$.
%   Moreover, for $n$ large enough, we have
%   \[ \E{\BarRW_i} = (1 \pm \epsilon) 
%     (\deltaDotBeta)(\deltaDotBeta^{-1}) \frac{\alpha_{\min}}{\alpha_i}
%     \frac{n}{\ln n}.
%   \]
\end{restatable}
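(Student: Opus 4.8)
The plan is to turn each approximation in the heuristic preceding the statement into a concentration claim, and to obtain the final concentration of the \emph{average} by exploiting an exact conditional independence across women. Throughout I condition on the outcome of DA recorded as the proposal record $P$ (which man proposed to which woman), and I write $\Gamma_w = \sum_{m\in p(w)}\beta(m)$ for the total weight of proposals $w$ receives, as in Definition~\ref{defPartialMatchingState}. First I would reduce the rank of a single woman to $\Gamma_w$. Using the principle of deferred decisions behind Lemma~\ref{thrmWomenDeferedDecisions}, a woman's preferences among the men who never propose to her are never consulted during DA, so they may be generated freshly afterward. Concretely, I would represent $w$'s preference list by independent clocks $E_m\sim\Exp(\beta(m))$ (lower is more preferred); DA depends only on the men's lists and on the relative order of each woman's clocks among her own proposers, and for independent exponentials this relative order is independent both of the value of the minimal proposer-clock and of all non-proposer clocks. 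Hence, conditioned on $P$, the match clock $T_w := \min_{m\in p(w)}E_m$ is distributed as $\Exp(\Gamma_w)$, the non-proposer clocks remain fresh and independent, and $r_w = 1 + \sum_{m\notin p(w)}\1{E_m < T_w}$. Taking expectations and using $\E{1 - e^{-\beta T_w}} = \beta/(\beta+\Gamma_w)$ recovers $\E{r_w\mid P} = 1 + \sum_{m\notin p(w)}\beta(m)/(\beta(m)+\Gamma_w)$; since $\Gamma_w = \Theta(\ln n)\gg\beta_{\max}$ and $|p(w)| = O(\ln n)$, this equals $\big(1\pm O(1/\ln n)\big)\, n\deltaDotBeta/\Gamma_w$.

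Two features of this representation are the backbone of the argument. First, the residual randomness defining $r_w$ depends only on $w$'s own clock vector, so the ranks $\{r_w\}_w$ are mutually independent conditioned on $P$. Second, conditioned on $P$, $r_w$ is (to leading order) geometric with mean $\Theta(n/\ln n)$, so $\Var(r_w\mid P) = O\big((n/\ln n)^2\big)$. Combining these, $\Var(\BarRW_i\mid P) = (\epsilon_i n)^{-2}\sum_{w\in T_i}\Var(r_w\mid P) = O\big((\epsilon_i n)^{-1}(n/\ln n)^2\big)$, whose square root is a $O(1/\sqrt{\epsilon_i n})$ fraction of the target. Chebyshev's inequality then gives that $\BarRW_i$ concentrates around $\E{\BarRW_i\mid P} = \big(1\pm O(1/\ln n)\big)\,(\epsilon_i n)^{-1}\sum_{w\in T_i} n\deltaDotBeta/\Gamma_w$, provided $P$ is typical.

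It remains to show $(\epsilon_i n)^{-1}\sum_{w\in T_i}1/\Gamma_w \to 1/\Gamma^\ast$ with $\Gamma^\ast = \alpha_i\ln n/(\alpha_{\min}(\deltaDotBeta^{-1}))$, which is where the men's analysis enters and where I expect the main difficulty. By Theorem~\ref{thrmMenRanksCentralConcentration} the men in each tier $j$ make in total $(1\pm\epsilon)\,(\delta_j\beta_j^{-1}/(\deltaDotBeta^{-1}))(\epsilonDotAlpha/\alpha_{\min})\,n\ln n$ proposals, their ranks being equal to their proposal counts. I would then argue that these proposals are allocated across women essentially as an independent sample from $\W$: coupling to deferred acceptance with re-proposals makes each proposal an i.i.d. draw from $\W$ (Proposition~\ref{thrmCouponIsDAReprops}), and Corollary~\ref{thrmMaxIndividualProposals} bounds re-proposals so the distinction is lower order. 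Since every proposal lands on a fixed $w\in T_i$ with the same probability $\pi_i = \alpha_i/(n\epsilonDotAlpha)$ regardless of the proposer's tier, the count of tier-$j$ proposals reaching $w$ is Binomial-like with mean $\Theta(\ln n)$, hence concentrates up to $O(1/\sqrt{\ln n})$ relative error; weighting by $\beta_j$ and summing over $j$ yields $\Gamma_w = (1\pm o(1))\Gamma^\ast$ for all but $n^{1-\Omega(1)}$ women.

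Feeding this into the convex map $\Gamma_w\mapsto 1/\Gamma_w$ then gives the claimed average, where the few atypical or poorly-matched women (which exist by smoothness, Proposition~\ref{thrmSmoothWHP}) contribute only $n^{-\Omega(1)}$ relative weight since every rank is at most $n$. The genuine obstacle is making the ``allocation is like an independent sample'' step rigorous: the DA proposal sequence is adaptive, so which man proposes next, and hence the tier composition of the proposals, is correlated with the targets through the women's acceptance decisions. Thus the coupling to i.i.d. draws from $\W$ and the resulting per-tier, per-woman concentration of $\Gamma_w$ must be established carefully rather than assumed; everything else follows from the exponential-clock representation and the conditional independence it provides.
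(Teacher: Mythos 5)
Your proposal follows the same skeleton as the paper's proof in appendix~\ref{appendixWomenRanks}: reduce the conditional expectation of $r_w$ given the proposal record to $1+\sum_{m\notin p(w)}\beta(m)/(\beta(m)+\Gamma_w)\approx n\deltaDotBeta/\Gamma_w$, establish per-woman, per-tier concentration of $\Gamma_w$ from theorem~\ref{thrmMenRanksCentralConcentration} plus the re-proposal coupling, and then average over the tier, discarding atypical women. The one genuinely different ingredient is your concentration step, and it is arguably cleaner than the paper's. The paper never argues that the true ranks $\{r_w\}$ are conditionally independent across women; instead it introduces surrogate variables $r_w'=\sum_m a'_{m,w}$ with \emph{independent} Bernoulli coordinates running over all men, applies Hoeffding to $\overline R_i'$, and then asserts that $\BarRW_i$ and $\overline R_i'$ differ by $O(\ln n)$ once no woman has more than $O(\ln n)$ proposals (proposition~\ref{thrmWomenAtMostLogProposals}). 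Read literally, that last step requires a pointwise coupling of $r_w$ with $r_w'$ that does not exist (the coordinates of $r_w$ are positively correlated through the position of $w$'s match, while those of $r_w'$ are independent); what is actually true is only the corresponding inequality for conditional expectations, $\E{r_w'}-P_w\le\E{r_w}\le 1+\E{r_w'}$. Your exponential-clock representation repairs exactly this: conditioned on the record $P$, the match clock is $\Exp(\Gamma_w)$ (the minimum of independent exponentials is independent of their relative order, which is all that DA consults), the non-proposer clocks are fresh, and the residual randomness is a product measure over women, so Chebyshev/Hoeffding applies to $\BarRW_i$ directly. What the paper's surrogate buys in exchange is that one never needs to verify that conditioning on $P$ factorizes over the women's residual randomness.

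Two caveats on completeness. First, the step you flag as the ``genuine obstacle'' -- that tier-$j$ proposals land on a fixed $w$ like i.i.d.\ draws despite DA's adaptivity -- is the paper's proposition~\ref{thrmTotalPropsTierToTier}, and its proof is precisely your sketch: per-tier totals from theorem~\ref{thrmMenRanksCentralConcentration}, the observation that in deferred acceptance with re-proposals every proposal target is a fresh draw from $\W$ regardless of the proposer and the history (proposition~\ref{thrmCouponIsDAReprops}), a Chernoff bound applied to the first $(1\pm\epsilon)C_jn\ln n$ tier-$j$ proposals, and corollary~\ref{thrmMaxIndividualProposals} together with proposition~\ref{thrmPolyLogReProposals} to discount repeats; so your plan is sound but this lemma still has to be written out. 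Second, your accounting of atypical women is slightly optimistic: the per-tier totals concentrate only with probability $1-O(1/(\epsilon^2\sqrt{\ln n}))$, so the claim $\Gamma_w=(1\pm o(1))\Gamma^\ast$ for all but $n^{1-\Omega(1)}$ women cannot hold unconditionally. You must first condition on that global event, and then (as the paper does) treat separately the intermediate class of women that have $\Gamma_w=\Omega(\ln n)$ but fall outside $(1\pm\epsilon)\Gamma^\ast$: their conditional expected rank is $O(n/\ln n)$ rather than $n$, which is what makes an $O(1/\poly\ln n)$ fraction of them negligible, while only the $n^{-\Omega(1)}$ fraction with $\Gamma_w=o(\ln n)$ needs the crude bound $r_w\le n$.
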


\subsection{The distribution of match types}

Fix a woman $w$ in tier $i$.
We now study the probability that $w$ is matched to a man
from some tier $j$.
In the previous section, we argued that with high probability
$w$ receives approximately a total of 
\[ \frac{\delta_j \beta_j^{-1}}{\bm{\delta}\cdot\bm{\beta^{-1}}}
  \cdot\frac{\bm{\alpha}\cdot\bm{\epsilon}}{\alpha_{\min}}n \ln n
\]
proposals from men in tier $j$.
Thus, the contribution to $\Gamma_w$ (the total weight of proposals $w$
received) from men in tier $j$ is 
\[ \Gamma_{j\to w} \approx
  \frac{\delta_j}{\bm{\delta}\cdot\bm{\beta^{-1}}}
  \cdot\frac{\bm{\alpha}\cdot\bm{\epsilon}}{\alpha_{\min}}n \ln n
  \approx \delta_j \Gamma_w.
\]
% where $S$ is a factor which does not depend on $i$ or $j$.
Moreover, it turns out that, with high probability,
the above holds up to $(1\pm\epsilon)$
for all tiers $j$ simultaneously.
Regardless of the order in which $w$ saw proposals,
the probability that $w$'s favorite proposal
came from a man in tier $j$ is $\Gamma_{j\to w}/\Gamma_w$.
Thus, this probability is approximately $\delta_j$.
We formally implement this proof in appendix~\ref{appendixMatchType}.

\begin{restatable}{theorem}{RestateThrmMatchTypes}
  \label{thrmMatchTypes}
  Consider an arbitrary tier $i$ of women and $j$ of men.
  For all $\epsilon>0$, there is an $n$ large enough such
  that the probability that a woman in tier $i$ matches
  to a man in tier $j$ is $(1\pm \epsilon) \delta_j$.
\end{restatable}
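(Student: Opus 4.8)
The plan is to reduce the theorem to two facts: a \emph{structural} fact, that conditioned on the multiset of men who propose to a fixed woman $w$, the identity of her match is a weighted-random draw; and a \emph{concentration} fact, that the total weight of the proposals $w$ receives splits across the men's tiers in proportion to $\bm{\delta}$. Fix a woman $w$ in tier $i$, and recall that in man-proposing DA her final partner is the man she most prefers among the set $P(w)$ of all men who ever propose to her. Write $\Gamma_{j\to w}$ for the total public-score weight of tier-$j$ proposals $w$ receives and $\Gamma_w = \sum_j \Gamma_{j\to w}$ for the total weight. Applying Lemma~\ref{thrmWomenDeferedDecisions} through the principle of deferred decisions (revealing $w$'s comparisons on the fly), the identity of $w$'s favorite is independent of the order in which proposals arrived, and within a tier the men are fully exchangeable. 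Consequently, conditioned on the realized proposals,
\[
  \P{w \text{ matches a man in tier } j \mid P(w)}
  = \frac{\Gamma_{j\to w}}{\Gamma_w}.
\]
This reduces the theorem to showing $\Gamma_{j\to w}/\Gamma_w = (1\pm\epsilon)\delta_j$ with high probability, simultaneously over the finitely many tiers $j$.

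For the concentration I would reuse the machinery behind the women's-rank theorem (Theorem~\ref{thrmWomenRanks}). By the men's-rank concentration (Theorem~\ref{thrmMenRanksCentralConcentration}), the $\delta_j n$ men in tier $j$ make, with high probability, a total of $(1\pm o(1))\frac{\delta_j\beta_j^{-1}}{\deltaDotBeta^{-1}}\cdot\frac{\epsilonDotAlpha}{\alpha_{\min}} n\ln n$ proposals. Coupling DA with the re-proposal process, each such proposal lands on $w$ with probability $\pi_i = \alpha_i/(n\epsilonDotAlpha)$, so the number $N_j$ of tier-$j$ proposals $w$ receives has mean $(1\pm o(1))\frac{\delta_j\beta_j^{-1}}{\deltaDotBeta^{-1}}\cdot\frac{\alpha_i}{\alpha_{\min}}\ln n = \Theta(\ln n)$ and comparable variance. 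Since this mean grows without bound, for any fixed $\epsilon_0 > 0$ Chebyshev's inequality gives $N_j = (1\pm\epsilon_0)\E{N_j}$ with probability $1 - O\!\big(1/(\epsilon_0^2\ln n)\big) = 1 - o(1)$. Multiplying by $\beta_j$ and summing (using $\sum_j \delta_j = 1$) yields
\[
  \Gamma_{j\to w} = (1\pm\epsilon_0)\,\delta_j
    \cdot\frac{1}{\deltaDotBeta^{-1}}\cdot\frac{\alpha_i}{\alpha_{\min}}\ln n,
  \qquad
  \Gamma_w = (1\pm\epsilon_0)\,\frac{1}{\deltaDotBeta^{-1}}\cdot\frac{\alpha_i}{\alpha_{\min}}\ln n,
\]
so that $\Gamma_{j\to w}/\Gamma_w = (1\pm\epsilon_1)\delta_j$, where $\epsilon_1$ is controlled by $\epsilon_0$ and hence can be made as small as desired. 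A union bound over the constantly many tiers places this on an event $\mathcal{E}$ with $\P{\mathcal{E}} = 1 - o(1)$.

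Finally I would assemble the pieces. Let $g = \Gamma_{j\to w}/\Gamma_w \in [0,1]$, so that $\P{w \text{ matches tier } j} = \E{g}$. Splitting on $\mathcal{E}$ gives $\E{g} \in \big[(1-\epsilon_1)(1-\eta)\delta_j,\ (1+\epsilon_1)\delta_j + \eta\big]$, where $\eta = \P{\mathcal{E}^c} = o(1)$. Because $\delta_j$ and $\epsilon$ are fixed positive constants while $\epsilon_1, \eta \to 0$, for $n$ large enough we have $\epsilon_1 \le \epsilon/2$ and $\eta \le (\epsilon/2)\delta_j$, whence $\E{g} = (1\pm\epsilon)\delta_j$, as claimed.

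The main obstacle is the concentration step: rigorously establishing that the proposals $w$ receives split across the men's tiers in the stated proportions. The coupling between DA and the re-proposal process introduces feedback—the number of proposals each man makes depends on the rejections he experiences, which in turn depend on the very acceptance decisions that shape $P(w)$—so the ``independent landings'' picture must be justified carefully, exactly as in the detailed analysis underlying Theorem~\ref{thrmWomenRanks}; extending that analysis to control each per-tier contribution $\Gamma_{j\to w}$ (rather than only the aggregate $\Gamma_w$) simultaneously is the technical heart of the argument. A secondary subtlety is the structural reduction itself: that conditioning on $P(w)$ leaves $w$'s favorite weighted-uniform, despite $P(w)$ being correlated with $w$'s preferences through the global dynamics. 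This is handled by the within-tier exchangeability of men together with Lemma~\ref{thrmWomenDeferedDecisions}, which together force the conditional match probability to be proportional to the tier weights.
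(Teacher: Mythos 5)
Your proposal is correct and takes essentially the same route as the paper: the paper's proof of Theorem~\ref{thrmMatchTypes} likewise combines the order-independence of a woman's favorite proposal (so that the conditional match probability is $\Gamma_{j\to w}/\Gamma_w$) with per-tier concentration of the received weights, and then splits on the high-probability event exactly as you do. The concentration step you flag as the technical heart is precisely the paper's Proposition~\ref{thrmTotalPropsTierToTier} --- already proven (for Theorem~\ref{thrmWomenRanks}) via Theorem~\ref{thrmMenRanksCentralConcentration} and the re-proposal coupling, and stated per tier --- so the paper simply cites it and applies a union bound over the constantly many tiers.
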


% \footnote{
%   Note the following interesting fact -- on average, each man (regardless
%   of tier) contributes the same total weight to $\Gamma_w$.
%   This may mean that, in some sense, men make enough proposals to offset
%   their disadvantage in terms of public score.
%   % , and still get matches they like from any tier.
%   This may also help explain why the distribution of match types (i.e.
%   fraction of agents in tier $i$ matching to tier $j$ for each pair
%   $(i,j)$) seems experimentally close to randomly assigning agents.
%   See remark~\ref{remarkUniformMatchTypes} and computational experiments
%   in~\ref{sectionDistributionMatchExperiment}.
% }

%%%%%%%%%%
%% Simulations
%%    * 2D color map of average ranks vs tier params
%%%%%%%%%%
\section{Computational Experiments on Expected Rank}
\label{sectionSimulations}

% TODOTODO: discuss, here or in the intro, that we find the markets to only
% differ from the balanced case by constants (this by itself is interesting)
% and thus that the constants are the interesting thing to investigate.

In this section, we provide computational experiments to back up the main
features of our theorems -- the estimates for the rank which agents on each
side achieve. First, we find that, as the theory suggests, men have a large
advantage in rank compared to women, with men getting ranks
of order $\ln n$ and women of order $n/\ln n$.
More interestingly,
these two sets of simulations together isolate and
investigate all of the major constant factors
present in our estimates.
We find that, overall, our estimates correspond 
to the empirical averages.

\subsection{ Women divided into tiers }
Figures \ref{fig:vary_women_tier_avg_men},
\ref{fig:vary_women_tier_avg_women_1}, and
\ref{fig:vary_women_tier_avg_women_2} showcase the expected rank in a
market where women are broken into two tiers,
while men have a constant public score.
In such a market, theorems~\ref{thrmMenRanksCentralConcentration}
and~\ref{thrmWomenRanks} predict that the expected rank of men,
and the expected rank of women in tier $i$, are respectively:
\[
  \frac{\epsilonDotAlpha}{\alpha_{\min}}\cdot 
  \ln n
  \qquad\text{ and }\qquad
  \frac{\alpha_{\min}}{\alpha_i}\cdot \frac{n}{\ln n}.
\]

In our experiment, there are $n=1000$ agents on each side.
The tiers of women have fraction
$\bm{\epsilon} = (\epsilon_1, 1-\epsilon_1)$, and weight
$\bm{\alpha} = (\alpha_1, 1)$, where tier $1$ is the ``top tier''
(i.e. $\alpha_1 > 1$ and $\alpha_{\min} = 1$).
Each plot has $\alpha_1$ ranging from 1 to 10 at each multiple of $0.25$,
and $\epsilon_1$ ranging from 0.025 to 0.975 at each multiple of $0.025$.
In each plot, we show the average rank in the result of DA, i.e. the
man-optimal stable matching, as this is the quantity studied in our
theorems.

Figure \ref{fig:vary_women_tier_avg_men} shows
men's average rank of partners, which in this market is approximately
the total number of proposals divided by $n$, because men are uniform.
Note that our prediction becomes increasingly bad as $\epsilon_1$
approaches $1$, even though for any fixed constant $\epsilon_1$,
we have convergence by theorem~\ref{thrmMenRanksCentralConcentration}.
This is natural because, as per remarks~\ref{remarkLogEspMinErrorBody}
and~\ref{remarkLogEspMinErrorAppendix}, our estimates break
down for fixed $n$ as $\epsilon_{\min}\to 0$.
Indeed, we find that the total number of proposals
is much less than our estimate in cases where $\epsilon_{\min}$
is small, as proposition~\ref{thrmCouponExpectationLowerBound}
simply proves that the true average is at least
our estimate minus $O(\ln(1/\epsilon_{\min})n)$.
% because of the error of order $\ln(1/\epsilon_{\min})n=O(n)$ in
% the total number of proposals (see
% proposition~\ref{thrmCouponExpectationLowerBound} for example).
This comment also applies to figures~\ref{fig:vary_women_tier_avg_women_1}
and~\ref{fig:vary_women_tier_avg_women_2}.
Accounting for cases with very small tiers (say, tiers which
grow sublinearly with $n$) is an intriguing future research direction.

\begin{figure}[ph]
Figures \ref{fig:vary_women_tier_avg_men},
\ref{fig:vary_women_tier_avg_women_1}, and
\ref{fig:vary_women_tier_avg_women_2} display expected rank in a
market with women broken into two tiers.
  \centering
  \includegraphics[scale=0.45]{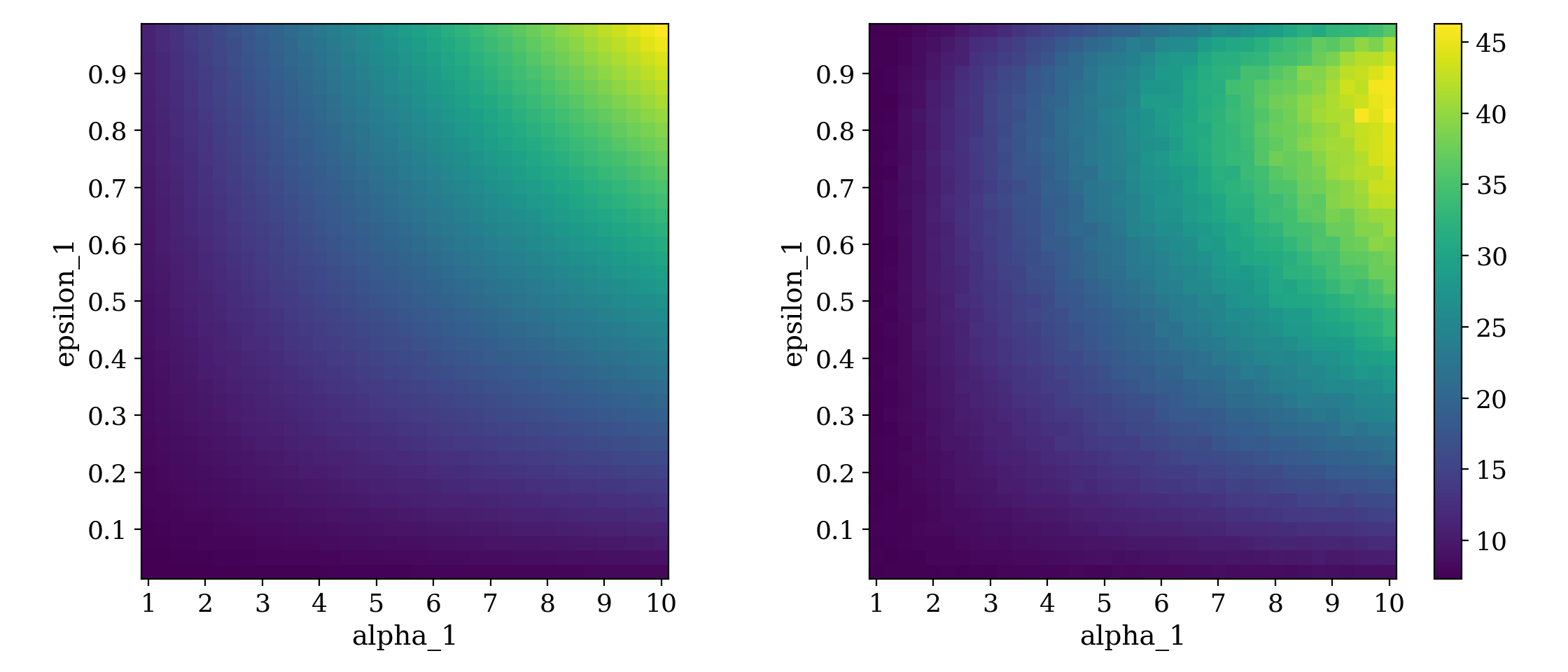}
  \caption{
    Men's average rank under DA.
    The left panel computes the prediction
    $(\epsilonDotAlpha/\alpha_{\min}) \ln n = (\epsilonDotAlpha) \ln n$,
    while the right panel is an average over 200 realizations.
  }\label{fig:vary_women_tier_avg_men}

  \includegraphics[scale=0.45]{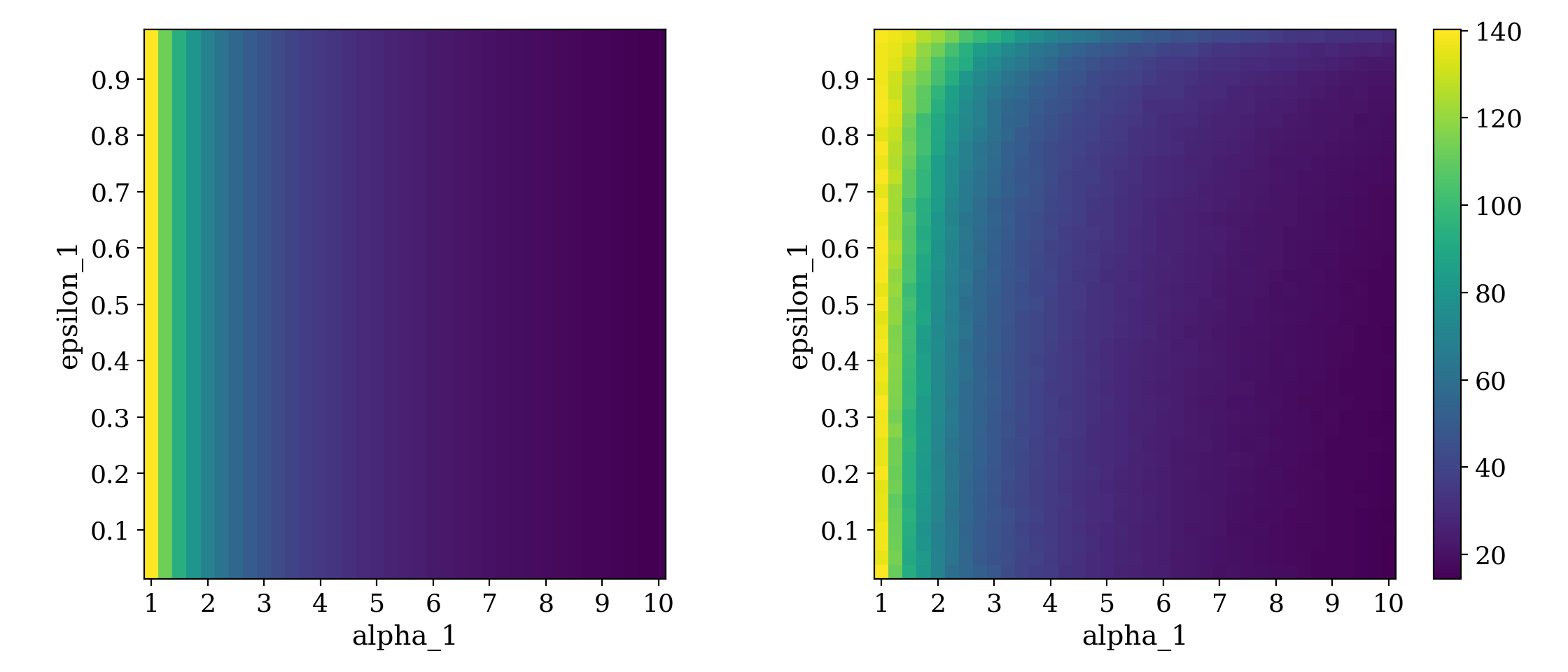}
  \caption{
    Top tier women's average rank under DA.
    The left panel computes the prediction
    $(\alpha_{\min}/\alpha_i) (n / \ln n)
    = (1/\alpha_1) (n / \ln n)$, while the
    right panel is an average over 200 realizations.
  }\label{fig:vary_women_tier_avg_women_1}

  \includegraphics[scale=0.45]{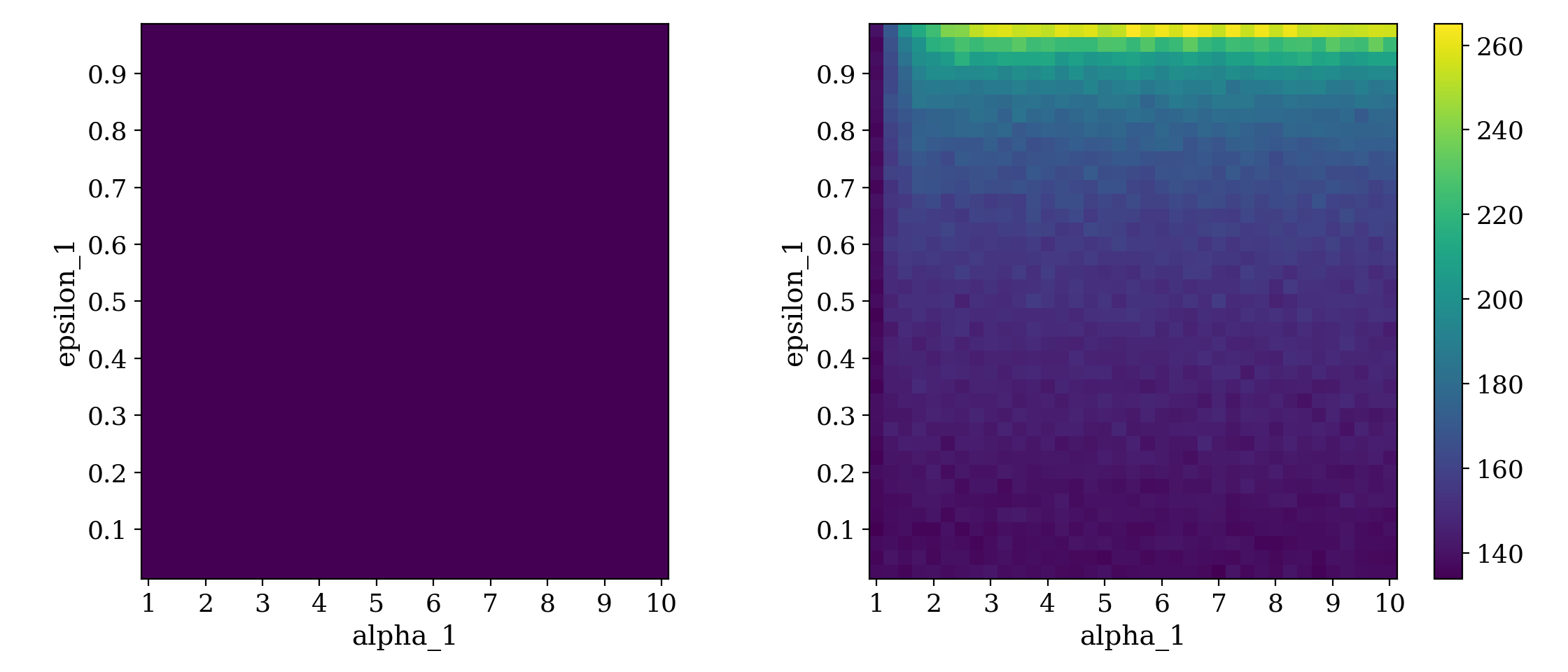}
  \caption{
    Bottom tier women's average rank under DA.
    Our prediction is the constant $n/\ln n$
    as $\bm{\alpha}, \bm{\epsilon}$ change,
    whereas the right panel is an average over 200 realizations.
  }\label{fig:vary_women_tier_avg_women_2}
\end{figure}

% \begin{figure}[h]
% \end{figure}
% \begin{figure}[h]
%     \centering
% \end{figure}

\subsection{ Men divided into tiers }

Figures~\ref{fig:vary_men_tier_avg_men_1},
\ref{fig:vary_men_tier_avg_men_2}, and~\ref{fig:vary_men_tier_avg_women}
showcase the expected rank in a
market where men are broken into two tiers,
while women have a constant public score.
In such a market, our prediction for the expected 
rank of men in tier $j$
and the expected rank of women are respectively:
\[
  \frac{1}{(\bm{\delta}\cdot\bm{\beta}^{-1})}\cdot
  \frac{\ln n}{\beta_j}
  \qquad\text{ and }\qquad
  (\deltaDotBeta)(\deltaDotBeta^{-1}) \cdot
  \frac{n}{\ln n}.
\]

% Indeed, we view the average rank of the men
% as one of the strongest and most robust statistics in this work,
% as it directly gives the number of proposals the men need to make
% and thus their corresponding power in the market.
% The rank of women still suffers from certain lower order terms,
% and thus our estimate is off by a noticeable shift.
% (\claynote{TODOTODO: ideally, we'd like to figure out why the shift
% is in the direction it is. On the other hand, maybe we should just delete this whole paragraph.})

\begin{figure}[ph]
Figures~\ref{fig:vary_men_tier_avg_men_1},
\ref{fig:vary_men_tier_avg_men_2}, and~\ref{fig:vary_men_tier_avg_women}
display expected rank in a
market with men broken into two tiers.
    \centering
    \includegraphics[scale=0.45]{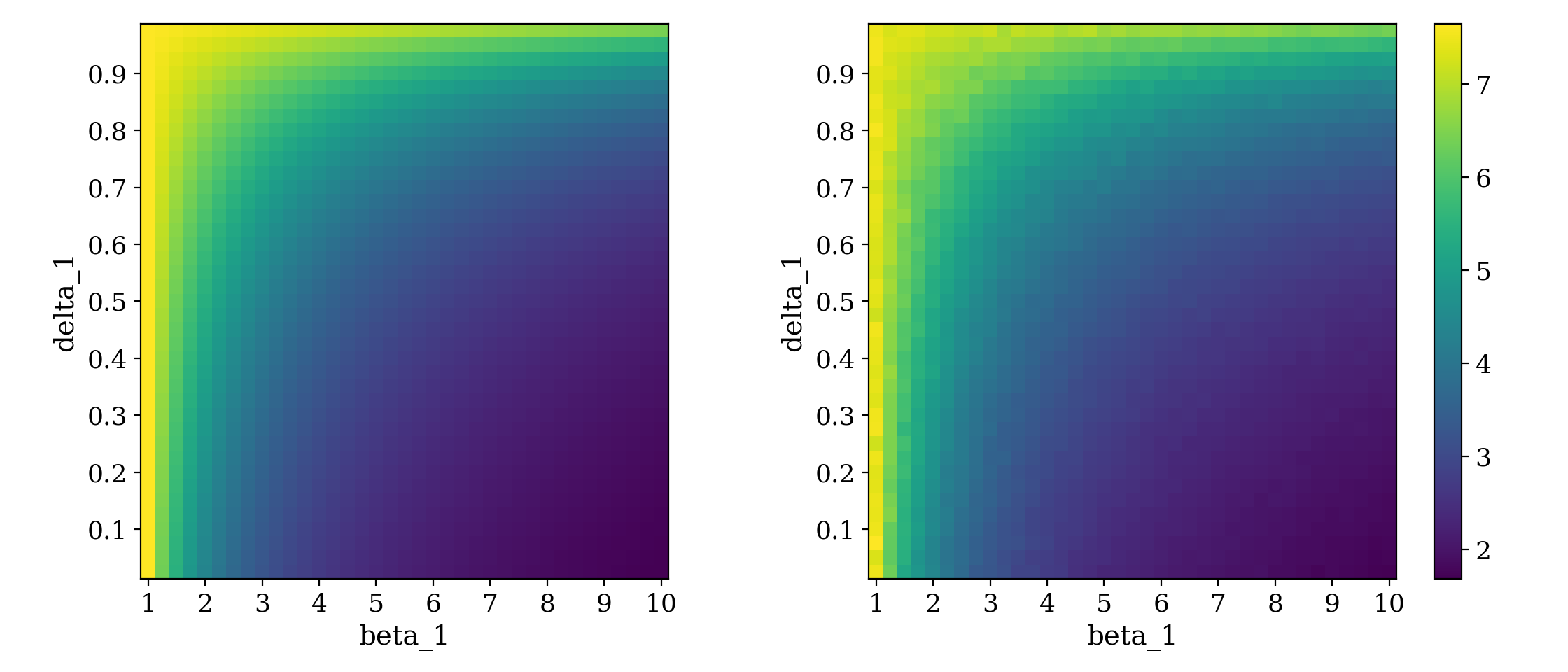}
    \caption{
        Top tier men's average ranks in DA.
        The left panel computes the prediction
        $\ln n / (\beta_1\bm{\delta}\cdot\bm{\beta}^{-1})$,
        while the right panel is an average over 200 realizations.
    }\label{fig:vary_men_tier_avg_men_1}

    \includegraphics[scale=0.45]{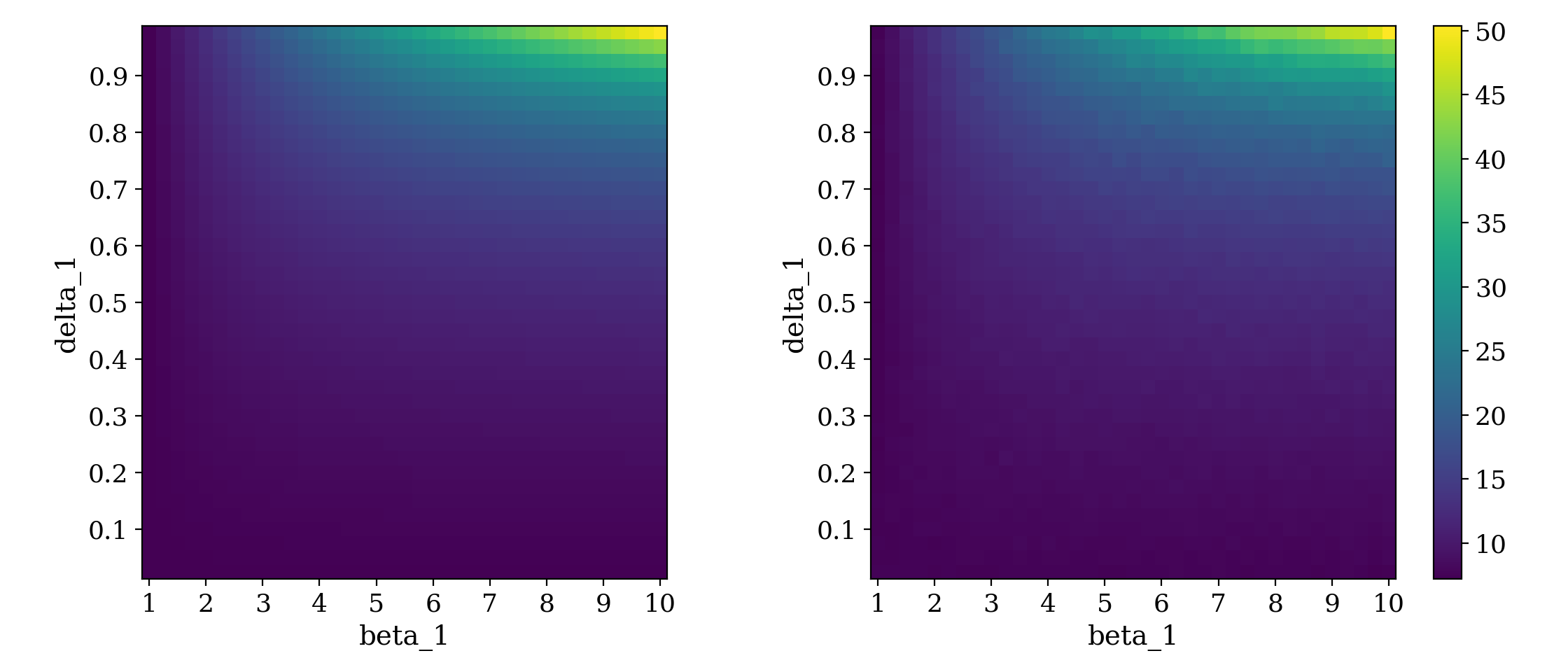}
    \caption{
        Bottom tier men's average ranks in DA.
        The left panel computes the predicted value
        $\ln n / (\beta_{\min}\bm{\delta}\cdot\bm{\beta}^{-1})
        = \ln n / (\bm{\delta}\cdot\bm{\beta}^{-1})$,
        while the right panel is an average over 200 realizations.
    }\label{fig:vary_men_tier_avg_men_2}

    \includegraphics[scale=0.45]{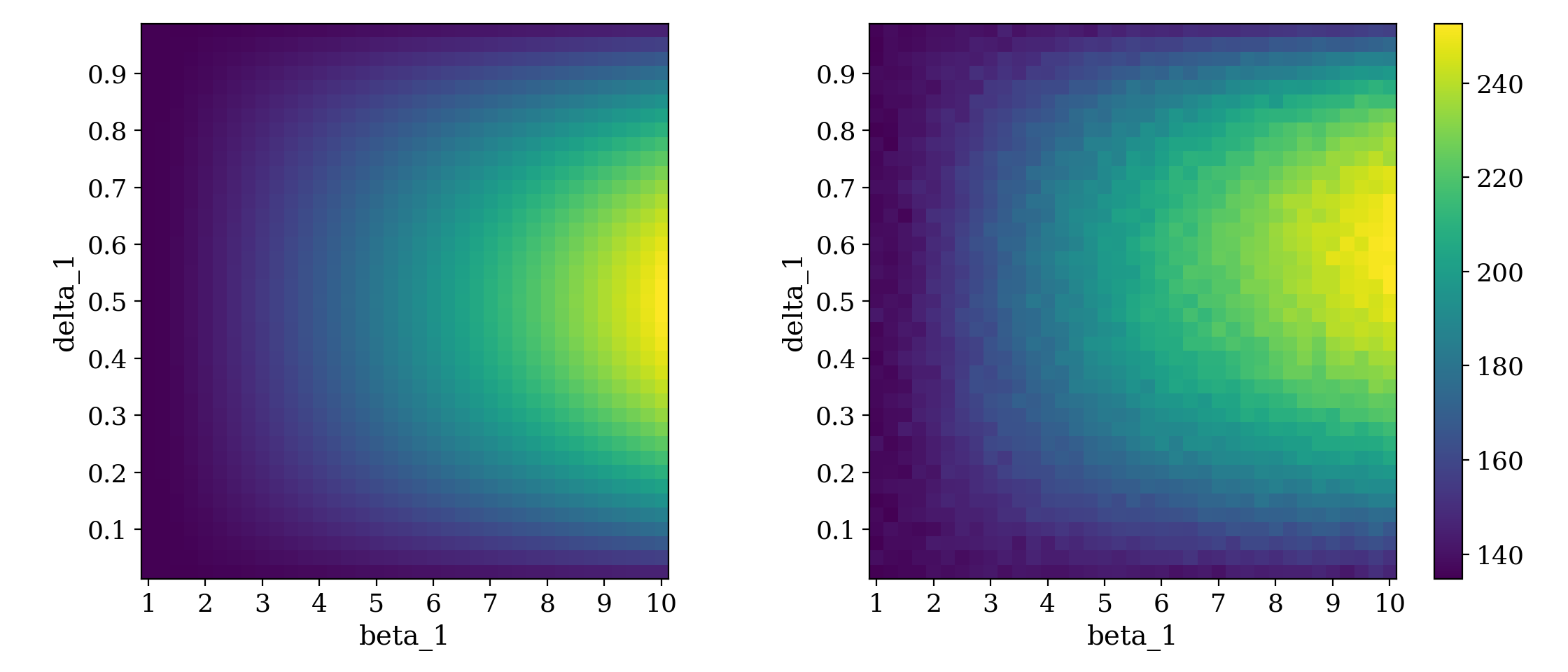}
    \caption{
        Women's average ranks in DA.
        The left panel computes the prediction
        $(\deltaDotBeta)(\deltaDotBeta^{-1})(n / \ln n)$,
        while the right panel is an average over 200 realizations.
    }\label{fig:vary_men_tier_avg_women}
\end{figure}

We again take $n=1000$ agents on each side.
The tiers of men have fraction
$\bm{\delta} = (\delta_1, 1-\delta_1)$
and weight $\bm{\beta} = (\beta_1, 1)$, with $\beta_1 > 1$.
Each plot takes $\beta_1$ ranging from 1 to 10 at each multiple of $0.25$, and $\delta_1$ ranging from 0.025 to 0.975 at each multiple of 0.025.
Because the women's side is balanced, the number of proposals in these
markets does not suffer from a great loss of accuracy in certain
parameter settings (as in the previous regime when $\epsilon_{\min}$
was small). 
However, lower order terms still make a visible difference,
especially in the rank achieved by the women.

\subsection{Distribution of matched pairs among tiers}
\label{sectionDistributionMatchExperiment}
As we have seen from proposition~\ref{thrmMenRanksProportional}, each individual man from tier $i$ has a more advantageous expected rank of partner than another man from tier $j$ whenever $\beta_i > \beta_j$. In our last experiment, we want to take a macro viewpoint and explore the distribution of matched pairs across tiers on both side in the man-optimal stable matching in a tiered market.% Intuition: How do tiers on one side match with tiers on the other side? The distribution is mostly uniform but with slight deviation benefiting better tiers on each side.

We demonstrate this effect by considering a sequence of balanced markets with two tiers with equal size on each side, i.e. $\bm{\delta}=\bm{\epsilon}=(0.5,0.5)$, with public scores $\bm{\beta}=(3,1)$ and $\bm{\alpha}=(5,1)$ for men and women, respectively. The market size $n$ grows from $2^4$ to $2^{18}$ at each integer power of 2. In such markets, the distribution of matched pair can be solely characterized by the fraction of men in tier 1 who are matched to women in tier 1, denoted by $m_{11}$. For each market configuration in the sequence, we simulate 1,000 realizations of man-proposing deferred acceptance, and recorded the values of $m_{11}$ for each realization. The result is shown in figure~\ref{fig:distribution_of_pairs}.
% Notice that such deviation can be hard to present when there are multiple tiers on each side, but in the special case when both sides have two tiers of equal sizes, the percentage of matched pairs $m_{ij}$ across tier $i$ on the proposing side and tier $j$ on the proposal-receiving side forms a symmetric matrix $M$, and we can then quantify the deviation as
% \[d=\frac{m_{11}-m_{12}}{m_{11}+m_{12}}=\frac{m_{22}-m_{21}}{m_{22}+m_{21}}.\]
% Conceptually, a larger value of $d$ indicates greater advantage of the better tier in that a tier-1 agent is more likely to be matched to a tier-1 agent of the opposite side.

\begin{figure}[h]
    \centering
    \includegraphics[scale=0.6]{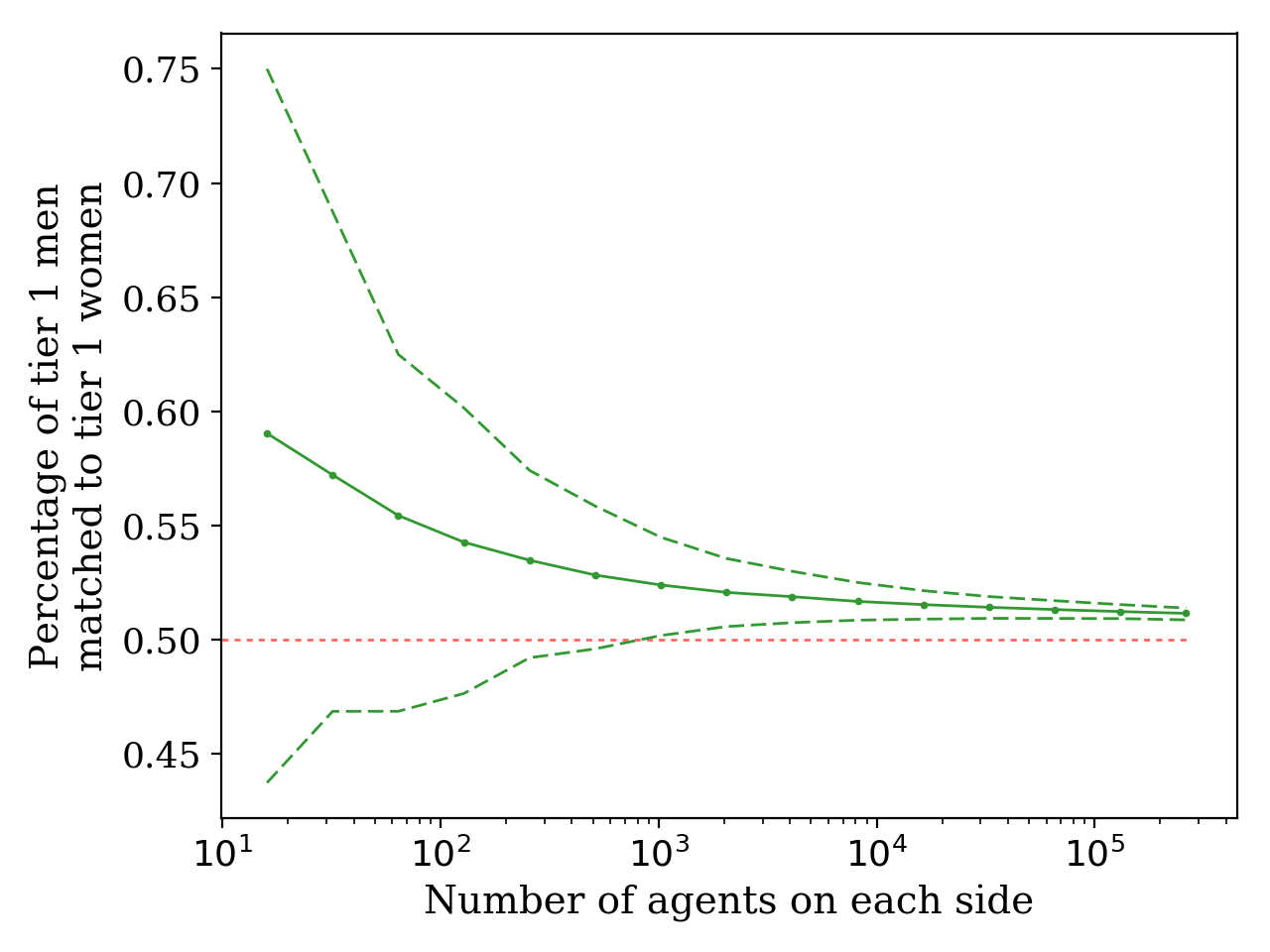}
    
    \caption{Percentage of tier 1 men matched to tier 1 women under the man-optimal outcome in markets with parameters $\bm{\delta}=\bm{\epsilon}=(0.5,0.5)$, $\bm{\beta}=(3,1)$, and $\bm{\alpha}=(5,1)$. The two dashed curves indicate 3 and 97 percentile, respectively.}\label{fig:distribution_of_pairs}
\end{figure}

The simulation suggests that the distribution of matched pairs gets closer to uniformity as the market size increases, with a slight skew benefiting the better tier. For example, for a market with 1,000 men and women on each side with two tiers of equal size and $\bm{\beta}$ and $\bm{\alpha}$ specified above, $52.4\pm 1.1\%$ of the men in tier 1 are matched to women in tier 1. That is, top-tier men are only slightly more likely than bottom tier men to match to top tier women, even though at a micro level each man in the first tier on average does three times better than those in the second tier. In the future, it may also be of interest to examine how the tier structure determines the deviation from uniformity.

On the other hand, we want to stress that the approximately uniform distribution of matched pairs among tiers relies heavily on our assumption of bounded public scores. The result may cease to hold if we allow the gap in scores to grow with the market size. Figure~\ref{fig:distribution_of_pairs_non_uniform} shows the deviation from the uniform distribution when the gap in scores of the top and bottom tier men grows in polynomial order, namely $\sqrt{n}/2$.
We hypothesize that in this case the fraction of tier 1 men matched to tier 1 women converges to the solution to the equation $x^5 + x=1$, approximately $0.755$. Note that this would be identical to the match type distribution limit if the top tier men were deterministically preferred over the lower tier men.
% Note that if the matching were achieved through a serial dictatorship on the men side, we would expect the fraction of tier 1 men matched to tier 1 women to be $5/6$.
It is an interesting direction to study the matching dynamics when scores grow with market size (for example, in poly-log or polynomial order). % Intuitively, the uniform distribution of matched pairs across tiers is a result contingent on the small gap in quality of the agents in the market, which in our model is given by .

\begin{figure}[h]
    \centering
    \includegraphics[scale=0.6]{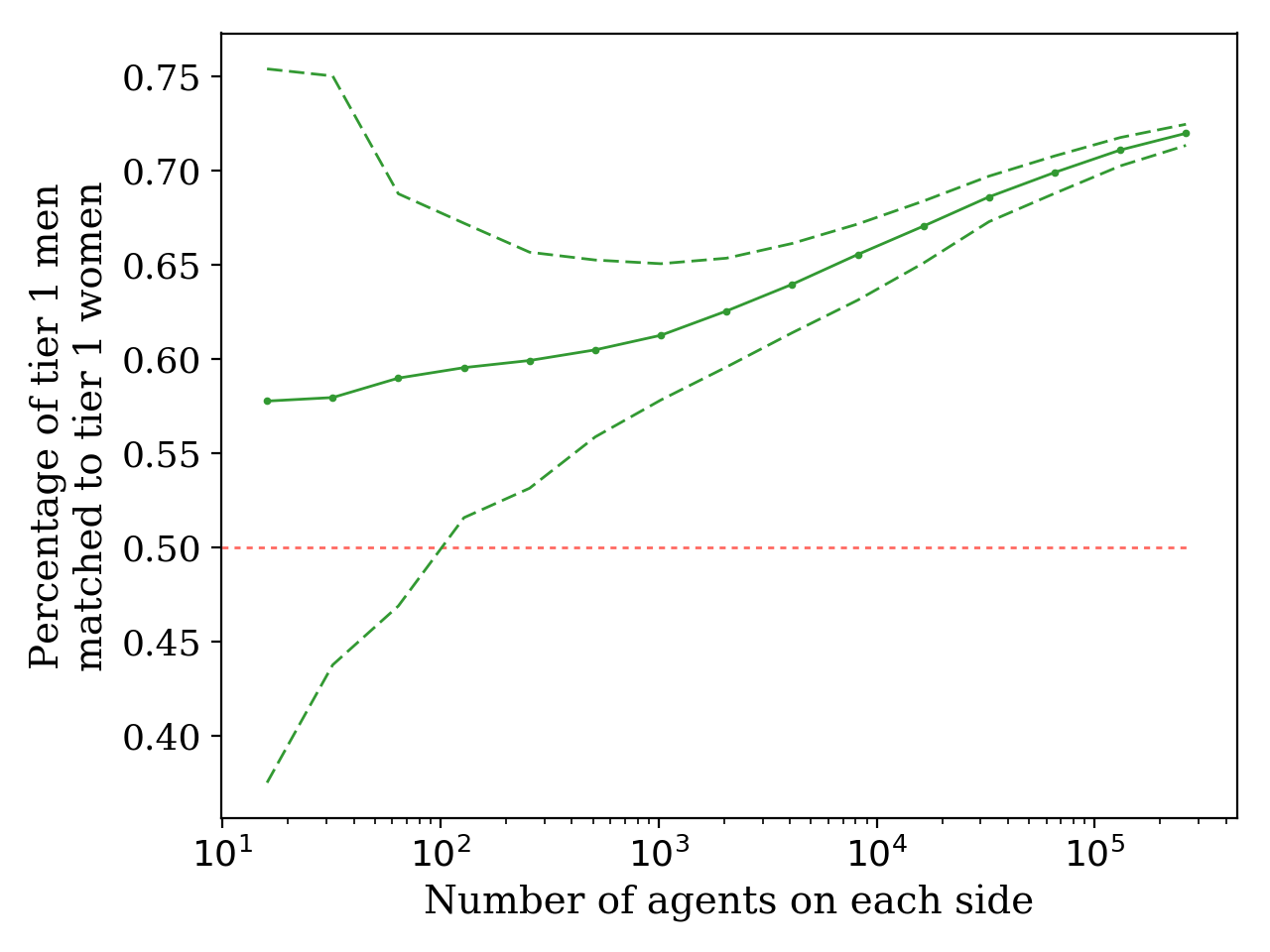}
    
    \caption{Non-uniform distribution of matched pairs among tiers when scores may grow with market size. The fraction of tier 1 men matched to tier 1 women under the man-optimal outcome deviates from $0.5$ in markets with parameters $\bm{\delta}=\bm{\epsilon}=(0.5,0.5)$, $\bm{\beta}=(\sqrt{n}/2,1)$, and $\bm{\alpha}=(5,1)$. The solid curve marks the average across 200 runs, and the two dashed curves indicate 3 and 97 percentile, respectively.}\label{fig:distribution_of_pairs_non_uniform}
\end{figure}

\section{Summary}
\label{sec:summary}

The model and findings in this paper contribute to the understanding of random stable matching markets. Indeed, the results quantify the effect of competition that arises from heterogeneous quality in agents, specifically, when the agents fall into different constant-factor tiers of quality. Novel technical tools are developed in order to reason about the proposal dynamics of deferred acceptance. 
% A highlight is a new understanding of the number of proposals men make as a function of their quality scores, for which we developed new techniques.

Relaxing some of the modeling assumptions raises interesting questions that cannot be trivially answered. This includes having non-constant (size, or public score) tiers,
% the distribution of match types (e.g. the fraction of matches from top-tier to top-tier),
personalized private scores which give agents different distributions of preferences,
and imbalance in the number of agents on each side of the market. Moreover, it is natural to ask when one should expect the matching to be sorted, i.e., higher tiers will be more likely to match with higher tiers (e.g., \cite{hitsch2010matching} demonstrates the presence of sorting in dating markets).
% Additionally, a refined understand of the dynamics of deferred
% acceptance can lead to many market design insights
% in diverse applications such as many-to-one matchings,
% matching with contracts, or matching with transfers.

%%
%% Bibliography
%%

%% Please use bibtex, 

% \bibstyle{natbib} % This won't work in LIPIcs format
\bibliography{matching}

\appendix

%%%%%%%%%%
%% Coupon Collecting
%%    * Expectation of coupon collector.
%%%%%%%%%%
\section{Expectation of the coupon collector}
\label{appendixCouponExpectation}

Recall that $T$ denotes the number of draws to a coupon collector with
distribution $\W$, as per definition~\ref{defCouponCollector}.
We now give a simple representation of $T$, generalizing the common
representation used for the case where $\W$ is the uniform
distribution, and use it to prove a simple upper bound on the expectation
of $T$.

\begin{proposition}\label{thrmDAIsSumGeo}
  We can write $T = \sum_{i=1}^n \tau^{(i)}$, where
  each $\tau^{(i)}$ is distributed according to a geometric
  distribution $\Geo(p_i)$ for some value of $p_i$
  depending on the randomly determined order in which distinct 
  coupons are collected.
  The value of $p_i$ can be bounded by
  \[ (n - i+1)\pim % = \frac{(n - i+1)\alpha_{\min}}{n\epsilonDotAlpha}
    \le p_i \le (n - i + 1)\pima.
  \]
  Moreover, conditioned on the order in which the
  distinct coupons are collected, each draw to $\tau^{(i)}$
  is independent.
\end{proposition}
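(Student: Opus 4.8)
The plan is to decompose the coupon collector process into the waiting times between the acquisition of successive \emph{distinct} coupons. Let me reindex so that $\tau^{(i)}$ is the number of draws made while exactly $i-1$ distinct coupons have been collected (i.e. waiting for the $i$-th distinct coupon), so that $T = \sum_{i=1}^n \tau^{(i)}$ by definition. The key observation is the memorylessness of the process: once a fixed set $S$ of $i-1$ distinct coupons has been collected, each subsequent draw independently lands in $S$ (a ``wasted'' draw) or lands on one of the remaining $n-(i-1)$ coupons (a ``success'' that advances us to the next stage). Since draws are i.i.d., the number of draws until the first success is geometric, with success probability equal to the total weight of the uncollected coupons, namely $p_i = \sum_{c \notin S} \pi_c$ where $\pi_c$ is the draw-probability of coupon $c$. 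This establishes that $\tau^{(i)} \sim \Geo(p_i)$.

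First I would make the conditioning explicit: the value $p_i$ depends on \emph{which} set $S$ of $i-1$ coupons has been collected, and this set is itself random, determined by the order in which distinct coupons appear. So I condition on the entire sequence of distinct coupons collected (equivalently, on the random permutation giving the order of first appearances). Under this conditioning, each $p_i$ becomes a fixed number, and the waiting times $\tau^{(i)}$ are independent geometric random variables — independence following because, conditioned on the identity of the next distinct coupon to be collected, the wasted draws within each stage are independent across stages (each stage's draws only depend on the fixed set collected so far and the fresh i.i.d. draws). This gives both the geometric-distribution claim and the conditional-independence claim in the statement.

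For the bounds on $p_i$, I would note that at stage $i$ there are exactly $n-(i-1) = n-i+1$ uncollected coupons, and $p_i$ is the sum of their $n-i+1$ individual draw-probabilities. Since every coupon has draw-probability between $\pim$ and $\pima$ (these being the smallest and largest values of $\pi_c$ over all coupons, corresponding to the lowest and highest tiers of women), we immediately get
\[
  (n-i+1)\pim \le p_i \le (n-i+1)\pima
\]
by bounding each of the $n-i+1$ summands. This holds regardless of \emph{which} coupons remain uncollected, so the bound is uniform over the conditioning.

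The only subtle point — and the one I would be most careful about — is articulating the conditional independence cleanly. Marginally the $\tau^{(i)}$ are \emph{not} independent, since the success probabilities depend on the random collection order, so the claim must be stated (as it is) only \emph{conditioned} on the order in which distinct coupons are collected. I would argue this via the strong Markov property of the i.i.d. draw sequence: at each moment a new distinct coupon is collected, the future draws are independent of the past and have a fresh i.i.d. distribution, so once the sequence of ``which distinct coupon is collected next'' is fixed, the counts of intervening wasted draws are mutually independent. This is the heart of the argument; the geometric distribution and the weight bounds then follow routinely.
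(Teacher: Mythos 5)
Your proposal is correct and follows essentially the same route as the paper's proof: decompose $T$ into inter-arrival times between distinct coupons, observe each is geometric with parameter equal to the total weight of uncollected coupons, bound that weight termwise by $\pim$ and $\pima$, and note independence holds only conditionally on the collection order. Your treatment of the conditional independence (via the i.i.d./memoryless structure of the draws) is more explicit than the paper's one-line assertion, but it is the same argument, not a different approach.
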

\begin{proof}
  Let $\tau^{(i)}$ denote the number of draws between the appearance of the
  $(i-1)$th and $i$th distinct coupon.
  Because each draw is independent, the probability that each draw goes to
  a new coupon is exactly the sum of the weights of the $n-i+1$
  unseen coupons, divided by $n\epsilonDotAlpha$.
  Thus, each $\tau^{(i)}$ is a geometrically distributed,
  and furthermore $\tau^{(i)}$ are independent if we condition
  on the realized values of $p_i$.
  Moreover, assuming that every uncollected coupon is in the bottom tier,
  we can bound $p_i$ by
  \[ p_i \ge \frac{(n-i+1)\alpha_{\min}}{n\epsilonDotAlpha}
    = (n-i+1)\pim,
  \]
  and on the other hand, assuming every uncollected
  coupon is in the top tier,
  \[ p_i \le \frac{(n-i+1)\alpha_{\max}}{n\epsilonDotAlpha}
    = (n-i+1)\pima.
  \]
  % \[ p_i = \frac{n\epsilonDotAlpha - (\alpha_{c_1} + \ldots + \alpha_{c_i})}
  %   {n\epsilonDotAlpha}
  %   \ge \frac{(n - i+1)\alpha_{\min}}{n\epsilonDotAlpha}
  %   = (n-i+1)\pim.
  % \]

  % We write $T = \sum_{i=1}^n \tau^{(i)}$, where $\tau^{(i)}$ is the number of draws
  % made after the $(i-1)$th distinct coupon is collected, until the $i$th
  % distinct coupon is collected.
  % Observe that each $\tau^{(i)}$ is distributed according to a geometric
  % distribution $\Geo(p_i)$ for some value of $p_i$. The value of $p_i$
  % depends highly on which coupons $c_1, c_2,\ldots, c_i$
  % have been collected up until this point.
  % However, if the sequence of draws of unique coupons is fixed, the values
  % of $\tau^{(i)}$ are independent.

\end{proof}

% As a warm-up, we start by giving a simple upper bound
% on the expectation of $T$.
\begin{proposition}\label{thrmSimpleCouponUpperBound}
  We have 
  \[ \E{T} \le \pim^{-1} H_n
    = \big(1 + \widetilde O(1/n)\big)
    \frac{\epsilonDotAlpha}{\alpha_{\min}} n \ln n
  \]
  where $H_n = \ln n + O(1)$ is the $n$th harmonic number.
\end{proposition}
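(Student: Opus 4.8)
The plan is to read off the bound directly from the decomposition in Proposition~\ref{thrmDAIsSumGeo}, which is the only nontrivial input. That proposition gives $T = \sum_{i=1}^n \tau^{(i)}$ with each $\tau^{(i)}\sim\Geo(p_i)$, where the success probabilities satisfy the deterministic lower bound $p_i \ge (n-i+1)\pim$ for every realization of the (random) order in which distinct coupons are collected. Since $\E{\Geo(p)} = 1/p$, I would first condition on the realized order: for any fixed order, $\E{\tau^{(i)}\mid\text{order}} = 1/p_i \le \big((n-i+1)\pim\big)^{-1}$, where the inequality holds surely because the lower bound on $p_i$ holds surely. This lets me bound term-by-term before integrating out the randomness of the order.

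Next I would apply the tower property together with linearity of expectation to get
\[
  \E{T} = \sum_{i=1}^n \Es{}{1/p_i}
  \le \frac{1}{\pim}\sum_{i=1}^n \frac{1}{n-i+1}.
\]
Reindexing the sum by $k = n-i+1$ turns it into the harmonic sum $\sum_{k=1}^n 1/k = H_n$, giving the clean bound $\E{T} \le \pim^{-1} H_n$. This is the entire combinatorial content of the statement.

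The final step is purely asymptotic bookkeeping: substitute $\pim = \alpha_{\min}/(n\epsilonDotAlpha)$, so that $\pim^{-1} = (\epsilonDotAlpha/\alpha_{\min})\,n$, and expand the harmonic number as $H_n = \ln n + O(1)$ (more precisely $H_n = \ln n + \gamma + O(1/n)$). Factoring out $\ln n$ writes $\pim^{-1}H_n$ in the form $\big(1 + o(1)\big)\tfrac{\epsilonDotAlpha}{\alpha_{\min}} n\ln n$, matching the stated expression, with the correction factor coming entirely from the lower-order terms of $H_n$.

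I do not expect a genuine obstacle here, since Proposition~\ref{thrmDAIsSumGeo} already isolates the one delicate point — namely that the $p_i$ are themselves random. The only care required is to use the \emph{sure} (order-independent) lower bound on $p_i$ to dominate $1/p_i$ pointwise, rather than attempting to compute $\Es{}{1/p_i}$ exactly; once that is done, linearity and the reindexing make the rest immediate.
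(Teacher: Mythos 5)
Your proof is correct and takes essentially the same approach as the paper: both rest entirely on the decomposition of Proposition~\ref{thrmDAIsSumGeo}, the sure bound $p_i \ge (n-i+1)\pim$, and the harmonic sum, with the only cosmetic difference being that the paper packages the comparison as stochastic dominance ($T \preceq T_{\mathrm{ub}}$, conditioning on the collection order) while you bound conditional expectations directly via the tower property, which suffices for an expectation bound (the dominance form is what the paper reuses later for tail bounds). One minor note: your $(1+o(1))$ bookkeeping is in fact $1+O(1/\ln n)$, consistent with Theorem~\ref{thrmCouponCentralConcentration}; the $\widetilde O(1/n)$ in the proposition's display appears to be a typo, so no gap on your end.
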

\begin{proof}
  Write $T = \sum_{i=1}^n \tau^{(i)}$, where $\tau^{(i)} \sim \Geo(p_i)$
  is as above, so that $p_i \ge (n - i + 1)\pim$.
  Recall that, for any $p \ge q$, the distribution $\Geo(p)$ is
  statistically dominated by $\Geo(q)$.
  Letting $T_{\mathrm{ub}} = \sum_{i=1}^n \tau_{\mathrm{ub}}^{(i)}$,
  where $\tau_{\mathrm{ub}}^{(i)}$ are independent draws from $\Geo((n-i+1)\pim)$.
  For any fixed order $C$ of collection of distinct coupons,
  $\tau^{(i)}$ are independent, and thus $(T | C)\preceq T_{\mathrm{ub}}$.
  So we have $T \preceq T_{\mathrm{ub}}$ overall.
  % (i.e. $T$ is statistically dominated by $T_{\mathrm{ub}}$).
  Thus,
  \[ \E{T} \le \E{T_{\mathrm{ub}}}
    = \sum_{i=1}^n \frac{1}{(n - i+1)\pim}
    = \pim^{-1} H_n.
  \]
%   TODOTODO: note that, in general, statistical dominance is not preserved
%   under taking sums. Thus, a completely formal proof of the above would
%   mention the need to condition on the path, after which the geometrics
%   become independent. This should probably be mentioned at least briefly.
\end{proof}

We go on to show that this simple upper bound is tight up to lower order terms.
The key intuition is that, for most runs of the coupon collector,
we spend most of the time waiting for coupons in
the bottom tier to be collected.
This intuition is made formal using a standard probability theory
technique, sometimes called ``Poissonization''.
We include the proof, as in~\cite{ross2006probability},
in appendix~\ref{appendixProbability} for completeness.

\begin{restatable}{lemma}{RestateThrmCouponPoissonization} 
  \label{thrmCouponExpectation}
  Let $T_{\D}$ be a coupon collector random variable with probabilities
  $\D = (p_i)_{i\in [n]}$.
  For $i=1,\ldots,n$, let $X_i$ be an independent draw from $\Exp(p_i)$,
  and let $X = \max_{i\in[n]} X_i$.
  Then we have $\E{T_\D} = \E{X}$.
\end{restatable}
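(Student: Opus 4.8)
The plan is to use the standard Poissonization coupling, embedding the discrete coupon-collecting process into a continuous-time rate-$1$ Poisson process $\{N(t)\}_{t\ge 0}$. At each arrival we independently draw a coupon type according to $\D$, so that the sequence of drawn types is exactly an i.i.d.\ sequence from $\D$ and $N(t)$ counts the number of draws made up to time $t$. Letting $X$ be the first time at which every one of the $n$ coupon types has appeared, the completing draw occurs precisely at time $X$, and hence $N(X) = T_\D$ holds pathwise in this coupling. The goal then reduces to showing $\E{N(X)} = \E{X}$ together with identifying the law of $X$.

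For the law of $X$ I would invoke the splitting (thinning) property of the Poisson process: decomposing the rate-$1$ process by coupon type yields $n$ mutually independent Poisson processes, where type-$i$ arrivals occur at rate $p_i$. Consequently the first appearance time of type $i$ is $X_i \sim \Exp(p_i)$, these are independent across $i$, and the collection completes exactly when the last type has appeared, so $X = \max_{i\in[n]} X_i$. This is precisely the random variable in the statement, so once $\E{N(X)} = \E{X}$ is established the lemma follows.

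The identity $\E{N(X)} = \E{X}$ is where the main work lies, and I would obtain it from Wald's identity (equivalently, optional stopping applied to the martingale $N(t) - t$ at the stopping time $X$). Writing $X = S_{T_\D}$ where $S_k = \sum_{j=1}^k \xi_j$ and the inter-arrival times $\xi_j$ are i.i.d.\ $\Exp(1)$, the crucial structural fact is that $T_\D$ is a function of the coupon-type sequence alone and is therefore independent of the inter-arrival times; thus $\{T_\D \ge j\}$ is independent of $\xi_j, \xi_{j+1}, \ldots$ for every $j$, which is exactly the hypothesis of Wald's identity. Combined with the finite-expectation check $\E{X} = \E{\max_i X_i} \le \sum_i \E{X_i} = \sum_i 1/p_i < \infty$ (and the analogous geometric bound $\E{T_\D} < \infty$, as in Proposition~\ref{thrmSimpleCouponUpperBound}), this yields $\E{X} = \E{S_{T_\D}} = \E{\xi_1}\,\E{T_\D} = \E{T_\D}$ since the rate is $1$.

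The main obstacle is the rigorous justification of this final step: one must verify both the finiteness of the relevant expectations and that the independence/measurability hypotheses of Wald's identity are genuinely met by $T_\D$ and the inter-arrival times. Everything else is a clean consequence of the Poissonization coupling and the splitting property, so the bulk of the remaining writing is a careful statement of the embedding followed by a short appeal to these standard theorems.
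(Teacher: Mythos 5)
Your proposal is correct and follows essentially the same route as the paper's proof (taken from Ross, Example 5.17): the Poissonization coupling with a rate-$1$ master clock split by coupon type, the observation that $X = S_{T_\D}$ with the inter-arrival times independent of the type sequence (and hence of $T_\D$), and the conclusion $\E{X} = \E{T_\D}$ — which the paper obtains by conditioning on $T_\D$ and you obtain by the equivalent explicit appeal to Wald's identity, adding finiteness checks the paper leaves implicit.
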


\begin{proposition}\label{thrmCouponExpectationLowerBound}
  % Let probabilities $(p_i)_{i\in [n]}$ be proportional to the tier structure
  % $\bm{\alpha}, \bm{\epsilon}$.
  For $T = T_{\W}$, we have
  \[ \E{T} \ge
    \left(1 - \frac{\ln(1/\epsilon_{\min})}{\ln n}\right)
    \frac{\epsilonDotAlpha}{\alpha_{\min}} n \ln n
    = \pim^{-1}H_n - O(n)
  .\]
\end{proposition}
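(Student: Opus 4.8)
The plan is to invoke the Poissonization identity of Lemma~\ref{thrmCouponExpectation} and then discard all but the bottom-tier coupons, whose collection dominates the waiting time. First I would write $\E{T} = \E{X}$, where $X = \max_{i\in[n]} X_i$ and each $X_i\sim\Exp(p_i)$ is an independent exponential whose rate $p_i$ is the $\W$-weight of coupon $i$. The bottom tier consists of $\epsilon_{\min} n$ coupons, each of weight exactly $\pim$; let $X_1,\ldots,X_{\epsilon_{\min} n}$ be the corresponding exponentials. Since the maximum over all $n$ coupons dominates the maximum over this sub-collection pointwise, monotonicity of expectation immediately gives
\[
  \E{X} \ge \E*{\max_{i\le \epsilon_{\min} n} X_i}.
\]
(Here $\epsilon_{\min} n$ may not be an integer, but as in the tier-size footnote this is handled by rounding without affecting the asymptotics.)

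Next I would apply the classical formula for the maximum of $m$ i.i.d.\ $\Exp(\lambda)$ variables. By the memoryless property the maximum decomposes as a telescoping sum $\sum_{k=1}^{m} Y_k$ of independent exponentials $Y_k\sim\Exp(k\lambda)$, so its expectation is $\lambda^{-1} H_m$, where $H_m$ is the $m$-th harmonic number. Applying this with $m=\epsilon_{\min} n$ and $\lambda=\pim$ yields
\[
  \E{T} = \E{X} \ge \pim^{-1} H_{\epsilon_{\min} n}.
\]

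Finally I would expand $H_{\epsilon_{\min} n} = \ln(\epsilon_{\min} n) + O(1) = \ln n - \ln(1/\epsilon_{\min}) + O(1)$ and substitute $\pim^{-1} = n\,\epsilonDotAlpha/\alpha_{\min}$, obtaining
\[
  \E{T} \ge \frac{\epsilonDotAlpha}{\alpha_{\min}}\, n\bigl(\ln n - \ln(1/\epsilon_{\min}) + O(1)\bigr).
\]
Since $\epsilon_{\min}$ is a fixed constant, $\ln(1/\epsilon_{\min}) = O(1)$, so the $O(1)n$ slack is $O(n)$; dropping it (it is positive) only weakens the bound, and the surviving expression rearranges to $\bigl(1 - \ln(1/\epsilon_{\min})/\ln n\bigr)(\epsilonDotAlpha/\alpha_{\min})\, n\ln n$. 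Recognizing that $\pim^{-1} H_n = (\epsilonDotAlpha/\alpha_{\min})\, n\ln n + O(n)$, this is exactly the claimed form $\pim^{-1} H_n - O(n)$, which establishes both equalities in the statement.

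There is no genuine obstacle here; the one point to verify is that throwing away the higher-tier coupons is lossless at leading order. This holds precisely because the bottom tier alone already contributes $\Theta(n\ln n)$ to the maximum (its size $\epsilon_{\min} n$ is a constant fraction of $n$), so the leading $(\epsilonDotAlpha/\alpha_{\min})\,n\ln n$ term is captured entirely by the bottom tier, and the constant-factor tier structure only enters the $O(n)$ correction. Together with the matching upper bound of Proposition~\ref{thrmSimpleCouponUpperBound}, this pins down $\E{T}$ up to an additive $O(n) = o(n\ln n)$.
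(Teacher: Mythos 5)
Your proof is correct and takes essentially the same route as the paper's: both invoke the Poissonization identity of Lemma~\ref{thrmCouponExpectation}, lower bound the maximum by restricting to the $\epsilon_{\min} n$ bottom-tier exponentials, and evaluate that restricted maximum via Proposition~\ref{thrmMaximumExponential} as $\pim^{-1} H_{\epsilon_{\min} n}$ before expanding the harmonic number to absorb $\ln(1/\epsilon_{\min})$ into the $O(n)$ term. There is no gap.
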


\begin{proof}
  Let $X$ be the maximum of $n$ draws from $\Exp(\pi_i)$, with
  one exponential corresponding to each woman in $\W$,
  so that $\E{T} = \E{X}$.
  Let $\epsilon_{\min}$ be the fraction of the $n$ women which
  are in the bottom tier,
  and let $X_{1}$ be the maximum of $\epsilon_{\min}n$
  independent draws from $\Exp(\pi_{\min})$.
  It's clear that $X \succeq X_1$, and thus $\E{X} \ge \E{X_1}$.

  Using another standard fact from probability theory,
  namely proposition~\ref{thrmMaximumExponential},
  we can calculate the expectation of $X_1$:
  \[
    \E{X_1} = \frac{H_{\epsilon_{\min}n}}{\pim}
    \ge \frac{\ln(\epsilon_{\min}n)}{\pim}
    = \frac{\ln(n)}{\pim} + \frac{\ln(\epsilon_{\min})}{\pim}
    \ge \frac{\ln(n)}{\pim} - O(n)
  \]
\end{proof}
\begin{remark}\label{remarkLogEspMinErrorAppendix}
  For all fixed tier structures, the above estimate is equal to
  the upper bound of $\pim^{-1}H_n$ up to lower order terms.
  However, if the lowest tier is very small ($\epsilon_{\min}$ close to
  zero) then the estimate is quite crude for reasonable values of $n$.
  % Indeed, $n$ should be on the same order as $\exp(1 / \epsilon_{\min})$
  % in order for our estimates to be accurate.
  % TODOTODO: interpret this? Is the estimated ratios of ranks the more
  % interesting thing?
\end{remark}

Our main result on the expectation of the coupon collector is an immediate
consequence of~\ref{thrmSimpleCouponUpperBound} 
and~\ref{thrmCouponExpectationLowerBound}.
% The calculation of expectation follows directly from the previous two
% propositions, and the concentration statement is is actually not needed
% for our later results, so we defer the proof to
% appendix~\ref{appendixMissingProofs} (where we also prove concentration
% results for constant-factor deviation from the mean).

\RestateThrmCouponExpectation*

%%%%%%%%%%
%% Coupon Collecting
%%    * constant factor strong concentration.
%%    * (1 +/- epsilon) weak concentration.
%%%%%%%%%%
\section{Concentration of the coupon collector}
\label{appendixCouponConcentration}

% The above theorem says that $T$ is within a
% $1\pm\epsilon$ multiplicative factor of its expectation with
% probability approaching $1$ like $1 - O(1/\ln n)$.
For many of our results, we will need fairly tight
bounds on the tails of the coupon collector random variable $T$.
We proceed to give these bounds here, using the standard techniques of
``general Chernoff bounds''.

For the upper tail of $T$, we are able to achieve polynomial concentration
immediately above our estimate for the expectation of $T$.
\begin{proposition} \label{thrmCouponUpperTail}
  \label{thrmChernoffCouponUpperTail}
  For all $C>0$, we have 
  \[ \P{T \ge (1 + C)\frac{\epsilonDotAlpha}{\alpha_{\min}} n\ln n }
    \le O(n^{-C/2})
  \]
\end{proposition}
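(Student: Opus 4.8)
The plan is to bound the upper tail directly, bypassing the $\Geo(p_i)$ decomposition of Proposition~\ref{thrmDAIsSumGeo} entirely, via a union bound over coupons. The key reformulation is that, since the draws defining $T$ are i.i.d.\ from $\W$, the event $\{T \ge t\}$ is exactly the event that at least one woman fails to appear among the first $t-1$ draws. First I would write
\[
  \P{T \ge t} = \P{\exists\, w : w \text{ not drawn in first } t-1 \text{ draws}}
  \le \sum_{w \in W} \big(1 - \pi(w)\big)^{t-1},
\]
where $\pi(w) = \alpha(w)/(n\epsilonDotAlpha)$ is the per-draw probability of sampling $w$ from $\W$, and the inequality is the union bound together with the fact that $w$ being absent after $t-1$ independent draws has probability exactly $(1-\pi(w))^{t-1}$.

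Next I would control each term using $1 - x \le e^{-x}$ together with the fact that every weight is at least $\alpha_{\min}$, so that $\pi(w) \ge \pim$ for all $w$. This gives $\big(1-\pi(w)\big)^{t-1} \le e^{-\pi(w)(t-1)} \le e^{-\pim(t-1)}$, and hence $\P{T \ge t} \le n\, e^{-\pim(t-1)}$. Substituting the target value $t = (1+C)\frac{\epsilonDotAlpha}{\alpha_{\min}} n\ln n = (1+C)\pim^{-1}\ln n$ makes $\pim t = (1+C)\ln n$, so that
\[
  \P{T \ge t} \le n\, e^{-\pim(t-1)} = n \cdot n^{-(1+C)} e^{\pim}
  = n^{-C}\big(1 + O(1/n)\big),
\]
using $\pim = O(1/n)$ so that $e^{\pim} = 1 + O(1/n)$. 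This is $O(n^{-C})$, which in particular is $O(n^{-C/2})$ as claimed.

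There is essentially no hard step here: the only subtlety is the harmless $(t-1)$-versus-$t$ correction, which shifts the bound only by the factor $e^{\pim} = 1 + O(1/n)$ since $\pim$ scales like $1/n$. I would note that this elementary coupon-by-coupon argument already yields the stronger bound $O(n^{-C})$; the looser exponent $C/2$ in the statement presumably leaves room to absorb the lower-order gap between $\ln n$ and the harmonic number $H_n$ appearing in the expectation estimates, or to state the upper and lower tails uniformly. One can equivalently phrase the per-coupon estimate $\big(1-\pi(w)\big)^{t-1} \le e^{-\pim(t-1)}$ as the ``general Chernoff'' exponential tail bound applied to the single geometric waiting time for coupon $w$, so no machinery beyond the geometric tail and a union bound is actually required.
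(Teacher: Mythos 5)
Your proof is correct, but it takes a genuinely different route from the paper. The paper's proof goes through the geometric decomposition of Proposition~\ref{thrmDAIsSumGeo}: it dominates $T$ by $T_{\mathrm{ub}} = \sum_{j=1}^n T_j$ with $T_j \sim \Geo(j\pim)$ independent, peels off the largest waiting time $T_1$ (bounding $\P{T_1 \ge (C/2)\pim^{-1}H_n}$ directly), and then runs an MGF-based Chernoff argument on $T_{-1} = \sum_{j\ge 2} T_j$ with the carefully chosen parameter $a = -\log(1-\pim)$, for which the product of moment generating functions telescopes to $n$. The exponent $C/2$ in the statement is exactly the price of splitting the deviation budget between $T_1$ and $T_{-1}$ before taking a union bound over the two events. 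Your argument avoids all of this: the identity $\{T \ge t\} = \{\text{some } w \text{ absent from the first } t-1 \text{ i.i.d.\ draws}\}$, a union bound over the $n$ coupons, and the pointwise estimate $(1-\pi(w))^{t-1} \le e^{-\pim(t-1)}$ give $\P{T \ge t} \le n e^{-\pim(t-1)}$, and plugging in $t = (1+C)\pim^{-1}\ln n$ yields $O(n^{-C})$ --- strictly stronger than the stated $O(n^{-C/2})$, and with the exact threshold of the statement rather than the slightly larger $\pim^{-1}H_n$ threshold the paper's proof actually controls. The trade-off is that the paper's machinery (the decomposition into geometric waits and the MGF manipulation) is reused elsewhere --- the lower-tail bound of Proposition~\ref{thrmCouponLowerTail} and its variant~\ref{thrmChernoffCouponLastTwoMenLowerTail} genuinely need it, since a missing-coupon union bound has no analogue for lower tails --- whereas your argument is special to the upper tail. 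As a self-contained proof of this proposition, yours is both simpler and sharper, and the one point worth stating explicitly (which you do) is that $\pim \le 1/n$, so the $e^{\pim}$ correction is $1+O(1/n)$.
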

\begin{proof}
  Let $T_{\mathrm{ub}} = \sum_{j=1}^n T_j$,
  where $T_j$ are independent draws from $\Geo(j\pim)$
  (for $\pim \le 1/n$).
  By~\ref{thrmDAIsSumGeo}, we know
  $T$ is statistically dominated by $T_{\mathrm{ub}}$ (where we change the
  index of summation for convenience).
  Thus, it suffices to prove that
  $\P{T_{\mathrm{ub}} \ge (1 + C)H_n \pim^{-1} }\le 2n^{-C/2}$.

  We start by separately handling the largest component of
  $T_{\mathrm{ub}}$, namely $T_1$:
  \begin{equation*}
    \P{ T_1 \ge (C/2)\pim^{-1}H_n }
    = (1 - \pim)^{(C/2)\pim^{-1}H_n}
    \le \exp(-(C/2)H_n)
    \le n^{-C/2}
  \end{equation*}

  Now let $T_{-1} = \sum_{j=2}^n T_j$.
  For some $a>0$ which we will specify soon,
  \begin{align*}
    \P{T_{-1} \ge (1+C/2)H_n\pim^{-1}}
    & = \P{ \exp(a T_{-1}) \ge \exp(a(1+C/2)\pim^{-1}H_n) } \\
    & \le \frac{ \E{ \exp(a T_{-1}) } }{ \exp(a(1+C/2)\pim^{-1}H_n) }
    \tag{*}
  \end{align*}
  Due to the independence of the $T_j$s,
  the numerator above equals $\prod_{i=2}^n \E{\exp(a T_j)}$.
  \begin{align*}
    \forall j > 2:\qquad \E{ \exp( a T_j ) } 
    & = \sum_{i=1}^\infty (1-j\pim)^{i-1}(j\pim) e^{ai} \\
    & = \frac{j\pim e^a}{1 - (1 - j\pim)e^a}
    = \frac{j\pim}{j\pim - (1 - e^{-a})}
    % \Longrightarrow\qquad \E{ \exp(a T_{-1}) }
  \end{align*}
  Note that the above expectation only exists for $j=2$ (the largest $T_j$)
  when $(1 - 2\pim)e^a < 1$.
  We can now see a convenient value for $a$, namely
  \begin{align*}
    a = -\log\left({1 - \pim}\right)
    \qquad & \Longrightarrow\qquad
    \E{ \exp( a T_j ) } 
    = \frac{j\pim}{j\pim - \pim}
    = \frac{j}{j-1} \\
    & \Longrightarrow\qquad
    \E{ \exp( a T_{-1} ) }
    = \prod_{i=2}^n \frac{i}{i-1}
    = n
  \end{align*}

  A short analytic exercise proves that
  $a = - \log\left({1 - \pim}\right) \ge \pim$, so
  the denominator of (*) can be bounded as:
  \[
    \exp(a(1+C/2)\pim^{-1}H_n) \ge 
    \exp((1+C/2)H_n) \ge 
    n^{1+C/2}
  \]
  and finally
  \[
    \P{T_{-1} \ge (C/2)H_n\pim^{-1}}
    \le n^{- C/2}
  \]

  Now, taking a union bound over the above two probabilities gets the
  desired result.
%   TODOTODO: the above is a tiny tiny bit imprecise with the difference
%   between $H_n$ and $\ln n$. Fix? Ignore?
\end{proof}
\begin{remark}
  The proof above hints that it is likely not possible to qualitatively
  improve the above concentration much.
  Indeed, the largest geometric wait time $T_1$
  is larger than $C\pim^{-1}H_n$ with probability very close to $n^{-C}$,
  so this single wait time can push that entire $T_{\mathrm{ub}}$ above its
  expectation with only polynomially small probability.
  We expect a similar phenomenon with $T$ itself.
\end{remark}

We also need to reason about the lower tail of the coupon collector process.
Our result for the lower tail is significantly coarser, because the
simple lower bounds we can put on the geometric waiting times no longer
closely align with the main ``driving factor'' of the coupon collector
(the wait time to collect the last tier).
Nonetheless, these bounds suffice for our later purposes.
\begin{proposition}\label{thrmCouponLowerTail}
  For any $0 < c < 1$ and any $\eta > 0$, we have
  \[ \P{T \le c\frac{\epsilonDotAlpha}{\alpha_{\max}} n\ln n}
    \le O(n^{c - 1 - \eta})
  \]
\end{proposition}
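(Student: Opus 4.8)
The plan is to reduce the lower-tail bound on $T$ to a lower-tail Chernoff bound on an explicit sum of independent geometric variables, and then to tune the Chernoff parameter so that the resulting power of $n$ matches $c-1-\eta$. First I would invoke the decomposition $T = \sum_{i=1}^n \tau^{(i)}$ from Proposition~\ref{thrmDAIsSumGeo}, where $\tau^{(i)}\sim\Geo(p_i)$ with $p_i \le (n-i+1)\pima$. Since $\Geo(p)$ stochastically dominates $\Geo(q)$ whenever $p \le q$, conditioning on the (random) collection order $C$ makes the $\tau^{(i)}$ independent and each dominates the corresponding term of $T_{\mathrm{lb}} := \sum_{j=1}^n S_j$, where $S_j \sim \Geo(j\pima)$ are independent (reindexing $j = n-i+1$). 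A coupling then gives $(T\mid C)\succeq T_{\mathrm{lb}}$ for every $C$, hence $T \succeq T_{\mathrm{lb}}$, exactly mirroring the domination step in Proposition~\ref{thrmSimpleCouponUpperBound}. It therefore suffices to bound $\P{T_{\mathrm{lb}} \le c\,\pima^{-1}\ln n}$, recalling $\pima^{-1}\ln n = (\epsilonDotAlpha/\alpha_{\max})\,n\ln n$ and $\E{T_{\mathrm{lb}}} = \pima^{-1}H_n$.

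Second, I would apply the exponential Markov inequality in the form $\P{T_{\mathrm{lb}}\le t} \le e^{at}\prod_{j=1}^n \E{e^{-aS_j}}$ for a parameter $a>0$ to be chosen (note the negative-exponent MGF of a geometric always converges, so there is no domain restriction as there was for the upper tail). Writing $b = 1-e^{-a}$ and $u_j = j\pima(1-b)$, a direct computation of the geometric MGF gives $\E{e^{-aS_j}} = \frac{u_j}{b+u_j} = \frac{1}{1+\kappa/j}$ with $\kappa := \frac{b}{\pima(1-b)}$. The key simplification is that the product telescopes through Gamma functions: $\prod_{j=1}^n (1+\kappa/j)^{-1} = \frac{\Gamma(n+1)\,\Gamma(1+\kappa)}{\Gamma(n+1+\kappa)} = \Theta\!\big(\Gamma(1+\kappa)\,n^{-\kappa}\big)$ by Stirling.

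Third, I would choose the parameter. Fixing $\kappa$ to be a constant forces $b = \frac{\kappa\pima}{1+\kappa\pima} = O(1/n)$, so $a = -\ln(1-b) = b\,(1+O(1/n))$ and $a\pima^{-1} = \kappa\,(1+O(1/n))$; with $t = c\,\pima^{-1}\ln n$ this yields $e^{at} = n^{\kappa c}\,(1+o(1))$. Combining the two factors gives $\P{T_{\mathrm{lb}}\le t} = O\!\big(\Gamma(1+\kappa)\,n^{\kappa(c-1)}\big)$. Taking $\kappa = 1 + \frac{\eta}{1-c}$, a constant exceeding $1$ since $c<1$, makes the exponent $\kappa(c-1) = (c-1)-\eta$, and since $\Gamma(1+\kappa)$ is an absolute constant this is $O(n^{c-1-\eta})$, as desired.

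The routine parts are the MGF evaluation and the Gamma-function asymptotics; the one place to be careful is the interplay between $a$, $b$, and $\kappa$ — I must confirm that holding $\kappa$ constant keeps $b$ of order $1/n$, so that the approximation $a\pima^{-1}\approx\kappa$ (and hence $e^{at}\approx n^{\kappa c}$) is valid, and that the constant factor $\Gamma(1+\kappa)$ does not affect the target exponent. The coarseness of the bound (the factor $\alpha_{\max}$ rather than $\alpha_{\min}$, and the restriction $c<1$) is intrinsic to dominating every geometric waiting time by its largest-variance surrogate $\Geo(j\pima)$ rather than tracking the true bottom-tier collection time, which is why this estimate is weaker than the matching upper tail of Proposition~\ref{thrmCouponUpperTail}.
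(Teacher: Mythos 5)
Your proposal follows the same skeleton as the paper's proof: decompose $T$ via Proposition~\ref{thrmDAIsSumGeo}, dominate it from below by a sum of independent geometrics with parameters $j\pima$, and apply an exponential Markov inequality with negative exponent. The one substantive difference is the choice of Chernoff parameter, and it is a difference in your favor. The paper fixes $a = \pima$ (so that, after the bound $e^a - 1 \ge a$, the product telescopes as $\prod_{j} \frac{j}{j+1} \le \pima$), works at the threshold $c\pima^{-1}H_{\pima^{-1}}$, and obtains $O(n^{c-1})$; its closing claim that this ``yields the desired result for any $\eta>0$'' is loose, since converting the threshold from $c\pima^{-1}H_{\pima^{-1}}$ to $c\pima^{-1}\ln n$ costs only a factor $n^{O(1/\ln n)} = O(1)$, leaving $O(n^{c-1})$ --- which is \emph{weaker} than the stated $O(n^{c-1-\eta})$. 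You instead leave the parameter free, evaluate the product exactly as $\Gamma(n+1)\Gamma(1+\kappa)/\Gamma(n+1+\kappa) = \Theta\big(\Gamma(1+\kappa)\,n^{-\kappa}\big)$, work directly at the statement's threshold, and optimize $\kappa = 1 + \eta/(1-c)$ to force the exponent $\kappa(c-1) = c-1-\eta$. In effect, the paper's choice $a=\pima$ is the case $\kappa=1$ of your computation and only yields the $\eta=0$ bound; your parameter optimization is precisely what is needed to prove the proposition as actually stated (and your intermediate estimates --- $b = \kappa\pima/(1+\kappa\pima) = \Theta(1/n)$, $a\pima^{-1} = \kappa(1+O(1/n))$, $e^{at} = n^{\kappa c}(1+o(1))$ --- all check out).

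One flaw to repair: you define $T_{\mathrm{lb}} = \sum_{j=1}^n S_j$ with $S_j \sim \Geo(j\pima)$, but whenever $\epsilonDotAlpha < \alpha_{\max}$ we have $\pima^{-1} < n$, so for $j > \pima^{-1}$ the ``parameter'' $j\pima$ exceeds $1$ and $\Geo(j\pima)$ is not defined (nor does your MGF formula apply there). This is exactly why the paper truncates the sum at $j \le \pima^{-1}$: the bound $p_i \le (n-i+1)\pima$ from Proposition~\ref{thrmDAIsSumGeo} is vacuous for those indices, and discarding the corresponding (nonnegative) waiting times only helps the lower bound, preserving $T \succeq T_{\mathrm{lb}}$. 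After truncating, your argument goes through verbatim: the product runs up to $\lfloor \pima^{-1}\rfloor = \Theta(n)$, the Gamma-function asymptotics still give $\Theta\big(\Gamma(1+\kappa)\,n^{-\kappa}\big)$, and the final exponent is unchanged.
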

\begin{proof}
  % Similarly to the proof of~\ref{thrmSimpleCouponUpperBound},
  % we write $T = \sum_{i=1}^n \tau^{(i)}$, where $\tau^{(i)}$ is a geometric
  % random variable with parameter $p_i$, which gives the waiting time before
  % the $i$th coupon is collected.
  % Again $p_i$ is equal to the sum of the weights of the uncollected
  % coupons, divided by $n\epsilonDotAlpha$.
  % Thus, we can upper bound $p_i$ as
  % \[ p_i \le \frac{ (n - i+1) \alpha_{\max} }{n\epsilonDotAlpha}
  % \]
  By~\ref{thrmDAIsSumGeo},
  we can write $T = \sum_{i=1}^n \tau^{(i)}$,
  where $\tau^{(i)}$ is a geometric
  random variable with parameter $p_i$, and $p_i \le (n - i+1) \pi_{\max}$.
  Note that for $n - i + 1 > \pi_{\max}^{-1}$,
  this upper bound on $p_i$ is greater than $1$.
  We simply ignore these terms to get a lower bound:
  $T \succeq T_{\mathrm{lb}}$ for 
  \[ T_{\mathrm{lb}} := \sum_{j=1}^{1/\pi_{\max}} T_j,
    \qquad \text{where $T_j \sim \Geo(j\pi_{\max})$ are independent}
  \]
  (where we change the index of summation for convenience).
  We now apply a Chernoff bound to get a lower tail
  bound on $T_{\mathrm{lb}}$.

  We have
  \begin{align*}
    \P{T_{\mathrm{lb}} \le c\pima^{-1} H_{\pima^{-1}}}
    & = \P{e^{-aT_{\mathrm{lb}}} \ge \exp({-ac\pima^{-1} H_{\pima^{-1}}})} \\
    & \le  \exp({ac\pima^{-1} H_{\pima^{-1}}}) \E{e^{-aT_{\mathrm{lb}}} }
    \tag{*}
  \end{align*}

  A calculation reveals
  \[
    \E{ e^{-a T_j} } = \frac{j\pima}{j\pima + e^{a} - 1}
    \le \frac{j\pima}{j\pima + a}
  \]
  since $e^a - 1 \ge a$. Now set $a = \pima$, and 
  the expectation in (*) can be bounded as
  \[
    \E{e^{-aT_{\mathrm{lb}}} } \le \prod_{j=1}^{\pima^{-1}} \frac{j}{j + 1}
    \le \pima \tag{\dag}
  \]

  In total,
  \begin{align*}
    \P{T_{\mathrm{lb}}
    \le c\pima^{-1} H_{\pima^{-1}}}
    & \le \exp(c H_{\pima^{-1}}) \pima \\
    & = O( \pima^{1-c} ) = O( n^{c-1} )
  \end{align*}
  % $\P{T_{\mathrm{lb}} \le c\pima^{-1} H_{\pima^{-1} }}\le O(n^{c-1})$.
  Noting that 
  \[ \pima^{-1} H_{\pima^{-1} } 
    = \frac{\epsilonDotAlpha}{\alpha_{\max}}n 
      \left(\ln n + \log\frac{\epsilonDotAlpha}{\alpha_{\max}} + O(1)\right)
    = (1 - O(1/\ln n))\frac{\epsilonDotAlpha}{\alpha_{\max}}n\log n
  \]
  yields the desired result for any $\eta>0$.
\end{proof}

% \begin{proposition} \label{thrmChernoffCouponLowerTail}
%   For any $\pi_{\max} = \Theta(1/n)$,
%   let $T_{\mathrm{lb}} = \sum_{j=1}^{1/\pi_{\max}} T_j$,
%   where $T_j$ are independent draws from $\Geo(j\pi_{\max})$.
%   Then for all $0 < c < 1$, we have
%   $\P{T_{\mathrm{lb}} \le c\pima^{-1} H_{\pima^{-1} }}\le O(n^{c-1})$.
% \end{proposition}

We can also modify the above proof to show the following,
which is needed at one point for a technical reason:

\begin{corollary} \label{thrmChernoffCouponLastTwoMenLowerTail}
  For any $\pi_{\max} = \Theta(1/n)$ and constant $k$,
  let $T_{\mathrm{lb}}' = \sum_{j=k}^{1/\pi_{\max}} T_j$,
  where $T_j$ are independent draws from $\Geo(j\pi_{\max})$.
  Then for all $0 < c < 1$ and $\eta > 0$, we have
  \[ \P{T_{\mathrm{lb}}' \le c\frac{\epsilonDotAlpha}{\alpha_{\max}} n\ln n}
    \le O(n^{c-1-\eta}). \]
\end{corollary}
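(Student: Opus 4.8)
The plan is to repeat the exponential--moment (Chernoff) argument of Proposition~\ref{thrmCouponLowerTail} almost verbatim; the only change is that the sum defining $T_{\mathrm{lb}}'$ begins at the constant index $k$ rather than at $1$, which alters the computation only by a constant factor. First I would apply the same tilted Markov bound with parameter $a = \pima$, namely $\P{T_{\mathrm{lb}}' \le t} \le e^{at}\,\E{e^{-aT_{\mathrm{lb}}'}}$. Using independence of the $T_j$ together with the elementary identity $\E{e^{-aT_j}} = j\pima/(j\pima + e^{a}-1) \le j\pima/(j\pima + a) = j/(j+1)$ (valid at $a = \pima$ since $e^{a}-1 \ge a$), the moment generating factor collapses to a telescoping product:
\[ \E{e^{-aT_{\mathrm{lb}}'}} \le \prod_{j=k}^{\pima^{-1}} \frac{j}{j+1} = \frac{k}{\pima^{-1}+1} \le k\pima. \]
This is identical to the $k=1$ computation of Proposition~\ref{thrmCouponLowerTail}, except that the constant $k$ appears in place of $1$.

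The only conceptual point to verify is that discarding the first $k-1$ summands is harmless. These are precisely the terms $T_j \sim \Geo(j\pima)$ with the smallest rate and hence the largest mean, $\E{T_j} = \Theta(1/(j\pima)) = \Theta(n)$; since $k$ is held constant, omitting them removes only $O(kn) = O(n) = o(n\ln n)$ from the total, which is far below the threshold scale $\Theta(n\ln n)$, so the lower-tail estimate is unaffected to leading order. Substituting $t = c\frac{\epsilonDotAlpha}{\alpha_{\max}} n\ln n$ and using $\pima = \Theta(1/n)$, the factor $e^{at}$ contributes $n^{c}$ (up to the $\Theta(1)$ implied by $\pima = \Theta(1/n)$) while $k\pima$ contributes $\Theta(1/n)$, so the product is of order $n^{c-1}$ up to the constant $k$. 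Exactly as in the concluding step of Proposition~\ref{thrmCouponLowerTail}, the residual discrepancy between the threshold $\frac{\epsilonDotAlpha}{\alpha_{\max}} n\ln n$ and $\pima^{-1}H_{\pima^{-1}}$ (which agree up to a $1 \pm O(1/\ln n)$ factor), together with the constant $k$ and the implicit constants in $\pima = \Theta(1/n)$, is absorbed into the $n^{-\eta}$ slack of the exponent, yielding the stated bound for every fixed $\eta > 0$.

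I do not expect a genuine obstacle here: the estimate is a routine corollary once one observes that the telescoping product is essentially insensitive to a constant shift in its starting index, so the extra factor $k$ never reaches the exponent. The only place demanding a little care is aligning the target threshold $c\frac{\epsilonDotAlpha}{\alpha_{\max}} n\ln n$ with the Chernoff threshold when $\pima$ is known only up to $\Theta(1/n)$; this bookkeeping is precisely what the $\eta$ in the exponent is designed to absorb, and it should be handled identically to the final paragraph of Proposition~\ref{thrmCouponLowerTail}.
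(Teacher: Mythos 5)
Your proof is correct and takes essentially the same route as the paper's: rerun the Chernoff computation of Proposition~\ref{thrmCouponLowerTail} with the tilt $a=\pima$, so that the telescoping product now starts at index $k$ and gives $\E{e^{-aT_{\mathrm{lb}}'}} \le k\pima$, inflating the final probability only by the constant factor $k=O(1)$. Your middle paragraph about discarding the first $k-1$ summands is harmless but unnecessary, since the corollary bounds $T_{\mathrm{lb}}'$ directly rather than by comparison with $T_{\mathrm{lb}}$.
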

\begin{proof}
  The only change we need to make is that the index of the product in
  ($\dag$) starts at $k$, instead of $1$.
  This increases the bound for $\E{e^{-aT_{\mathrm{lb}}'}}$ by a factor of
  $k=O(1)$, the thus the final probability by a factor of $k = O(1)$.
\end{proof}

\subsection{Concentration around the mean}

In this subsection, we upper bound the variance of the coupon collector
process in order to get (fairly weak) concentration results around the
mean.
This is actually not needed for any of our other results, because
once we start reasoning about DA (without re-proposals), we
bound the variance in the total number of proposals in a different way.
We include this result out of any possible interest.

\begin{proposition}
  For any $\epsilon > 0$,
  \[ T = (1 \pm \epsilon) \frac{\epsilonDotAlpha}{\alpha_{\min}} n \ln n
  \]
  with probability approaching $1$.
\end{proposition}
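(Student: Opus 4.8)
The plan is to split the two-sided bound into its two tails and handle them by different means, exploiting that the upper tail is already in hand. For the \emph{upper} tail, I would simply invoke Proposition~\ref{thrmCouponUpperTail} with $C = \epsilon$: it gives
\[
  \P{T \ge (1+\epsilon)\tfrac{\epsilonDotAlpha}{\alpha_{\min}} n\ln n}
    \le O(n^{-\epsilon/2}) \to 0,
\]
so the only real work is the matching \emph{lower} tail, namely showing $\P{T \le (1-\epsilon)\tfrac{\epsilonDotAlpha}{\alpha_{\min}} n\ln n}\to 0$. Note the existing lower-tail bound (Proposition~\ref{thrmCouponLowerTail}) is stated with $\alpha_{\max}$ in the denominator and is therefore too weak to pin down the constant $\epsilonDotAlpha/\alpha_{\min}$; I need a sharper argument driven specifically by the bottom tier.

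For the lower bound I would introduce the auxiliary variable $T'$, defined as the number of draws needed until every \emph{bottom-tier} coupon has appeared at least once. Coupling through a single sequence of i.i.d.\ draws, collecting all coupons forces collecting all bottom-tier coupons, so $T \ge T'$ pathwise, hence $\P{T \le x}\le \P{T' \le x}$. The key structural observation — and the crux of the argument — is that, because every bottom-tier coupon carries the \emph{same} weight $\pim$, the probability that a given draw hits a not-yet-collected bottom coupon depends only on the number of bottom coupons still missing, which decreases deterministically by one at each collection. Consequently $T'$ is an \emph{unconditional} sum of independent geometrics,
\[
  T' = \sum_{j=1}^{\epsilon_{\min} n} \Geo(j\pim),
  \qquad
  \E{T'} = \pim^{-1} H_{\epsilon_{\min} n}
    = \big(1 \pm O(1/\ln n)\big)\tfrac{\epsilonDotAlpha}{\alpha_{\min}} n\ln n,
\]
which matches the leading order of $\E{T}$ from Theorem~\ref{thrmCouponCentralConcentration}; intuitively, waiting for the bottom tier is what dominates the whole process.

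With independence secured, the variance is immediate:
\[
  \Var(T') = \sum_{j=1}^{\epsilon_{\min} n}\frac{1 - j\pim}{(j\pim)^2}
    \le \pim^{-2}\sum_{j\ge 1}\frac{1}{j^2}
    = O(\pim^{-2}) = O(n^2) = o\big(\E{T'}^2\big),
\]
since $\pim = \Theta(1/n)$. Chebyshev's inequality then yields $\P{T' \le (1-\epsilon/2)\E{T'}} = O\big(1/(\epsilon^2 \ln^2 n)\big)\to 0$, and combining with $T \ge T'$ and $\E{T'} = (1-o(1))\tfrac{\epsilonDotAlpha}{\alpha_{\min}}n\ln n$ gives the lower bound for any fixed $\epsilon>0$ once $n$ is large. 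Together with the upper tail this establishes $T = (1\pm\epsilon)\tfrac{\epsilonDotAlpha}{\alpha_{\min}}n\ln n$ with probability tending to $1$.

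The main obstacle, and the step worth getting right, is precisely the recognition that restricting to the bottom tier collapses the problem to a sum of independent geometrics that still captures the leading-order mean. A naive attempt to bound $\Var(T)$ directly via the decomposition of Proposition~\ref{thrmDAIsSumGeo} runs into the law of total variance: the term $\E_C[\Var(T\mid C)]$ is easily $O(n^2)$, but the term $\Var_C(\E[T\mid C])$, measuring fluctuation over the random order in which distinct coupons are collected, is delicate to control because heavier coupons are collected earlier and the remaining-weight profile is itself random. The bottom-tier coupling sidesteps this entirely, since the geometric parameters there are deterministic and no conditioning on collection order is required.
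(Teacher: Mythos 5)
Your proof is correct, but it takes a genuinely different route from the paper's. The paper proves two-sided concentration by bounding $\Var(T)$ directly: starting from the decomposition $T = \sum_i \tau^{(i)}$ of Proposition~\ref{thrmDAIsSumGeo}, it conditions on the collection order $P$ and applies the law of total variance --- and the ``delicate'' term you flag, $\Var_P(\E{T\mid P})$, is in fact tamed there by a sandwich argument: $\E{T\mid P} \le \pim^{-1}H_n$ holds for \emph{every} order $P$ (domination by $T_{\mathrm{ub}}$), while $\E{T} \ge \big(1 - O(1/\ln n)\big)\pim^{-1}H_n$ by Proposition~\ref{thrmCouponExpectationLowerBound}, so $\E{\E{T\mid P}^2} - \E{T}^2 \le O(1/\ln n)\,(\pim^{-1}H_n)^2 = O(n^2\ln n)$, and two-sided Chebyshev gives deviation probability $O\big(1/(\epsilon^2\ln n)\big)$. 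Your argument instead splits the tails asymmetrically: the existing Chernoff bound (Proposition~\ref{thrmCouponUpperTail}) for the upper tail, and for the lower tail a pathwise coupling $T \ge T'$ with the bottom-tier-only collector, whose summands are independent geometrics with \emph{deterministic} parameters $j\pim$ --- precisely because all bottom-tier coupons carry the same weight --- so that $\Var(T') = O(n^2)$ requires no conditioning trick at all. This is essentially the discrete-time analogue of the paper's own expectation lower bound (Proposition~\ref{thrmCouponExpectationLowerBound}, proved via Poissonization and the maximum of bottom-tier exponentials), promoted from an expectation statement to a concentration statement. Your route is arguably cleaner and yields a slightly sharper rate, $O(n^{-\epsilon/2}) + O\big(1/(\epsilon^2\ln^2 n)\big)$ versus the paper's $O\big(1/(\epsilon^2\ln n)\big)$; what the paper's route buys in exchange is an explicit variance bound $\Var(T) = O(n^2\ln n)$ for the full collector itself, which your coupling does not provide.
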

\begin{proof}
  It will suffice to calculate a crude upper bound
  for the variance of $T$ and apply Chebyshev's inequality.
  As shown in~\ref{thrmDAIsSumGeo}, we have
  $T = \sum_{i=1}^n \tau^{(i)}$, where $\tau^{(i)}$ is some geometrically
  distributed random variable with success parameter $p_i$ at least
  $(n-i+1)\pi_{\min}$.
  Thus, the variance of $\tau^{(i)}$ is
  \[ \frac{1 - p_i}{p_i^2} \le (n - i+1)^{-2}\pim^{-2}
  \]

  We would like to conclude saying that $\sum_{i=1}^n \tau^{(i)}$ is at
  most $\sum_{i=1}^n \Var(\tau^{(i)})$.
  However, this would not take into account the variation in wait times we
  experience due to the different orders in which we might collect the
  distinct coupons. Thus, in the below we condition on the order in which
  coupons are collected and use a trick related to the ``law of total
  variance''.

  Let $P = (c_1, c_2,\ldots, c_n)$ denote the random variable giving
  the order in which the distinct coupons
  are collected. Using the law of total expectation and the definition of
  (conditional) variance, the following holds for any random variables $T, P$:
  \begin{align*}
    \Var(T) & = \E{T^2} - \E{T}^2 \\
      & = \E{ \E{T^2 | P} } - \E{T}^2 \\
      & = \E{ \Var(T | P) + \E{T | P}^2 } - \E{T}^2. \tag{*}
  \end{align*}
  Conditioned on $P$, the values of each $\tau^{(i)}$ are independent, 
  so we have 
  \[ \Var(T | P) = \sum_{i=1}^n \Var(\tau^{(i)} | P) \le
    \sum_{i=1}^n (n - i+1)^{-2}\pim^{-2} = O(\pim^{-2}) = O(n^2).
  \]
  Moreover, as discussed in the proof of~\ref{thrmDAIsSumGeo},
  regardless of the value of $P$ the conditional distribution $T | P$
  is statistically dominated by $T_{\mathrm{ub}}$, which has
  expectation
  \[ \E{T | P} \le \E{T_{\mathrm{ub}}} = \pim^{-1} H_n
  \]
  On the other hand, proposition~\ref{thrmCouponExpectationLowerBound}
  shows that
  \[ \E{T} \ge (1 - O(1/\ln n))\pim^{-1} H_n
  \]
  overall.

  Plugging the above bounds into (*) gets us 
  \begin{align*} 
    \Var(T) 
    & \le O(n^2) + \Big( 1 - \big(1 - O(1/\ln n)\big)^2 \Big) 
      (\pim^{-1}H_n)^2 \\
    & = O(n^2) + O(1/\ln n)(\pim^{-1}H_n)^2 \\
    & = O(n^2 \ln n)
  \end{align*}
  Let $f(n) = ({\epsilonDotAlpha}/{\alpha_{\min}}) n \ln n$.
  Chebyshev's inequality finally tells us, for $n$ large enough that
  $|f(n) - \E{X}| = O(1/\ln n) f(n) < (\epsilon/2) f(n)$, we have
  \begin{align*}
    \P{ T \ne (1 \pm \epsilon)f(n) }
    & \le \P{ | T - \E{T} | \ge (\epsilon/2)f(n) } \\
    & = \frac{O(n^2 \ln n)}{(\epsilon/2)^2 f(n)^2 } \\
    & = O\big( 1/(\epsilon^2 \ln n) \big)
  \end{align*}

\end{proof}

%%%%%%%%%%
%% Total proposals
%%%%%%%%%%
\section{Missing proof on the expected
total number of proposals}
\label{appendixTotalProposalsExpectation}

\RestateThrmDACentralConcentration*

\begin{proof}
  Because $S\preceq T$, it suffices to prove a lower bound on $\E{S}$.
  % In the below, let $E$ denote the event that every man
  % makes at most $O(\ln^2 n)$ proposals, and recall
  % from~\ref{thrmMaxIndividualProposals} that $\P{E} \ge 1 - 1/n^2$.

  Consider running deferred acceptances with re-proposals to derive the
  random variable $T$.
  We can write $S = T - R$, where $R = \sum_{i=1}^T R_i$ and
  $R_i$ is the indicator of whether the $i$th proposal is a repeat.
  By~\ref{thrmMaxIndividualProposals},
  with probability $1 - 1/n^2$, no man makes more than $O(\ln^2 n)$
  proposals. Let this event be denote by $E$.
  We thus have
  \[ \E{R_i | E} \le \frac{\alpha_{\max}\cdot O(\ln^2 n)}{\alpha_{\min} n}
    = O\left(\frac{\ln^2 n}{n}\right)
    \qquad\Longrightarrow\qquad \E{R|E} = O(\ln^2 n/n)\cdot\E{T | E}.
  \]
  % We'd like to use the above to show that the expectation of $R$ is only a
  % $O\left({\ln^2 n}/{n}\right)$ fraction of the expectation of $T$.
  % However, because $T$ can be arbitrarily large,
  % it is nontrivial to bound $\E{T|\bar E} \P{\bar E} = \E{T \1{\bar E}}$.
  To complete the proof, we use the fact that $T$ is not heavy-tailed enough
  for the low probability event $E$ to effect its expectation much.

  Specifically, proposition~\ref{thrmCouponUpperTail} shows that there exists
  a constant $K$ such that, for each $C=1,2,\ldots$, we have
  $\P{T \ge (1 + C)K n\ln n } \le O(n^{-C/2})$.
  Let $B$ be the event that $T\le 5Kn\ln n$, and
  for each integer $i\ge 5$, let $B_i$ be the event that 
  $iKn\ln n \le T \le (i+1)Kn\ln n$.
  We have
  \begin{align*}
    \E{T \1{\bar E \vee \bar B } }
    & \le \E{T \1{\bar E \wedge B} } + \sum_{i=5}^\infty \E{T \1{B_i} } \\
    & \le
    \frac{1}{n^2}\cdot \left( 5 K n\ln n \right)
      + \sum_{i=5}^\infty O(n^{-i/2}) (1 + i)K n\ln n
    = o(1)
  \end{align*}

  % When $E$ does not hold, we still have the trivial upper bound
  % $R\le T$.
  Thus, all told we we have
  \begin{align*}
    \E{R}
    & \le \big(1 - O(1/n^2)\big)\E{R | E \wedge B} 
      + \E{R \1{\bar E \vee \bar B} } \\
    & \le (1 - 1/n^2)\cdot O\left(\frac{\ln^2 n}{n}\right)
        \cdot 5Kn\ln n
      + \E{T \1{\bar E \vee \bar B} }
    = O(\ln^3 n) \\
  \Longrightarrow 
  \qquad \E{S} 
  & = \E{T} - \E{R} \ge \big(1 - O(\ln^2 n/n)\big)\E{T}.
  \end{align*}

  % With probability $1 - 1/n$, no man makes more than $O(\ln^2 n)$
  % proposals, so 

  % in this case.
  % Thus, we have
  % \[ \E{S} 
  %    = \E{T} - \E{\sum_{i=1}^T R_i}
  %    \ge \E{T} - \E{T}\left(1/n 
  %      + \big(1 - 1/n\big) \frac{O(\ln^2 n)}{n}\right)
  %    = \big(1 \pm \widetilde O(1/n)\big)\E{T}
  % \]
  % and the formula for the expectation follows
  % from~\ref{thrmCouponCentralConcentration}.

  % \begin{align*}
  %   \E{reprops}
  %   \le log^2 n / n * 100*n log n * (most of probability -- both
  %     not too many props overall AND not too many props by a single man)
  %     + (too many props by a single man AND not a crazy number of props overall) 
  %     + (anything goes for the men, but now there's a crazy number overall)
  %          (and sum the last term from 1 to infinity basically)

  % \end{align*}

\end{proof}

% %%%%%%%%%%
% %% We reach a smooth state w.h.p.
% %%%%%%%%%%
\section{Reaching smooth matching states}
\label{appendixReachingSmooth}

Recall definition~\ref{defPartialMatchingState},
which defines $\mu_{-L}$ as the state of
deferred acceptance after letting all men outside of set $L$ propose until
they find a match,
and~\ref{defSmoothMatchingState}, which calls $\mu_{-L}$ smooth if at most
$C_1 n\ln n$ proposals have been made overall,
and at most $n^{1 - C_2}$ women have received fewer than 
$C_3 \ln n$ proposals.

\RestateThrmSmoothWHP*
\begin{proof}
  By \ref{thrmStatisticalDominance} and \ref{thrmCouponUpperTail},
  there exists a $C_1$ such that
  the total number of proposals made in DA is $C_1 n\ln n$ with probability
  $1 - 1/n$. Thus, the same will hold before the men in $L$ propose.

  Now, consider running deferred acceptance with re-proposals
  with all men not in $L$, and let $T'$ denote the total number of proposals.
  Observe that this process will terminate as soon as $n-2$ distinct women
  receive proposals.
  Thus, by generalizing the standard argument used to bound the coupon
  collector with equal probabilities,
  we can write $T' = \sum_{i=1}^{n-2} \tau^{(i)}$, where $\tau^{(i)}$ are
  geometrically distributed random variables with parameter at most
  $(n - i + 1)\pi_{\max}$ (although the parameter depends on the order in
  which coupons are collected).
  A detailed description of this is given in~\ref{thrmDAIsSumGeo}.
  We thus know that $T' \succeq T'_{\mathrm{lb}}$, where
  $T_{\mathrm{lb}}' = \sum_{j=3}^{1/\pi_{\max}} T_j$ for independent draws
  $T_j \sim \Geo(j\pi_{\max})$ (similarly to~\ref{thrmCouponLowerTail}).
  Corollary~\ref{thrmChernoffCouponLastTwoMenLowerTail} shows that, up to
  constants, the same lower tail that held in 
  \ref{thrmCouponLowerTail} also applies here.
  In particular, there is a constant $c$ such that the total number of
  proposals is at least $c n \ln n$ with probability $1 - 1/n^{1/2}$.

  Now, consider the first $c n \ln n$ proposals of deferred
  acceptance with re-proposals, and let $X$ denote the number of proposals
  received by a fixed woman $w$.
  Each draw is random and independent over the women,
  and we can assume without loss of generality that $w$
  belongs to the bottom tier of women.
  Thus $w$ is selected each draw with probability
  $\pi_{\min}$ and
  $\E{X} = \pi_{\min} c n \ln n = O(\ln n)$.
  % = c\frac{\epsilonDotAlpha}{\alpha_{\min}} \ln n$.
  As the draws are independent, a standard ``multiplicative Chernoff bound''
  \ref{thrmMultChernoff} applies,
  and gets us that
  \[ \P{X \le \frac 1 2 \E{X}}
      \le \exp\left(-\frac{\E{X}}{4}\right)
      = n^{-D}
  \]
  for some constant $D$.
  Thus, the expected number of women with fewer than 
  % $  \frac c 2 \frac{\epsilonDotAlpha}{\alpha_{\min}}\ln n$ proposals
  $\E{X}/2$ proposals is thus at most $n^{1 - D}$.
  Thus Markov's inequality tells us that, for $cn\ln n$ proposals
  uniformly drawn from $\W$,
  \[ \P{\text{more than $n^{1 - D/2}$ women 
    have fewer than $\E{X}/2$ proposals}} 
    \le \frac{n^{1-D}}{n^{1-D/2}}
    = n^{-D/2}
  \]
  % That is, with probability $1 - n^{-D/2}$,
  % the number of women with fewer than $\E{X}/2$
  % proposals in a sequence of $cn\ln n$ proposals
  % is less than $n^{1 - C_2}$ for $C_2 = D/2$.
  As deferred acceptance with re-proposals runs for $cn\ln n$ proposals
  with probability $1 - n^{1/2}$, we get that
  the number of women with fewer than $\E{X}/2$
  proposals (in deferred acceptance with re-proposals)
  is less than $n^{1 - C_2}$
  with probability $1 - O(n^{-D/2})$, where $C_2 = D/2$.

  To simply relate the above to ordinary DA,
  we use proposition~\ref{thrmNoTrippleProps} (proven below).
  Consider converting the run of deferred acceptance with
  re-proposals into a run of DA by ignoring repeated proposals.
  By proposition~\ref{thrmNoTrippleProps},
  with probability $1 - \widetilde O(1/n)$,
  every woman has at least $1/3$ of the
  proposals in DA as she got in deferred acceptance with re-proposals.
  Thus, with probability $1 - O(n^{-D/2})$ overall, 
  at most $n^{1 - C_2}$ women received fewer than $\E{X}/6 = C_3 \ln n$
  proposals.
\end{proof}

% The following lemma provides a simple way to relate DA to deferred
% acceptance with re-proposals, which we used above.
\begin{proposition}\label{thrmNoTrippleProps}
  Consider deferred acceptance with re-proposals.
  With probability $1 - \widetilde O(1/n)$, no man proposes to a single
  woman more than $3$ times.
\end{proposition}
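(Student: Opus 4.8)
The plan is to bound the probability that any man proposes to the same woman three times, using the fact (from Corollary~\ref{thrmMaxIndividualProposals}) that every man makes at most $O(\ln^2 n)$ proposals with high probability, together with the fact that each re-proposal draws a woman from a distribution with maximum weight $\pi_{\max} = O(1/n)$. Fix a man $m$ and condition on the high-probability event that he makes at most $K = O(\ln^2 n)$ proposals. Each of these proposals is an independent draw from $\W$ (in deferred acceptance with re-proposals), so for any fixed woman $w$ the number of times $m$ proposes to $w$ is stochastically dominated by a binomial random variable with $K$ trials and success probability $\pi_{\max} = O(1/n)$.

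First I would compute the probability that $m$ proposes to a \emph{specific} woman $w$ at least three times. This is at most $\binom{K}{3}\pi_{\max}^3 = O(\ln^6 n / n^3)$. Taking a union bound over all $n$ choices of $w$ shows that the probability $m$ triple-proposes to \emph{some} woman is at most $O(\ln^6 n / n^2) = \widetilde O(1/n^2)$. Then I would take a second union bound over all $n$ men, giving a total failure probability of $\widetilde O(1/n)$. Finally, I would fold in the $O(1/n^C)$ probability that the conditioning event (no man makes more than $O(\ln^2 n)$ proposals) fails, which is negligible compared to $\widetilde O(1/n)$, to conclude that with probability $1 - \widetilde O(1/n)$ no man proposes to any woman more than three times.

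The one point requiring care is the interaction between the conditioning and the independence of draws: I want to say the draws of $m$ are i.i.d.\ samples from $\W$, but the \emph{number} of draws $K$ is itself random and correlated with where the draws land. To handle this cleanly, I would instead bound things by considering a fixed budget: restrict attention to the first $O(\ln^2 n)$ proposals of each man, which are genuinely i.i.d.\ draws from $\W$, and observe that on the high-probability event these exhaust all of $m$'s proposals. Since each draw is an independent sample from $\W$ regardless of acceptances (re-proposals being the key feature that preserves this independence), the binomial domination is valid, and the main obstacle is simply to state the conditioning precisely rather than any genuine technical difficulty. The counting is otherwise routine: the third moment $\binom{K}{3}\pi_{\max}^3$ decays like $1/n^3$ per woman, which beats the two factors of $n$ from the union bounds with polylogarithmic room to spare.

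One subtlety worth flagging is that I am using Corollary~\ref{thrmMaxIndividualProposals} with some constant $C$ for the conditioning event; choosing $C \ge 2$ ensures the contribution of the bad event is at most $1/n^2 = \widetilde O(1/n)$, which suffices. No stronger concentration is needed, because the desired conclusion is only a $\widetilde O(1/n)$ bound.
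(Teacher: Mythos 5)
Your proof is correct and follows essentially the same route as the paper's: condition on the event (from Corollary~\ref{thrmMaxIndividualProposals}, which the paper notes extends to deferred acceptance with re-proposals) that no man makes more than $O(\ln^2 n)$ proposals, bound the per-pair triple-proposal probability by $\binom{O(\ln^2 n)}{3}\pi_{\max}^3 = O(\ln^6 n/n^3)$, and union bound over the $n^2$ man-woman pairs. Your explicit handling of the conditioning via a fixed i.i.d.\ proposal budget is a slightly more careful statement of what the paper does implicitly, but it is the same argument.
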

\begin{proof}
  Proposition~\ref{thrmMaxIndividualProposals} states that,
  with probability $1-1/n$, no man makes more than $O(\ln^2 n)$ proposals
  in DA.
  Observe that the proof of theorem~\ref{thrmMaxPropsLastMan}
  would go through even if the last man $m_*$ sampled women with
  replacement, and thus proposition~\ref{thrmMaxIndividualProposals}
  holds for deferred acceptance with re-proposals as well.
  For a fixed man-woman pair $(m,w)$, the probability that $m$ proposes to
  $w$ at least $3$ times is at most
  \[ \binom{O(\ln^2 n)}{3} \pi_{\max}^3 
    = O\left(\frac{\ln^6 n}{n^3}\right).
  \]
  Taking a union bound over the $n^2$ pairs $(m,w)$ gets the desired
  result.
\end{proof}

%%%%%%%%%%
%% Under the smooth matching state,
%%   everything happens as we want it to
%%%%%%%%%%
\section{Expected rank in smooth matching states}
\label{sectionSmooth}

For this entire section,
consider a fixed pair of men $L = \{m_1, m_2\}$ with public scores
$\beta_1$ and $\beta_2$.
We consider all men other than $L$ proposing until they find a match,
and we make the following assumption
on $\mu_{-L}$, the state of DA before $m_1, m_2$ propose:
\begin{assumption}\label{assumeSmooth}
  For the entirety of section~\ref{sectionSmooth},
  fix $\mu_{-L}$ and assume $\mu_{-L}$ is smooth.
\end{assumption}
In particular, all men other than $\{m_1, m_2\}$ have been accepted, 
we assume that at most $C_1 n\ln n$ total proposals have been made,
and we assume all but a $n^{-C_2}$ fraction of women have received 
at least $C_3 \ln n$ proposals.
Denote these women by $\Whigh$, and the remaining women by $\Wlow$. Denote the distribution induced
by sampling from $\Whigh$ proportionally to public scores
by $\mathcalWhigh$ (and similarly with $\mathcalWlow$).
We use $\Es{L}{}$ to denote taking an expectation over the random
process of $m_1, m_2$ proposing in DA after starting from state $\mu_{-L}$.

For this section, we assume without loss of generality
that all $\beta_i$ are rescaled to be
at least $1$ (e.g. we can simply set $\beta_{\min}=1$).
For any woman $w$, recall that $\Gamma_w$ denotes the sum of the public scores
of men who have proposed to $w$ \emph{before $m_1$ or $m_2$ start
proposing}. For $\beta \ge 1$, define
\begin{align*}
  % p = \Es{w\sim\W}{\frac{1}{1 + \Gamma_w}}
  p_{w}(\beta) := \frac{\beta}{\beta + \Gamma_w}, &&
  p(\beta) := \Es{w\sim\W}{ p_w(\beta) }.
\end{align*}
That is, $p_w(\beta)$ is the probability that a woman $w$ in $\mu_{-L}$
accepts a proposal from a man with public score $\beta$,
and $p(\beta)$ is the probability that a woman randomly drawn according to
men's preferences accepts such a proposal.

Our plan is to show that the number of proposals each $m_i$ makes is
closely related to a geometric random variable with parameter $p(\beta_i)$.
We start off by calculating the order of magnitude of $p(\beta)$, and
studying how $p(\beta)$ scales between $\beta_1$ and
$\beta_2$.

%%%%% %%%%% %%%%% %%%%% %%%%% %%%%% %%%%% %%%%% %%%%% %%%%% %%%%% %%%%%
\subsection{Scaling of probability of acceptance}
\label{sectionScalingProbability}
%%%%% %%%%% %%%%% %%%%% %%%%% %%%%% %%%%% %%%%% %%%%% %%%%% %%%%% %%%%%

\begin{proposition}\label{thrmPBetaMagnitude}
  For any constant $\beta$, we have
  $p(\beta) = \Theta(1 / \ln n)$.
\end{proposition}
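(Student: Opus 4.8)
The plan is to bound $p(\beta) = \Es{w\sim\W}{\beta/(\beta + \Gamma_w)}$ both above and below, using the two defining properties of a smooth matching state. The key structural fact I would exploit is that the map $x \mapsto \beta/(\beta + x)$ is positive, decreasing, and convex for $x \ge 0$: the decreasing part drives the upper bound (women who have received many proposals contribute very little), while convexity lets me pull an expectation inside via Jensen for the lower bound. Throughout I use that every public score is at least $1$, so each proposal a woman receives contributes at least $1$ to her $\Gamma_w$, and that her sampling weight under $\W$ is at most $\pima = \alpha_{\max}/(n\epsilonDotAlpha) = \Theta(1/n)$.

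For the upper bound, I would split the expectation over $\W$ into the contribution of $\Whigh$ and of $\Wlow$. Every $w \in \Whigh$ has received at least $C_3\ln n$ proposals, hence $\Gamma_w \ge C_3\ln n$ and $p_w(\beta) \le \beta/(C_3\ln n) = O(1/\ln n)$, so the $\Whigh$ portion of $p(\beta)$ is $O(1/\ln n)$. For $\Wlow$, I bound each $p_w(\beta) \le 1$ and observe that smoothness guarantees $|\Wlow| \le n^{1-C_2}$; since each woman has $\W$-mass at most $\pima = \Theta(1/n)$, the total $\W$-mass of $\Wlow$ is $O(n^{-C_2})$, contributing only $O(n^{-C_2}) = o(1/\ln n)$. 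Summing the two pieces gives $p(\beta) = O(1/\ln n)$.

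For the lower bound I would use the first smoothness property: at most $C_1 n\ln n$ proposals were made in total, so $\sum_w \Gamma_w \le \beta_{\max}\cdot C_1 n\ln n$. Since $\pi_w \le \pima$, this yields $\Es{w\sim\W}{\Gamma_w} = \sum_w \pi_w \Gamma_w \le \pima\sum_w \Gamma_w = O(\ln n)$. Applying Jensen's inequality to the convex function $x \mapsto \beta/(\beta+x)$ (exactly as in point~1 of the proof of Lemma~\ref{thrmMaxPropsLastMan}) then gives
\[ p(\beta) = \Es{w\sim\W}{\frac{\beta}{\beta+\Gamma_w}} \ge \frac{\beta}{\beta + \Es{w\sim\W}{\Gamma_w}} = \frac{\beta}{\beta + O(\ln n)} = \Omega(1/\ln n). \]

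Combining the two estimates yields $p(\beta) = \Theta(1/\ln n)$. I do not expect a serious obstacle: both directions are short once smoothness is invoked, and they essentially repeat the Jensen argument already used to lower-bound acceptance probabilities in Lemma~\ref{thrmMaxPropsLastMan}. The only point requiring mild care is verifying that the $\Wlow$ term in the upper bound is genuinely negligible; this relies on the scores being bounded constants, so that every woman's $\W$-weight is $\Theta(1/n)$ and the polynomially-few low-proposal women cannot dominate the average.
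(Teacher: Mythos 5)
Your proposal is correct and follows essentially the same route as the paper's own proof: Jensen's inequality applied to the convex map $x \mapsto \beta/(\beta+x)$ together with the $O(n\ln n)$ bound on total proposals for the lower bound, and the $\Whigh$/$\Wlow$ decomposition (using bounded tier weights to show $\Wlow$ has $\W$-mass $O(n^{-C_2})$) for the upper bound. No gaps.
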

\begin{proof}
  Recall that the total number of proposals in $\mu_{-L}$ is at 
  most $C_1 n\ln n$. Thus, $\frac 1 n \sum_{w\in W} \Gamma_w
  \le \beta_{\max}C_1 \ln n$. Note the fact that
  that sampling according to $\mathcal W$ only affects the
  average by constant factors. In particular, we have
  \[ \Es{w\sim\W}{\Gamma_w} 
    = \frac{1}{n\epsilonDotAlpha}\sum_{w\in W} \alpha(w)\Gamma_w
    \le \frac{\alpha_{\max}}{n \alpha_{\min}} \sum_{w\in W} \Gamma_w
    \le O(\ln n).
  \]
  Observe that the function $f(x) = \beta / (\beta + x)$ is convex.
  Applying Jensen's inequality to the calculation of $p(\beta)$, we get
  \[
    p({\beta}) \ge \frac{\beta}{\beta + \Es{w\sim\mathcal{W}}{\Gamma_w}}
    \ge \Omega\left(\frac{1}{\ln n}\right)
  \]
  
  Because tier weights $\alpha_i$ are constant,
  when we sample $w\sim \W$, the probability of drawing
  a woman from $\Wlow$ is at most $\alpha_{\max}n^{1-C_2} / (\alpha_{\min}n)
  = O(n^{-C_2})$. For $w\in\Whigh$, we have
  $p_w({\beta}) \le O(1/\ln n)$.
  Thus, overall we get
  \begin{align*}
    p({\beta})
    & = \big(1 - O(n^{-C_2})\big) 
      \Es{ w\sim \mathcalWhigh} {p_w(\beta) }
      + O(n^{-C_2}) \Es{ w\sim \mathcalWlow}{ p_w(\beta) } \\
    & \le \big(1 - O(n^{-C_2})\big) O(1/\ln n) + O(n^{-C_2})
      = O(1/\ln n)
  \end{align*}
\end{proof}

\begin{proposition}\label{thrmBetaScalesWithinLogError}
  For any constant $\beta > 1$, we have
  \[ p(\beta) = \big(1 - O(1/\ln n)\big) \beta p(1). \]
\end{proposition}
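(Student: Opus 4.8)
The plan is to exploit the exact identity, valid for every woman $w$,
\[
  \frac{p_w(\beta)}{\beta\, p_w(1)}
  = \frac{\beta/(\beta + \Gamma_w)}{\beta/(1 + \Gamma_w)}
  = \frac{1 + \Gamma_w}{\beta + \Gamma_w}
  = 1 - \frac{\beta - 1}{\beta + \Gamma_w}.
\]
Since $\beta > 1$, this ratio lies in $(0,1)$, so pointwise $p_w(\beta) \le \beta\, p_w(1)$, and taking the expectation over $w\sim\W$ yields the upper bound $p(\beta) \le \beta\, p(1)$ for free. All the work lies in the matching lower bound, for which I need $\Gamma_w$ to be large so that the correction $\tfrac{\beta-1}{\beta + \Gamma_w}$ is only $O(1/\ln n)$.

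This is where I would invoke smoothness. By Assumption~\ref{assumeSmooth}, every $w\in\Whigh$ has received at least $C_3\ln n$ proposals, and since each $\beta_j \ge 1$ we have $\Gamma_w = \sum_{m\in p(w)}\beta(m) \ge |p(w)| \ge C_3\ln n$. Substituting into the identity gives, uniformly over $w\in\Whigh$,
\[
  \Big(1 - \tfrac{\beta-1}{C_3\ln n}\Big)\beta\, p_w(1)
  \;\le\; p_w(\beta) \;\le\; \beta\, p_w(1),
\]
i.e. $p_w(\beta) = (1 - O(1/\ln n))\,\beta\, p_w(1)$ with the hidden constant depending only on $\beta$ and $C_3$. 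Because this multiplicative factor is uniform in $w$ and $p_w(1)\ge 0$, I can take the expectation over $w\sim\mathcalWhigh$ and pull the factor out, obtaining $\Es{w\sim\mathcalWhigh}{p_w(\beta)} = (1 - O(1/\ln n))\,\beta\,\Es{w\sim\mathcalWhigh}{p_w(1)}$.

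The last step passes from the conditional expectations over $\Whigh$ back to $p(\beta)$ and $p(1)$, and this is the only delicate point. Splitting exactly as in the proof of Proposition~\ref{thrmPBetaMagnitude}, for $\gamma\in\{1,\beta\}$ I write $p(\gamma) = (1 - O(n^{-C_2}))\,\Es{w\sim\mathcalWhigh}{p_w(\gamma)} + O(n^{-C_2})\,\Es{w\sim\mathcalWlow}{p_w(\gamma)}$, and since $p_w(\gamma)\le 1$ the $\Wlow$ term is an additive $O(n^{-C_2})$. The main obstacle is verifying that this polynomially small additive error is negligible at the $(1\pm O(1/\ln n))$ scale: this follows because Proposition~\ref{thrmPBetaMagnitude} gives $p(1), p(\beta) = \Theta(1/\ln n)$, so $O(n^{-C_2})$ is an $o(1/\ln n)$ relative perturbation of each. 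Hence $\Es{w\sim\mathcalWhigh}{p_w(\gamma)} = (1\pm O(1/\ln n))\,p(\gamma)$ for both $\gamma$, and chaining the three $(1\pm O(1/\ln n))$ factors with the free upper bound $p(\beta)\le\beta\,p(1)$ yields
\[
  p(\beta) = (1 - O(1/\ln n))\,\beta\, p(1),
\]
as claimed.
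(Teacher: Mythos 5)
Your proof is correct and follows essentially the same route as the paper's: the pointwise identity $p_w(\beta) = \beta p_w(1)\left(1 - \frac{\beta-1}{\beta+\Gamma_w}\right)$, the free upper bound $p(\beta)\le\beta p(1)$, smoothness giving $\Gamma_w \ge C_3\ln n$ on $\Whigh$, and the $\Whigh$/$\Wlow$ split made harmless by $p(\gamma)=\Theta(1/\ln n)$. The only (immaterial) difference is that you relate the conditional expectation over $\Whigh$ back to $p(\gamma)$ for both $\gamma\in\{1,\beta\}$ and chain, whereas the paper needs this comparison only for $\gamma=1$.
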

\begin{proof}
  Observe that
  $p_{w}(\beta) \le \beta p_{w}(1)$ pointwise
  for each $w$, and thus taking expectations,
  $p(\beta) \le \beta p(1)$.

  On the other hand, if $\Gamma_w > 0$, we have
  \[ p_{w}(\beta) = \beta \cdot \frac{1}{1 + \Gamma_w} 
    \cdot \frac{1 + \Gamma}{\beta + \Gamma}
    = \beta p_w(1) \left(1 - \frac{\beta - 1}{\beta + \Gamma_w}\right)
    \ge \beta p_w(1) \big(1 - O(1/\Gamma_w)\big). \]
  Applying the law of total probability to both $p(1)$ and $p(\beta)$, we have 
  \begin{align*}
    p({\beta})
    & = \big(1 - O(n^{-C_2})\big) \Es{ w\sim \mathcalWhigh}
      {\beta\big(1 - O(1/\log n)\big)p_w(1) }
    + O(n^{-C_2}) \Es{ w\sim \mathcalWlow}{ p_\beta(w) } \\
    & \ge  \beta \big(1 - O(1/\log n)\big)
    \Es{ w\sim \mathcalWhigh}{ p_w(1) } \tag{*}
  \end{align*}
  We can formally show that $p(1)$ is close to 
  $\Es{ w\sim \mathcalWhigh}{ p_w(1) }$ as follows:
  \begin{align*}
    p({1})
    & = \big(1 - O(n^{-C_2})\big) \Es{ w\sim \mathcalWhigh}{ p_w(1) }
      + O(n^{-C_2}) \Es{ w\sim \mathcalWlow}{ p_\beta(w) } \\
      & \le \big(1 - O(n^{-C_2})\big)
      \Es{ w\sim \mathcalWhigh}{ p_w(1) } + O(n^{-C_2}) \\
    \Longrightarrow \qquad
    \Es{ w\sim \mathcalWhigh }{ p_w(1) }
      & \ge \frac{p(1) - O(n^{-C_2})}{1 - O(n^{-C_2})}
      \ge \big(1 + O(n^{-C_2/2})\big) p(1).
  \end{align*}
  Where the last inequality uses the fact that $p(1) \ge \Omega(1/\ln n)$.
  Thus, plugging into (*), we get
  \[
    p({\beta})
    \ge  \beta \big(1 - O(1/\log n)\big) p(1).
  \]
\end{proof}

%%%%% %%%%% %%%%% %%%%% %%%%% %%%%% %%%%% %%%%% %%%%% %%%%% %%%%% %%%%%
\subsection{The expected number of proposals made}
\label{sectionExpectedNumberProps}
\ %
%%%%% %%%%% %%%%% %%%%% %%%%% %%%%% %%%%% %%%%% %%%%% %%%%% %%%%% %%%%%

We start by bounding the order of magnitude of the number of proposals a
man needs to make, even if the matching state has changed a noticeable
amount from the initial state $\mu_{-L}$.
\begin{proposition}\label{thrmContinuedProposalExpectation}
  Consider any matching state $(\mu, P)$
  (with any number of remaining unmatched men)
  in which fewer than $C_4 n \ln n$
  proposals have been made overall, for some constant $C_4$.
  Suppose $m_i$ has proposed to fewer than $n/4$ women total,
  and let $\rcontd_i$ be the number of additional proposals $m_i$ will make
  before DA terminates.
  Then $\E{\rcontd_i} \le O(\ln n)$.
\end{proposition}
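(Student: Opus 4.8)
The plan is to adapt the two-phase argument of Lemma~\ref{thrmMaxPropsLastMan} to this more general setting: instead of a single unmatched man proposing at the very end, we track $m_i$ as he is repeatedly rejected (and possibly displaced) until he settles. The aim is to show that each proposal $m_i$ makes has an $\Omega(1/\ln n)$ chance of being terminal, so that $\rcontd_i$ is stochastically dominated by a geometric random variable $\Geo(\Omega(1/\ln n))$, whence $\E{\rcontd_i} \le O(\ln n)$.

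First I would establish the per-proposal acceptance probability. Since $m_i$ has proposed to fewer than $n/4$ women, the set $W_*$ of women he has not yet proposed to satisfies $|W_*| \ge 3n/4$. Writing $\W_*$ for the score-weighted distribution on $W_*$, the hypothesis that at most $C_4 n\ln n$ proposals have been made lets me bound $\Es{w\sim\W_*}{\Gamma_w} \le \frac{\alpha_{\max}\beta_{\max}\cdot C_4 n\ln n}{|W_*|\,\alpha_{\min}} = O(\ln n)$, exactly as in step~1 of Lemma~\ref{thrmMaxPropsLastMan}. By Lemma~\ref{thrmWomenDeferedDecisions} and Jensen's inequality, a fresh proposal of $m_i$ is then accepted with probability $p_1 := \Es{w\sim\W_*}{\beta_i/(\beta_i+\Gamma_w)} \ge \beta_i/(\beta_i + O(\ln n)) = \Omega(1/\ln n)$. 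This bound stays valid throughout, since $m_i$ will be shown to propose to far fewer than $n/2$ women and, by Corollary~\ref{thrmMaxIndividualProposals} together with statistical dominance, the total number of proposals in all of DA is $O(n\ln n)$, so the proposal budget is never exceeded.

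Next I would handle what happens after an acceptance, mirroring step~2 of Lemma~\ref{thrmMaxPropsLastMan}. Using that DA is independent of the proposal order (Lemma~\ref{thrmDAExecutionOrderDoesntMatter}), whenever $m_i$ is accepted by a woman $w$ I follow the rejection chain this triggers until it either reaches a currently-unmatched woman --- in which case the number of unmatched women strictly decreases and $m_i$ keeps $w$ --- or returns to $w$ and displaces $m_i$. Since every unmatched woman has weight at least $\alpha_{\min}$ while $w$ has weight at most $\alpha_{\max}$, each proposal of the chain is at least $\alpha_{\min}:\alpha_{\max}$ as likely to land on an unmatched woman as to return to $w$, so the chain reaches an unmatched woman first (without displacing $m_i$) with probability $p_2 \ge \alpha_{\min}/(\alpha_{\min}+\alpha_{\max}) = \Omega(1)$. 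Combining the two phases, every proposal of $m_i$ is accepted and not immediately displaced with probability at least $p_1 p_2 = \Omega(1/\ln n)$, and feeding this into the geometric domination gives $\E{\rcontd_i} \le 1/(p_1 p_2) = O(\ln n)$.

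The hard part is justifying the geometric domination cleanly. Unlike in Lemma~\ref{thrmMaxPropsLastMan}, where a single unmatched woman $w_{\mathrm{last}}$ makes every accepted proposal genuinely final, here $m_i$ could in principle be displaced again by a chain triggered by some \emph{other} unmatched man after he has settled, so the naive geometric bound is not immediate. The key point to nail down is that such re-displacements do not accumulate: each ``successful'' proposal in phase two permanently removes one unmatched woman, and the monotone progress of DA, together with the $\Omega(1)$ race bound, should let me charge each re-entry of $m_i$ into the unmatched pool against a distinct successful step, so that the number of matched intervals $m_i$ passes through is itself geometrically controlled. I would make this precise by processing the triggered chains to completion in a fixed legitimate order (Lemma~\ref{thrmDAExecutionOrderDoesntMatter}) and bounding the probability that $w$ is revisited before DA terminates, which is where the bulk of the bookkeeping lies.
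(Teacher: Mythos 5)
Your opening step---the Jensen argument giving a per-proposal acceptance probability $p_1 = \Omega(1/\ln n)$ under the $C_4 n\ln n$ proposal budget---is exactly the paper's step, and your handling of the low-probability bad events (dominating the total number of future proposals by the coupon collector, and using the $O(\ln^2 n)$ bound of Lemma~\ref{thrmMaxPropsLastMan} with the trivial bound $n$ on the residual event) is also how the paper wraps up. But the part you yourself flag as ``the hard part'' is a genuine gap, not a detail: the claimed domination $\rcontd_i \preceq \Geo(p_1 p_2)$ is false as stated, because your event ``accepted, and the chain triggered by this acceptance reaches an unmatched woman before returning to $w$'' does not make the proposal terminal. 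After that particular chain dies out, DA continues with the remaining unmatched men, and a rejection chain launched by any of \emph{them} can later hit $w$ and displace $m_i$; your $p_2$ controls only the chain that $m_i$'s own acceptance sets off. Your proposed patch (charging re-entries of $m_i$ against removed unmatched women) is not executed, and it is not even clear what to charge against: a displacement of $m_i$ does not remove an unmatched woman, so the ``monotone progress'' you invoke does not pair displacements with successful steps.

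The paper closes exactly this gap with a short argument you never reach, and it dispenses with the chain-following of Lemma~\ref{thrmMaxPropsLastMan} entirely. Write $\rcontd_i = \sum_{k=1}^{K} P_k$, where $K$ is the number of times $m_i$ is tentatively accepted and $P_k$ is the number of proposals he makes between his $(k-1)$st and $k$th acceptances; your step 1 gives $\E{P_k} = O(\ln n)$. To bound $K$, consider every future proposal by any other man and condition on the event that it lands either on $m_i$'s current match $w$ or on some currently unmatched woman: the conditional probability of landing on $w$ is at most the constant $q = \alpha_{\max}/(\alpha_{\min}+\alpha_{\max})$, and $m_i$ cannot be displaced unless $w$ receives such a proposal before the unmatched women are exhausted (at which point DA terminates). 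This yields $\P{K \ge k+1 \mid K \ge k} \le q$, so $K$ is dominated by a geometric random variable with constant parameter and $\E{K} = O(1)$, whence $\E{\rcontd_i} \le O(1)\cdot O(\ln n) = O(\ln n)$. The essential difference from your write-up is that this race bound quantifies over \emph{all} subsequent proposals in the remainder of DA, not just the single chain created by $m_i$'s acceptance, which is precisely what controls repeated displacements; without it (or some completed version of your charging scheme), your argument only bounds the proposals up to $m_i$'s first acceptance.
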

\begin{proof}
  Consider letting all men other than $m_i$ propose until all are accepted
  by some woman.
  Certainly this is statistically dominated by the random variable $T$
  giving the total number of proposals when starting from an empty matching
  state. Thus, the number of additional proposals is 
  $C_1n\ln n$ with probability $1 - O(1/n)$.
  Now, we track the probability of acceptance of $m_i$ as
  in~\ref{thrmMaxPropsLastMan}, but this time we consider the expectation,
  not the concentration.

  Let $K$ denote the number of times $m_i$ is tentatively accepted by some
  woman. Whenever a man $m$ other than $m_i$ is making a proposal, consider
  the event that $m$ proposes to either an unmatched woman, or the current
  match $w$ of $m_i$. The probability that, in this event, the proposal goes to
  $w$ is at most the constant 
  $q = \alpha_{\max}/(\alpha_{\min} + \alpha_{\max})$.
  Note that $m_i$ cannot be kicked back out from $w$ unless $w$ receives
  another proposal before the unmatched woman gets a proposal.
  Thus, $\P{ K \ge k+1 | K \ge k} \le q$, and $K$ is statistically
  dominated by a geometric distribution with parameter $q$.

  Now, for $k=1,\ldots,K$, let $P_k$ denote the number of proposals
  $m$ makes between his $(k-1)$th tentative acceptance and his $k$th
  tentative acceptance. For any specific one of these proposals,
  let $W^*$ denote the set of women $m_i$ has not yet proposed to,
  and also the weighted distribution over these women.
  Let $\Gamma_w$ denote the total weight of proposals that a woman 
  $w\in W^*$ has received before that proposal.
  As long as there have been at most $O(n\ln n)$ proposals overall, and
  $m_i$ has proposed to fewer than $n/2$ women, we have
  $\Es{w\sim \W^*}{\Gamma_w} \le O(\ln n)$ and
  $\Es{w\sim \W^*}{\beta_i / (\beta_i + \Gamma_w) } \ge \Omega(1/\ln n)$
  by Jensen's inequality.
  Thus, in these cases $P_k$ is statistically dominated by a geometric
  distribution with parameter $\Omega(1/\ln n)$.

  With probability $1 - 1/n$, by~\ref{thrmMaxPropsLastMan} we know
  that $m_i$ makes fewer than $O(\ln^2 n)$ proposals, so the above bounds on $K$ and each $P_k$
  hold. With the remaining probability, simply use the upper bound of $n$
  proposals. Thus, all told we have
  \[
    \E{ \rcontd_i }
      = \E{ \sum_{k=1}^K P_k }
      \le \big(1 - O(1/n)\big) (1/q) O(\ln n) + O(1/n)\cdot n
      = O(\ln n).
  \]
\end{proof}

We now turn to a more fine-grained study of the number of proposals $m_1$
and $m_2$ make, denoted $r_1$ and $r_2$.
We reason about the random process of DA starting from $\mu_{-L}$
by separating it into two phases:
\begin{itemize}
  \item \textbf{Phase (1).}
    First, consider $m_1$ proposing until his first acceptance
    (possibly kicking out man $m_1'$), say by woman $w_1$.
    Next let $m_2$ propose until his first acceptance
    (possibly kicking out man $m_2'$ (where $m_2'$ may equal $m_1$)),
    say by woman $w_2$.
    Let $\rinit_i$ denote the number of proposals $m_i$ makes during this
    phase.
  \item \textbf{Phase (2).}
    Afterwards, run DA as normal, staring with
    men $L' = \{m_1', m_2'\}$ proposing.
    Let $\rrest_i$ denote the number of proposals $m_i$ makes during this
    phase (or set $\rrest_i = 0$ if $m_i$ ends up staying at $w_i$
    in the final matching).
\end{itemize}
Note that $r_i = \rinit_i + \rrest_i$.

Now, for each $m_i$ separately, consider the following random process:
each time step, sample a $w\sim\W$ independently and with replacement,
and have $w$ accept $m_i$ with probability
${\beta_i}/({\beta_i + \Gamma_w})$ independently each time.
For $i=1,2$, let $\rgeo_i$ denote the number of draws until 
the first acceptance.
Then $\rgeo_i$ is distributed exactly according to $\Geo(p(\beta_i))$,
so proposition~\ref{thrmBetaScalesWithinLogError}
immediately implies the following:
\begin{proposition}\label{thrmInitialGeometricExpectation}
  For $i=1,2$, we have 
  \[ \E{\rgeo_i} = \big(1 + O(1/\ln n)\big)\frac{1}{\beta_i p(1)}. \]
\end{proposition}

Note that the random process we consider for $\rgeo_i$ is slightly
different than deferred acceptance with re-proposals,
because in deferred acceptance women never accept a proposal from a man who
has already proposed to them.
Nonetheless, we will be able to show that in expectation, $\rgeo_i$ is
only a $o(1)$ additive constant away from $\rinit_i$.
The core reason for this is the following:
by proposition~\ref{thrmMaxIndividualProposals},
$m_1$ and $m_2$ make fewer than $O(\ln^2 n)$ proposals with very high
probability. In particular, this is true during phase (1).
Thus, there is at most a $O(\ln^4 n / n)$ probability that
$m_1$ and $m_2$ propose to the same woman,
or make a repeat proposal themselves, during phase (1).
So phase (1) can contribute at most $O(\log^6 n / n)$
to the expectation in this case.
In any other case, the distribution of proposals in phase (1) is
distributed exactly as in $\rgeo_i$.

\begin{proposition}\label{thrmInitCloseToGeo}
  For $i=1,2$, we have
  \[
    \Es{L}{\rinit_i} = \E{\rgeo_i} \pm o(1)
  \]
\end{proposition}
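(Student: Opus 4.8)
The plan is to exhibit a single coupling of the idealized process defining $\rgeo_i$ (women drawn \emph{with} replacement, each accepting independently using the frozen weights $\Gamma_w$ of $\mu_{-L}$) with the genuine phase~(1) dynamics defining $\rinit_i$, and then to argue that the two variables agree off a low-probability event whose expected contribution is $o(1)$. Concretely, I would sample i.i.d.\ draws $a^{(1)},a^{(2)},\dots\sim\W$ together with acceptance seeds for $m_1$, and independent i.i.d.\ draws $b^{(1)},b^{(2)},\dots\sim\W$ with seeds for $m_2$; the idealized variable is $\rgeo_1=\min\{t: m_1 \text{ is accepted at } a^{(t)}\}$ (and similarly $\rgeo_2$), whereas in the DA reading each man walks through his own draws, skipping any woman he has already proposed to (as in Lemma~\ref{thrmWomenDeferedDecisions}) and deciding acceptance using the \emph{current} weight of the sampled woman. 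Writing $\E{\cdot}$ for the coupling expectation (so that its marginals are correct, $\E{\rinit_i}=\Es{L}{\rinit_i}$), set $M=K\ln^2 n$ and define the good event
\[
  G = \{a^{(1)},\dots,a^{(M)},b^{(1)},\dots,b^{(M)}\text{ are }2M\text{ distinct women}\}\cap\{\rgeo_1\le M\}\cap\{\rgeo_2\le M\}.
\]

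The first claim is that $\rinit_i=\rgeo_i$ on $G$. For $m_1$ this is immediate: with no repeated draw before his first acceptance, sampling without replacement coincides with sampling with replacement, and the weights he sees are exactly those of $\mu_{-L}$, so both processes make identical decisions. The subtle point, which I expect to be the crux, is $m_2$: every woman $m_1$ proposes to (including those who reject him) has her weight raised by $\beta_1$ once $m_1$ has acted, so $m_2$'s acceptance probabilities match the idealized ones \emph{only} for women $m_1$ never touched. On $G$ all $2M$ sampled women are distinct, so in particular $m_2$ never proposes to any of $m_1$'s (at most $\rgeo_1\le M$) proposals; hence every woman $m_2$ proposes to still carries her $\mu_{-L}$ weight, giving $\rinit_2=\rgeo_2$. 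Bounding $\P{\bar G}$ is then routine: the $2M$ draws fail to be distinct with probability at most $\binom{2M}{2}\pima=O(\ln^4 n/n)$, and since $p(\beta_i)=\Theta(1/\ln n)$ by Proposition~\ref{thrmPBetaMagnitude} we have $\P{\rgeo_i>M}\le(1-p(\beta_i))^M=n^{-\Omega(K)}$; thus $\P{\bar G}=O(\ln^4 n/n)$ once $K$ is a large enough constant.

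Finally I would convert coincidence-off-$\bar G$ into the expectation bound. Since $\rinit_i-\rgeo_i=0$ on $G$,
\[
  \big|\Es{L}{\rinit_i}-\E{\rgeo_i}\big|
  \le \E{\rinit_i\,\1{\bar G}}+\E{\rgeo_i\,\1{\bar G}},
\]
and I claim each term is $o(1)$. For the $\rgeo_i$ term I split on $\rgeo_i\le M$ versus $\rgeo_i>M$: the first contributes at most $M\,\P{\bar G}=O(\ln^6 n/n)$, and the second is a geometric tail $\sum_{t>M}\P{\rgeo_i\ge t}$, which is super-polynomially small. For the $\rinit_i$ term the only extra care is that $\rinit_i$ can a~priori be large; here I would combine the crude bound $\rinit_i\le n$ with the single-man tail estimate of Lemma~\ref{thrmMaxPropsLastMan} (equivalently Corollary~\ref{thrmMaxIndividualProposals}), which gives $\P{\rinit_i>M}\le n^{-\Omega(K)}$, so that $\E{\rinit_i\,\1{\rinit_i>M}}\le n\cdot n^{-\Omega(K)}=o(1)$ while $\E{\rinit_i\,\1{\bar G,\ \rinit_i\le M}}\le M\,\P{\bar G}=o(1)$. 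Summing the two $o(1)$ contributions yields $\Es{L}{\rinit_i}=\E{\rgeo_i}\pm o(1)$, as required.
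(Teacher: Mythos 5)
Your proof is correct and follows essentially the same route as the paper: both couple phase~(1) with the idealized with-replacement process, observe that discrepancies arise only from repeated or overlapping proposals (an event of probability $O(\ln^4 n/n)$ given the $O(\ln^2 n)$ bound on individual proposal counts from Lemma~\ref{thrmMaxPropsLastMan}), and then bound the bad-event contribution to the expectation by $o(1)$. Your write-up is somewhat more explicit than the paper's (defining the good event $G$ and proving exact equality of $\rinit_i$ and $\rgeo_i$ on it); the only nitpick is that the geometric tail $\sum_{t>M}\P{\rgeo_i\ge t}$ is polynomially small, $n^{-\Omega(K)}$, rather than super-polynomially small, which still suffices.
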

\begin{proof}
  In phase (1), the changes we need to make to turn the distribution of
  $\rgeo_i$ into the distribution of $\rinit_i$ are exactly the following:
  \begin{itemize}
    \item If $m_1$ or $m_2$ were rejected by a woman they already proposed
      to, that proposal should be ignored.
      If $m_1$ or $m_2$ were \emph{accepted} by a woman they already
      proposed to, that proposal should also be ignored,
      and $m_i$ should continue proposing to more women.
      % (in this case, the value of $d_i$ decreases in a way distributed to
      % the number of proposals $m_i$ would make until an acceptance,
      % starting from the current matching state).
    \item If $m_2$ proposes to a woman who $m_1$ proposed to during
      $\rinit_1$, the probability that $m_2$ is accepted should change
      from $\beta_2 / (\beta_2 + \Gamma_w)$
      to $\beta_2 / (\beta_2 + \beta_1 + \Gamma_w)$
  \end{itemize}
  Thus, differences arise between $\rgeo_i$ and $\rinit_i$ only when a
  repeated proposal is made.

  With probability $1 - O(1/n^2)$, neither $m_1$ nor $m_2$ make more than
  $C \ln^2 n$ proposals (even in deferred acceptance with re-proposals),
  for some constant $C$.
  Thus, the probability that any pair of proposals is repeated is at most
  \[ \binom{2C\ln^2 n }{2} n \pi_{\max}^2
    = O\left( \frac{\ln^4 n}{n} \right).
  \]
  If any of above corrections need to be made, imagine stopping
  running $\rgeo_i$ and run ordinary deferred acceptance from the current
  matching state\footnote{
    Formally, this describes a coupling of the joint distributions
    of $(\rinit_1,\rinit_2)$ and $(\rgeo_1, \rgeo_2)$ in which
    each joint distribution differs with probability $\widetilde O(1/n)$.
  }. In this case, we still know by proposition~\ref{thrmMaxIndividualProposals} that 
  $r_i = O(\ln^2 n)$ with probability $1 - 1/n^2$.
%   proposition~\ref{thrmContinuedProposalExpectation} gives an expected
%   number of additional proposals at most $O(\ln n)$.
  Thus, all told we have
  \[ \E{\rinit_i} = \E{\rgeo_i} 
    + O\left(\frac{\log^4 n}{n}\right)\cdot O(\ln^2 n)
    + O\left(\frac{1}{n^2}\right)\cdot n
    = \E{\rgeo_i} + o(1).
  \]
\end{proof}

We now separately consider phase (2).
During phase (2), between times where $m_1$ or $m_2$ are
proposing, many other proposals might be made by different men,
changing the effective values of $\Gamma_w$ and thus the probability
that a proposal by $m_1$ or $m_2$ is accepted.
Thus, we'd like to say that the rank of $m_1$ and $m_2$ is
approximately their rank in phase (1).

Indeed, it turns out we have $\E{\rrest_i} = O(1)$.
The proof sketch is the following:
Because at most a $n^{-C_2}$ fraction of women are in $\Wlow$,
there is a $O(\ln^2 n/ n^{C_2})$ chance that $m_1$ or $m_2$'s first
acceptance is to a woman in $\Wlow$.
Thus, this case can contribute at most $\poly\log n / n^{C_2}$ to the
expectation.
In the other case, a woman in $\Whigh$ has accepted $m_1$ or $m_2$,
and she will only reject him with probability
$O(1/\ln n)$ (to prove this, we will use the fact that a proposal is at
most a constant times less likely to go to an unmatched woman than the match
of $m_1$ or $m_2$, so these women won't receive too many additional
proposals).
This probability is non-negligible, however,
proposition~\ref{thrmContinuedProposalExpectation} says that
the number of additional proposals in phase (2) has expectation at
most $O(\ln n)$. Thus, phase (2) contributes $O(1)$ proposals 
in expectation.

\begin{proposition}\label{thrmRestConstant}
  For $i=1,2$, we have
  \[
    \Es{L}{\rrest_i} = O(1)
  \]
\end{proposition}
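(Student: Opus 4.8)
\end{proposition}

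The plan is to exploit the fact that $\rrest_i = 0$ unless $m_i$ is displaced from his phase-(1) match $w_i$ at some point during phase (2); call this event $E_{\mathrm{rej}}$. Once $m_i$ is displaced, the number of further proposals he makes is exactly the continued-proposal quantity of Proposition~\ref{thrmContinuedProposalExpectation}, applied to the matching state at the instant of displacement (which, by Lemma~\ref{thrmDAExecutionOrderDoesntMatter}, is a faithful intermediate state of the full run). That proposition gives an expected $O(\ln n)$ further proposals, provided fewer than $C_4 n\ln n$ proposals have been made in total and $m_i$ has proposed to fewer than $n/4$ women; by Corollary~\ref{thrmMaxIndividualProposals} and the bound on the total number of proposals (Proposition~\ref{thrmCouponUpperTail}) both hypotheses hold except on an event of probability $O(1/n^2)$, on which I bound $\rrest_i$ crudely by $n$, contributing only $O(1/n^2)\cdot n = o(1)$. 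Thus it suffices to show $\P{E_{\mathrm{rej}}} = O(1/\ln n)$, after which $\Es{L}{\rrest_i}\le \P{E_{\mathrm{rej}}}\cdot O(\ln n) + o(1) = O(1)$. I establish this by conditioning on whether $w_i\in\Whigh$ or $w_i\in\Wlow$.

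The case $w_i\in\Wlow$ is negligible. For $w_i$ to lie in $\Wlow$, one of $m_i$'s phase-(1) proposals must land in $\Wlow$; since by Corollary~\ref{thrmMaxIndividualProposals} $m_i$ makes at most $O(\ln^2 n)$ proposals with probability $1-O(1/n^2)$, and each proposal lands in $\Wlow$ with probability at most $\alpha_{\max}n^{1-C_2}/(\alpha_{\min}n) = O(n^{-C_2})$ by smoothness (Definition~\ref{defSmoothMatchingState}), a union bound gives $\P{w_i\in\Wlow} = O(\ln^2 n/n^{C_2})$. Combining this with the worst-case $O(\ln n)$ continued-proposal bound shows the contribution of this case to $\Es{L}{\rrest_i}$ is $O(\ln^3 n/n^{C_2}) = o(1)$.

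The heart of the argument is the case $w_i\in\Whigh$, where I must show $\P{E_{\mathrm{rej}}\mid w_i\in\Whigh} = O(1/\ln n)$. Two ingredients combine. First, since $w_i\in\Whigh$ she has received at least $C_3\ln n$ proposals before phase (1), so her accumulated weight satisfies $\Gamma_{w_i}\ge C_3\ln n$ (each proposer has score at least $1$), and this only grows as phase (2) proceeds; hence by Lemma~\ref{thrmWomenDeferedDecisions} any single later proposal, from a man of score at most $\beta_{\max}$, displaces $m_i$ with probability at most $\beta_{\max}/(\beta_{\max}+C_3\ln n) = O(1/\ln n)$. Second, I bound by a constant the expected number of proposals $w_i$ receives during phase (2). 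The key structural observation --- which I expect to be the main obstacle to make fully rigorous --- is that the number of currently unmatched women is at most $2$ at the start of phase (2) and is non-increasing throughout it: a proposal to an unmatched woman matches her and decreases this count by one, while every other proposal (a rejection, or a displacement that re-queues the displaced man) leaves it unchanged. Consequently at most $2$ proposals in all of phase (2) ever land on an unmatched woman, and at every step before termination at least one unmatched woman exists, with weight at least $\alpha_{\min}$. Restricting attention to proposals landing in $\{w_i\}\cup\{\text{unmatched women}\}$, each such proposal lands on $w_i$ with conditional probability at most $q := \alpha_{\max}/(\alpha_{\max}+\alpha_{\min}) < 1$; so the number of proposals hitting $w_i$ before the at most $2$ unmatched-woman hits are exhausted is stochastically dominated by a negative binomial variable and has expectation $O(1)$. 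A union bound over these (expected) $O(1)$ proposals to $w_i$, each displacing $m_i$ with probability $O(1/\ln n)$, yields $\P{E_{\mathrm{rej}}\mid w_i\in\Whigh} = O(1/\ln n)$.

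Putting the pieces together, the $w_i\in\Whigh$ contribution to $\Es{L}{\rrest_i}$ is $\P{E_{\mathrm{rej}},\,w_i\in\Whigh}\cdot O(\ln n) = O(1/\ln n)\cdot O(\ln n) = O(1)$, the $w_i\in\Wlow$ contribution is $o(1)$, and the low-probability event on which Proposition~\ref{thrmContinuedProposalExpectation} does not apply contributes $o(1)$; summing gives $\Es{L}{\rrest_i} = O(1)$. The one point requiring care is the interaction between $E_{\mathrm{rej}}$ and the continued-proposal count: I will apply Proposition~\ref{thrmContinuedProposalExpectation} to the (random) matching state at the instant of $m_i$'s first displacement and take expectations there, so that the conditional expectation of the subsequent proposals is $O(\ln n)$ regardless of how the displacement arose.
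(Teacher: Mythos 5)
Your proof is correct and takes essentially the same route as the paper's: it splits on whether $m_i$'s phase-(1) match $w_i$ lies in $\Wlow$ (negligible) or $\Whigh$, bounds the displacement probability by $O(1/\ln n)$ via a constant expected number of proposals to $w_i$ before termination, and multiplies by the $O(\ln n)$ continued-proposal bound of Proposition~\ref{thrmContinuedProposalExpectation}. Your ``at most two proposals ever land on unmatched women'' observation with negative-binomial domination is the same argument the paper makes by restricting to proposals landing in $\{w_i, w^u_1, w^u_2\}$ and dominating the count $K$ by a sum of two geometric random variables.
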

\begin{proof}
%   TODOTODO: clarify that we needed phase 1 to not go completely insane for
%   this to go through. I.e. the total number of props is still $O(n\ln n)$,
%   probably even if we let $m_{3-i}$ propose through phase 2 before $m_i$.
%   We actually prove something slightly more general --
%   $\Es{L}{\rrest_i | J} = O(1)$ for any event $J$ in which
%   at most $O(n\ln n)$ additional proposals have been made
%   starting from $\mu_{-L}$.
%   and in which $m_i$ made
%   at most $O(\ln^2 n)$ proposals in phase (1). 

  Phase (2) starts with a set of at most
  two men $L' = \{m_1', m_2'\}$ unmatched.
  During this phase, we no longer need to very carefully 
  track the relationship between the
  acceptance probabilities and the initial matching state $\mu_{-L}$.
  Thus, we can consider each $m_i$ separately for $i=1,2$.
  For $j\ne i$, let $m_j$ propose until he finds a match,
  and recall that this add fewer than $C_1 n\ln n$ proposals
  with probability $1-1/n^2$.

  Again, with probability $1-1/n^2$, we get that $m_i$ makes fewer than
  $C\ln^2 n$ proposals, so the probability that he made even a single
  proposal to one of the $n^{1-C_2}$ women in $\Wlow$ is at most
  \[ \frac{1}{n^2}
    + C\ln^2 n \cdot \frac{\alpha_{\max} n^{1-C_2}}{\alpha_{\min} n} 
    = O\left(\frac{\ln^2 n}{n^{C_2}}\right).
  \]
  In particular, the probability that $m_i$ \emph{matched} to a woman in
  $\Wlow$ is $O\left({\ln^2 n}/{n^{C_2}}\right)$.

  Consider the case where $m_i$ matched to women
  $w_i$ in $\Whigh$. Let this event be denoted $H$.
  Let the (at most) two unmatched women in this case be $w^u_1$ and $w^u_2$.
  For any subsequent proposal in DA, consider the event $P$ 
  that a proposal goes to any woman in $\{ w_i, w^u_1, w^u_2 \}$.
  % Only this event brings DA closer to terminating, or makes $m_i$ propose
  % more.
  % Now we want to ask: what is the probability that $w_i \in \Whigh$
  % accepts a proposal before $w^u_1$ and $w^u_2$ have both seen proposals?
  Conditioned on a proposal landing in $P$,
  there is a constant lower bound $q = \alpha_{\min}/(3\alpha_{\max})$
  on the probability that a proposal goes to $w^u_1$ and $w^u_2$.
  Let the number of proposals to $w_i$ before
  both $w^u_1$ and $w^u_2$ see a proposal (and thus DA terminates)
  be denoted $K$.
  For $j=1,\ldots,K$, let $X_j$ be the event that $w_i$ accepts the $j$th
  proposal made to her.
  Then the probability that $w_i$ accepts a new proposal can be upper
  bounded as follows:
  \[ \E{ \P{X_1 \cup \ldots \cup X_K|K} }
    \le \E{ \P{X_1|K} + \ldots + \P{X_K|K} }
    \le \E{K}\cdot O(1/\ln n)
    = O(1/\ln n),
  \]
  where we observe that the expectation of $K$ is of constant order\footnote{
    One can formally verify this as follows:
    Let $K'$ be defined by iteratively sampling over $\{w_i, w^u_1,
    w^u_2\}$, where each $w^u_i$ is sampled with probability $q$
    and $w_i$ with probability $1 - 2q$, and letting $K'$ be the number of
    draws needed until $w^u_1$ and $w^u_2$ have both been sampled.
    Then certainly $K\preceq K'$.
    But $K'$ can be written as $G_1 + G_2$, where $G_1\sim\Geo(2q)$
    and $G_2\sim\Geo(q)$. Thus, $\E{K'} = 3/(2q) = O(1)$.
  }.

  Moreover, observe that the total number of remaining proposals
  made to \emph{any} woman is statistically dominated by $G_1 + G_2$, where
  $G_1\sim \Geo(2\pim)$ and $G_2\sim \Geo(\pim)$.
  Thus, with probability $1 - 1/n^2$, the total number of remaining
  proposals is $O(n\ln n)$, so
  proposition~\ref{thrmContinuedProposalExpectation} applies, and even if
  $m_i$ is rejected from $w_i$, he makes $O(\ln n)$ additional proposals in
  expectation.
  Thus, in event $H$ we have
  \[ \E{\rrest_i | E}
    = O\left(\frac{1}{n^2}\right)\cdot n
    + O\left(\frac{1}{\ln n}\right)\cdot O(\ln n)
    = O(1)
  \]

  All told, we have
  \[ \E{\rrest_i}
    = O\left(\frac{1}{n^2}\right)\cdot n
    + O\left(\frac{\ln^2 n}{n^{C_2}}\right)\cdot O(\ln^2 n)
    + \left(1 - O\left(\frac{\ln^2 n}{n^{C_2}}\right)\right)
      \E{\rrest_i | E}
    = O(1).
  \]
\end{proof}

Finally, combining
propositions~\ref{thrmInitialGeometricExpectation},~\ref{thrmInitCloseToGeo}
and~\ref{thrmRestConstant} gets the main result of this subsection.
\RestateThrmSmoothRanksScale*
\begin{proof}
  By proposition~\ref{thrmPBetaMagnitude}, we have 
  $\E{\rgeo_i} = \Theta(\ln n)$, so
  \begin{align*}
    \Es{L}{r_i} 
    & = \E{\rinit_i} + \E{\rrest_i} \\
    & = \E{\rgeo_i} + O(1) \\
    & = \big(1 + O(1/\ln n)\big)\frac{1}{\beta_i p(1)}.
  \end{align*}
  Thus both $\Es{L}{r_1}$ and $\Es{L}{r_2}$ relate to $1/p(1)$ as follows:
  \[
    \big(1 + O(1/\ln n)\big)^{-1} \beta_1 \Es{L}{r_1}
    = \frac{1}{p(1)}
    = \big(1 + O(1/\ln n)\big)^{-1} \beta_2 \Es{L}{r_2},
  \]
  which proves our result.
\end{proof}

%%%%% %%%%% %%%%% %%%%% %%%%% %%%%% %%%%% %%%%% %%%%% %%%%% %%%%% %%%%%
\subsection{The covariance in the number of proposals}
\label{sectionCovarianceNumberProps}
\ %
%%%%% %%%%% %%%%% %%%%% %%%%% %%%%% %%%%% %%%%% %%%%% %%%%% %%%%% %%%%%

We continue to reason about $r_i = \rinit_i + \rrest_i$ 
for $i=1,2$, as defined in~\ref{sectionExpectedNumberProps}.
Now, we are interested in their joint
distribution, and whether there is significant \emph{correlation}
in $r_1$ and $r_2$. 
In particular, we want to show that $\E{r_1 r_2}\approx \E{r_1}\E{r_2}$.
We find that the correlation is indeed lower order,
for similar reasons to those exploited above:
most of the expected value of $r_1 r_2$ comes from $\rinit_1 \rinit_2$, i.e. proposals in phase (1),
and the two men in $L$ only interfere with each other
in phase (1) with probability $\poly\log n/n$.

We have
\[ \Es{L}{r_1 r_2}
  = \Es{L}{ \rinit_1\rinit_2 + \rinit_1\rrest_2
    + \rrest_1\rinit_2 + \rrest_1\rrest_2 }.
\]
We proceed to separately reason about these contributions.
We find that $\rinit_1 \rinit_2$ has the bulk of the contribution,
while all other terms are lower order.

\begin{proposition}
  \label{thrmInitInitCovar}
  We have
  \[ \E{\rinit_1 \rinit_2} 
    = \E{r_1}\E{r_2} \pm O(\ln n)
  \]
\end{proposition}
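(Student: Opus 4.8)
The plan is to reduce the correlated quantity $\rinit_1\rinit_2$ to the product of the two \emph{independent} idealized processes generating $\rgeo_1$ and $\rgeo_2$ (as defined in section~\ref{sectionExpectedNumberProps}), whose expectation factors, and then to convert $\E{\rgeo_1}\E{\rgeo_2}$ back into $\E{r_1}\E{r_2}$ using the estimates already in hand. Concretely, I would reuse the coupling from the proof of proposition~\ref{thrmInitCloseToGeo}, running $m_1$ and $m_2$ through phase~(1) alongside the independent processes producing $\rgeo_1,\rgeo_2$. Let $A$ be the good event on which (i) neither $m_1$ nor $m_2$ makes more than $C\ln^2 n$ proposals, (ii) both $\rgeo_1,\rgeo_2\le C\ln^2 n$, and (iii) no two phase-(1) proposals collide (no repeat by a single man and no proposal of $m_2$ to a woman $m_1$ already used). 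On $A$ the coupling succeeds and $\rinit_i=\rgeo_i$ for both $i$, so $\rinit_1\rinit_2\1{A}=\rgeo_1\rgeo_2\1{A}$. Choosing $C$ large, proposition~\ref{thrmMaxIndividualProposals} together with the geometric tails gives $\P{\bar A}\le O(\ln^4 n/n)$.

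Subtracting the two decompositions over $A$ and $\bar A$, the agreeing parts cancel and
\[
  \E{\rinit_1\rinit_2}-\E{\rgeo_1\rgeo_2}
  = \E{\rinit_1\rinit_2\1{\bar A}}-\E{\rgeo_1\rgeo_2\1{\bar A}}.
\]
For the first term I would split $\bar A$ according to whether both $r_i\le C\ln^2 n$: on the bounded part $\rinit_1\rinit_2\le C^2\ln^4 n$ while $\P{\bar A}=\widetilde O(1/n)$, contributing $o(1)$; on the rare event (probability $O(1/n^2)$) that some $r_i$ exceeds $C\ln^2 n$, I use the deterministic bound $\rinit_i\le r_i\le n$, contributing $O(1)$. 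Hence $\E{\rinit_1\rinit_2\1{\bar A}}=O(1)$.

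The main obstacle is the second term, since $\rgeo_1,\rgeo_2$ are \emph{unbounded} and the collision event correlates with their being large. I would dominate $\1{\bar A}\le \1{\rgeo_1>C\ln^2 n}+\1{\rgeo_2>C\ln^2 n}+\1{\bar A\cap\{\rgeo_1,\rgeo_2\le C\ln^2 n\}}$: the last indicator again gives $o(1)$ via the crude product bound times $\P{\bar A}$. For the first two, independence of $\rgeo_1,\rgeo_2$ factors the expectation, and since each $\rgeo_i\sim\Geo(p(\beta_i))$ with $p(\beta_i)=\Theta(1/\ln n)$ (proposition~\ref{thrmPBetaMagnitude}), the geometric tail expectation satisfies $\E{\rgeo_i\1{\rgeo_i>C\ln^2 n}}=n^{-\Omega(C)}$; multiplying by $\E{\rgeo_j}=\Theta(\ln n)$ leaves $o(1)$ for $C$ large. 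Thus $\E{\rgeo_1\rgeo_2\1{\bar A}}=o(1)$, so $\E{\rinit_1\rinit_2}=\E{\rgeo_1\rgeo_2}+O(1)=\E{\rgeo_1}\E{\rgeo_2}+O(1)$, the last equality by independence.

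Finally I would convert $\E{\rgeo_1}\E{\rgeo_2}$ into $\E{r_1}\E{r_2}$. Propositions~\ref{thrmInitCloseToGeo} and~\ref{thrmRestConstant} give $\E{r_i}=\E{\rinit_i}+\E{\rrest_i}=\E{\rgeo_i}+O(1)$, and $\E{\rgeo_i}=\Theta(\ln n)$, so
\[
  \E{r_1}\E{r_2}
  =\big(\E{\rgeo_1}+O(1)\big)\big(\E{\rgeo_2}+O(1)\big)
  =\E{\rgeo_1}\E{\rgeo_2}+O(\ln n),
\]
where the cross terms $O(1)\cdot\Theta(\ln n)$ are absorbed. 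Combining with the previous display yields $\E{\rinit_1\rinit_2}=\E{r_1}\E{r_2}\pm O(\ln n)$, as claimed.
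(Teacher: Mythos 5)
Your proof is correct and follows essentially the same route as the paper: couple $(\rinit_1,\rinit_2)$ with the independent pair $(\rgeo_1,\rgeo_2)$ via the coupling from proposition~\ref{thrmInitCloseToGeo}, bound the discrepancy on the low-probability failure event, factor $\E{\rgeo_1 \rgeo_2}=\E{\rgeo_1}\E{\rgeo_2}$ by independence, and convert back using $\E{r_i}=\E{\rgeo_i}\pm O(1)$, with the cross terms of size $O(1)\cdot\Theta(\ln n)$ giving the $O(\ln n)$ error. If anything, your explicit geometric-tail estimate $\E{\rgeo_i \1{\rgeo_i>C\ln^2 n}}=n^{-\Omega(C)}$ handles the unboundedness of $\rgeo_i$ on the bad event more carefully than the paper's own proof, which only invokes the bound $r_1 r_2\le n^2$ there and leaves the corresponding control of $\rgeo_1\rgeo_2$ implicit.
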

\begin{proof}
  We showed in proposition~\ref{thrmInitCloseToGeo} that jointly,
  you only need to change the distribution of
  $(\rinit_1,\rinit_2)$ from that of $(\rgeo_1,\rgeo_2)$
  with probability $O(\ln^4 n / n)$.
  Moreover, with probability $1 - O(1/n^3)$, by
  proposition~\ref{thrmMaxIndividualProposals} both $r_1$ and $r_2$ are at
  most $O(\ln^2 n)$, so we get
  \[
    \E{\rinit_1 \rinit_2}
    = \E{\rgeo_1 \rgeo_2} 
    \pm O\left(\frac{\ln^4 n}{n}\right)\cdot O(\ln^4 n) 
    + \frac{1}{n^3}\cdot n^2
    = \E{\rgeo_1 \rgeo_2} \pm o(1).
  \]
  Moreover, $\rgeo_i$ are independent, and by
  propositions~\ref{thrmInitCloseToGeo} and~\ref{thrmRestConstant},
  we have $\E{\rgeo_i} = \E{r_i} \pm O(1)$.
  Thus, 
  \begin{align*}
    \E{\rgeo_1 \rgeo_2}
    = \E{\rgeo_1} \E{\rgeo_2}
    & = \big(\E{r_1} \pm O(1)\big)\big( \E{r_2} \pm O(1)\big) \\
    & = \E{r_1}\E{r_2} \pm O(\E{r_1} + \E{r_2}).
  \end{align*}
\end{proof}

To complete our proof that the covariance of $r_1$ and $r_2$ is
$o(\ln^2 n)$, it will suffice to show that the remaining terms of
are $O(\ln^{3/2} n)$. While it may be possible to improve this to
$O(\ln n)$, the proof of our bound is very simple, largely due to the
following variation of proposition~\ref{thrmMaxPropsLastMan}:

\begin{proposition} \label{thrmQuasipolyPropBound}
  For $i=1,2$, we have $r_i \le O(\ln^{3/2} n)$ with probability
  at least $1 - \exp(-\sqrt{\ln n})$.
\end{proposition}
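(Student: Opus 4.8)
The plan is to reuse, essentially verbatim, the acceptance-probability machinery from the proof of Lemma~\ref{thrmMaxPropsLastMan}, changing only the stopping threshold so as to trade a weaker tail bound for a tighter bound on the number of proposals. First I would invoke order-independence of deferred acceptance (Lemma~\ref{thrmDAExecutionOrderDoesntMatter}) to reduce to the setting in which $m_i$ is the last man to propose: starting from the smooth state $\mu_{-L}$, let all men other than $m_i$ (in particular, including $m_j$) propose until they are matched, and only then let $m_i$ propose. In this ordering $r_i$ is exactly the number of proposals $m_i$ makes when added last, so it suffices to run the single-man argument of Lemma~\ref{thrmMaxPropsLastMan} on $m_i$.

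Next I would verify the two hypotheses that the acceptance bound needs. The total number of proposals over the whole run is $S$, and by propositions~\ref{thrmStatisticalDominance} and~\ref{thrmCouponUpperTail} we have $S = O(n\ln n)$ with probability $1 - O(1/n)$; since we have conditioned on the high-probability event that $\mu_{-L}$ is smooth (proposition~\ref{thrmSmoothWHP}), this bound still holds with conditional probability $1 - O(1/n)$, and in particular the number of proposals made before $m_i$ joins is $O(n\ln n)$. Moreover, I only track $m_i$ up to his first $C'\ln^{3/2} n \ll n/2$ proposals, so throughout this window he has proposed to fewer than half the women. Exactly as in steps 1 and 2 of Lemma~\ref{thrmMaxPropsLastMan}, these two facts give (via lemma~\ref{thrmWomenDeferedDecisions} and Jensen's inequality) that each proposal of $m_i$ is accepted with probability $p_1 = \Omega(1/\ln n)$, and that the ensuing rejection chain terminates at the unique remaining unmatched woman with probability $p_2 = \Omega(1)$.

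Since each proposal of $m_i$ thus matches him permanently with probability at least $p_1 p_2 = \Omega(1/\ln n)$, the count $r_i$ is stochastically dominated by a geometric random variable, so for any threshold $M$ we have $\P{r_i > M} \le (1 - p_1 p_2)^M \le \exp(-\Omega(M/\ln n))$. Taking $M = C'\ln^{3/2} n$ gives $\P{r_i > C'\ln^{3/2} n} \le \exp(-\Omega(C'\sqrt{\ln n}))$, and choosing the constant $C'$ large enough makes the right-hand side at most $\exp(-\sqrt{\ln n})$; since the $O(1/n)$ probability that the setup exceeded $O(n\ln n)$ proposals is negligible compared to $\exp(-\sqrt{\ln n})$, this finishes the argument. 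The only real subtlety — and the one place I would take care — is the bookkeeping of conditioning on $\mu_{-L}$ being smooth while still asserting $S = O(n\ln n)$ and checking that incorporating $m_j$ first does not inflate the proposal count; both are immediate from the earlier high-probability bounds. The genuinely new content is merely the observation that stopping the identical geometric-tail argument at $\Theta(\ln^{3/2} n)$ proposals converts an $\Omega(1/\ln n)$ per-proposal success probability into an $\exp(-\sqrt{\ln n})$ tail.
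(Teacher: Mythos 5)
Your core argument is the same as the paper's: use order-independence (Lemma~\ref{thrmDAExecutionOrderDoesntMatter}) to let $m_i$ propose last, reuse steps 1 and 2 of Lemma~\ref{thrmMaxPropsLastMan} to get a per-proposal success probability $p_1 p_2 = \Omega(1/\ln n)$, and then observe that raising the stopping threshold to $K\ln^{3/2}n$ turns the geometric tail $(1-p_1p_2)^{K\ln^{3/2}n}$ into $\exp(-\Omega(K\sqrt{\ln n}))\le\exp(-\sqrt{\ln n})$ for $K$ large. That is exactly the paper's proof.

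Where you diverge is in justifying the hypothesis that $O(n\ln n)$ proposals precede $m_i$'s, and this step has a genuine (though easily repaired) flaw. The proposition sits under Assumption~\ref{assumeSmooth}: $\mu_{-L}$ is an \emph{arbitrary fixed} smooth state, and the probability in the statement refers only to the randomness of the continuation ($m_j$'s and then $m_i$'s proposals and the rejection chains they trigger); this per-state form is what propositions~\ref{thrmInitRestCovar} and~\ref{thrmRestRestCovar} later consume. Your argument instead treats $\mu_{-L}$ as random, imports the unconditional bound $\P{S\ge Cn\ln n}=O(1/n)$ from propositions~\ref{thrmStatisticalDominance} and~\ref{thrmCouponUpperTail}, and conditions on the event that $\mu_{-L}$ is smooth. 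That yields a bound \emph{averaged over smooth states}; by Markov's inequality it still permits a small family of smooth states (of conditional measure up to roughly $\exp(\sqrt{\ln n})/n$) from which the continuation exceeds $Cn\ln n$ proposals with probability larger than $\exp(-\sqrt{\ln n})$, so for those fixed states the conclusion is left unproved. The repair is what the paper does: condition 1 of Definition~\ref{defSmoothMatchingState} already bounds the proposals recorded in $\mu_{-L}$ by $C_1 n\ln n$ \emph{deterministically}, and the proposals made while $m_j$ finds his match end as soon as one of the two unmatched women receives a proposal, so their number is statistically dominated by $\Geo(\pim)$ and is $O(n\ln n)$ with probability $1-1/n$ over the continuation randomness alone. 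With that substitution, your proof coincides with the paper's.
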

\begin{proof}
  For $j\ne i$, consider letting $m_j$ propose until he finds a match.
  The number of total proposals (from all men)
  this requires is statistically 
  dominated by $\Geo(\pim)$, and thus less than $O(n\ln n)$
  with probability $1 - 1/n$.
  %With high probability, this adds at most $O(n\ln n)$ total proposals.
  In this case, at most $O(n\ln n)$ proposals have been made
  in total when $m_i$ starts proposing,
  so proposition~\ref{thrmMaxPropsLastMan} holds as written.
  The only change we need to make to achieve our result is that,
  in point~\ref{itemMaxPropsLastProbWrapup}, there exists a constant
  $K$ such that the chance $m_i$ makes more than $K\ln^{3/2} n$ proposals
  is bounded by
  \[ \big(1-O(1/\ln n)\big)^{K\ln^{3/2} n}
    \leq \exp(-\ln^{1/2} n)
  \]
\end{proof}

% The contribution from $\rrest_1 \rrest_2$ is not too hard to bound,
% because this product turns out to be nonzero only with probability
% $1/\ln^2 n$.
% Indeed, in the case where both these variables are nonzero, they are the
% most highly correlated of any case, as these proposals may happen between
% each other and with many additional proposals by men in between.
% ((THIS PROBABLY DOESN"T MAKE MUCH SENSE...))

\begin{proposition}
  \label{thrmInitRestCovar}
  For $i\ne j$, we have
  \[ \E{\rinit_i \rrest_j} = O(\ln^{3/2} n)
  \]
\end{proposition}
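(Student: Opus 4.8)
The plan is to exploit the asymmetry between the two factors: $\rinit_1$ is typically of order $\ln n$, whereas $\rrest_2$ has unconditional expectation only $O(1)$ by proposition~\ref{thrmRestConstant}. Even though the two quantities are correlated, I will show that the product is controlled by a ``typical'' regime in which $\rinit_1$ enjoys a deterministic upper bound, after which pulling that bound out of the expectation and invoking $\E{\rrest_2} = O(1)$ finishes the estimate. Taking $i=1$, $j=2$ without loss of generality, I first recall from proposition~\ref{thrmQuasipolyPropBound} that there is a constant $K$ with $r_1 \le K\ln^{3/2}n$ with probability at least $1 - \exp(-\sqrt{\ln n})$. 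Since $\rinit_1 \le r_1$, the event $A := \{\rinit_1 \le K\ln^{3/2}n\}$ also has probability at least $1 - \exp(-\sqrt{\ln n})$.

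On $A$ I bound $\rinit_1$ by the deterministic quantity $K\ln^{3/2}n$ and factor it out:
\[
  \E{\rinit_1 \rrest_2 \1{A}}
  \le K\ln^{3/2}n \cdot \E{\rrest_2 \1{A}}
  \le K\ln^{3/2}n \cdot \E{\rrest_2}
  = O(\ln^{3/2}n),
\]
where the final equality is proposition~\ref{thrmRestConstant}. This is the dominant contribution, and it already matches the claimed bound; the remaining task is only to check that the atypical event $\bar A$ contributes a lower-order amount, using the crude pointwise bound $\rinit_1\rrest_2 \le r_1 r_2$.

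To handle $\bar A$ I split it further according to corollary~\ref{thrmMaxIndividualProposals}: let $E$ be the event that both $r_1, r_2 \le C\ln^2 n$, which fails with probability at most $n^{-3}$ for an appropriate constant $C$. On $\bar A \cap E$ the product is at most $C^2\ln^4 n$ while $\P{\bar A}\le \exp(-\sqrt{\ln n})$, so this piece contributes at most $C^2\ln^4 n\cdot\exp(-\sqrt{\ln n}) = o(1)$, because $\sqrt{\ln n}$ dominates $4\ln\ln n$. On $\bar E$ I use the trivial bound $r_1 r_2 \le n^2$ together with $\P{\bar E}\le n^{-3}$, contributing at most $n^2\cdot n^{-3} = o(1)$. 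Summing the three pieces gives $\E{\rinit_1 \rrest_2} = O(\ln^{3/2}n)$. The only genuine subtlety — and the reason proposition~\ref{thrmQuasipolyPropBound} is stated immediately beforehand rather than relying on the $O(\ln^2 n)$ bound of corollary~\ref{thrmMaxIndividualProposals} — is that the deterministic bound on $\rinit_1$ available on the typical event is multiplied by a constant, so the order of the final estimate is exactly the order of that bound; the $\ln^{3/2}$ tail is precisely what upgrades the result from $O(\ln^2 n)$ to $O(\ln^{3/2}n)$, while still decaying fast enough to render the worst-case product on $\bar A$ negligible.
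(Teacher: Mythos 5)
Your proof is correct, and it follows the same skeleton as the paper's: the three-way split into the typical event from proposition~\ref{thrmQuasipolyPropBound}, the band where corollary~\ref{thrmMaxIndividualProposals} gives $r_1,r_2\le O(\ln^2 n)$, and the polynomially-rare worst case, with the same arithmetic showing the two bad pieces are $o(1)$. The one place you genuinely diverge is the main term, and your version is cleaner: the paper bounds it via the \emph{conditional} expectation $\E{\rrest_j \mid E}$, which forces it to assert that ``the proof of proposition~\ref{thrmRestConstant} goes through conditioned on $E$'' (a step it leaves to the reader, and which is the reason its proof opens with the remark that $\rinit_i$ and $\rrest_j$ live in different phases). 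You sidestep that entirely by writing the main term as $\E{\rinit_1\rrest_2\1{A}} \le K\ln^{3/2}n\,\E{\rrest_2\1{A}} \le K\ln^{3/2}n\,\E{\rrest_2}$, using only non-negativity of $\rrest_2$ and the \emph{unconditional} bound of proposition~\ref{thrmRestConstant}. Since $\E{\rrest_2\1{A}}\le\E{\rrest_2}$ holds pointwise-trivially, no re-examination of that proof under conditioning is needed; this buys a shorter, more self-contained argument at no cost in the strength of the conclusion.
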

\begin{proof}
  % Let ``not crazy'' be the event that $\rinit_1 = O(\ln^{3/2} n)$.
  % A simple modification to the last bit of~\ref{thrmMaxPropsLastMan} shows
  % that ``not crazy'' holds with probability at least
  % $\exp(-\sqrt{\ln n})$.
  Let $E$ denote the event that $\rinit_i \le O(\ln^{3/2} n)$.
  By~\ref{thrmQuasipolyPropBound}, we know $E$ holds
  with probability $1 - \exp(-\sqrt{\ln n})$.
  Now, image letting $m_i$ and $m_j$ propose in phase (1), 
  then letting $m_i$ find a match in phase (2). 
%   (if $m_j$ does get
%   rejected from his match at the end of phase (1), we hold him back and
%   wait to let him propose later).
  Because $\rinit_i$ and $\rrest_j$ occur during different phases
  of the algorithm, and we condition on $\rinit_i$ not being too
  large, we can separately reason about $\rrest_j$ (without, 
  for example, worrying about its effect on $\rrest_i$).

  With probability $1 - 1/n^2$, we know that
  % $\rinit_j = O(\ln^2 n)$, and moreover that 
  the total number of proposals required for before
  $m_i$ finds a match is $O(n\ln n)$ additional proposals.
  Using this, it is not hard to check that
  the proof of~\ref{thrmRestConstant} goes through
  conditioned on event $E$, and thus $\E{\rrest_j \big| E} = O(1)$.
  Even if $E$ does not hold, by~\ref{thrmMaxIndividualProposals}, with
  probability $1 - 1/n^3$ both $m_1$ and $m_2$ make $O(\ln^2 n)$ proposals.
  Thus, all told we have
  \begin{align*}
    \E{\rinit_i \rrest_j}
    & \le \big(1 - \exp(-\sqrt{\ln n})\big) 
      \cdot O(\ln^{3/2} n) \E{\rrest_j \big| E} \\
      & \qquad + \exp(-\sqrt{\ln n}) \big(1 - 1/n^3\big) O(\ln^2 n)\cdot O(\ln^2 n)
      + \frac{1}{n^3} n\cdot n \\
    & = O(\ln^{3/2} n)\cdot O(1) + o(1)
  \end{align*}
\end{proof}

\begin{proposition}
  \label{thrmRestRestCovar}
  We have
  \[ \E{\rrest_1 \rrest_2} = O(\ln n)
  \]
\end{proposition}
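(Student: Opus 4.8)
The plan is to control the two second moments $\E{(\rrest_1)^2}$ and $\E{(\rrest_2)^2}$ and then invoke Cauchy--Schwarz,
\[
  \E{\rrest_1 \rrest_2} \le \sqrt{\E{(\rrest_1)^2}\,\E{(\rrest_2)^2}}.
\]
I will show each second moment is $O(\ln n)$, which immediately yields the claimed $O(\ln n)$ bound on the product. The reason this should work is that $\rrest_i$ is nonzero only on the low-probability event that $m_i$ is kicked out of his phase-(1) match during phase (2), so both its first and second moments are governed by this rare event exactly as in proposition~\ref{thrmRestConstant}.

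First I would reproduce the case analysis of proposition~\ref{thrmRestConstant}. Conditioning on the event $H$ that both $m_1, m_2$ end phase (1) matched to women $w_1, w_2 \in \Whigh$, the complementary event has probability $O(\ln^2 n / n^{C_2})$; since $\rrest_i \le r_i \le n$ deterministically and $r_i = O(\ln^2 n)$ with probability $1 - 1/n^2$ by proposition~\ref{thrmMaxIndividualProposals}, the complementary event (and the extreme tail where some $r_i$ is large) contributes only $o(1)$ to each second moment. On $H$, the probability $A_i$ that $w_i$ ever rejects $m_i$ is $O(1/\ln n)$, because $w_i\in\Whigh$ accepts each fresh proposal with probability $O(1/\ln n)$ and, as argued there, sees only $O(1)$ relevant proposals in expectation before DA terminates.

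Next I would upgrade the first-moment estimate of proposition~\ref{thrmContinuedProposalExpectation} to a second-moment estimate. Conditioned on $A_i$, the man $m_i$ has made $O(\ln^2 n) < n/4$ proposals and at most $O(n\ln n)$ proposals have been made overall (both with probability $1 - 1/n^2$), so proposition~\ref{thrmContinuedProposalExpectation} applies and $\rrest_i$ is stochastically dominated by $S = \sum_{k=1}^{K} P_k$, where $K \preceq \Geo(q)$ counts tentative acceptances for a constant $q$ and each $P_k \preceq \Geo(\Omega(1/\ln n))$ counts proposals between acceptances. Since $t \mapsto t^2$ is increasing and these variables are nonnegative, the domination is preserved under squaring, so it suffices to bound $\E{S^2}$. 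The standard random-sum identity gives $\E{S^2} = \E{K}\Var(P) + \E{K^2}\E{P}^2$; using $\E{K}, \E{K^2} = O(1)$, $\E{P} = O(\ln n)$ and $\Var(P) = O(\ln^2 n)$ yields $\E{S^2} = O(\ln^2 n)$. Combining with $\P{A_i} = O(1/\ln n)$ gives $\E{(\rrest_i)^2} = O(1/\ln n)\cdot O(\ln^2 n) + o(1) = O(\ln n)$, and Cauchy--Schwarz then finishes the proof.

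I expect the middle step to be the main obstacle: the argument of proposition~\ref{thrmContinuedProposalExpectation} must be re-run keeping track of the \emph{variances} of $K$ and of each $P_k$, not only their means, so that the second moment of the random sum is genuinely $O(\ln^2 n)$; and the coupling realizing $\rrest_i \preceq S$ must be arranged so that it survives squaring (which, for nonnegative variables, it does). Everything else reduces to bookkeeping of the $o(1)$ tail contributions already isolated in propositions~\ref{thrmRestConstant} and~\ref{thrmInitRestCovar}.
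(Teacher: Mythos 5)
Your proof is correct, but it takes a genuinely different route from the paper's. The paper attacks the \emph{joint} event head-on: it shows that $\rrest_1\rrest_2$ is nonzero only if \emph{both} $w_1$ and $w_2$ reject their phase-(1) matches, and—using the independence of the two women's accept/reject decisions together with $\E{K_1K_2}=O(1)$—that this happens with probability $O(1/\ln^2 n)$; it then bounds the product on that rare event by $\big(O(\ln^{3/2}n)\big)^2$ via proposition~\ref{thrmQuasipolyPropBound}, giving $O(1/\ln^2 n)\cdot O(\ln^3 n)=O(\ln n)$. You instead decouple the two men entirely with Cauchy--Schwarz and bound each $\E{(\rrest_i)^2}$ by $\P{A_i}\cdot\E{(\rrest_i)^2\,|\,A_i} = O(1/\ln n)\cdot O(\ln^2 n)=O(\ln n)$. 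What your route buys: it avoids reasoning about the interleaved proposals of the two men in phase (2), which the paper itself concedes is the awkward part, and it never needs proposition~\ref{thrmQuasipolyPropBound}; it also yields the stronger per-man statement $\E{(\rrest_i)^2}=O(\ln n)$. What it costs: you must upgrade proposition~\ref{thrmContinuedProposalExpectation} to a second-moment bound, i.e.\ re-run its geometric-domination argument and apply the random-sum identity $\E{S^2}=\E{K}\Var(P)+\E{K^2}\E{P}^2$—which is routine given that the conditional dominations $K\preceq\Geo(\Omega(1))$ and $P_k\preceq\Geo(\Omega(1/\ln n))$ are already established there, and domination survives squaring for nonnegative variables, as you note. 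Two small bookkeeping corrections: with failure probability $1/n^2$ the crude bound $\rrest_i\le n$ contributes $O(1/n^2)\cdot n^2=O(1)$ to the second moment, not $o(1)$ as you claim (harmless here, and you can make it $o(1)$ by invoking proposition~\ref{thrmMaxIndividualProposals} with exponent $C=3$); and since $\rrest_i=0$ off the event $A_i$, your decomposition is exactly $\E{(\rrest_i)^2}=\P{A_i}\E{(\rrest_i)^2\,|\,A_i}$ plus tail terms, so no further cross terms arise. Both routes land on the same $O(\ln n)$, which is what proposition~\ref{thrmSmoothCovar} needs, since the dominant term there is the $O(\ln^{3/2}n)$ from proposition~\ref{thrmInitRestCovar}.
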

\begin{proof}
  The joint distribution of $\rrest_1$ and $\rrest_2$ is difficult to
  reason about, because (unlike in the case of $\rinit_1$ and $\rrest_2$)
  both men have to take turns proposing during phase (2).
  Thus, this proof focuses on generalizing the arguments
  of~\ref{thrmRestConstant} to show that the product
  $\rrest_1 \rrest_2$ is nonzero with probability at most $O(1/\ln^2 n)$.

  To this end, let $w_1, w_2$ be the tentative matches of $m_1, m_2$ at the
  end of phase (1), and let $w^u_1, w^u_2$ be the unmatched women.
  Suppose $w_1$ and $w_2$ are both in $\Whigh$.
  As we showed in~\ref{thrmRestConstant}, this occurs with probability
  $1 - O(\ln^4 n/n^{C_2})$.
  Recall that women in $\Whigh$ will accept a new proposal with 
  probability at most $O(1/\ln n)$. 

  For any subsequent proposal in DA, consider the event $P$ 
  that a proposal goes to any woman in $\{ w_1, w_2, w^u_1, w^u_2 \}$.
  % Only this event brings DA closer to terminating, or makes $m_i$ propose
  % more.
  % Now we want to ask: what is the probability that $w_i \in \Whigh$
  % accepts a proposal before $w^u_1$ and $w^u_2$ have both seen proposals?
  Conditioned on a proposal landing in $P$,
  there is a constant lower bound $q = \alpha_{\min}/(4\alpha_{\max})$
  on the probability that a proposal goes to $w^u_i$ for $i=1,2$.
  Before both $w^u_1$ and $w^u_2$ see a proposal (and thus DA terminates),
  let the number of proposals to $w_1$ be denoted $K_1$
  and the number of proposals to $w_2$ be $K_2$.
  For $i=1,2$ and $j=1,\ldots,K$,
  let $X^i_j$ be the event that $w_i$ accepts the $j$th
  proposal made to her.
  Then the probability that both $w_1$ and $w_2$ a new proposal can be upper
  bounded as follows:
  \begin{align*}
    & \Es{K_1, K_2}{ \P{\big(X^1_1 \cup \ldots \cup X^1_{K_1}\big)
      \cap\big(X^2_1 \cup \ldots \cup X^2_{K_2}\big)} } 
    \\
    &\qquad = \Es{K_1, K_2}{ \P{\big(X^1_1 \cup \ldots \cup X^1_{K_1}\big)}
      \cdot \P{\big(X^2_1 \cup \ldots \cup X^2_{K_2}\big)} } \\
    &\qquad \le \Es{K_1, K_2}{ \big(\P{X^1_1}+ \ldots +\P{X^1_{K_1}}\big)
      \cdot \big(\P{X^2_1}+ \ldots + \P{X^2_{K_2}} \big) } \\
    &\qquad \le \Es{K_1,K_2}{ K_1 K_2 \cdot O(1/\ln^2 n) } \\
    &\qquad \le \E{K_1 K_2}\cdot O(1/\ln^2 n)
    = O(1/\ln^2 n),
  \end{align*}
  where we use the fact that whether $w_1$ or $w_2$ accepts proposals is
  independent, and the observation that the expectation of 
  $K_1 K_2$ is constant\footnote{
    One can formally verify this as follows:
    Let $K'$ be defined by iteratively sampling over
    $\{w_1, w_2, w^u_1, w^u_2\}$,
    where each $w^u_i$ is sampled with probability $q$
    and each $w_i$ with probability $1/2 - q$, and letting $K'$ be the number of
    draws needed until $w^u_1$ and $w^u_2$ have both been sampled.
    Then certainly $K_i \le K_1 + K_2 \preceq K'$.
    But $K'$ can be written as $G_1 + G_2$, where $G_1\sim\Geo(2q)$
    and $G_2\sim\Geo(q)$. Thus,
    $\E{K_1 K_2} \le \E{(K')^2} \le \Var(G_1 + G_2) = O(1)$.
  }.
  
  We showed above that with probability
  $\big(1 - O(\ln^4 n/n^{-C_2})\big)\big(1 - O(1/\ln^2 n)\big)$, 
  either $m_1$ or $m_2$ remain at their tentative
  match from phrase (1), and $\rrest_1\rrest_2 = 0$.
  By proposition~\ref{thrmQuasipolyPropBound},
  even if $m_1$ and $m_2$ both leave their tentative
  match from phase (1), with probability $1 - \exp(-\sqrt{\ln n})$,
  each of them make at most $O(\ln^{3/2} n)$ additional proposals.
  Even if this fails, by proposition~\ref{thrmMaxIndividualProposals},
  $m_1$ and $m_2$ make $O(\ln^2 n)$ proposals with probability
  $1 - 1/n^3$.
  All told, we have
  \begin{align*}
    \E{\rrest_1 \rrest_2}
      & = \big(1 - O(\ln^4 n/n^{-C_2})\big)\cdot
        O\left(\frac{1}{\ln^2 n}\right) \big(O(\ln^{3/2} n)\big)^2 \\
      & \qquad + \Big( O(\ln^4 n/n^{-C_2}) + \exp(-\sqrt{\ln n})\Big)
        \big(O(\ln^{2} n)\big)^2
      + \frac{1}{n^3} n^2 \\
    & = O(\ln n)
    % & \qquad \Longrightarrow 
    % \E{\rrest_1 \rrest_2} 
    % = 
  \end{align*}
\end{proof}

Combining propositions~\ref{thrmInitInitCovar},~\ref{thrmInitRestCovar},
and~\ref{thrmRestRestCovar}, we get our main result on covariance in smooth
matching states.
\RestateThrmSmoothRanksCovariance*
\begin{proof}
  We have
  \[
    \Cov(r_i, r_j) = \E{r_i r_j} - \E{r_i}\E{r_j}
    = \pm O(\ln n) + O(\ln^{3/2} n) + O(\ln n)
    = O(\ln^{3/2} n)
  \]
\end{proof}

%%%%%%%%%%
%% Rank men achieve
%%%%%%%%%%
\section{Missing proofs for the expected rank of men}
\label{appendixMenRanksProof}

First, we prove two crucial lemmas.

\RestateThrmMenRanksProportional*
\begin{proof}
  Let all men other than $L = \{m_i, m_j\}$ propose,
  for $m_i$ in tier $i$ and $m_j$ in tier $j$.
  By proposition~\ref{thrmSmoothWHP}, with probability 
  $1 - n^{-\Omega(1)}$, the matching state $\mu_{-L}$ is smooth,
  and proposition~\ref{thrmSmoothedRanksProportional} applies.

  First, we verify that conditioning on $\mu_{-L}$ being smooth cannot
  change the expectation of $r_j$ much.
  With probability $1 - 1/n^2$, $m_j$ makes at most $O(\ln^2 n)$ proposals, so
  \begin{align*}
    & \E{ r_j } = 
    \frac{1}{n^2} \cdot n + n^{-\Omega(1)}\cdot O(\ln^2 n) +
    \big(1 - n^{-\Omega(1)}\big) \E{ r_j \big| \text{ $\mu_{-L}$ is smooth} }
    \\ \Longrightarrow \qquad 
    & \E{ r_j \big| \text{ $\mu_{-L}$ is smooth} }
    = \big(1 + n^{-\Omega(1)}\big) \E{ r_j } - o(1).
  \end{align*}

  Now we can relate this to $r_i$
  using~\ref{thrmSmoothedRanksProportional}.
  % With probability $1 - 1/n^2$, $m_i$ makes at most $O(\ln^2 n)$
  % proposals. Thus, via~\ref{thrmSmoothedRanksProportional} we have
  \begin{align*}
    \E{r_i} 
    & = \frac{1}{n^2} \cdot n + n^{-\Omega(1)}\cdot O(\ln^2 n)
      + \big(1 - n^{-\Omega(1)}\big)
      \Es{\mu_{-L}}{ \Es{L}{ r_i } | \text{ $\mu_{-L}$ is smooth}} \\
    & = o(1) + 
      \Es{\mu_{-L}}{ \big(1\pm O(1/\ln n)\big)\frac{\beta_j}{\beta_i}\Es{L}{ r_j }
      \Big| \text{ $\mu_{-L}$ is smooth}} 
    =\big(1\pm O(1/\ln n)\big)\frac{\beta_j}{\beta_i} \E{r_j}.
  \end{align*}
\end{proof}

\RestateThrmMenRanksExpectation*
\begin{proof}
  Let the number of tiers be $k$.
  % , and let $r_{\min}$ and $\beta_{\min}$ be
  % the rank and score of the bottom tier.
  Using theorem~\ref{thrmDACentralConcentration}
  and symmetry for men in the same tier, we get the following:
  \begin{align*}
    \big(1 \pm O(1/\ln n)\big)
    \frac{\epsilonDotAlpha}{\alpha_{\min}} n \ln n
    = \E{S}
    & = \sum_{i=1}^k n \delta_i \E{r_i} \\
    & = \sum_{i=1}^k n \delta_i \frac{\beta_{j}}{\beta_i} 
      \big(1 \pm O(1/\ln n)\big)\E{r_{j}} \\
    & = \big(1 \pm O(1/\ln n)\big)
      (n\bm{\delta}\cdot\bm{\beta}^{-1}) \beta_j \E{r_{j}},
  \end{align*}
  and the theorem follows.
\end{proof}

Now we prove our main theorem.

\RestateThrmMenRanks*

\begin{proof}
  Let $r_1, r_2$ be the ranks of an arbitrary pair of men
  $L = \{m_1, m_2\}$, both in tier $j$.
  To prove the theorem, it will suffice to bound the covariance of $r_1$
  and $r_2$ using~\ref{thrmSmoothCovar}.
  Recall from definition~\ref{defPartialMatchingState} that $\mu_{-L}$ denotes the partial matching state excluding $L$. Let $U$ be a random variable indicating one of three things
  about $\mu_{-L}$:
  $U = \texttt{s}$ if $\mu_{-L}$ is smooth (this occurs with probability $1
  - n^{-\Omega(1)}$),
  $U = \texttt{t}$ if $\mu_{-L}$ is not smooth but at most $O(n\ln n)$
  total proposals have been made (this occurs with probability $n^{-\Omega(1)}$),
  and $U = \texttt{c}$ otherwise (this occurs with probability $1/n^3$
  by proposition~\ref{thrmCouponUpperTail}).

  The ``law of total covariance'' allows us to bound the total covariance
  of $r_1$ and $r_2$ by separately consider the cases $U = \texttt{s},
  \texttt{t}, \texttt{c}$. The first case is exactly
  proposition~\ref{thrmSmoothCovar}: $\Cov(r_1, r_2 | U=\texttt{s} )
  \le O(\ln^{3/2}n)$. The second cases uses the fact that
  (under the assumption that at most $O(n\ln n)$ proposals have been made
  total) proposition~\ref{thrmMaxPropsLastMan} says that $r_1, r_2 \le
  O(\ln^2 n)$ with probability $1 - 1/n^3$.
  Thus, we have $\Cov(r_1, r_2 | U=\texttt{t} )
  \le O(\ln^4 n) + \frac{1}{n^3}\cdot n^3 = O(\ln^4 n)$.
  The final case simply uses $r_1, r_2\le n$:
  $\Cov(r_1, r_2 | U=\texttt{c} ) \le n^2$. 
  We also bound the variance of the random variable $\E{r_i | U}$ for
  $i=1,2$. Because $\E{r_i | U}$ takes a constant value ($\E{r_i |
  U=\texttt{s}}$) with high probability, we have
  $ \underset{U}{\Var}( \E{r_i | U} )
  \le O(\ln^4 n)\cdot\frac{1}{n^{\Omega(1)}} + n^2 \frac{1}{n^3} = o(1)$.
  All told, we get
  \begin{align*}
    % \text{For $i=1,2$:}\quad
    % = o(1)
    % \\ \Longrightarrow \quad\quad
    \Cov(r_1, r_2) 
    & \le \Es{U}{ \Cov(r_1, r_2 | U) }
      + \underset{U}{\Cov}( \E{r_1 | U} , \E{r_w | U} ) \\
    & \le O(\ln^{3/2} n) + n^{-\Omega(1)}\cdot O(\ln^4 n) +
      \frac{1}{n^3}\cdot n^2 + o(1) 
    = O(\ln^{3/2} n)
    % & = \P{U=\texttt{s}} 
    % + \P{U=\texttt{t}} \Cov(r_1, r_2 | U=\texttt{t})
    % + \P{U=\texttt{c}} \Cov(r_1, r_2 | U=\texttt{c})
  \end{align*}

  Additionally observe that, because $r_i \le O(\ln^2 n)$ with probability
  $1 - 1/n^3$ (\ref{thrmMaxIndividualProposals} again),
  we have $\Var(r_i) \le O(\ln^4 n)$.
  Summing over the $(\delta_j n)^2$ pairs of men thus gets us
  \[ \Var(\BarRM_j)
    \le \frac{1}{(\delta_j n)^2}
    \left( (\delta_j n)\cdot O(\ln^4 n) + (\delta_j n)^2 \cdot O(\ln^{3/2} n)
    \right)
    = O(\ln^{3/2} n)
  \]

  Let $f(n) = {\epsilonDotAlpha}/(\alpha_{\min} \beta_j
  \deltaDotBeta^{-1} ) \ln n$.
  Finally, Chebyshev's inequality plus theorem~\ref{thrmMenRanksExpectation} 
  says that for any $\epsilon > 0$ and $n$
  large enough, we have
  \begin{align*}
    \P{ \BarRM_j \ne (1 \pm \epsilon)f(n) }
    & \le \P{ | \BarRM_j - \E{\BarRM_j} | \ge (\epsilon/2)f(n) } \\
    & = \frac{O(\ln^{3/2} n)}{(\epsilon/2)^2 f(n)^2 }
    = O\big( 1/(\epsilon^2 \sqrt{\ln n} ) \big)
  \end{align*}
\end{proof}

%%%%%%%%%%
%% Rank women achieve
%%%%%%%%%%
\section{Proofs for expected rank of women}
\label{appendixWomenRanks}

In this appendix, we prove our concentration results for the average
rank of women in each tier.
First we prove the following basic lemma:

\begin{proposition}
  \label{thrmPolyLogReProposals}
  With probability $1 - O(1/\log n)$, the total number of re-proposals
  in deferred acceptance with re-proposals 
  is $O(\log^4 n)$.
\end{proposition}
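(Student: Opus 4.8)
The plan is to bound the \emph{expected} number of re-proposals and then apply Markov's inequality. Write $R = T - S$, where $T$ is the total number of proposals made in deferred acceptance with re-proposals (equivalently, the coupon collector random variable, by proposition~\ref{thrmCouponIsDAReprops}) and $S$ is the number of \emph{distinct} proposals, which coincides with the number of proposals in ordinary DA. Then $R$ is precisely the total number of re-proposals, and the goal is a high-probability upper bound on $R$.

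First I would extract the expectation of $R$ directly from results already in hand, rather than re-deriving it. Theorem~\ref{thrmDACentralConcentration} states $\E{S} = (1 - O(\ln^2 n / n))\E{T}$, so
\[
  \E{R} = \E{T} - \E{S} = O(\ln^2 n / n)\cdot \E{T}.
\]
By theorem~\ref{thrmCouponCentralConcentration} we have $\E{T} = \Theta(n\ln n)$, and therefore
\[
  \E{R} = O(\ln^2 n / n)\cdot \Theta(n \ln n) = O(\ln^3 n).
\]

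Finally, Markov's inequality yields the tail bound. Taking the threshold $\ln^4 n$,
\[
  \P{R \ge \ln^4 n} \le \frac{\E{R}}{\ln^4 n} = \frac{O(\ln^3 n)}{\ln^4 n} = O\!\left(\frac{1}{\ln n}\right),
\]
so with probability $1 - O(1/\ln n)$ we have $R = O(\ln^4 n)$, as claimed.

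The argument has no serious obstacle beyond correctly importing the expectation bound: the single logarithmic gap between the mean $O(\ln^3 n)$ and the claimed bound $O(\ln^4 n)$ is exactly what Markov's inequality can afford at failure probability $O(1/\ln n)$. If one wished to avoid invoking theorem~\ref{thrmDACentralConcentration}, a self-contained route would bound re-proposals per man: conditioned on the high-probability event of corollary~\ref{thrmMaxIndividualProposals} that every man makes $O(\ln^2 n)$ proposals, a man who has already proposed $k$ times re-proposes on his next draw with probability at most $k\pi_{\max} = O(k/n)$, and summing the resulting series across all men recovers a comparable estimate; but routing through theorem~\ref{thrmDACentralConcentration} is cleaner and avoids re-deriving the per-man proposal cap.
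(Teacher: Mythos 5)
Your proof is correct and takes essentially the same route as the paper's: both bound the expected number of re-proposals by $O(\ln^3 n)$ and then apply Markov's inequality with threshold $\ln^4 n$ to get failure probability $O(1/\ln n)$. The only difference is that you import the expectation bound from theorem~\ref{thrmDACentralConcentration} (whose proof already establishes $\E{R} = O(\ln^3 n)$ via the per-man cap of corollary~\ref{thrmMaxIndividualProposals}), whereas the paper re-derives that bound inline from the same ingredients; your shortcut is legitimate and non-circular, since theorem~\ref{thrmDACentralConcentration} does not depend on this proposition.
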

\begin{proof}
  By proposition~\ref{thrmCouponUpperTail}, we know there are $O(n\log n)$
  proposals in deferred acceptance with re-proposals
  with probability $1 - 1/n^3$.
  By corollary~\ref{thrmMaxIndividualProposals},
  with probability $1 - 1/n^3$, each man makes as most $O(\log^2 n)$
  proposals, so in this case the probability that a proposal is
  a repeat is $O(\log^2 n/ n)$.
  If either of the above do not hold, we assume the number of re-proposals
  can be as high as the number of proposals overall,
  which is $O(n\log n)$ by proposition~\ref{thrmCouponCentralConcentration}.
  Thus, the expected number of re-proposals is at most
  \[ \big(1 - 1/n^3\big) O\left( \frac{\log^2 n}{n}\right) 
    \cdot O(n\log n )
    + \big(1/n^3\big) \cdot O(n\log n) = O(\log^3 n)
  \]
  Thus Markov's inequality tells us that, there are more than 
  $\Omega(\log^4 n)$ re-proposals with probability
  only $1/\log n$.
\end{proof}

Next, we fix some woman and consider the number of proposals
she gets from each tier of men.
We also consider the total number of proposals received by
the tier $i$ which $w$ is in (though we only need a one sided bound for this).
% This first result proves just enough about the number of proposals
% between each tier for our later analysis to go through.
% It turns out we will need an upper bound on the total number
% of proposals in tier $i$ of women,
% but a lower bound on the number of proposals received by a single woman.
% We conjecture that the number of proposals should actually concentrate for
% a single fixed woman too, but we don't prove it.

\begin{proposition} \label{thrmTotalPropsTierToTier}
  Let $S_{j\to i}$ denote the total number of proposals made by men in tier
  $j$ to women in tier $i$.
  Furthermore, fix a woman $w$ in tier $i$,
  and let $S_{j\to w}$ denote the
  number of proposals made by men in tier $j$ to $w$.
  For any $\epsilon>0$,
  with probability $1 - O(1/(\epsilon^2 \sqrt{\ln n}))$, we have
  \begin{align*}
    S_{j\to i} % \ge (1- \epsilon) (C_{j\to i}/\epsilon_i) \ln n
    & \le (1 + \epsilon)
    \frac{\epsilon_i\alpha_i}{\alpha_{\min}}\cdot 
    \frac{\delta_j \beta_j^{-1}}{\bm{\delta}\cdot\bm{\beta}^{-1}}
    n\ln n
    \\
    S_{j\to w}
    & = (1 \pm \epsilon)
    \frac{\alpha_i}{\alpha_{\min}}\cdot 
    \frac{\delta_j \beta_j^{-1}}{\bm{\delta}\cdot\bm{\beta}^{-1}}
    \ln n
  \end{align*}
\end{proposition}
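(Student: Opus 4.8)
The plan is to factor each count as (number of proposals made by tier-$j$ men) $\times$ (fraction of those landing on the target), controlling the first factor by Theorem~\ref{thrmMenRanksCentralConcentration} and the second by a concentration argument for the destinations. Throughout I would work in deferred acceptance with re-proposals, since there every proposal is a \emph{fresh} i.i.d.\ draw from $\W$ regardless of who makes it; the difference from ordinary DA is negligible by Proposition~\ref{thrmPolyLogReProposals} (for $S_{j\to i}$ the total number of re-proposals is $O(\log^4 n)$, lower order than the $\Theta(n\ln n)$ target), and for $S_{j\to w}$ the expected number of re-proposals landing on the single woman $w$ is only $O(\log^4 n/n) = o(1)$ — each man proposes to $w$ at most $3$ times by Proposition~\ref{thrmNoTrippleProps} and makes at most $O(\log^2 n)$ proposals by Corollary~\ref{thrmMaxIndividualProposals} — so with high probability $S_{j\to w}$ is unchanged. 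Let $N_j$ denote the total number of tier-$j$ proposals; since $N_j$ equals $\delta_j n\,\BarRM_j$ up to the negligible repeat correction, Theorem~\ref{thrmMenRanksCentralConcentration} (with $\epsilon' = \epsilon/3$) gives that the event
\[ G:\quad N_j = (1\pm\epsilon')\frac{\epsilonDotAlpha}{\alpha_{\min}}\cdot\frac{\delta_j\beta_j^{-1}}{\bm{\delta}\cdot\bm{\beta}^{-1}}\,n\ln n \]
holds with probability $1 - O(1/(\epsilon^2\sqrt{\ln n}))$.

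The key device for the second factor, and the step I expect to be the main obstacle, is the dependence between $N_j$ and \emph{which} women the tier-$j$ proposals reach: $N_j$ is a complicated function of the whole trajectory (it depends on acceptances, hence on destinations), so $S_{j\to w}$ is not a clean binomial, and this is precisely why one cannot sample without replacement in plain DA. I would resolve this with a martingale. Order all proposals $t=1,2,\dots$, let $\mathcal F_{t-1}$ be the history before proposal $t$, and note that the proposing man at step $t$ is $\mathcal F_{t-1}$-measurable while the destination $\mathrm{dest}_t\sim\W$ is independent of $\mathcal F_{t-1}$. Hence
\[ Z_w := S_{j\to w} - \pi_i N_j = \sum_{t\le T}\1{\text{proposer}_t\in\text{tier }j}\big(\1{\mathrm{dest}_t = w} - \pi_i\big) \]
is a sum of martingale differences stopped at the a.s.\ finite, integrable termination time $T$, with increments bounded by $1$. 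By orthogonality of increments and optional stopping, $\E{Z_w}=0$ and
\[ \E{Z_w^2} = \E{\sum_{t\le T}\1{\text{proposer}_t\in\text{tier }j}\,\pi_i(1-\pi_i)} \le \pi_i\,\E{N_j} = O(\ln n), \]
using $\pi_i = \Theta(1/n)$ and $\E{N_j}=\Theta(n\ln n)$. The identical construction with $\1{\mathrm{dest}_t\in\text{tier }i}$ in place of $\1{\mathrm{dest}_t=w}$ yields $Z_i := S_{j\to i} - (\epsilon_i\alpha_i/\epsilonDotAlpha)N_j$ with $\E{Z_i^2}\le (\epsilon_i\alpha_i/\epsilonDotAlpha)\,\E{N_j} = O(n\ln n)$.

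Finally I would combine the two factors by splitting on $G$. Set $\mu := \frac{\alpha_i}{\alpha_{\min}}\cdot\frac{\delta_j\beta_j^{-1}}{\bm{\delta}\cdot\bm{\beta}^{-1}}\,\ln n = \pi_i\cdot(\text{predicted }N_j)$; on $G$ we have $\pi_i N_j = (1\pm\epsilon')\mu$, so $\{S_{j\to w}\ne(1\pm\epsilon)\mu\}\cap G$ forces $|Z_w|\ge (2\epsilon/3)\mu = \Theta(\epsilon\ln n)$. Chebyshev on $Z_w$ then gives
\[ \P{S_{j\to w}\ne(1\pm\epsilon)\mu} \le \P{\overline G} + \frac{\E{Z_w^2}}{(2\epsilon\mu/3)^2} = O\!\left(\frac{1}{\epsilon^2\sqrt{\ln n}}\right) + O\!\left(\frac{1}{\epsilon^2\ln n}\right) = O\!\left(\frac{1}{\epsilon^2\sqrt{\ln n}}\right), \]
where the dominant term is inherited from $G$ (and ultimately from the covariance bound behind Theorem~\ref{thrmMenRanksCentralConcentration}). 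The same split applied to $Z_i$, whose target is $\Theta(n\ln n)$ with $\E{Z_i^2}=O(n\ln n)$, gives a Chebyshev term of order $O(1/(\epsilon^2 n\ln n))$, so the one-sided bound on $S_{j\to i}$ again holds with probability $1 - O(1/(\epsilon^2\sqrt{\ln n}))$. The only remaining care is the re-proposal transfer described in the first paragraph, where the $o(1)$ discrepancies between the re-proposal and DA counts are dwarfed by the $(1\pm\epsilon)$ slack, which is routine.
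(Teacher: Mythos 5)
Your proof is correct, and its skeleton matches the paper's: both factor each count as (number of tier-$j$ proposals, controlled by Theorem~\ref{thrmMenRanksCentralConcentration}) times (fraction of destinations hitting the target), both work in deferred acceptance with re-proposals where destinations are i.i.d.\ draws from $\W$, and both transfer back to DA by showing repeats are negligible (for $S_{j\to w}$, via the event that no man ever proposes to $w$ twice). Where you genuinely diverge is the concentration device for the destination step. The paper sidesteps the adaptivity you flag (that $N_j$ is a trajectory-dependent random count) by fixing a \emph{deterministic} prefix length $K$ of tier-$j$ proposals, observing that destinations at these predictable times are i.i.d., applying the multiplicative Chernoff bound~\ref{thrmMultChernoff}, and intersecting with the event that at least (resp.\ at most) $K$ tier-$j$ proposals occur; for the $S_{j\to i}$ upper bound it also works directly in DA by dominating each proposal's hit probability by $\epsilon_i\alpha_i/\epsilonDotAlpha + O(\ln^2 n/n)$. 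You instead center $S_{j\to w}$ by the random compensator $\pi_i N_j$, bound $\E{Z_w^2}\le \pi_i\E{N_j}=O(\ln n)$ via orthogonality of stopped martingale increments, and finish with Chebyshev. Your route handles the adaptivity in one clean stroke and avoids the prefix trick, at the cost of weaker concentration in that step ($O(1/(\epsilon^2\ln n))$ versus the paper's $n^{-\Omega(\epsilon^2)}$) --- which is harmless, since both proofs inherit the dominant $O(1/(\epsilon^2\sqrt{\ln n}))$ error from Theorem~\ref{thrmMenRanksCentralConcentration} anyway. Two small points deserve a sentence in a full write-up: the identity $\E{Z_w^2}=\E{\sum_{t\le T}\1{\mathrm{proposer}_t\in\text{tier }j}\pi_i(1-\pi_i)}$ requires an optional-stopping justification (routine here, since increments and conditional variances are bounded and $\E{T}<\infty$ by Proposition~\ref{thrmCouponUpperTail}), and measurability of the proposer at time $t$ requires fixing a deterministic execution-order rule, which is licensed by Lemma~\ref{thrmDAExecutionOrderDoesntMatter}.
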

\begin{proof}
  Let $S_j$ denote the total number of proposals by men in tier $j$.
  By theorem~\ref{thrmMenRanksCentralConcentration}, 
  \begin{align*}
    S_j \le (1 + \epsilon) C_j n\ln n
    &&
    \text{for }\ C_j =
    \frac{\epsilonDotAlpha}{\alpha_{\min}}\cdot 
    \frac{\delta_j \beta_j^{-1}}{\bm{\delta}\cdot\bm{\beta}^{-1}}
  \end{align*}
  with probability $1 - O(1/(\epsilon^2\sqrt{\ln n}))$.
  Now, the marginal probability that an individual proposal goes to some woman in
  tier $i$ is $\epsilon_i \alpha_i / (\epsilonDotAlpha)$.
  However, we still need to handle the fact that men do not make 
  repeat proposals.

  % This observation immediately gives the expectation of $S_{j\to i}$.
  Recall that by~\ref{thrmMaxIndividualProposals},
  with probability $1 - 1/n^2$ no man makes more than
  $O(\ln^2 n)$ proposals.
  So in this case, the probability that a proposal goes to tier $j$ of
  women is always upper bounded by
  $\epsilon_i\alpha_i/(\epsilonDotAlpha) + O(\ln^2 n/n)$.
  Thus, when this holds, the number of proposals going to tier $j$
  is statistically dominated by the sum
  of $K = (1 + \epsilon)C_j n\ln n$ independent Bernoulli trials
  with success parameter
  $p = \epsilon_i\alpha_i/(\epsilonDotAlpha) + O(\ln^2 n/n)$.
  Note that 
  \begin{align*}
    Kp \le (1+2\epsilon)C_{j\to i} n\ln n
    && \text{for }\ C_{j\to i} =
    \frac{\epsilon_i\alpha_i}{\alpha_{\min}}\cdot
    \frac{\delta_j \beta_j^{-1}}{\bm{\delta}\cdot\bm{\beta}^{-1}}
  \end{align*}
  for $n$ large enough.
  The concentration of this sum can be bounded with
  a standard application of Chernoff~\ref{thrmMultChernoff}
  to get
  \[
    \P{ S_{j\to i} \ge (1 + \epsilon)(Kp) } \le
    \exp( - \epsilon^2 (Kp) / 3) = n^{-\Omega(\epsilon^2 \ln n)}.
  \]
  For any $\epsilon > 0$, the above probability is
  $o(1/\sqrt{\ln n})$, so with probability
  $1 - O(1/(\epsilon^2\sqrt{\ln n}))$ overall, we have
  \[ S_{j\to i} \ge (1 + 3\epsilon) C_{j\to i} n\ln n.
  \]

  For the second concentration result,
  we consider deferred acceptance with re-proposals.
  First we prove the lower bound.
  Let $T_j$ denote the number of proposals by men in tier $j$,
  and let $T_{j\to w}$
  be those which additionally go to woman $w$.
  By the same proof as~\ref{thrmStatisticalDominance}, we have
  $S_{j} \preceq T_{j}$.
  Thus, by theorem~\ref{thrmMenRanksCentralConcentration},
  we have $T_j \ge (1 - \epsilon) C_j n\ln n$ with probability
  $1 - O(1/(\epsilon^2 \sqrt{\ln n}))$.

  Consider the first $(1 - \epsilon) C_j n\ln n$ proposals by men in tier
  $j$ in deferred acceptance with re-proposals.
  There are i.i.d. and go to $w$ with
  probability $\pi_i = \alpha_i / (n\epsilonDotAlpha)$.
  The probability that $w$ gets less than $(1-\epsilon)\mu$
  proposals, where $\mu := (1-\epsilon)\pi_i C_j n\ln n = \Theta(\ln n)$,
  can be bounded with a standard application of
  Chernoff~\ref{thrmMultChernoff}:
  \[ \P{ T_{j \to w} \le (1-\epsilon)\mu }
    \le \exp( -\epsilon^2 \mu /2)
    = n^{-\Omega(\epsilon^2)}.
  \]
  Now, with high probability, every man proposes to at most $O(\ln^2 n)$
  distinct women, by proposition~\ref{thrmMaxIndividualProposals}.
  When this holds, among the first $(1-\epsilon)\mu = O(\ln n)$ proposals,
  the probability that even a single one of those proposals is a repeat is
  $O(\ln^3 n / n)$.
  Thus, with probability $1 - \widetilde O(1/n)$,
  we have that $w$ did not receive
  a single repeated proposals among her first $(1 - \epsilon)\mu$.
  Thus, with probability $1 - n^{-\Omega(\epsilon^2)}$,
  $w$ received at least $(1 - \epsilon)\mu$ proposals in
  the first $(1 - \epsilon)C_j n\ln n$ proposals by men in tier $j$.
  By~\ref{thrmMenRanksCentralConcentration}, tier $j$ makes this many
  proposals with probability $1 - O(1/(\epsilon^2\sqrt{\ln n}))$, so
  overall
  \[ S_{j\to w} \ge (1 - 2\epsilon)\pi_i C_j n\ln n
    = (1 - 2\epsilon)
    \frac{\alpha_i}{\alpha_{\min}}\cdot 
    \frac{\delta_j \beta_j^{-1}}{\bm{\delta}\cdot\bm{\beta}^{-1}}
    \ln n
  \]
  with probability $1 - O(1/(\epsilon^2\sqrt{\ln n}))$.

  We now prove the upper bound of the second concentration result.
  % it suffices to consider $S_j$, and observe that $S_{j\to w}$
  % is statistically dominated by the sum of $S_j$ independent
  % Bernoulli random variables, each with success probability
  % $p**$.
  By theorem~\ref{thrmMenRanksCentralConcentration} and
  proposition~\ref{thrmPolyLogReProposals},
  we have $T_j \le (1 + \epsilon) C_j n\ln n + O(\log^4 n)
  \le (1 + 2\epsilon) C_j n\ln n$
  with probability
  $1 - O(1/(\epsilon^2 \sqrt{\ln n})) - O(1/\log n)$.
  % To prove the upper bound on $S_{j\to w}$,
  % it suffices to prove an upper bound on $T_{j\to w}$.
  % Because $S_j \preceq T_j$ and no man makes more than $O(\ln^2 n)$
  % proposals, the expected value of $T_j - S_j$ is at most
  % $O(\ln^2 n) / n \cdot \E{T_j} = \poly\log(n)$.
  Consider the first $(1 + 2\epsilon)C_j n\ln n$ proposals according to $T$,
  which go to $w$ with probability $\pi_i$.
  By another application of Chernoff~\ref{thrmMultChernoff},
  \[ \P{ S_{j \to w} \ge (1+\epsilon)\mu }
   \le \P{ T_{j \to w} \ge (1+\epsilon)\mu }
    \le \exp( -\epsilon^2 \mu /3)
    = n^{-\Omega(\epsilon^2)},
  \]
  where $\mu = (1 + 2\epsilon)C_j n\ln n$ is the expected number of proposals
  to $w$ in this process.
  Thus,
  \[ S_{j\to w} \le (1 + 3\epsilon)\pi_i C_j n\ln n
    = (1 + 3\epsilon)
    \frac{\alpha_i}{\alpha_{\min}}\cdot 
    \frac{\delta_j \beta_j^{-1}}{\bm{\delta}\cdot\bm{\beta}^{-1}}
    \ln n
  \]
  with probability $1 - O(1/(\epsilon^2\sqrt{\ln n}))$ overall.

  % proposals  among the first 
  % We now show that, with probability $1 - \widetilde O(1/n)$,

  % write
  % $S_{j\to w} = \sum_{k=1}^{S_{j\to i}} X_k$, where $X_k$ is the indicator of the
  % $k$th proposal from men in tier $j$ going to woman $w$.
  % Now, the $X_k$'s are not independent Bernoulli trials, because men will
  % not propose to $w$ twice. However, observe that the \emph{only} way
  % for $w$ to be less likely to recieve some set of proposals is if she's
  % \emph{already} received proposals from those men.

  % This is probably nonsense: with probability
  % $1 - 1/n^2$, no man makes more than $O(\ln^2 n)$ proposals, so the
  % success probability for $X_k$ is never effected by more than
  % $O(\ln^2 n/n)$. Thus, both upper and lower should follow from Chernoff.

  % TODOTODO: formalize.
\end{proof}
\begin{remark}
    The proofs in appendix~\ref{sectionSmooth} reveal that, in some sense,
    the rank of a man $m$ behaves like a geometric distribution 
    (with parameter given by the probability of his proposal
    being accepted).
    The number of proposals a woman $w$ receive, on the other hand,
    intuitively behaves more like a \emph{binomial} distribution
    (with total number of trials given by the number of proposals overall,
    and success probability given by $\pi_i = \alpha_i / (n\epsilonDotAlpha)$
    for a woman in tier $i$).
    Indeed, a binomial distribution with $n\ln n$
    trials and success probability $1/n$ concentrates to
    $(1\pm \epsilon)\ln n$ with high probability,
    and correspondingly we found above that the number of proposals 
    received by the women actually
    concentrates much better than the number of proposals made by the men.
    
    Note, however, that the rank achieved by an individual woman 
    does not concentrate. At some intuitive level, this is because
    the rank of $w$ behaves like an exponential distribution
    with rate parameter given by the weight of proposals $\Gamma_w$ saw.
    
    % TODOTODO: is that a wise thing to say?
    % and thus reasoning about women's rank largely becomes an exercise
    % in probability. For this reason, we defer proofs in this section
    % to appendix~\ref{appendixWomenRanks}, and sketch the derivation
    % of the expressions here.
\end{remark}

Next, we provide a lemma that states most women receive few proposals. %, which addresses the technicality to approximate
%addressing the technicality
%mentioned in footnote~\ref{footnotePropRankTechnicality}
%on page~\pageref{footnotePropRankTechnicality}:

\begin{proposition}\label{thrmWomenAtMostLogProposals}
  With probability $1 - O(1/n)$, no woman receives more
  than $O(\log n)$ proposals.
\end{proposition}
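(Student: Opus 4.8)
The plan is to do all the work in deferred acceptance with re-proposals, where every proposal is an independent draw from $\W$, and then transfer the bound to ordinary DA by the coupling of Proposition~\ref{thrmStatisticalDominance}. The two ingredients are that the total number of proposals is $O(n\ln n)$ with very high probability, and that, along a fixed-length prefix of i.i.d. draws, the number received by any single woman is dominated by a binomial with mean $\Theta(\ln n)$, to which a multiplicative Chernoff bound applies.

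First I would fix a large constant $C$ and use Proposition~\ref{thrmCouponUpperTail} to conclude that the total number of proposals $T$ in deferred acceptance with re-proposals satisfies $T\le C n\ln n$ with probability $1 - O(n^{-C/2})$; taking $C\ge 4$ makes this failure probability $O(1/n^2)$. Call this event $E$. The one subtlety I would be careful about is that conditioning the per-woman counts on the random stopping time $T$ introduces unwanted dependence; I avoid this by noting that on $E$, the number of proposals any woman $w$ receives in DA with re-proposals is at most the number of times $w$ appears among the \emph{first} $C n\ln n$ draws from $\W$, which is a deterministic-length prefix of the i.i.d. draw sequence.

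Next, for a fixed woman $w$ in tier $i$, each of these $C n\ln n$ independent draws lands on $w$ with probability $\pi_i \le \pima = \alpha_{\max}/(n\epsilonDotAlpha) = O(1/n)$, so the count $X_w$ is stochastically dominated by a $\mathrm{Bin}(C n\ln n, \pima)$ variable with mean $\mu := C n\ln n\cdot\pima = C(\alpha_{\max}/\epsilonDotAlpha)\ln n = \Theta(\ln n)$. A standard multiplicative Chernoff bound (Proposition~\ref{thrmMultChernoff}) then gives, for a threshold $A\ln n$ with $A$ a large enough constant relative to the constant hidden in $\mu$,
\[
  \P{X_w \ge A\ln n} \le \exp\big(-\Omega(A\ln n)\big) \le n^{-2}.
\]
A union bound over the $n$ women shows that, on $E$, the probability that some woman receives more than $A\ln n$ proposals in DA with re-proposals is at most $n\cdot n^{-2} = 1/n$; adding the $O(1/n^2)$ failure probability of $E$ gives the bound with probability $1 - O(1/n)$.

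Finally I would transfer the result to ordinary DA. By Proposition~\ref{thrmStatisticalDominance}, a run of ordinary DA is recovered from a run of DA with re-proposals by deleting repeated proposals, so the proposals received by each woman in ordinary DA are a subset of those she receives in DA with re-proposals along the coupling; in particular her count can only decrease. Hence the same $O(\log n)$ bound holds for ordinary DA with probability $1 - O(1/n)$, as claimed. None of the steps is a genuine obstacle here — the proof is essentially a mean computation plus Chernoff plus union bound — and the only points needing care are dominating the per-woman counts by a fixed-length prefix (rather than conditioning on $T$) and fixing the Chernoff constant $A$ only after the constant in $\mu$ is pinned down.
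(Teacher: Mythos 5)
Your proof is correct and follows essentially the same route as the paper's: bound the total number of proposals in deferred acceptance with re-proposals via Proposition~\ref{thrmCouponUpperTail}, dominate each woman's count by a binomial over a fixed-length prefix of i.i.d.\ draws from $\W$ with mean $\Theta(\ln n)$, apply the multiplicative Chernoff bound of Proposition~\ref{thrmMultChernoff}, union bound over the $n$ women, and transfer to ordinary DA by statistical domination. Your explicit handling of the stopping-time subtlety (working with a deterministic-length prefix rather than conditioning on $T$) is a point the paper glosses over, but it is the same argument.
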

\begin{proof}
  Fix a woman $w$, without loss of generality in the highest tier,
  and consider deferred acceptance with re-proposals.
  Note that the number of proposals $w$ gets in this process statistically
  dominates her number of proposals in DA.
  By proposition~\ref{thrmCouponUpperTail}, there exists a constant $K$
  such that this process terminates before $Kn\ln n$ total proposals
  with probability $1 - 1/n^2$.

  Now, consider the first $Kn\ln n$ proposals in deferred acceptance with
  re-proposals, and let $X$ be the number of these proposals which went to $w$.
  Since proposals are independent, there is a $\pima$ chance
  that each one goes to $w$. Thus, the number of proposals $w$ receives is
  a sum of $Kn\ln n$ independent Bernoulli trials success $\pima$.
  Thus, $\E{X} = \pima Kn\ln n = \Theta(\ln n)$,
  and a standard Chernoff bound 
  says that there exists a constant $\delta>0$ such that
  \[ \P{ X \ge (1+\delta) \E{X} }
    \le \exp(-\delta \E{X} / 3 ) = 1/n^2.
  \]
  Thus, the number of proposals $w$ receives is $O(\ln n)$ with probability
  $1 - O(1/n^2)$. Taking a union bound over the $n$ women, we have the result.
\end{proof}

Finally we can prove our main result:

\RestateThrmWomenRanks*
\begin{proof}
  We study the rank that a woman $w$ achieves by letting DA run until
  completion, then generating the rest of $w$ ranking based on the
  weight of proposals she received in DA.
  Specifically, for each woman $w$ let $\Gamma_w$ be the sum of
  weights of all men who proposed to $w$.
  We denote by $\E{\cdot | \Gamma}$ the expectation conditioned on the
  state at the end of deferred acceptance.
  For each man $m$ who did not propose to $w$ during DA,
  let $a_{m,w} = 1$ if $m$ is preferred to $w$'s match once we
  generate the rest of $w$'s preference list, and $0$ otherwise.
  We have $r_w = 1+\sum_{m} a_{m,w}$, where the sum runs over all
  men who did not propose to $w$ during DA.
  Note that, for each man $m$ with fitness $\beta(m)$,
  $\E{a_{m,w}} = \beta(m) / (\beta(m) + \Gamma_w)$,
  by~\ref{thrmWomenDeferedDecisions}.
  However, $a_{m,w}$ are not independent, for instance because
  if many men are ranked worse than $w$'s match, this is likely
  because $w$ ranks her match very highly
  (this is the detail that prevents the rank of $w$ from concentrating).
  
  For each man $m$, let $a_{m,w}'$ be an independent 
  random variable which is $1$ with probability
  $\beta(m) / (\beta(m) + \Gamma_w)$ and zero otherwise.
  Set $r_w' = \sum_m a_{m,w}'$, where the sum runs over all men.
  We use $r_w'$ to study the expectation of $r_w$.
  Let $P_w$ denote the number of proposals $w$ received during DA.
  Note that $w$ will never rank any man who proposed to her during
  DA above her eventual match.
  However, we have
  $\E{r_w} = \E{\sum_m a_{m,w}} = \E{\sum_m a_{m,w}'}$, where the sum
  runs over all men who have not proposed to $w$,
  by the linearity of expectation.
%   the sum over $a_{m,w}$ representation
%   of $r_w$ agrees in expectation with the sum over $a_{m,w}'$
%   which gives $r_w'$ ().
  Thus, $1 + \E{r_w'} \ge \E{r_w} \ge \E{r_w'} - P_w$.
%   Each man who did not propose to $w$ during
%   DA would be favored by $w$ to her current match with probability
%   exactly $\E{a_{m,w}}$, independently, by~\ref{thrmWomenDeferedDecisions}.
%   Thus, the actual rank $r_w$ is statistically dominated by $r_w'$,
%   and statistically dominates $r_w' - P_w$,
%   where $P_w$ is the total number number of proposals $w$ received.
  % Jensen's inequality will allow us to lower bound the expected rank using
  % our previous result~\ref{thrmTotalPropsTierToTier}.
  % As $r_w'$ depends only on $\Gamma_w$, we proceed to study the average
  % value of $\Gamma_w$ across $T_i$, the $i$th tier of women.

  Our first task is to get a lower bound on the expectation of the
  average of $r_w'$ across tier $i$ women.
  By~\ref{thrmTotalPropsTierToTier} and a union bound,
  we know that with probability  $1 - O(1/(\epsilon^2 \ln n))$,
  for each tier $j$ of men, tier $i$ of women received
  % For $n$ large enough, the expected number of proposals from tier $j$ men
  % to tier $i$ women is
  $(1\pm\epsilon)C_{j\to i} n\ln n$ proposals from men in tier $j$,
  where $C_{j\to i} = (\epsilon_i\alpha_i/\alpha_{\min})
  (\delta_j\beta_j^{-1} / \deltaDotBeta^{-1})$.
  Let this event be denoted $E$.
  In this case,
  \[  (\epsilon_i n)^{-1}\sum_{w\in T_i} \Gamma_w
    = (\epsilon_i n)^{-1}(1\pm\epsilon) \sum_j 
    \frac{\epsilon_i\alpha_i}{\alpha_{\min}}\cdot 
    \frac{\delta_j}{\bm{\delta}\cdot\bm{\beta}^{-1}}
    n\ln n
    = (1\pm\epsilon)
    \frac{\alpha_i}{\alpha_{\min}}\cdot 
    \frac{\ln n}{\bm{\delta}\cdot\bm{\beta}^{-1}}
  \]

  Let $\overline R_i' = (\epsilon_i n)^{-1} \sum_{w\in T_i} r_w'$.
  By Jensen's inequality (applied to $\E{r'_w}
  = \sum_m \beta(m) / (\beta(m)+\Gamma_w)$, which is a convex
  function of $\Gamma_w$),
  we know that when $E$ holds and for $n$ large enough,
  % for $n$ large enough and $\epsilon$ small enough, we have
  \begingroup
  \allowdisplaybreaks
  \begin{align*}
    % \forall w\in T_i: \qquad 
    % \E{r_w'}
    % & = \sum_m \E{a_{m,w}}
    % = \sum_m \frac{\beta(m)}{\beta(m) + \Gamma_w} \\
    % \Longrightarrow \qquad
    % \E{\BarRW_i}
    \E{ \overline R_i' \Big| \Gamma }
    & = (\epsilon_i n)^{-1} \sum_{w\in T_i} \E{r_w'} \\
    & \ge \sum_m \frac{\beta(m)}{\beta(m) +
      (\epsilon_i n)^{-1} \sum_{w\in T_i}\E{\Gamma_w}} \\
    & \ge \sum_m \frac{\beta(m)}{(1 + 2\epsilon)C_\Gamma \ln n}
      && C_\Gamma = \frac{\alpha_i}{\alpha_{\min}(\deltaDotBeta^{-1})} \\
    & \ge \frac{n\deltaDotBeta}{(1 + 2\epsilon)C_\Gamma \ln n} \\
    & \ge (1 - 3\epsilon) C^{(i)} \frac{n}{\ln n}
      && C^{(i)} = (\deltaDotBeta)(\deltaDotBeta^{-1})
        \frac{\alpha_{\min}}{\alpha_i}
  \end{align*}
  \endgroup
  % TODOTODO: clarify that the above is only valid for an event that holds
  % with probability approaching $1$ (although this implies the result
  % anyway).

  To complete the picture, we also need an upper bound on the expectation
  of $r_w'$ for a single woman.
  Consider a woman $w$ in tier $i$, and let 
  $C_{j\to w} = (\alpha_i/\alpha_{\min})
  (\delta_j\beta_j^{-1} / \deltaDotBeta^{-1})$.
  By proposition~\ref{thrmTotalPropsTierToTier} and a union bound,
  the probability $w$ receives less than $(1 - \epsilon)C_{j\to w} \ln n$
  proposals from men in tier $j$, for each $j$ simultaneously,
  is $O(1/(\epsilon^2\sqrt{\ln n}))$.
  By Markov's inequality, we know that with probability 
  $1 - O(\epsilon^2 \ln^{1/4} n)$, at least 
  $\epsilon_i n / \log^{1/4} n$ women in tier $i$ have
  at least $(1 - \epsilon)C_{j\to w} \ln n$
  proposals from men in tier $j$, for each $j$ simultaneously.
  These women have 
  \begin{align*}
    \Gamma_w 
    & \ge (1 - \epsilon)\sum_j \beta_j C_{j\to w} \ln n
      = (1 - \epsilon)C_\Gamma \ln n \\
    \E{r_w' | \Gamma} 
    & \le \sum_m \frac{\beta(m)}{\Gamma_w}
    \le (1 + 2\epsilon)\frac{n\deltaDotBeta}{C_\Gamma \ln n}
    = (1 + 2\epsilon)C^{(i)}\frac{n}{\ln n}.
   \end{align*}
  
  Note the following consequence of~\ref{thrmSmoothWHP}:
  a woman $w$ receives $\Omega(\ln n)$ proposals with
  probability $1 - n^{-\Omega(1)}$.
  Thus, by Markov's inequality, with probability $1 - n^{-\Omega(1)}$,
  at most $n^{1 - \Omega(1)}$ women receive fewer
  than $O(\ln n)$ proposals.
  With the remaining women, it's possible that many women
  get rank up to $n$. All told, we have that with probability
  $1 -  O(1/(\epsilon^2 \ln^{1/4} n))$,
  \begin{align*}
    \E{\overline{R}_i' | \Gamma}
    & \le \big(1 - O(1/\ln^{1/4}n)\big) 
      (1 + 2\epsilon)C^{(i)} \frac{n}{\ln n} \\
    & \qquad + O(1/\ln^{1/4} n)\cdot O(n/\ln n)
      + n^{-\Omega(1)}\cdot n \\
    & \le (1 + 3\epsilon)C^{(i)} \frac{n}{\ln n} \\
    \implies \E{\overline{R}_i' | \Gamma}
    & = (1\pm 3\epsilon)C^{(i)} \frac{n}{\ln n} 
  \end{align*}

  Now, conditioned on the realized weights $\{\Gamma_w\}$,
  the $\epsilon_i n$ variables $r_w'$ is independent, so Hoeffding's
  inequality~\ref{thrmBoundedHoeffding} immediately gives us
  \[
    \P{ \big|\overline R_i' - \E{\overline R_i'} \ge n^{3/4}\ \Big|\ \Gamma}
    \le 2\exp(- 2(\epsilon_i n)^2 (n^{3/4})^2 / (\epsilon_i n^3) )
    = 2\exp(- \Theta(\sqrt{n}) )
  \]
  
  By~\ref{thrmWomenAtMostLogProposals}, with probability $1 - 1/n$,
  no woman receives more than $O(\ln n)$ proposals,
  so $P_w \le O(\ln n)$ for all $w$. So in this case the difference between
  $\BarRW_i$ and $\overline R_i'$ is at most $O(\ln n)$,
  and we finally have that with probability
  $1 - O(1/(\epsilon^2 \ln^{1/4} n)) - 1/n - 2\exp(- \Theta(\sqrt{n}) )
  = 1-O(1/(\epsilon^2 \ln^{1/4} n))$,
  \[
    \BarRW_i = (1 \pm 4\epsilon) C^{(i)} \frac{n}{\ln n}.
  \]

\end{proof}

%%%%%%%%%%
%% Match type distribution
%%%%%%%%%%
\section{Proofs for distribution of match types}
\label{appendixMatchType}

% \begin{theorem}
%   For any $\epsilon > 0$ and tier $i$ of women and $j$ of men,
%   the probability that a woman $w$ in tier $i$ matches to a
%   man in tier $j$ is $(1\pm \epsilon)\delta_j$
%   for $n$ large enough.
% \end{theorem}

Given the results of the previous appendix,
our main theorem on the distribution of match types
is a fairly easy corollary.

\RestateThrmMatchTypes*

\begin{proof}
  Let $\Gamma_{j\to w} := \beta_j S_{j\to w}$ denote the sum
  of public scores of all men in tier $j$ who propose to $w$.
  By proposition~\ref{thrmTotalPropsTierToTier} and a union
  bound over the constant number of tiers,
  we have that, with probability $1 - o(1)$,
  \[ \Gamma_{j\to w} = (1\pm\epsilon) \delta_j U_i
    \qquad\qquad U_i :=
    \frac{\alpha_i}{\alpha_{\min}}\cdot 
    \frac{\ln n}{\bm{\delta}\cdot\bm{\beta}^{-1}}
  \]
  for each tier $j$ simultaneously.
  Recall that the probability of a given proposal being
  the favorite out of all those seen by $w$ is independent
  of the order in which men propose to $w$.
  Thus, when the above holds,
  the probability that $w$ is matched to a man in tier $j$
  is $(1\pm\epsilon)\delta_j U_i / ( (1\pm\epsilon) U_i) 
  = (1\pm 3\epsilon) \delta_j$. In all other cases,
  the probability is between 0 and 1.
  Thus, for $n$ large enough, the probability
  is $(1 - o(1))(1\pm 3\epsilon)\delta_j + o(1) =
  (1\pm 4\epsilon)\delta_j$ overall.
\end{proof}

\begin{remark}
  Unlike our results on the average rank of different tiers,
  the above result is only proven ``in expectation''
  instead of proving concentration.
  That is, we prove a result on the overall probability
  of certain types matching, instead
  of results on what happens for the realized distribution of match types
  (i.e. the fraction of matches which are made between
  tier $i$ and tier $j$ for each $i,j$)
  with high probability.
  
%   We have proven concentration results for the
%   average rank achieved by each tier
%   but we are only able to prove expectation
%   for the match type distribution.
  We believe this is an artifact of our current
  proof technique, and conjecture that for any $\epsilon >0$,
  with probability approaching $1$
  we have that there are $(1\pm\epsilon)\epsilon_i\delta_j n$
  pairs formed from a woman in tier $i$ and a man in tier $j$.
  To prove this it would suffice to show that, similar
  to the situation for the men, the match for different women
  is only very weekly correlated.
\end{remark}

% \begin{remark}
% \label{remarkUniformMatchTypes}
%   It's interesting to note that in the above,
%   each tier of men contributes equal
%   total weight (relative to the size of the tier)
%   of proposals to each woman.
%   In other words, the advantage that men in high tiers have of matching
%   with women is exactly counterbalanced by the higher number of proposals
%   that lower tier men make to the women.
%   Another rephrasing is the following: 
%   if I decrease the public score of $m$, I increase the probability
%   that $m$ applies to woman $w$, but that increase is exactly
%   offset by the decrease in the probability that $w$ accepts.
%   This may explain why we experimentally observe a roughly
%   uniform distribution of ``match types'' 
%   in~\ref{sectionDistributionMatchExperiment}.
%   An intriguing direction for future research is investigating
%   this phenomenon in other context, and seeing to what extent
%   it is robust.
% %   TODOTODO: more interpretation. In particular, consider yet 
% %   Q: is this a ``fragile'' consequence of the public score model, or something
% %   more universal (e.g. a statement of the form ``as long as the proposing side
% %   has a lot of market power, the ability of women to get their preferred
% %   type is lower order, and thus uniform'').
% \end{remark}

%%%%%%%%%%
%% ``Known'' things from probability theory.
%%    * Poissonization of coupon collector
%%    * maximum of exponentials
%%    * Chernoff and Hoeffding bounds
%%%%%%%%%%
\section{Lemmas in probability theory}
\label{appendixProbability}

The proof of the next claim is given in \cite{ross2006probability},
Example 5.17. For completeness, we reproduce it here.

\RestateThrmCouponPoissonization*

\begin{proof}
  Consider $n$ Poisson clocks, each ticking with rate $p_i$
  for $i=1,\ldots,n$. Note that this is equivalent to having a ``master''
  Poisson clock, ticking with rate $1$, and assigning every tick
  to one of the $n$ coupons according to the distribution $(p_i)_{i\in[n]}$.
  By definition, the amount of time between ticks of a Poisson clock
  with rate $p_i$ is distributed exactly according to $\Exp(p_i)$.
  Thus, the random variable $X$ is distributed exactly as
  the of time until all of the $n$ clocks have ticked at least once.
  We can see that the discrete time coupon collector 
  $T = T_{\mathcal{D}}$ can be recovered as 
  one particular random variable in this
  continuous-time process.
  Specifically, $T$ is the number of times the ``master clock''
  ticked before all of the $n$ clocks ticked at least once.

  We have $X = \sum_{i=1}^T S_i$, where $S_i$ is the ``$i$th inter-arrival time''
  of the master clock. Note that $S_i$ is independent of $T$.
  Because the master clock ticks at rate $1$, each $S_i$ is distributed
  according to $\Exp(1)$ and has expectation $1$, so we have
  \[ \E{X} = \E{\E{X | T}} = \E{\E{\sum_{i=1}^T S_i \bigg| T}}
   = \E{\sum_{i=1}^T \E{S_i | T}} = \E{T}
  \]
\end{proof}

We believe the following result is folklore:

\begin{proposition}[Maximum of exponential distributions]
  \label{thrmMaximumExponential}
  The maximum of $k$ independent draws from $\Exp(\lambda)$
  is distributed identically to
  \[ \sum_{i=1}^k Y_i, \qquad Y_i \sim \Exp(i \lambda) \]
  (i.e. $Y_i$ are independent draws from $\Exp(i \lambda)$
  for $i=1,\ldots,n$).
  In particular, the maximum has expected value
  $H_k / \lambda$ (for $H_k$ the $k$th harmonic number)
  and variance $\Theta(1/\lambda)$.
\end{proposition}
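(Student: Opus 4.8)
The plan is to prove the distributional identity using the memoryless property of the exponential distribution, decomposing the maximum into a telescoping sum of order-statistic spacings, and then to read off the expectation and variance directly from the resulting sum of independent exponentials. The one elementary fact I would use repeatedly is that the minimum of $m$ independent $\Exp(\lambda)$ variables is itself $\Exp(m\lambda)$, which is immediate from $\P{\min > t} = \prod_{i=1}^m \P{X_i > t} = e^{-m\lambda t}$.

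Let $X_1,\dots,X_k$ be independent draws from $\Exp(\lambda)$, and let $X_{(1)} \le X_{(2)} \le \dots \le X_{(k)}$ be their order statistics, so the maximum is $X_{(k)}$. First I would write $X_{(k)} = \sum_{i=1}^k (X_{(i)} - X_{(i-1)})$ with the convention $X_{(0)} = 0$, and argue that the $i$th spacing $X_{(i)} - X_{(i-1)}$ is distributed as $\Exp((k-i+1)\lambda)$ with all spacings mutually independent. The base case $X_{(1)} = \min_i X_i \sim \Exp(k\lambda)$ is the fact above. For the inductive step I condition on $X_{(1)} = s$ and on which index achieves the minimum; by memorylessness the residual lifetimes $\{X_i - s : X_i > s\}$ of the remaining $k-1$ variables are again independent $\Exp(\lambda)$ variables, and crucially are independent of $s$. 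The process thus "restarts" with $k-1$ fresh exponentials, so $X_{(2)} - X_{(1)}$ is their minimum, $\Exp((k-1)\lambda)$, independent of $X_{(1)}$; iterating yields the claim for every spacing and their joint independence.

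Reindexing the spacings by $j = k-i+1$ then gives $X_{(k)} \overset{d}{=} \sum_{j=1}^k Y_j$ with $Y_j \sim \Exp(j\lambda)$ independent, as desired. The moment formulas follow at once from independence together with $\E{\Exp(j\lambda)} = 1/(j\lambda)$ and $\Var(\Exp(j\lambda)) = 1/(j\lambda)^2$: the expectation is $\sum_{j=1}^k 1/(j\lambda) = H_k/\lambda$, and the variance is $\sum_{j=1}^k 1/(j\lambda)^2 = \lambda^{-2}\sum_{j=1}^k j^{-2}$, which since $1 \le \sum_{j=1}^k j^{-2} \le \pi^2/6$ is $\Theta(1/\lambda^2)$ (this appears to correct the stated $\Theta(1/\lambda)$). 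The only delicate point, and the step I would spell out most carefully, is the memorylessness argument: justifying that after conditioning on the first order statistic and on the identity of the minimizing index, the surviving residual lifetimes are genuinely i.i.d.\ $\Exp(\lambda)$ and jointly independent of the elapsed time $X_{(1)}$, so that the restart is legitimate and the spacings are independent.
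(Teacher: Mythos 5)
Your proof is correct and follows essentially the same route as the paper's: an induction over the order statistics, using the memoryless property to show that the successive spacings are independent exponentials with rates $k\lambda, (k-1)\lambda, \ldots, \lambda$, and then reading off the moments from the resulting sum. Your correction of the variance to $\Theta(1/\lambda^2)$ (rather than the stated $\Theta(1/\lambda)$) is also right --- that is a typo in the proposition, and a harmless one, since the paper only ever uses the expectation of this maximum.
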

\begin{proof}
  We actually prove a more general claim about $X_{(j)}$, the $j$th order
  statistic of $k$ independent draws from $\Exp(\lambda)$.
  If $Y_i \sim \Exp(i \lambda)$ independently for $i=1,\ldots,n$,
  I claim that $X_{(j)}$ is (jointly) distributed identically to
  \[ \sum_{i=n - j + 1}^n Y_i. \] 

  To prove this claim, induct on $j$. For $j=1$, this is just proving that
  the minimum of $n$ draws from $\Exp(\lambda)$ is distributed like
  $\Exp(n\lambda)$, which is a classic exercise in probability theory.

  For $j>1$, condition on $X_{(j-1)} = x$.
  Now, $X_{(j)}$ is distributed like the minimum of $n - j + 1$ independent
  draws from $U \sim \Exp(\lambda)$, conditioned on each of those draws being $\ge x$.
  By the ``memoryless'' property of the exponential distribution,
  the distribution of $U - x$ conditioned on $U \ge x$ is identical
  to the distribution of $U$.
  Thus, $X_{(j)} - X_{(j-1)}$ is distributed exactly as the minimum of $n-j+1$
  draws from $\Exp(\lambda)$, or equivalently one draw from
  $\Exp((n-j+1)\lambda)$. The claim then follows by induction.

  Finally, the calculation of expectation and variance follows from the
  expectation and variance of $\Exp(i \lambda)$ (and the
  bound on variance follows because $\sum_{i \ge 1} \nicefrac{1}{i^2} =
  \Theta(1)$).
\end{proof}

We also need the following standard concentration inequalities:
% \url{https://en.wikipedia.org/wiki/Chernoff_bound#Multiplicative_form_(relative_error)}:
\begin{proposition}[Multiplicative Chernoff Bound]
\label{thrmMultChernoff}
  Let $X_1,\ldots,X_n$ be independent random variables taking values in
  $\{0,1\}$. Let $X = \sum_i X_i$ and let $\mu = \E{X}$.
  Then for any $0\le\delta\le 1$, we have
  \begin{align*}
    \P{ X \le (1 - \delta)\mu } & \le \exp\left(-\delta^2\mu/2\right) \\
    \P{ X \ge (1 + \delta)\mu } & \le \exp\left(-\delta^2\mu/3\right)
  \end{align*}
  and for any $\delta\ge 1$, we have
  \[
    \P{ X \ge (1 + \delta)\mu } \le \exp\left(-\delta\mu/3\right)
  \]
\end{proposition}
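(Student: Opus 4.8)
The plan is to use the standard exponential moment method (the ``Chernoff'' approach): apply Markov's inequality to $e^{sX}$ for a well-chosen parameter $s$, use independence to factor the moment generating function, and then optimize over $s$. First I would handle the upper tail. Writing $p_i = \P{X_i = 1}$, for any $s > 0$ Markov's inequality gives $\P{X \ge (1+\delta)\mu} \le \E{e^{sX}}\, e^{-s(1+\delta)\mu}$. Independence of the $X_i$ yields $\E{e^{sX}} = \prod_i \E{e^{sX_i}} = \prod_i \big(1 + p_i(e^s - 1)\big)$, and the elementary inequality $1 + x \le e^x$ gives $\E{e^{sX}} \le \exp\!\big((e^s - 1)\textstyle\sum_i p_i\big) = \exp\!\big((e^s-1)\mu\big)$, using $\sum_i p_i = \mu$.

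Next I would choose the optimal $s = \ln(1+\delta)$, which produces the clean ``raw'' tail bound
\[
  \P{X \ge (1+\delta)\mu} \le \left( \frac{e^{\delta}}{(1+\delta)^{1+\delta}} \right)^{\mu}.
\]
From here the two stated forms follow by bounding the base, i.e. by controlling the exponent $g(\delta) = \delta - (1+\delta)\ln(1+\delta)$. For $0 \le \delta \le 1$ I would verify $g(\delta) \le -\delta^2/3$ by a short Taylor/convexity argument, yielding the $\exp(-\delta^2 \mu/3)$ bound; for $\delta \ge 1$ I would instead check $g(\delta) \le -\delta/3$, for which it suffices to note the inequality holds at $\delta = 1$ and that $-g(\delta)$ is eventually dominated by the $\delta\ln\delta$ term, giving the $\exp(-\delta\mu/3)$ bound.

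For the lower tail the argument is symmetric: for $s > 0$, Markov applied to $e^{-sX}$ gives $\P{X \le (1-\delta)\mu} \le \E{e^{-sX}}\, e^{s(1-\delta)\mu}$, and the same moment-generating-function bound with $s \mapsto -s$, together with the optimal choice $s = -\ln(1-\delta)$, yields
\[
  \P{X \le (1-\delta)\mu} \le \left( \frac{e^{-\delta}}{(1-\delta)^{1-\delta}} \right)^{\mu},
\]
after which the scalar inequality $-\delta - (1-\delta)\ln(1-\delta) \le -\delta^2/2$ on $[0,1]$ delivers the $\exp(-\delta^2\mu/2)$ bound.

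The only nonroutine part is establishing these scalar calculus inequalities on the base term; this is the \emph{main obstacle} only in the sense of requiring a careful (but entirely standard) analysis of $g(\delta)$ and its lower-tail analogue, each reducing to monotonicity of an explicit single-variable function. Everything else — Markov's inequality, factoring the moment generating function, and the $1+x \le e^x$ step — is immediate and uses only the independence and $\{0,1\}$-valued hypotheses.
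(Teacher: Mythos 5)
Your proposal is correct: the exponential moment method (Markov applied to $e^{sX}$, factoring the moment generating function by independence, the bound $1+x\le e^x$, and optimizing $s$) is the canonical derivation of all three stated bounds, and the scalar exponent inequalities you defer to ($\delta-(1+\delta)\ln(1+\delta)\le-\delta^2/3$ on $[0,1]$, $\le-\delta/3$ for $\delta\ge 1$, and $-\delta-(1-\delta)\ln(1-\delta)\le-\delta^2/2$) are all true and provable exactly as you indicate. Note that the paper itself gives no proof of this proposition --- it is stated in the appendix as a standard concentration inequality --- so your argument is precisely the textbook proof the paper is implicitly invoking; the only point to tighten is the $\delta\ge 1$ case, where ``eventually dominated'' should be replaced by checking the inequality at $\delta=1$ and verifying monotonicity of $(1+\delta)\ln(1+\delta)-4\delta/3$ on $[1,\infty)$ via its derivative $\ln(1+\delta)-1/3>0$.
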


% \url{https://en.wikipedia.org/wiki/Hoeffding%27s_inequality#General_case_of_bounded_random_variables}
\begin{proposition}[Hoeffding's Inequality]
  \label{thrmBoundedHoeffding}
  Let $X_1,\ldots,X_n$ be independent random variables taking values in
  $[a_i,b_i]$. Let $X = (1/n)\sum_i X_i$ and let $\mu = \E{X}$.
  Then for any $t\ge 0$, we have
  \[
    \P{ |X - \mu | \ge t } \le 2\exp\left(
      -\frac{2n^2 t^2}{\sum_{i=1}^n (b_i-a_i)^2}\right)
  \]
\end{proposition}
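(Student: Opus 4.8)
The plan is to prove Hoeffding's inequality by the standard exponential-moment (Chernoff) method, combined with Hoeffding's lemma bounding the moment generating function of a bounded, mean-zero random variable. First I would recenter: set $Y_i = X_i - \E{X_i}$, so that each $Y_i$ has mean zero and takes values in an interval of width $b_i - a_i$, and observe that $X - \mu = \frac{1}{n}\sum_i Y_i$. It therefore suffices to bound the one-sided tail $\P{\sum_i Y_i \ge nt}$; the full two-sided statement then follows by applying the identical argument to the variables $-Y_i$ and taking a union bound, which is precisely where the factor of $2$ enters.

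For the one-sided bound, I would apply Markov's inequality to $e^{s\sum_i Y_i}$ for a parameter $s>0$ to be optimized, obtaining $\P{\sum_i Y_i \ge nt} \le e^{-snt}\prod_i \E{e^{sY_i}}$, where the factorization of the expectation uses the independence of the $X_i$. The crux is then the following key lemma: if $Y$ has mean zero and takes values in $[c,d]$, then $\E{e^{sY}} \le \exp(s^2(d-c)^2/8)$. I would prove this by using convexity of $y\mapsto e^{sy}$ to bound it above by the chord through the endpoints $c$ and $d$, taking expectations using $\E{Y}=0$, and rewriting the resulting upper bound as $e^{\phi(u)}$ with $u = s(d-c)$ and $\phi(u) = -pu + \ln(1-p+pe^u)$, where $p = -c/(d-c) \in [0,1]$. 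Verifying $\phi(0)=\phi'(0)=0$ and $\phi''(u)\le 1/4$ for all $u$, then invoking Taylor's theorem, yields $\phi(u) \le u^2/8 = s^2(d-c)^2/8$.

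Applying the key lemma termwise with $d_i - c_i = b_i - a_i$ gives $\prod_i \E{e^{sY_i}} \le \exp\big(\tfrac{s^2}{8}\sum_i (b_i-a_i)^2\big)$, and hence $\P{\sum_i Y_i \ge nt} \le \exp\big(-snt + \tfrac{s^2}{8}\sum_i (b_i-a_i)^2\big)$. Finally I would optimize over $s$: the exponent is a quadratic in $s$ minimized at $s = 4nt/\sum_i(b_i-a_i)^2$, which produces the exponent $-2n^2t^2/\sum_i(b_i-a_i)^2$. Combining this with the symmetric lower-tail bound gives the claimed inequality.

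I expect the only nontrivial step to be the proof of the key lemma, and within it the verification that $\phi''(u)\le 1/4$ (which reduces, after writing $\phi''(u) = pe^u(1-p)/(1-p+pe^u)^2$, to the elementary bound $ab/(a+b)^2 \le 1/4$); the remaining Chernoff bookkeeping and the scalar optimization over $s$ are routine.
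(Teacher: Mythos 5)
Your proof is correct and complete. Note that the paper itself does not prove this proposition: it is stated in the appendix of probability lemmas as a standard concentration inequality and cited without argument, so there is no proof to compare against. Your write-up is the canonical textbook derivation (Chernoff's exponential-moment method plus Hoeffding's lemma via the chord bound and the second-derivative estimate $\phi''(u) = pe^u(1-p)/(1-p+pe^u)^2 \le 1/4$), the optimization $s = 4nt/\sum_i (b_i-a_i)^2$ yields exactly the claimed exponent $-2n^2t^2/\sum_i(b_i-a_i)^2$, and the only implicit step---that $p = -c/(d-c) \in [0,1]$, which follows since a mean-zero variable supported on $[c,d]$ forces $c \le 0 \le d$---is immediate.
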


%%%%%%%%%%
%% More simulations
%%    * Convergence as market size grows
%%    * Effect of single tier parameters
%%    * Unbalanced market: core size and rank
%%    * Distribution of matched pairs
%%%%%%%%%%
\section{More Computational Experiments}

In this section, we present three additional computational experiments in supplement to
our theoretical estimates and the experiments in~\ref{sectionSimulations}.
The first experiment compares simulation results with our predicted limits in 
theorem~\ref{thrmMenRanksCentralConcentration} and~\ref{thrmWomenRanks}. 
The second one attempts to further generalize our results by showing that in
slightly unbalanced markets the core is small and hence our predictions
should in fact apply to any stable matching in such markets.
And the last experiment takes a macro view on the distribution of matched pairs across
tiers on both sides (i.e. the fraction of tier $i$ women matched to tier
$j$ men for each pair of $i,j$).

\subsection{Numerical results and accuracy of the estimates} % what should we call this part?
In the first experiment, we examine the accuracy of our asymptotic estimates in theorem~\ref{thrmMenRanksCentralConcentration} and~\ref{thrmWomenRanks} numerically. We consider a sequence of balanced markets characterized by the same configuration of tiers on both sides but with growing total numbers of agents. Specifically, each market consists of two tiers of men on the proposing side and three tiers of women receiving proposals. The ratio of tier sizes is fixed at $\bm{\delta}=(1/4, 3/4)$ for men and $\bm{\epsilon}=(1/16, 5/16, 5/8)$ for women; the public scores for each tier have fixed ratio of $\bm{\beta}=(3, 1)$ for men and $\bm{\alpha}=(3,2,1)$ for women. The size of the market $n$, i.e. the number agents on each side, ranges from $2^4$ to $2^{19}$ at each integer power of 2. For each market described above, we simulate 1,000 realizations of the man-proposing DA, and with each realization, we compute the average rank of partners across agents in each tier on each side. Figure~\ref{fig:avg_rank} reports the average across realizations of the per tier average rank of partners as the market size grows.
Figure~\ref{fig:rank_ratio} shows the convergence of the ratios of ranks of partners among tiers on each side.

% % Not showing for now: convergence is easier to see in the next set of figures, with x-axis in log scale.
% \begin{figure}[h]
%     \centering
%     \begin{subfigure}[t]{0.48\textwidth}
%         \includegraphics[scale=0.42]{}
%         \caption{Avg rank for men}
%     \end{subfigure}
%     \quad
%     \begin{subfigure}[t]{0.48\textwidth}
%         \includegraphics[scale=0.42]{}
%         \caption{Avg rank for women}
%     \end{subfigure}
%     \caption{TODO: More polishing required. Adding error bar / 97 percentile? Dotted lines are prediction (considering the error term of $\ln\epsilon_{\min}$)}\label{fig:avg_rank}
% \end{figure}

% \begin{figure}[h]
%     \centering
%     \begin{subfigure}[t]{0.48\textwidth}
%         \includegraphics[scale=0.42]{}
%         \caption{Ratio between avg ranks for men}
%     \end{subfigure}
%     \quad
%     \begin{subfigure}[t]{0.48\textwidth}
%         \includegraphics[scale=0.42]{}
%         \caption{Ratio between avg ranks for women}
%     \end{subfigure}
%     \caption{More polishing required. Indicate limits of the ratios in the plots ($3=3/1$ for men, $1.5=3/2$ and $3=3/1$ for women)}\label{fig:avg_rank}
% \end{figure}

% We can probably also show the simulated numbers divided by predicted numbers, which ratios should go to 1.

\begin{figure}[hp]
    \centering
    \begin{subfigure}[t]{0.48\textwidth}
        \includegraphics[scale=0.42]{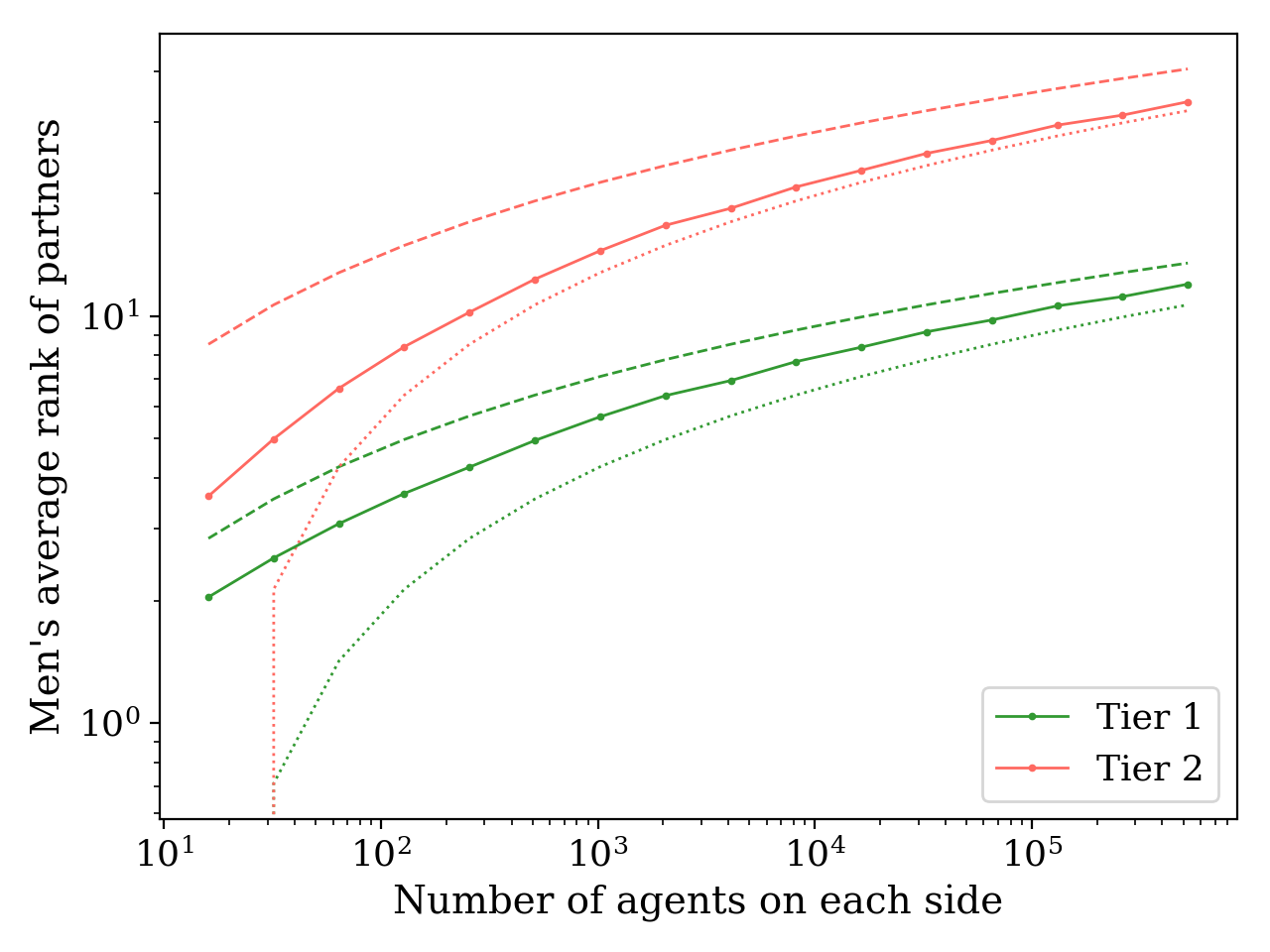}
        \caption{Average rank for men}
    \end{subfigure}
    \quad
    \begin{subfigure}[t]{0.48\textwidth}
        \includegraphics[scale=0.42]{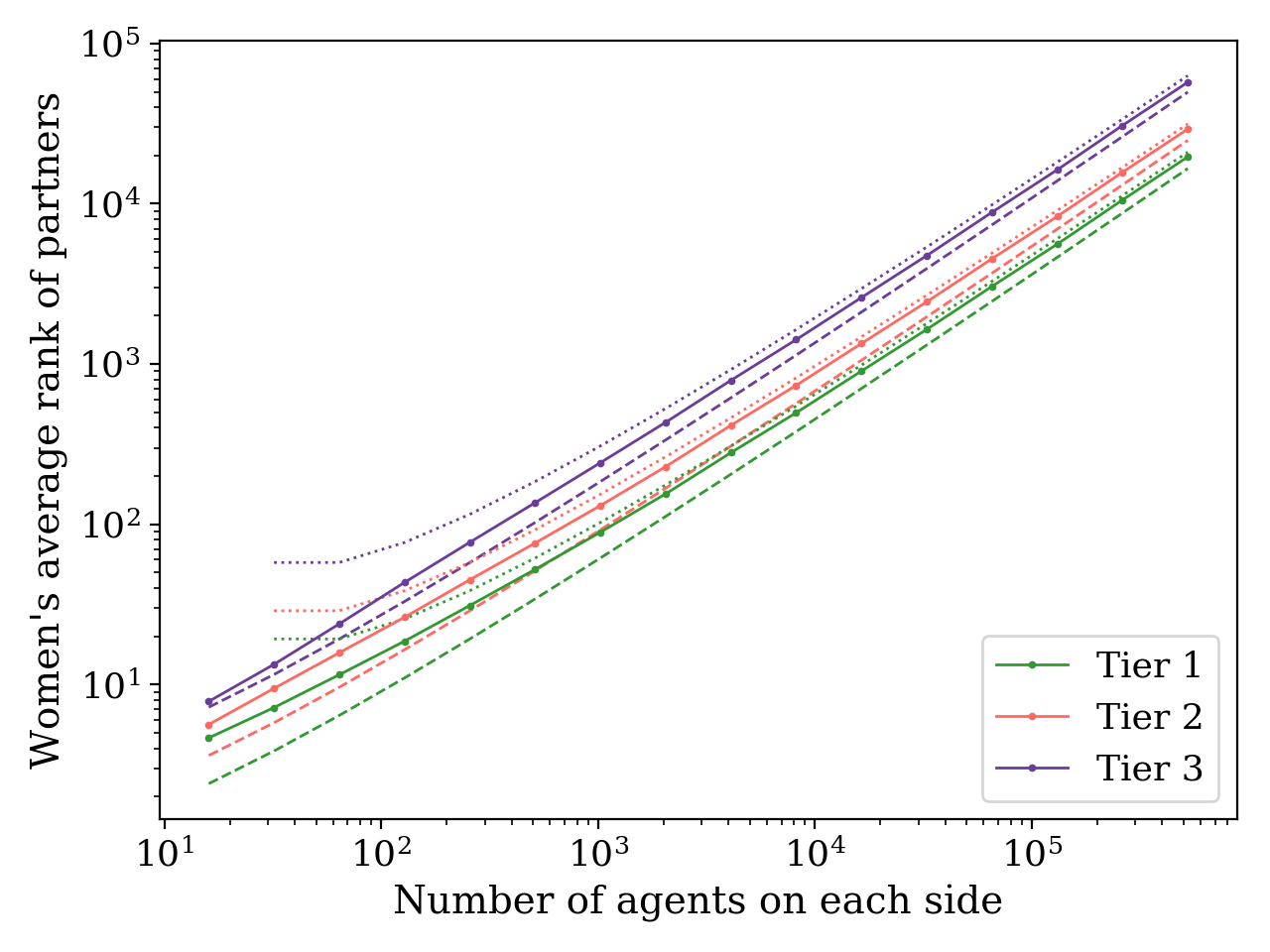}
        \caption{Average rank for women}
    \end{subfigure}
    
    \caption{
    The solid line gives
    the average ranks for men and women of each tier in a sequence of matching markets with fixed parameters $\bm{\delta}=(1/4, 3/4)$, $\bm{\epsilon}=(1/16, 5/16, 5/8)$, $\bm{\beta}=(3, 1)$, and $\bm{\alpha}=(3,2,1)$. 
    As remarks~\ref{remarkLogEspMinErrorBody} and~\ref{remarkLogEspMinErrorAppendix} discuss, our estimates
    converge quite slowly in this market, because $\epsilon_{\min}=1/16$ is small.
    The dashed lines indicate the estimates using our upper bound on the total number of proposals (based on proposition~\ref{thrmSimpleCouponUpperBound}), and the dotted lines indicate the estimates using the lower bound (based on
    proposition~\ref{thrmCouponExpectationLowerBound}, including the constants in the error term of order $O(n)$).
    % shows, our asymptotic limit in the total number of proposals drops the error term of $O(n)$. While of lower order, this error can be quite noticeable for smaller markets, and causes a very slow rate of convergence of $1/\log n$.
    }\label{fig:avg_rank}
% \end{figure}

% \begin{figure}[h]
%     \centering
    \begin{subfigure}[t]{0.48\textwidth}
        \includegraphics[scale=0.42]{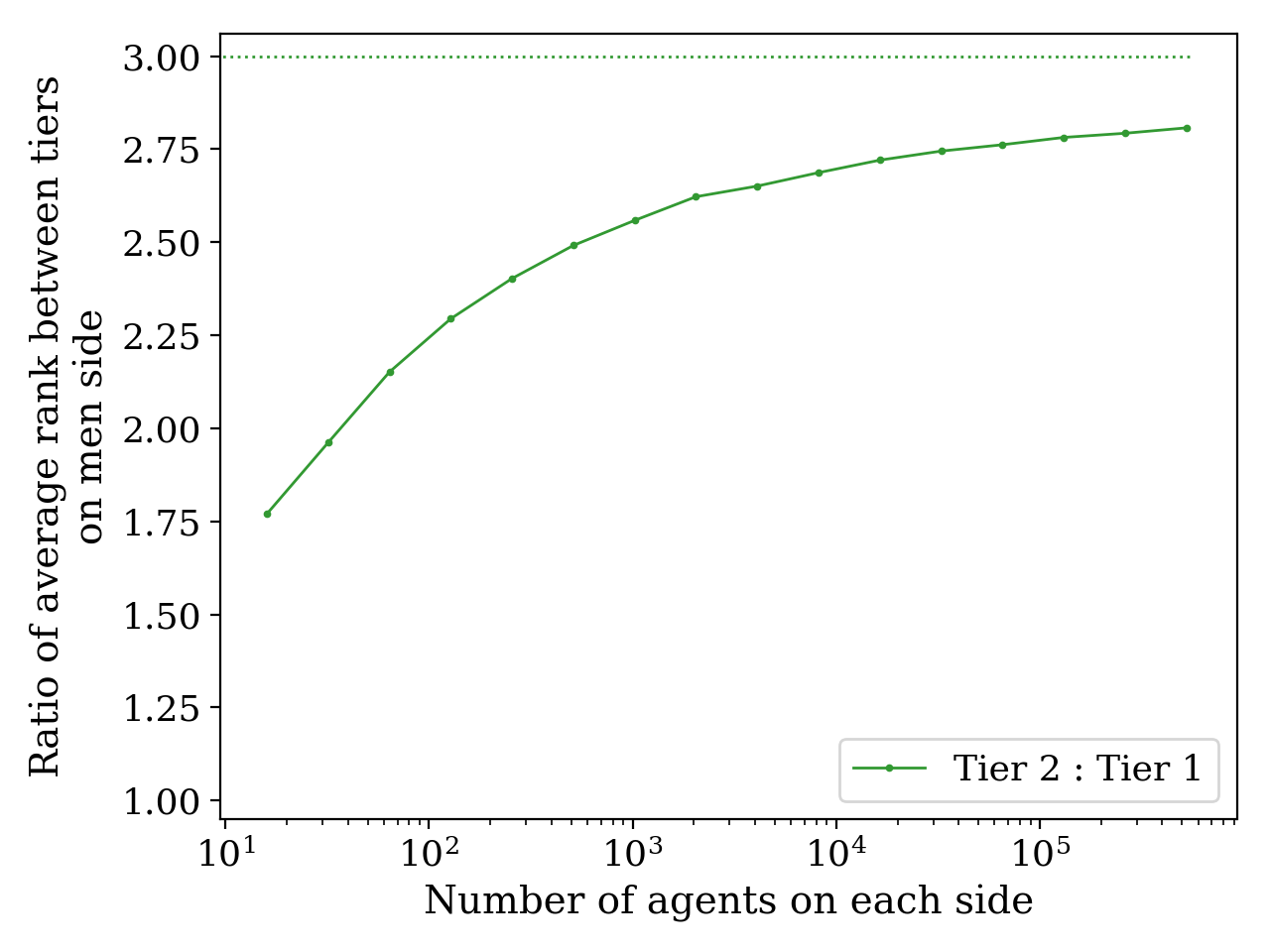}
        \caption{Ratio between average ranks of wives for men}
    \end{subfigure}
    \quad
    \begin{subfigure}[t]{0.48\textwidth}
        \includegraphics[scale=0.42]{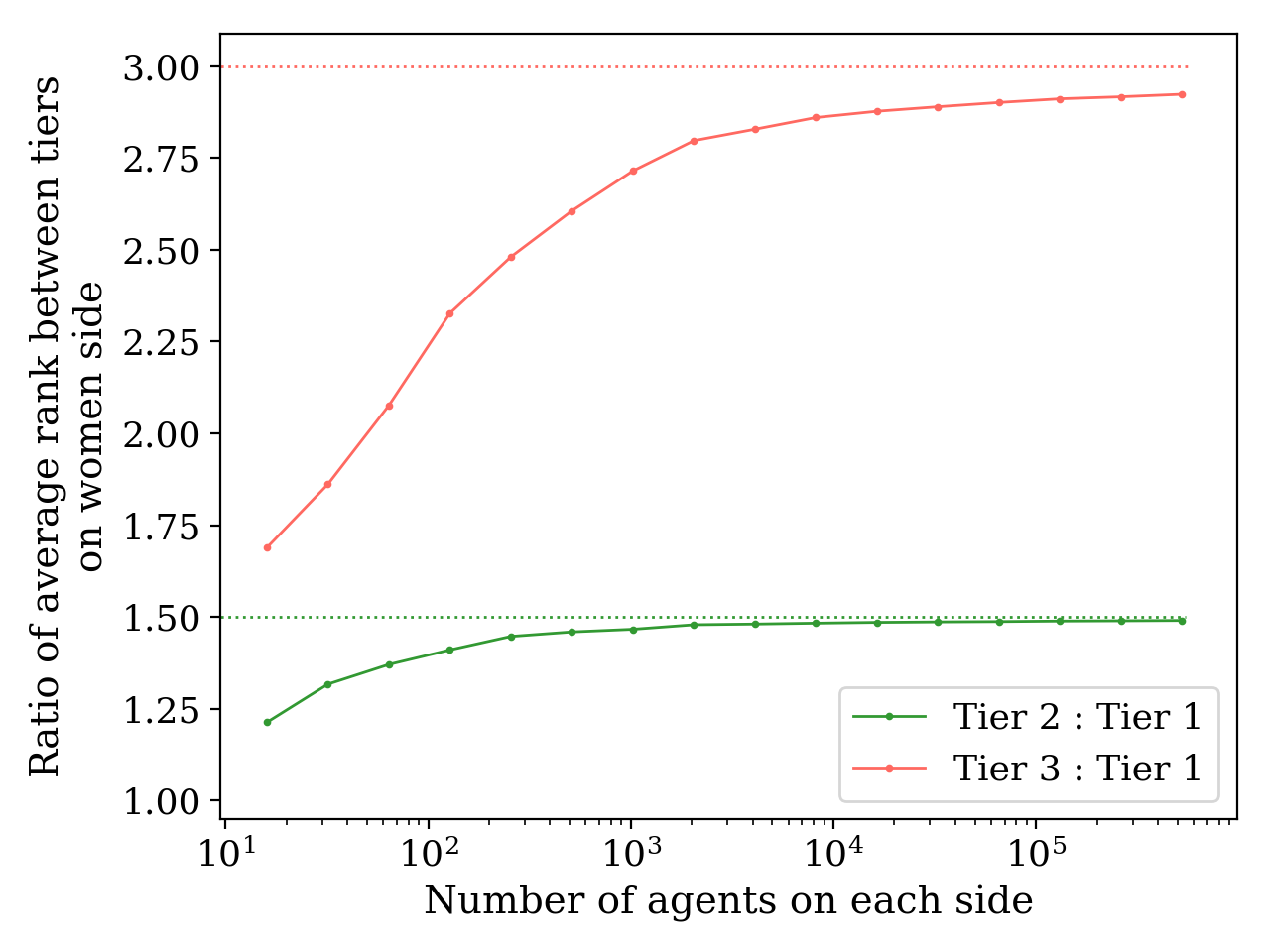}
        \caption{Ratio between average ranks of husbands for women}
    \end{subfigure}
    
    \caption{
    Ratios between average rank of partners across different tiers on each side. Theorem~\ref{thrmMenRanksCentralConcentration} and~\ref{thrmWomenRanks} imply that, for men and women respectively, the average rank within each tier is approximately proportional to the inverse of the public score of that tier in a sufficiently large market, and hence the the average rank ratio between two tiers should be close to the inverse of their public score ratio. In the simulation, the average rank ratio between the worse tier (tier 2 with $\alpha_2=1$) of men and the better (tier 1 with $\alpha_1=3$) converges to 3:1, and the rank ratios between the tier 2 with $\alpha_2=2$, tier 3 $\alpha_3=1$ and tier 1 $\alpha_1=3$ of women converge to 3:2 and 3:1, respectively. Notice again the very slow rate of convergence with the $x$-axis plotted in log scale, as is natural due to the fact that the averages
    converge at rate $O(1/\ln n)$.
    }\label{fig:rank_ratio}
\end{figure}

\subsection{Size of core for unbalanced markets}
In this experiment, we turn our attention to the more generalized setting of unbalanced markets. We provide some evidence that our prediction remains valid for slightly unbalanced markets, which are common in real life, and also that the core of such markets is small. This is a similar to the results in \cite{ashlagi2017unbalanced}, and would potentially imply that in an unbalanced setting our estimates apply to not just the man-optimal stable matching, but indeed to any stable matching, because the different between any stable matching and the man-optimal outcome is small.

We consider a one-side-tiered market, with 1,000 men in two tiers with fractional sizes $\bm{\delta}=(0.3, 0.7)$ and public scores $\bm{\beta}=(3,1)$ and a number of women in one tier ranging from 990 to 1,010. In each set-up, we compute the average rank of agents in each tier under the man-optimal outcome. Figure~\ref{fig:unbalanced_average_rank} shows the average ranks per tier across 1,000 realizations. We also computed the fraction of men in each tier with unique stable partners (i.e. those whose partners under man-optimal and woman-optimal are the same). The average percentage is shown in figure~\ref{fig:unbalanced_unique_partner_percentage}.

\begin{figure}[h]
    \centering
    \begin{subfigure}[t]{0.48\textwidth}
        \includegraphics[scale=0.42]{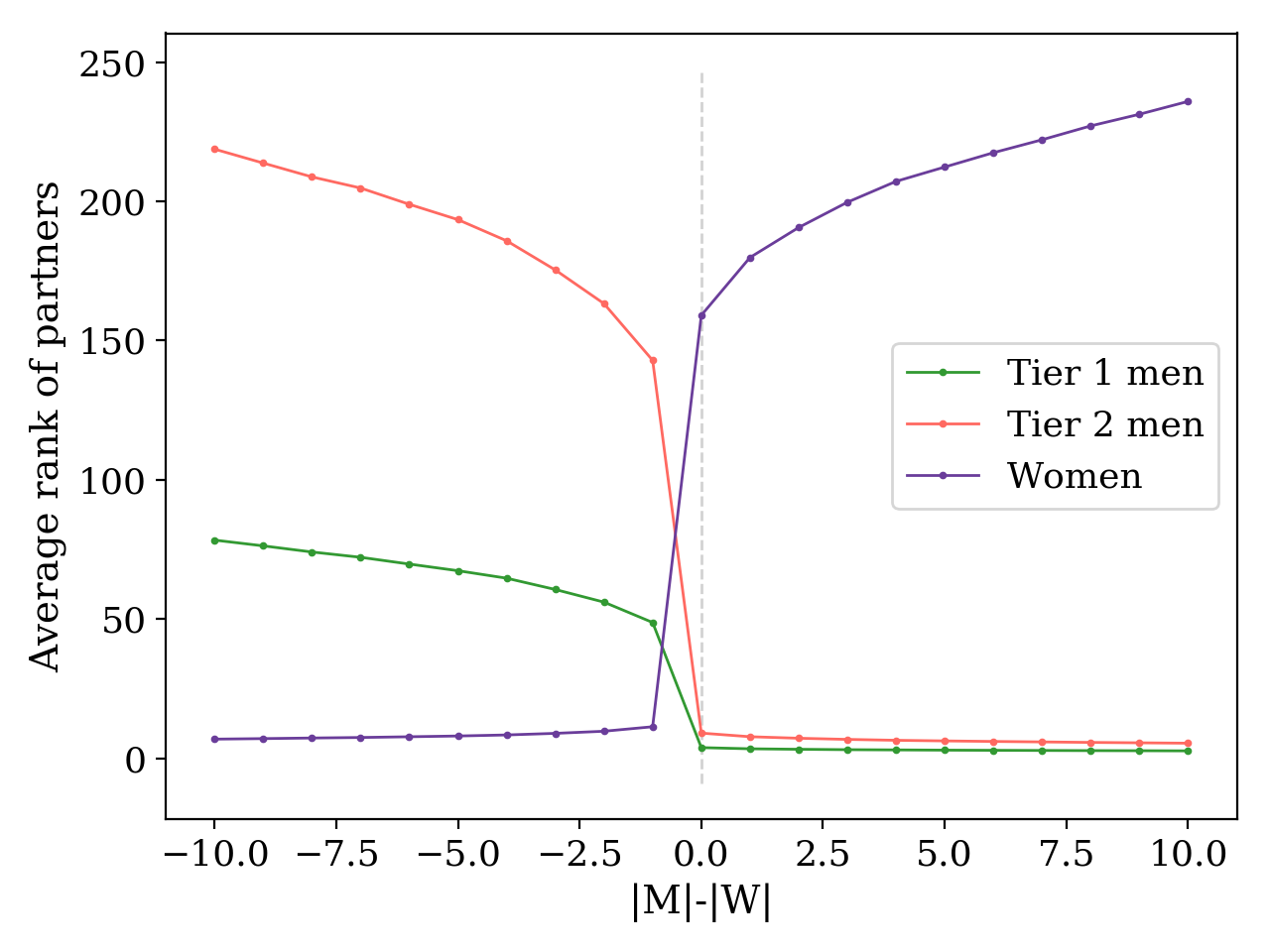}
        \caption{Average rank of partners for agents of each tier. Under the man-optimal outcome, the advantage flips to women when the men's side is longer by even slightest amount.}
        \label{fig:unbalanced_average_rank}
    \end{subfigure}
    \quad
    \begin{subfigure}[t]{0.48\textwidth}
        \includegraphics[scale=0.42]{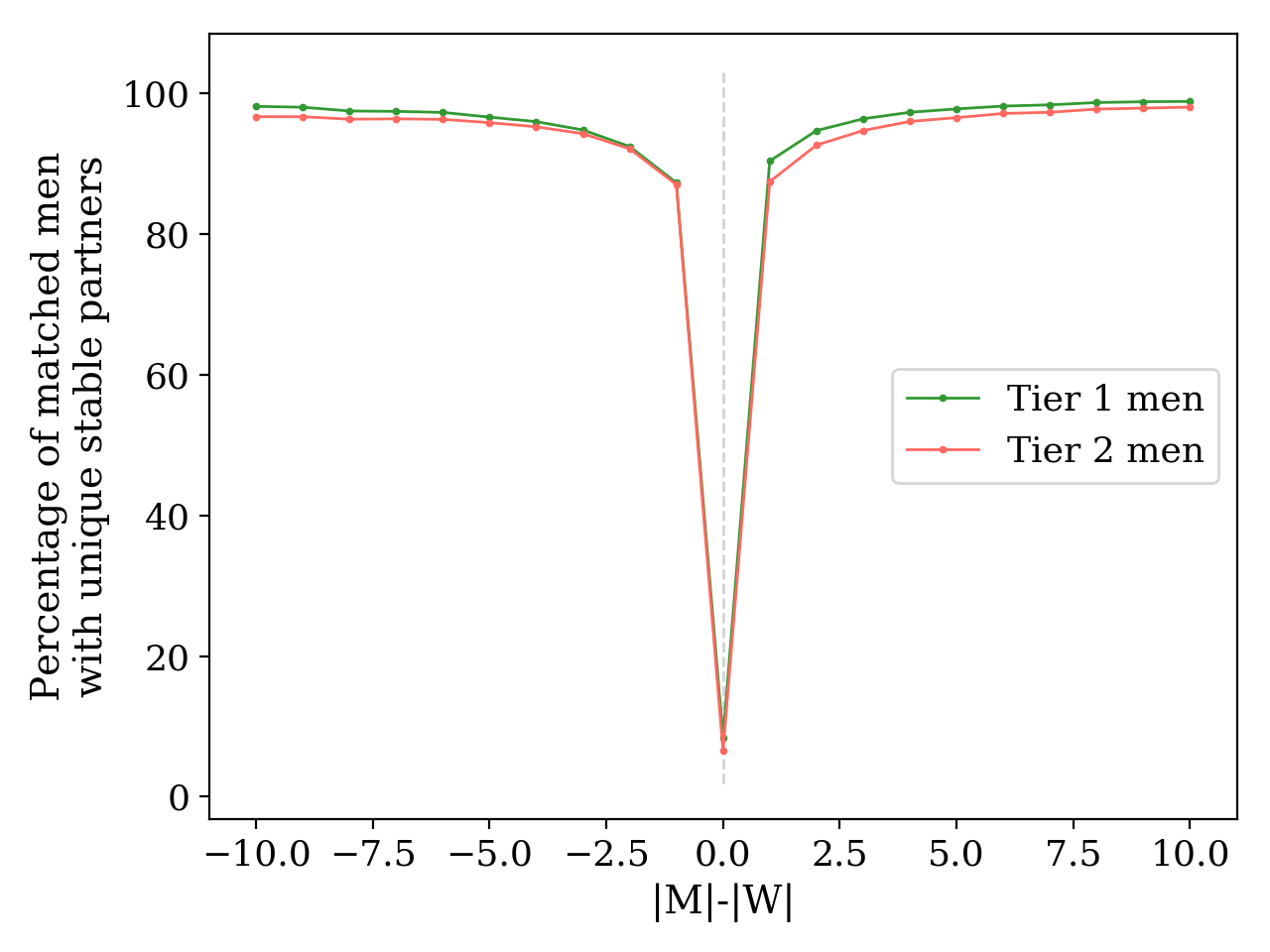}
        \caption{Fraction of men with unique stable partners. The steep dip at zero reflects a low number of agents with unique partners across all possible stable matchings when the market is balanced, hence indicating a large core. When the market becomes slightly unbalanced, however, we see that vast majority of agents have unique stable partners.}
        \label{fig:unbalanced_unique_partner_percentage}
    \end{subfigure}
    
    \caption{Matching markets with a fixed number of 1,000 men and a varying number of 990 to 1,010 women. The men side has parameters $\bm{\delta}=(0.3, 0.7)$, $\bm{\beta}=(3,1)$ and the women side is homogeneous.}\label{fig:unbalanced_market}
\end{figure}

% Raw data for the percentage graph. Figure is easier to read than the table.
% \begin{table}[h]
%   \centering
%     \begin{tabular}{ ccc }
    
%  \hline\hline
%  $|\mathcal{M}-\mathcal{W}|$ & $M_1$ & $M_2$ \\
%  \hline
%  -10 & .982 (.024) & .967 (.022) \\
%  -9 & .981 (.023) & .967 (.021) \\
%  -8 & .975 (.030) & .963 (.028) \\
%  -7 & .975 (.029) & .964 (.027) \\
%  -6 & .973 (.033) & .963 (.030) \\
%  -5 & .966 (.038) & .959 (.036) \\
%  -4 & .960 (.041) & .953 (.040) \\
%  -3 & .948 (.050) & .943 (.048) \\
%  -2 & .925 (.071) & .921 (.069) \\
%  -1 & .873 (.106) & .871 (.104) \\
%  0 & .084 (.024) & .065 (.019) \\
%  1 & .904 (.090) & .875 (.108) \\
%  2 & .947 (.054) & .927 (.068) \\
%  3 & .964 (.038) & .947 (.051) \\
%  4 & .973 (.030) & .960 (.039) \\
%  5 & .978 (.027) & .966 (.035) \\
%  6 & .982 (.022) & .972 (.029) \\
%  7 & .984 (.021) & .973 (.029) \\
%  8 & .987 (.017) & .978 (.024) \\
%  9 & .988 (.017) & .979 (.024) \\
%  10 & .989 (.017) & .981 (.022) \\
%  \hline
%     \end{tabular}
%     \caption{Fraction of men with unique stable partners. Average over 1,000 realizations with $|\mathcal{M}|=1000$}
%   \label{tab:myfirsttable}
% \end{table}

\end{document}